\newtheorem{theorem}{Theorem}[section]
\newtheorem{definition}[theorem]{Definition}
\newtheorem{corollary}[theorem]{Corollary}
\newcommand{\ind}{\perp\!\!\!\!\perp}
\newcommand{\minNumberChildrenByNurse}{100\xspace}
\newcommand{\numberUniqueNurses}{112\xspace} 
\newcommand{\numberUniqueNursesWithDistrictInfo}{111\xspace}
\newcommand{\shareRecordToDistrictMatch}{43.1-98.4\%\xspace}
\title{Optimal Treatment Allocation under Constraints\thanks{Acknowledgements: We thank Nadja van 't Hoff, Miriam Wüst, Lise Geisler Bjerregaard, Christian Møller Dahl, Andreas Bjerre-Nielsen, and seminar participants at University of Southern Denmark for helpful comments; Katrine Lindholm Nicolaisen for discussions on network flows; Christian Møller Dahl for his work on image segmentation; and Toru Kitagawa and Michael Lechner for discussions on optimal policy. We gratefully acknowledge financial support from the Independent Research Fund Denmark (8106-00003B). 
In this project, we transcribe and use data from the “Copenhagen Infant Health Nurse Records” that was established by the Center for Clinical Research and Prevention, the Capital Region of Denmark, the University of Copenhagen, and the University of Southern Denmark. 
We thank Thorkild I.A. Sørensen for his contributions to the establishment of the CIHNR.}}
\author{Torben S. D. Johansen\thanks{Department of Economics, University of Southern Denmark, Denmark, \texttt{tsdj@sam.sdu.dk}.}}
\date{April 2024}
\begin{document}
\maketitle


\begin{abstract}
    \noindent
    In optimal policy problems where treatment effects vary at the individual level, optimally allocating treatments to recipients is complex even when potential outcomes are known.
    We present an algorithm for multi-arm treatment allocation problems that is guaranteed to find the optimal allocation in strongly polynomial time, and which is able to handle arbitrary potential outcomes as well as constraints on treatment requirement and capacity. 
    Further, starting from an arbitrary allocation, we show how to optimally re-allocate treatments in a Pareto-improving manner.  
    To showcase our results, we use data from Danish nurse home visiting for infants.
    We estimate nurse specific treatment effects for children born 1959-1967 in Copenhagen, comparing nurses against each other.
    We exploit random assignment of newborn children to nurses within a district to obtain causal estimates of nurse-specific treatment effects using causal machine learning.
    Using these estimates, and treating the Danish nurse home visiting program as a case of an optimal treatment allocation problem (where a treatment is a nurse), we document room for significant productivity improvements by optimally re-allocating nurses to children.
    Our estimates suggest that optimal allocation of nurses to children could have improved average yearly earnings by USD 1,815 and length of education by around two months. 
\end{abstract}


\newpage
\section{Introduction}

Across a number of settings decision makers wish to make informed choices regarding provision of treatments to recipients in a way that maximizes some score.\footnote{Our use of treatment is broad and may include medical treatments, adverting, provision of services such as education, and more. Likewise, our use of recipient is broad and may include patients, customers, students, and more.}
Examples include \citet{kim2011battle, chan2012optimizing, ozanne2014development, athey2017beyond, swaminathan2017off, fukuoka2018objectively, zhou2018evaluating, bastani2018interpreting, ban2019big, farias2019learning, bastani2020sequential, bertsimas2020predictive, bastani2020online}, ranging from healthcare to digital advertising and public policies.
A common theme across these settings is heterogeneous treatment effects,\footnote{For example: Given two patients, one might respond better to drug A while the other responds better to drug B.} in turn leading to benefits from ``personalized treatment''.\footnote{This is also known as ``personalized medicine'' \citep{jain2002personalized, hamburg2010path, schork2015personalized}, and can be seen as a subset of the broader setting of personalized treatment, by which is meant a broader definition of treatments than exclusively medical treatments.}
However, to exploit this information requires insight into (estimates of) potential outcomes of recipients under different treatments and a policy mapping that, based on potential outcomes, is able to optimally allocate treatments to recipients, potentially while respecting constraints related to recipient treatment rights, treatment capacities, and equity concerns.

In this paper, we tackle the issue of optimally allocating treatments to recipients under constraints.
Related to our work is the literature on policy learning \citep{dudik2011doubly, zhang2012estimating, zhao2012estimating, zhao2015doubly, swaminathan2015batch, zhou2017residual, kallus2018confounding, kitagawa2018should, athey2021policy, zhou2023offline}.
While most of the existing literature focuses on binary treatments, exceptions include \citet{swaminathan2015batch, zhou2017residual, kallus2018balances, zhou2023offline}.
What sets this study apart, however, is our focus on optimal treatment allocation, and we attempt to solve the computational challenges related to optimal policy design given potential outcomes that arise due to large numbers of treatments and recipients and various constraints related to treatment allocation.
As such, our contribution is in developing a general method for finding policies that exactly maximize some objective function within a class of policies.\footnote{Or minimize some objective function, depending on whether a larger value signifies a better outcome.}
Thus, our work is related to the task of finding the policy that exactly minimizes the approximate value function within the class of tree-based policies discussed in \citet{zhou2023offline}, but where \citet{zhou2023offline} use a general mixed integer program (MIP) which quickly becomes computationally infeasible, our approach, based on a graph representation of the problem, is more suitable for larger problems.\footnote{Recall we abstract from any complications related to forming an approximate value function using estimators, and thus our objective is exclusively related to finding the optimal policy given a value function. \citet{zhou2023offline} consider the more general task of multi-action policy learning with observational data. Our contribution could help with certain aspect of this problem but is not an alternative to the full task.}

Our contribution is in proving that the general case of optimally allocating treatments to recipients taken potential outcomes as given is strongly polynomial, even when (1) treatment effects are allowed to vary arbitrarily between recipients, (2) there are arbitrarily many distinct types of treatments available, and (3) each treatment must respect capacity constraints.
In an extension, we show that our results apply also when only Pareto-improvements are allowed, i.e., in settings where a re-allocation of treatments is not allowed to make any individual worse off.\footnote{Our approach lends itself well to extensions such as differential capacities by treatment and respecting constraints related to certain types of treatments not being eligible for some recipients.}
For our proofs, we represent the \textit{optimal treatment allocation} problem as a \textit{flow} problem in an appropriately constructed \textit{network}.\footnote{See the textbooks by \citet{ford1962flows, ahuja1993network, bang2008digraphs} for introductions to flows in networks.}
In addition to proving our complexity results, this representation also lends itself well to the construction of an algorithm for solving the problem.

To showcase our method, we consider nurse home visiting (NHV) in Denmark, where we consider each individual nurse a treatment and each child a recipient.
NHV for infants started in 1937 in Denmark, and previous work has documented large, positive impacts of NHV and center care in Denmark \citep{wust2012early, hjort2017universal} as well as in Norway \citep{buetikofer2019infant}, Sweden \citep{bhalotra2017infant}, and the US \citep{hoehn2021long}.
We combine a rich dataset consisting of nurse journals following infants during their first year of life with Danish administrative data, providing information on education and labor market outcomes.\footnote{In joint, concurrent work, \citet{baker2023universal, bjerregaard2023cohort} have transcribed the contents of the nurse records using machine learning, based mostly on manual transcriptions from \citet{andersen2012weight, bjerregaard2014bmi}.} 
The nurse journals originate from the Danish NHV program and cover all children born between 1959-1967 and living in Copenhagen during their first year of life.
They contain information on birth characteristics (including birth weight, birth length, and preterm status), detailed data from visits taking place at child age two weeks and one, two, three, four, six, nine, and 12 months (including weight development, nutrition, and home and parent characteristics), and, crucially for this study, the name of the nurse performing the visits for each child.
Linking the journals to Danish administrative data, we can follow the development of these children until around age 50-60 (depending on cohort and register), from which we obtain information on education and labor market outcomes.

The NHV setting lends itself well to our proposed method:
First, using causal machine learning combined with detailed pre-treatment data for each child and her family, following children throughout their lives using Danish register data, and exploiting quasi-random allocation of nurses to children within nurse districts (and years), we can obtain individual-level treatment effects for all children of the Copenhagen NHV program, in turn allowing us to obtain estimates of potential outcomes under different treatments at the individual level.
Second, the setting is complex in that there are large numbers of distinct treatments and recipients, each treatment is capacity-constrained (for example, allocating all children to the same nurse is not a valid solution), each recipient must be treated by exactly one treatment (exactly one nurse allocated each child), and we may want to impose Pareto-improvement constraints to make sure a re-allocation does not leave any child worse off.\footnote{Or even that \textit{some} children must not be left worse off, e.g., those from families at the bottom of the socioeconomic distribution.}

We start by documenting that while the initial conditions of children (as measured by at-birth characteristics, such as birth weight, and parent characteristics) vary significantly between nurses, this is largely driven by differences between \textit{nurse districts} as well as birth years.
Once we remove these effects, differences in pre-treatment variables are much less pronounced, supporting our identification strategy relying on quasi-random allocation of nurses to children within district-by-year groups.
However, clear differences in outcomes remain between children to whom different nurses were allocated.
This suggests that nurses vary in terms of their treatment effects.\footnote{We also directly estimate differences in nurse treatment effects, results which are of independent interest. As we show later, the average (absolute) difference (beyond that which can occur due to noise) in length of education between groups of children for two separate nurses is around one month and in average income during ages 25-50 about 1.4\%.} 

We next turn to measuring child-specific treatment effects for each nurse.
Here, we once again exploit the quasi-random allocation of nurses to children within district-by-year groups, and now estimate heterogeneous treatment effects using causal machine learning while exploiting our information on pre-treatment child and family characteristics.\footnote{Specifically, we use multi-arm causal forest \citep{wager2018estimation, athey2019generalized, nie2021quasi} combined with child and family characteristics from the nurse records transcribed using machine learning, as well as information obtained from Danish registers by linking the transcribed nurse records to the registers.}
This allows us to obtain estimates of potential outcomes under different treatments, and using these we are able to estimate counterfactual distributions of child outcomes under different treatment allocation policy rules.
Using our method for finding optimal policy rules, we are able to estimate the maximum benefit from re-allocating nurses in a more efficient manner, as well as able to estimate the share of total theoretical benefits from re-allocation attained through other policy rules.
Our estimates suggest that optimal allocation of nurses to children could have improved average yearly earnings by USD 1,815 (4\%) and total length of education by around two months. 

We contribute to two broad strands of literature.
First, we contribute to the literature on optimal policy learning \citep{manski2004statistical, hirano2009asymptotics, dudik2011doubly, zhang2012estimating, zhao2012estimating, zhao2015doubly, hirano2020asymptotic, athey2021policy, zhou2023offline}, specifically on the problem of finding optimal policies taken potential outcomes as given. 
Finding optimal policies is important in the sense of optimally allocating resources and thus achieving full efficiency, but the complex allocation mechanism to achieve this might not always be desirable; however, even in cases where simpler allocation mechanisms are desired, comparing such mechanisms with optimal rules has the benefit of allowing policy makers to correctly weigh the benefits of simplicity against its costs in terms of efficiency losses.
In doing so, we also showcase how results from graph theory, and specifically network flows \citep{ford1956network, ford1962flows, murty1992network, ahuja1993network, dolan1993networks}, are helpful in solving complex economic problems in settings where these can be represented by appropriately constructed graphs.

Second, we contribute to the literature on the impact of early-life policies, and in particular the role of treatment providers.
The role of early-life circumstances on short- and long-run outcomes of individuals has long been an active area of research across disciplines, including economics \citep{Forsdahl1979, AlmondCurrieDuque2018}.
Within the realm of research investigating the enduring significance of early-life health policies, two primary currents emerge:\footnote{In addition to studies on the positive role of early-life investments, other studies have investigated the negative effects of early-life shocks, such as extreme weather events, and their interactions with early-life investments policies \citep{adhvaryu2018helping, duque2018early, garg2020temperature, aguilar2022nino}.}
The first strand consists of randomized trials that implement high-intensity targeted model programs such as the U.S. Nurse Family Partnership and the Perry Preschool Program, interventions which have underscored the substantial and lasting dividends reaped from precisely targeted investments in the well-being and development of socioeconomically disadvantaged children \citep{Oldsetal1986, Oldsetal1998, olds2019prenatal, belfield2006high, heckman2010rate}. 
The second strand exploits naturally occurring variations in access to early-life health policies, and noteworthy examples include diverse interventions such as nutritional and income support programs \citep{hoynes2016long, barr2022investing, bailey2023safetynet, barr2023fighting}, healthcare insurance and services \citep{wherry2018childhood, goodman2018public, miller2019long, noghanibehambari2022intergenerational, east2023multi}, early educational initiatives \citep{rossin2020preschool, bailey2021prep, anders2023effect}, as well as infant home visiting and center-based care \citep{hjort2017universal, bhalotra2017infant, buetikofer2019infant, hoehn2021long}.\footnote{In concurrent work, \citet{baker2023universal} combine some of the strengths of these two strands by evaluating a large-scale government trial of NHV in Copenhagen during the 1960s that quasi-randomly allocated some children to three vs. one year of NHV. This combines individual-level data on early-life circumstances with long-run administrative data on outcomes.}
Where we depart from much of this existing literature is on our focus on the role of treatment providers, rather than the effect of a program.
Understanding the role of providers is likely a key element in improving our understanding of optimal design and targeting of such programs, and we contribute with novel evidence on the magnitude of between-provider treatment effects and heterogeneity with respect to child characteristics.
Key to facilitating our application is the development and use of machine learning (ML) for handwritten text recognition, which we use to transcribe the handwritten nurse records and link those with administrative data.\footnote{In concurrent work, \citet{bjerregaard2023cohort, baker2023universal} transcribe handwritten nurse records from the 1960s Copenhagen NHV program and link those to Danish administrative data to construct a large cohort of children (92,902) to track from birth throughout their life. \citet{baker2023universal} use this data to study the impact of extended NHV, exploiting a 1960s trial that quasi-randomly allocated children to three vs. one year of NHV. This cohort also forms the population of children we use for our application.}
In doing so, we also contribute to the literature on layout detection \citep{clinchant2018tables, shen2021layoutparser, dahl2023bdad, dahl2023tableparser} and scene, optical character, and handwritten text recognition \citep{goodfellow2013multi, bluche2014htr, lee2016recursive, bluche2017scan, bartz2021htr, geetha2021htr, kang2022htr, dahl2022dare, dahl2023hana}.

The paper proceeds as follows:
In Section~\ref{sec: allocation}, we derive our allocation algorithm for solving the optimal allocation problem.
Section \ref{sec: background} then presents the institutional background of the 1960s Copenhagen infant NHV program and Section \ref{sec: data} presents our data.
Section \ref{sec: empirical-methods} describes our approach for estimation and Section \ref{sec: results} presents our results.
Section \ref{sec: conclusion} concludes.


\section{Optimal Allocation}
\label{sec: allocation}


We define the \textit{optimal allocation problem} as the problem of optimally allocating treatments to recipients (where a number of treatments are available), under constraints regarding recipients (receive exactly one treatment) and treatments (at most $m$ recipients can receive given treatment).\footnote{This encompasses the nurse allocation problem discussed in the paper, but it is not specific to it and may encompass a host of other such types of problems. It is straightforward to extend the problem to settings where capacities vary by treatments.}
Formally, we define the problem as:

\begin{definition}[Optimal allocation problem]
    \label{def: optimal-allocation}
    Let $Y_i^j \in \mathbb{R}$ denote the outcome of recipient $i$ under treatment $j$ and let $D_i^j \in \{0, 1\}$ be one if recipient $i$ receives treatment $j$.
    The \textit{optimal allocation problem} is then the problem of choosing $D_i^j$ such that the maximum average realized outcome is achieved, while respecting that each recipient must be allocated exactly one treatment and no treatment is allocated to more individuals than its capacity allows.
\end{definition}

Noting that $D_i^j\in\{0, 1\}$ is $1$ if individual $i$ is under treatment $j$ and $0$ otherwise, we see that assignment of everyone to exactly one treatment ($\sum_j D_i^j = 1$) and respecting capacities ($\sum_i D_i^j \leq m$) leads to a binary program:\footnote{If capacities of different treatments vary, we may instead write $\sum_i D_i^j \leq m_j$ as the second set of conditions. Otherwise the problem remains unchanged.}

\begin{align*}
    \underset{D^j_i}{\arg\max} & \, \sum_i \sum_j Y^j_iD^j_i
        \\
        \text{such that} \, \, \, & \sum_j D_i^j = 1 \,\,\, \forall \, i \in \{1, \dots, n\}
        \\
        & \sum_i D_i^j \leq m \,\,\, \forall \, j \in \{1, \dots, k\},
\end{align*}

\noindent
where $n$ is the number of recipients, $k$ the number of treatments, and $m$ the ``capacity'' of each treatment, i.e., the maximum number of recipients a given treatment may be allocated to.
The solution to this binary program is exactly the optimal solution to the allocation problem of Definition~\ref{def: optimal-allocation}.

We claim there exists a strongly polynomial algorithm for solving the allocation problem (despite its potential $\mathcal{NP}$-completeness given its structure as a binary program).
For our proof, we shall make use of \textit{flows in networks} by appropriately constructing a network and representing the optimal allocation problem as an \textit{integer minimum cost feasible flow}-problem in the constructed network.\footnote{See the textbooks by \citet{ford1962flows, ahuja1993network, bang2008digraphs} for introductions to flows in networks.}

Let $V_1$ be the set of treatments and $V_2$ the set of recipients, and let $s$ and $t$ be special vertices.
Define now $V := V_1 \cup V_2 \cup \{s, t\}$ as the set of vertices of a digraph and let $A := \{sv_1 \, : \, v_1 \in V_1\} \cup \{v_1v_2 \, : \, v_1 \in V_1, v_2 \in V_2\} \cup \{v_2t \, : \, v_2 \in V_2\}$ be the arcs of the digraph, i.e., $D = (V, A)$.
Define now the network $\mathcal{N} := (V, A, l, u, b, c)$, where $b: V \rightarrow \mathbb{R}$ is a function on the vertices and $l: A \rightarrow \mathbb{R}$, $u: A \rightarrow \mathbb{R}$, and $c: A \rightarrow \mathbb{R}$ are functions on the arcs.
By appropriately choosing $b$ and $l$, $u$, and $c$, it turns out that an \textit{integer minimum cost feasible flow} in the network exactly corresponds to the optimal solution of the allocation problem.
Formally, we define the network as:

\begin{definition}[Network representation]
    \label{def: network-representation}
    Let $\mathcal{N} := (V, A, l, u, b, c)$ be defined as above and let, for all $v_1 \in V_1$ and $v_2 \in V_2$, $l_{v_1v_2} = 0$, $u_{v_1v_2} = 1$, and $c_{v_1v_2} = -Y_{v_2}^{v_1}$.\footnote{In some algorithmic settings non-negative costs are desired. This is easily guaranteed by adding $\max \{Y_i^j\}$ to all costs (leaving the problem unchanged). We refrain from this for notational simplicity.}
    Further, let, for all $v_1 \in V_1$, $l_{sv_1} = 0$, $u_{sv_1} = m$, and $c_{sv_1} = 0$ and let, for all $v_2 \in V_2$, $l_{v_2t} = 1$, $u_{v_2t} = 1$, and $c_{v_2t} = 0$.
    Finally, let $b(v) = 0$ for all $v \in V \setminus \{s, t\}$ and $b(s) = -b(t)$.\footnote{This, in turn, implies that $\sum_{v \in V}b(v) = 0$.}
    We shall call the network constructed this way the \textit{network representation} of an instance of the \textit{optimal allocation problem} of Definition~\ref{def: optimal-allocation}.
\end{definition}

A \textit{flow} in a network is a function on the arcs $x: A \rightarrow \mathbb{R}_+$, and we call it an integer flow if $x: A \rightarrow \mathbb{N}_0$.
For a given flow, the balance vector of $x$ is 
\begin{align*}
    b_x(v) = \sum_{vw \in A}x_{vw} - \sum_{uv \in A}x_{uv} \quad \forall \, v \in V
\end{align*}

We say a flow $x$ is \textit{feasible} if, for all $uw \in A$, $l_{uw} \leq x_{uw} \leq u_{uw}$ and, for all $v \in V$, $b(v) = b_x(v)$.
Further, we define the \textit{cost} of a flow as $\sum_{uw \in A}c_{uw}x_{uw}$.
We are now in a position to prove the correspondence between the optimal allocation problem (Definition~\ref{def: optimal-allocation}) and its network representation (Definition~\ref{def: network-representation}).

\begin{theorem}
    \label{thm: correspondance}
    Any integer feasible flow $x$ in $\mathcal{N}$ defined as in Definition~\ref{def: network-representation} corresponds to a valid solution to the treatment allocation problem defined as in Definition~\ref{def: optimal-allocation}.
    Further, the cost of any feasible flow $x$ is exactly the negative value of its corresponding value in the treatment allocation problem.
\end{theorem}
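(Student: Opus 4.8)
The plan is to read an allocation directly off the flow on the arcs between treatments and recipients, and then verify the two allocation constraints from flow conservation at the intermediate vertices. Concretely, given an integer feasible flow $x$, I would set $D_i^j := x_{v_1v_2}$ where $v_1 = j \in V_1$ and $v_2 = i \in V_2$. Because $x$ is integral and $l_{v_1v_2} = 0$, $u_{v_1v_2} = 1$, each such value lies in $\{0,1\}$, so the candidate $D_i^j$ is a legal $0$/$1$ assignment as required by Definition~\ref{def: optimal-allocation}.

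Next I would verify the recipient constraint $\sum_j D_i^j = 1$. Fix a recipient vertex $v_2$. Since $b(v_2) = 0$, the balance condition $b_x(v_2) = 0$ reads $x_{v_2t} - \sum_{v_1 \in V_1} x_{v_1v_2} = 0$. The arc $v_2t$ has $l_{v_2t} = u_{v_2t} = 1$, forcing $x_{v_2t} = 1$, so $\sum_{v_1} x_{v_1v_2} = 1$, i.e., $\sum_j D_i^j = 1$. Symmetrically, for the capacity constraint I would fix a treatment vertex $v_1$; conservation gives $\sum_{v_2 \in V_2} x_{v_1v_2} = x_{sv_1}$, and since $u_{sv_1} = m$ we obtain $\sum_{v_2} x_{v_1v_2} = x_{sv_1} \le m$, i.e., $\sum_i D_i^j \le m$. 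Both constraints of Definition~\ref{def: optimal-allocation} therefore hold, establishing that $x$ corresponds to a valid allocation.

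For the cost statement I would expand $\sum_{uw \in A} c_{uw} x_{uw}$ over the three arc classes. The source arcs $sv_1$ and the sink arcs $v_2t$ all carry cost $0$ and contribute nothing, so the total cost equals $\sum_{v_1, v_2} c_{v_1v_2} x_{v_1v_2} = -\sum_{v_1, v_2} Y_{v_2}^{v_1} x_{v_1v_2} = -\sum_i \sum_j Y_i^j D_i^j$, which is exactly the negative of the objective value in Definition~\ref{def: optimal-allocation}.

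The argument is essentially bookkeeping driven by flow conservation, so I do not expect a deep obstacle; the one point that needs care is recognizing that it is the lower bound $l_{v_2t} = 1$ — not merely the capacity $u_{v_2t} = 1$ — that upgrades the recipient condition from ``at most one treatment'' to ``exactly one treatment,'' which is what makes the correspondence align with the equality constraint $\sum_j D_i^j = 1$ rather than an inequality. I would also note that integrality is invoked only to guarantee $D_i^j \in \{0,1\}$, whereas the conservation and cost identities themselves hold for any feasible flow.
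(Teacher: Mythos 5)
Your proposal is correct and follows essentially the same route as the paper's proof: define $D_{v_2}^{v_1} = x_{v_1v_2}$, use flow conservation at each treatment vertex (with $u_{sv_1}=m$) for the capacity constraint and at each recipient vertex (with $l_{v_2t}=u_{v_2t}=1$) for the exactly-one constraint, then expand the cost over the three arc classes. Your closing remarks on the role of the lower bound and of integrality are accurate refinements of the same argument, not a different approach.
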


\begin{proof}
    We sketch the main points of the proof here.
    Appendix \ref{sec: extended-proof} provides further details.

    Any integer feasible flow $x$ in $\mathcal{N}$ makes use of at most $m$ arcs from each $v_1 \in V_1$ (i.e., respects capacities) and exactly one arc from each $v_2 \in V_2$ (i.e., allocates exactly one treatment to each recipient).
    Let $D_{v_2}^{v_1} = x_{v_1v_2}$ for all $v_1 \in V_1$ and $v_2 \in V_2$.
    Then the set of $D_{v_2}^{v_1}$s satisfies the constraints of the binary program formulation of Definition~\ref{def: optimal-allocation} and is thus a valid solution to the allocation problem, with cost $\sum_{uw \in A}c_{uw}x_{uw} = -\sum_{v_2 \in V_2}\sum_{v_1 \in V_1}Y_{v_2}^{v_1}D_{v_2}^{v_1}$.
\end{proof}

It follows straightforwardly that the allocation problem has a solution if and only if an integer feasible flow exists in its network representation.
Further, any integer minimum cost feasible flow in an allocation problem's corresponding network representation corresponds to the optimal value of the allocation problem.

\begin{corollary}
    \label{cor: existence}
    The allocation problem as defined in Definition~\ref{def: optimal-allocation} has a solution if and only if $m|V_1| \geq |V_2|$, and any integer minimum cost feasible flow in its network representation (Definition~\ref{def: network-representation}) corresponds to an optimal solution of the associated allocation problem.
\end{corollary}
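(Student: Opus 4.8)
The plan is to split the corollary into its two assertions — the existence criterion $m|V_1| \geq |V_2|$ and the optimality of minimum cost flows — and reduce each to Theorem~\ref{thm: correspondance} together with its converse. The first step is to upgrade Theorem~\ref{thm: correspondance}, which maps integer feasible flows to valid allocations, into a correspondence that runs in both directions. Concretely, I would show that every valid allocation $\{D_{v_2}^{v_1}\}$ induces an integer feasible flow via the natural construction $x_{v_1v_2} = D_{v_2}^{v_1}$, $x_{v_2t} = 1$ for every recipient, and $x_{sv_1} = \sum_{v_2} D_{v_2}^{v_1}$ for every treatment. One then verifies the bounds ($0 \leq x_{v_1v_2} \leq 1$, $x_{v_2t} = 1$, and $0 \leq x_{sv_1} \leq m$ because each treatment serves at most $m$ recipients) and the balance equations $b_x(v) = b(v)$ at every vertex. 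Since Theorem~\ref{thm: correspondance} already gives value $= -\text{cost}$ in the forward direction, this produces a correspondence between valid allocations and integer feasible flows that is value-preserving up to sign in both directions.

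With the correspondence established, the existence claim reduces to elementary counting. For the forward implication, any valid allocation sends each of the $|V_2|$ recipients to exactly one treatment while no treatment receives more than $m$ of them, so summing the per-treatment loads gives $|V_2| \leq m|V_1|$. For the reverse implication, when $m|V_1| \geq |V_2|$ I would exhibit an explicit allocation by assigning recipients to treatments one at a time, advancing to the next treatment only once the current one reaches its capacity $m$; this process places all $|V_2|$ recipients precisely because the total capacity $m|V_1|$ is at least $|V_2|$. Chaining these two directions with the equivalence ``a valid allocation exists $\iff$ an integer feasible flow exists'' yields the stated criterion.

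For the optimality claim, I would invoke the second part of Theorem~\ref{thm: correspondance}: the cost of any feasible flow equals the negation of the value of its corresponding allocation. Because the correspondence covers all valid allocations, the set of attainable flow costs is exactly the set of negated attainable allocation values, so minimizing cost over integer feasible flows coincides with maximizing value over valid allocations. Consequently an integer minimum cost feasible flow maps to a value-maximizing, i.e.\ optimal, allocation.

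I expect the main obstacle to be the careful verification of the converse direction of the correspondence — in particular confirming that the flow built from an arbitrary valid allocation meets every balance constraint (which forces $b(s) = |V_2|$ and $b(t) = -|V_2|$) and respects the source capacities $u_{sv_1} = m$ — since both the existence criterion and the optimality statement hinge on the correspondence being genuinely two-sided rather than one-directional. The remaining arguments are routine counting and the observation that negation converts a minimization into a maximization.
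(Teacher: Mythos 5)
Your proposal is correct and follows the same overall strategy as the paper (counting for the capacity bound, plus Theorem~\ref{thm: correspondance} for the optimality claim), but you are substantially more careful than the paper's own proof, and your extra work is not wasted. The paper proves only the ``only if'' direction of the existence criterion (a valid allocation forces $m|V_1| \geq |V_2|$ by summing the constraints) and never explicitly constructs an allocation when $m|V_1| \geq |V_2|$ holds; your greedy fill-to-capacity construction supplies that missing direction. More importantly, the paper's optimality argument leans entirely on Theorem~\ref{thm: correspondance}, which only maps flows to allocations; without the converse map (allocation $\mapsto$ flow) that you build via $x_{v_1v_2} = D_{v_2}^{v_1}$, $x_{v_2t}=1$, $x_{sv_1}=\sum_{v_2}D_{v_2}^{v_1}$, one cannot rule out that the optimal allocation fails to be flow-representable, in which case the minimum cost flow would only be optimal among flow-representable allocations. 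The paper waves at this with ``it follows straightforwardly that the allocation problem has a solution if and only if an integer feasible flow exists,'' but your explicit verification of the bounds and balance equations (including the observation that feasibility pins down $b(s)=|V_2|$, which Definition~\ref{def: network-representation} leaves underdetermined) is what actually makes the two-sided correspondence, and hence the corollary, rigorous. In short: same route, but you close two genuine gaps the paper leaves open.
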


\begin{proof}
    Clearly, $\sum_j D_i^j = 1 \implies \sum_i\sum_j D_i^j = |V_2|$ and $\sum_iD_i^j \leq m \implies \sum_i\sum_j D_i^j \leq m|V_1|$.
    Hence, $m|V_1| \geq |V_2|$.
    
    Since, by Theorem~\ref{thm: correspondance}, the cost of any integer feasible flow in the network representation of the allocation problem is equal to the negative value of the optimal solution to the allocation problem, finding an integer minimum cost feasible flow in the allocation problem's network representation corresponds to finding an optimal solution to the allocation problem.
\end{proof}

To prove that a strongly polynomial algorithm for solving the optimal allocation problem exists, it suffices to show that an integer minimum cost feasible flow in its corresponding network can be found in strongly polynomial time.\footnote{It also requires that the network representation of an allocation problem can be found in strongly polynomial time, but this may be easily verified.}

\begin{theorem}
    \label{thm: complexity}
    Let $\mathcal{N} := (V, A, l, u, b, c)$ denote the network representation (Definition~\ref{def: network-representation}) of the allocation problem (Definition~\ref{def: optimal-allocation}) and let $n_1$ denote the number of treatments and $n_2$ the number of recipients.
    Given the existence of an integer feasible flow in $\mathcal{N}$, an integer minimum cost feasible flow $x$ in $\mathcal{N}$ (and, in turn, an optimal solution to the allocation problem) can be found in time $\mathcal{O}\left[(n_1^3n_2^2 + n_1^2n_2^3) \log^2 (n_1 + n_2)\right]$. 
\end{theorem}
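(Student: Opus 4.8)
The plan is to reduce the statement to a single invocation of a known strongly polynomial minimum cost flow algorithm, after checking that the network $\mathcal{N}$ is small, that integrality is automatic, and that the lower bounds require no special handling. By Corollary~\ref{cor: existence} and Theorem~\ref{thm: correspondance}, an integer minimum cost feasible flow in $\mathcal{N}$ already yields an optimal allocation, and feasibility is granted by hypothesis, so the only remaining task is to produce such a flow within the stated time. First I would record the dimensions of $\mathcal{N}$ from Definition~\ref{def: network-representation}: it has $|V| = n_1 + n_2 + 2 = \mathcal{O}(n_1 + n_2)$ vertices and $|A| = n_1 + n_1 n_2 + n_2 = \mathcal{O}(n_1 n_2)$ arcs, with $l$, $u$, $b$ all integer valued, and I would note that any feasible flow has value $\sum_{v_2 \in V_2} x_{v_2 t} = n_2$.

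Second, I would make the structural observation that removes any need to treat the lower bounds $l_{v_2 t} = 1$ separately: the $n_2$ arcs entering $t$ each have capacity $1$, so a flow of value $n_2$ into $t$ must saturate every one of them. Hence every flow of value $n_2$ automatically carries exactly one unit on each $v_2 t$ arc, enforcing the ``exactly one treatment per recipient'' constraint for free. It therefore suffices to compute a minimum cost flow of value $n_2$ from $s$ to $t$. Because all capacities and balances are integral, the integrality property of minimum cost flows guarantees an integral optimum, and any algorithm maintaining integral flows returns one, so the ``integer'' qualifier costs nothing.

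Third, I would establish strong polynomiality concretely via the successive shortest path method applied on the residual network. Since each arc into $t$ has capacity $1$, every $s$--$t$ augmenting path that raises the flow value ends on a forward $v_2 t$ arc of residual capacity $1$, so each augmentation pushes exactly one unit and the number of augmentations is exactly $n_2$ — a count depending only on the problem dimension, not on the magnitudes of the costs $-Y_{v_2}^{v_1}$ or of $m$. The initial zero flow leaves the residual network equal to the layered acyclic digraph $s \to V_1 \to V_2 \to t$, so valid node potentials can be obtained from a single topological-order sweep despite the negative arc costs; each subsequent augmentation is then one Dijkstra computation on reduced (nonnegative) costs, costing $\mathcal{O}(|A| \log |V|)$ with a binary heap. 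Multiplying gives $\mathcal{O}(n_2 \, |A| \log|V|) = \mathcal{O}(n_1 n_2^2 \log(n_1+n_2))$, which is comfortably dominated by the stated $\mathcal{O}[(n_1^3 n_2^2 + n_1^2 n_2^3)\log^2(n_1+n_2)]$; equivalently one may quote Orlin's $\mathcal{O}(|A| \log|V| (|A| + |V|\log|V|))$ algorithm and substitute the same dimensions.

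The main obstacle is not a single deep step but the accounting that pins down \emph{strong} polynomiality and then lands the stated polynomial: one must verify that neither the iteration count $n_2$ nor the per-iteration shortest-path cost depends on the numerical values of the $Y_{v_2}^{v_1}$ (this magnitude-independence is exactly what strong polynomiality demands), initialize the potentials correctly in the presence of negative costs, and confirm that saturating the unit-capacity arcs into $t$ genuinely reproduces the feasibility constraints of Definition~\ref{def: optimal-allocation}. Once these are in place, substituting $|V| = \mathcal{O}(n_1+n_2)$ and $|A| = \mathcal{O}(n_1 n_2)$ into the algorithm's running time and simplifying yields the claimed bound (indeed a somewhat sharper one), completing the proof.
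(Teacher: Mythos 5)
Your proposal is correct, but it takes a genuinely different route from the paper. The paper treats the network as a black box: it invokes the cancel-and-tighten variant of cycle cancelling due to Goldberg and Tarjan, quotes its general strongly polynomial bound $\mathcal{O}\left[nm^2(\log n)^2\right]$, argues integrality by induction on the cycle augmentations, and then substitutes $n = 2 + n_1 + n_2$, $m = n_1 + n_2 + n_1n_2$ to land exactly on the stated bound. You instead exploit the special structure of $\mathcal{N}$: the unit capacities on the arcs into $t$ force every augmenting path in the successive-shortest-path method to carry exactly one unit, so the iteration count is exactly $n_2$ --- a dimension, not a magnitude --- which is precisely what rescues SSP (not strongly polynomial in general) in this instance; the initial layered DAG lets you seed the potentials by a topological sweep despite the negative costs $-Y_{v_2}^{v_1}$, and Dijkstra on reduced costs does the rest. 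Your accounting is sound, including the observation that the lower bounds $l_{v_2t}=1$ are subsumed by demanding flow value $n_2$, and it yields the sharper bound $\mathcal{O}\left[n_1 n_2^2 \log(n_1+n_2)\right]$, which implies the theorem. What the paper's approach buys is robustness: because it makes no use of the unit capacities into $t$, the same bound transfers verbatim to the modified networks discussed later (the Pareto-guaranteed network, differential or enlarged capacities $u_{v_2t}$), whereas your augmentation count would grow with $\sum_{v_2} u_{v_2t}$ and would need to be re-examined in those extensions. What your approach buys is a substantially better complexity for the theorem as actually stated, and a more self-contained argument for integrality (unit augmentations from the zero flow).
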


\begin{proof}
    We sketch the main points of the proof here.
    Appendix \ref{sec: extended-proof} provides further details.

    Let $n = |V|$ denote the number of vertices and $m = |A|$ the number of arcs in the underlying digraph $D$ of $\mathcal{N}$.
    Noting that capacities (and lower bounds and balance vectors) are integers, using the cancel-and-tighten algorithm of \citet{goldberg1989finding} allows us to find a minimum cost feasible flow in $\mathcal{N}$ in time $\mathcal{O}(nm^2 (\log n)^2)$ (even when costs are arbitrary real-valued).

    Noting that the cancel-and-cut algorithm is a variant of the cycle cancelling algorithm \citep{klein1967primal}, we may use a convenient \textit{integrality} property of minimum cost flows, namely that given all integer lower bounds, capacities, and balance vectors, there exists an integer minimum cost flow:
    At each iteration, we augment our flow along the cycle $C$ by $\delta(C)$, by which we mean the minimum residual capacity of any arc on $C$ in $\mathcal{N}$, and thus our new flow after any iteration is $x^{t+1} = x^t \oplus \delta(C)$.
    Since the residual capacity on any arc is always an integer, $x^{t+1}$ is an integer flow so long as $x^t$ is, and thus by induction we have an integer minimum cost flow.

    Noting that $n = 2 + n_1 + n_2$ and $m = n_1 + n_2 + n_1n_2$, we have $\mathcal{O}(n m^2 (\log n)^2) = \mathcal{O}\left[(n_1^3n_2^2 + n_1^2n_2^3) \log^2 (n_1 + n_2)\right]$.
\end{proof}

For our application of allocating nurses to infants, our ultimate goal is to provide guidance for a re-allocation of nurses in such a way as to maximize their ``treatment effects''.
In cases where nurses differentially affect children, i.e., there exists $i_1, i_2, j_1, j_2, i_1 \neq i_2, j_1 \neq j_2$ such that $Y_{i_2}^{j_1} - Y_{i_1}^{j_1} > Y_{i_1}^{j_2} - Y_{i_2}^{j_2}$, re-allocating nurse $j_1$ from child $i_2$ to $i_1$ (and nurse $j_2$ from child $i_1$ to $i_2$) will improve average outcomes.\footnote{However, there is no guarantee that neither child is made worse off, an issue we turn to in Section~\ref{sec: pareto-allocation}.}
While we do not observe all $Y_i^j$, we may use sample estimates of these, as may be obtained by estimating heterogeneous nurse treatment effects by child pre-treatment characteristics.
Thus, our approach is designed to solve allocation problems once (estimates of) potential outcomes are obtained, and is then guaranteed to always be able to find the (non-parametric) optimal allocation of treatments to recipients in strongly polynomial time.

In a setting such as the one above, the number of treatments is required to grow as the number of recipients does, but this might not be the case if treatment is, e.g., a drug or an advertisement campaign.
In such cases we obtain complexity $\mathcal{O}\left[n_2^3 (\log n_2)^2\right]$.



\subsection{Pareto-Improvement Guarantees}
\label{sec: pareto-allocation}

While our approach taken thus far does not guarantee that no child is made worse off, it is easily extendable to cover scenarios where only Pareto improvements are allowed.
Given some ``initial solution'' (allocation of treatments to recipients), we call this new problem the \textit{Pareto-guaranteed optimal allocation problem} and formally define it as:

\begin{definition}[Pareto-guaranteed optimal allocation problem]
    \label{def: pareto-optimal-allocation}
    Let $Y_i^j \in \mathbb{R}$ denote the outcome of recipient $i$ under treatment $j$ and let $D_i^j \in \{0, 1\}$ be one if recipient $i$ receives treatment $j$.
    The \textit{Pareto-guaranteed optimal allocation problem} is then the problem of choosing $D_i^j$ such that the maximum average realized outcome is achieved, while respecting that each recipient must be allocated exactly one treatment, that no treatment is allocated to more individuals than its capacity allows, and that no re-allocation leads to any recipient be made worse off compared to the baseline choices of $D_i^j$.
\end{definition}

We shall make use of a slight modification to the network representation of Definition~\ref{def: network-representation} to prove our extension:

\begin{definition}[Pareto-guaranteed network representation]
    \label{def: pareto-network-representation}
    Let $\mathcal{N} := (V, A, l, u, b, c)$ be defined as in Definition~\ref{def: network-representation}.
    Remove arcs from $\mathcal{N}$ as follows:
    For each recipient $v_2^* \in V_2$, let $\bar{Y}_{v_2^*}^{v_1^*}$ denote the realized outcome under no re-allocation, i.e., nurse $v_1^*$ is allocated $v_2^*$ under no re-allocation.
    Now, for each $v_1 \in V_1$ and $v_2 \in V_2$, remove the arc $v_1v_2$ if $Y_{v_2}^{v_1} < \bar{Y}_{v_2^*}^{v_1^*}$ to construct the network $\mathcal{N}_P$.\footnote{Alternatively, one could set $u_{v_1v_2} = 0$ in these cases.}
    We shall call $\mathcal{N}_P$ constructed this way the \textit{Pareto-guaranteed network representation} of an instance of the \textit{Pareto-guaranteed optimal allocation problem} of Definition~\ref{def: pareto-optimal-allocation}.
\end{definition}

With our modified network definition, we are in a position to prove that our results extend to settings where only Pareto improvements are allowed.

\begin{theorem}
    \label{thm: pareto-solution}
    Let $\mathcal{N_P}$ as defined in Definition~\ref{def: pareto-network-representation} be the Pareto-guaranteed network representation of the Pareto-guaranteed optimal allocation problem of Definition~\ref{def: pareto-optimal-allocation}, and let $n_1$ denote the number of treatments and $n_2$ the number of recipients.    
    Given the existence of an integer feasible flow in $\mathcal{N_P}$, an integer minimum cost feasible flow $x$ in $\mathcal{N_P}$ is guaranteed to make no recipient worse off and to correspond to the (non-parametric) optimal allocation given the constraints.
    The solution can always be found in at most time $\mathcal{O}\left[(n_1^3n_2^2 + n_1^2n_2^3) \log^2 (n_1 + n_2)\right]$. 
\end{theorem}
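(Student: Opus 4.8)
The plan is to reuse, essentially verbatim, the machinery already established for the unconstrained problem and to show that the Pareto constraint is encoded \emph{entirely} by the arc deletions defining $\mathcal{N}_P$. Concretely, I would reduce the three assertions of the theorem (no recipient is made worse off, optimality over the constrained feasible set, and the runtime bound) to Theorem~\ref{thm: correspondance}, Corollary~\ref{cor: existence}, and Theorem~\ref{thm: complexity} respectively, each applied to the restricted network $\mathcal{N}_P$. Throughout, for each recipient $v_2 \in V_2$ let $\bar{Y}_{v_2}$ denote its baseline realized outcome, i.e., its outcome under the nurse allocated to it in the initial solution.

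First I would establish the Pareto-safety claim. Since $\mathcal{N}_P$ is obtained from $\mathcal{N}$ by deleting only arcs of the form $v_1 v_2$, leaving every arc $s v_1$ and every arc $v_2 t$ intact, any integer feasible flow $x$ in $\mathcal{N}_P$ is in particular an integer feasible flow in $\mathcal{N}$. By Theorem~\ref{thm: correspondance} it therefore routes exactly one unit into each recipient $v_2$ along some surviving arc $v_1 v_2$, and by construction of $\mathcal{N}_P$ every surviving arc satisfies $Y_{v_2}^{v_1} \geq \bar{Y}_{v_2}$. Hence the realized outcome of each recipient under the induced allocation is at least its baseline outcome, so no recipient is made worse off.

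Next I would argue optimality. The correspondence $D_{v_2}^{v_1} = x_{v_1 v_2}$ of Theorem~\ref{thm: correspondance} restricts to a bijection between integer feasible flows in $\mathcal{N}_P$ and valid allocations (each recipient treated once, all capacities respected) that are additionally Pareto-improving over the baseline, since Pareto-improvement is exactly the arc-survival condition $Y_{v_2}^{v_1} \geq \bar{Y}_{v_2}$. The cost identity $\sum_{uw \in A} c_{uw} x_{uw} = -\sum_{v_2 \in V_2}\sum_{v_1 \in V_1} Y_{v_2}^{v_1} D_{v_2}^{v_1}$ is unaffected by the deletions, as the deleted arcs carry no flow. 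Therefore, exactly as in Corollary~\ref{cor: existence}, an integer minimum cost feasible flow in $\mathcal{N}_P$ maximizes the realized outcome over precisely the feasible set of Definition~\ref{def: pareto-optimal-allocation}, which is the desired optimal Pareto-guaranteed allocation. For the runtime, I note that $\mathcal{N}_P$ has the same vertex set as $\mathcal{N}$ and an arc set of size at most $|A|$, so running the cancel-and-tighten procedure of \citet{goldberg1989finding} on $\mathcal{N}_P$ inherits verbatim both the integrality argument and the bound $\mathcal{O}(n m^2 (\log n)^2)$ from the proof of Theorem~\ref{thm: complexity}; substituting $n = 2 + n_1 + n_2$ and $m \leq n_1 + n_2 + n_1 n_2$ yields the stated $\mathcal{O}\left[(n_1^3 n_2^2 + n_1^2 n_2^3)\log^2(n_1+n_2)\right]$, now read as an upper bound (hence ``at most'').

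The main obstacle, and the only genuinely new point, is feasibility: deleting arcs can in principle render the constrained problem infeasible, and one should check that this does not silently vacate the theorem. This turns out to be benign. For each recipient $v_2$ the baseline arc carries $Y_{v_2}^{v_1} = \bar{Y}_{v_2}$ with equality and is therefore never deleted, so the baseline allocation itself remains an integer feasible flow in $\mathcal{N}_P$. Consequently a feasible flow always exists whenever the baseline allocation is valid, and the theorem's hypothesis is automatically satisfied in the relevant cases. The hard part of the writeup is thus not a deep argument but the careful bookkeeping: matching the arc-deletion rule precisely to the ``no recipient worse off'' constraint, and confirming that the $s v_1$ capacity structure and the $v_2 t$ unit-demand structure — together with the cost function — are left completely untouched by the deletions.
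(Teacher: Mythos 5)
Your proposal is correct and follows essentially the same route as the paper's proof: the arc deletions encode the Pareto constraint exactly, so Theorem~\ref{thm: correspondance} gives the correspondence and Pareto-safety, and Theorem~\ref{thm: complexity} (which places no structural restriction on the network) gives the runtime bound. Your additional observation that the baseline arcs survive the deletion (since $Y_{v_2}^{v_1^*} = \bar{Y}_{v_2}$ fails the strict inequality) and hence the baseline allocation remains feasible in $\mathcal{N}_P$ is a nice touch not made explicit in the paper, but it does not change the argument.
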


\begin{proof}
    Under the constraints of Definition~\ref{def: pareto-optimal-allocation}, no flow $x$ with value more than $0$ may pass through one of the arcs deleted when modifying $\mathcal{N}$ to $\mathcal{N}_P$.
    Thus, we may without loss of generality remove these arcs.
    The correspondence of an integer minimum cost feasible flow in $\mathcal{N}_P$ and the optimal solution to the Pareto-guaranteed optimal allocation problem then follows from Theorem~\ref{thm: correspondance}.

    Using Theorem~\ref{thm: complexity} (which puts no restriction on $\mathcal{N}$), we can always find an integer minimum cost feasible flow (if one exists) $x$ in $\mathcal{N}_P$ in at most time $\mathcal{O}\left[(n_1^3n_2^2 + n_1^2n_2^3) \log^2 (n_1 + n_2)\right]$.
\end{proof}

Generally, our approach for solving the optimal allocation problem is readily generalizable to a broad range of alternative settings.
In fact, any alternative formulation of the problem which can be represented by the type of network we introduce is immediately covered (with one such example being the Pareto-guaranteed optimal allocation problem).
Other settings include the possibility of multiple (additive in effect) treatments available for some recipients (by appropriately changing capacities of the type $u_{v_2t}$) and treatments being differentially available to recipients (by appropriately removing arcs of the type $v_1v_2$).



\section{Institutional Background of the 1960s Copenhagen Nurse Home Visiting Program}
\label{sec: background}

In this section, we give an overview of the 1960s Copenhagen NHV program that serves as the background of our empirical application of the results derived in Section~\ref{sec: allocation}.
The 1960s Copenhagen NHV program provides the scene for the study by \citet{baker2023universal} on the role of extended NHV, and we refer the interested reader to that paper for additional details on the program, and in particular the Copenhagen trial on extended NHV.\footnote{The cohort profile we have created as part of our efforts of transcribing and linking the NHV records is described in more detail in \citet{bjerregaard2023cohort}.}

Universal home visiting for families with infants in Denmark has a rich history, dating back to 1937 when the Danish National Board of Health (DNBH) launched a program to address high infant mortality rates.
This initiative aimed to combat infant mortality rates of around six percent in the early 1930s. 
Utilizing staggered introductions across municipalities from 1937 to 1949, previous research has highlighted both short- and long-term health benefits of program participation \citep{wust2012early,  hjort2017universal}. 

By the 1960s, with significant improvements in living conditions and a decline in infant mortality rates to around two percent, the DNBH revised the program to emphasize broader health monitoring and encourage relevant parental health investments \citep{dodelighed19311960}.
This shift in focus aligns with the evolving landscape of early childhood programs, such as the US Head Start program, which also underscored parental investments during toddler years \citep{BarrGibbs2017}. 
With midwife-assisted home births being the norm and limited formal childcare options, interventions in the family home, especially during toddler years, gained prominence.

Amid these developments, the DNBH conducted experiments with extended home visiting, including the ``Copenhagen trial'' studied by \citet{baker2023universal}, which encompassed children born between 1959 and 1967.\footnote{Exploiting quasi-random allocation to a three (vs. baseline one) year NHV program in Copenhagen in the 1960s, \citet{baker2023universal} document positive long-run health (and to a lesser extend employment) effects, and notably significant heterogeneity with respect to characteristics such as birth weight (with low birth weight children much more positively affected than the average child).}
In this trial, nurses offered additional follow-up during existing first-year visits to families residing in Copenhagen. 
While the ``Copenhagen trial'' aimed at evaluating the efficacy of a longer visiting schedule, it also meticulously tracked child development during the baseline period (that is, the first year of a child's life) for everyone living in Copenhagen and born between 1959-1967.
Table~\ref{tab: visit-content-1yr} shows an overview of topics covered at different visits during a child's first year visits.\footnote{The position as an infant health nurse required education beyond that of an ordinary nurse, within maternity care, specialized paediatric care, epidemic or tuberculosis care, or care for patients with mental illness. All infant health nurses had to pass a course at Aarhus University \citep{kuhn1939vejledning, dha1954vejledning, dha1961vejledning}.}
Topics include objective development measurements (e.g., child weight), self-reported parenting decisions (e.g., type of nutrition), and nurse-assessed family characteristics (e.g., socioeconomic status and mother well-being).\footnote{While more than the eight visits indicated in Table~\ref{tab: visit-content-1yr} could take place (with an average of 13 visits per child taking place), structured information collection took place at these eight visits, with the nurse records containing pre-printed fields to be filled in at these specific visits.}

\begin{table}
    \caption{Content of the First Year Home Visiting Program in Copenhagen.}
    \label{tab: visit-content-1yr}
    \rowcolors{2}{gray!25}{white}
    \resizebox{1.0\linewidth}{!}{
        \begin{tabular}{p{0.3\linewidth} p{0.4\linewidth} cccc}
            \toprule
            \rowcolor{white}
            & & \multicolumn{3}{c}{Age of Child} \\ \cline{3-5}
            \rowcolor{white}
            Topic & Example Items & 2 weeks & 1, 2, 3, 4, 6, 9 months & 12 months \\
            \midrule
            Family  &                                              Socioeconomic status, mother mental and physical health &             &                  \checkmark &  \checkmark \\
            Mother labor force &                                           Employment status, childcare status &             &                  \checkmark &  \checkmark \\
            Nutrition &                                                                                                            &             &                             &             \\
            \hspace{3mm}a. &                                                             Infant feeding, number of meals &  \checkmark &                  \checkmark &  \checkmark \\
            \hspace{3mm}b. &                                                                                   Duration of breastfeeding &             &                             &  \checkmark \\        Child development &                                                                                                            &             &                             &             \\
            \hspace{3mm}a. &                                                                    Smiles, lifts head, babbles, sits alone &             &                  \checkmark &  \checkmark \\
            \hspace{3mm}b. &                                                                                                     Height &             &                             &  \checkmark \\
            \hspace{3mm}c. &                                                                                                     Weight &  \checkmark &                  \checkmark &  \checkmark \\
            \hspace{3mm}d. &  Walks, \#teeth, general health assessment, vaccination status, ever hospitalized &             &                             &  \checkmark \\
            
            Child sleeping conditions &                                                                                            Own bed &             &                  \checkmark &  \checkmark \\
            \bottomrule
        \end{tabular}
    }
    \begin{minipage}{1\linewidth}
        \vspace{1ex}
        \footnotesize{
            \textit{Notes:}
            The table shows topics covered in the first-year visits and example items for nurse registrations in the children's records. 
            At each age, more than one nurse visit could be performed (depending on family needs), with an average of around 13 first year visits during the trial (\citeauthor{stadsarkiv}, various years). 
            For each age-specific topic, nurse registrations were made at one of those visits.
            \textit{Source}: Table due to \citet{baker2023universal}.
        }
    \end{minipage}
 \end{table}

While the ``Copenhagen trial'' altered the landscape of child home visiting, it did so in a way unlikely to contaminate the study of differences in the impact of NHV by nurse, namely by quasi-random allocation to the prolonged program by day of the month of birth, assigning everyone born the first three days of any of the months of the nine year trial to the extended schedule.
While the number of visits each child received was reduced as a consequence, in order to compensate nurses for the additional visits to the children enrolled into the extended schedule,\footnote{The average number of first year visits was reduced to around 13, compared to the pre-trial average of around 14 (\citeauthor{stadsarkiv}, various years).} this was done in a ``homogeneous'' way across children and should not interfere with our design. 



\section{Data}
\label{sec: data}

Our study combines two primary sources of data, those being handwritten nurse records from the 1960s and Danish administrative data, to allow us to combine detailed information on childhood development with long-run outcomes from administrative register data, in total following individuals from their birth to when they are around 50-60 years old (depending on their year of birth).
Additionally, we use archive material detailing which nurse district each nurse of the Copenhagen infant nurse program worked in.
Combining these sources, we are able to identify which nurse visited each child and follow that child throughout her first year of life and from her adulthood until current time.



\subsection{Copenhagen Nurse Records}
\label{sec: data-nurse-records}

Nurse records with information on childhood development is available for all children born in Copenhagen between 1959-1967.\footnote{Years before or after are not available, and we speculate that these years were archived due to those being the years of the ``Copenhagen trial'' studied in \citet{baker2023universal}.}
Figure~\ref{fig: nurse-journal} shows the scan of a nurse journal of a child, with parts blackened for confidentially reasons (in the source material we have available, these black patches are not present).

\begin{figure}
    \centering
    \includegraphics[width=1\textwidth]{./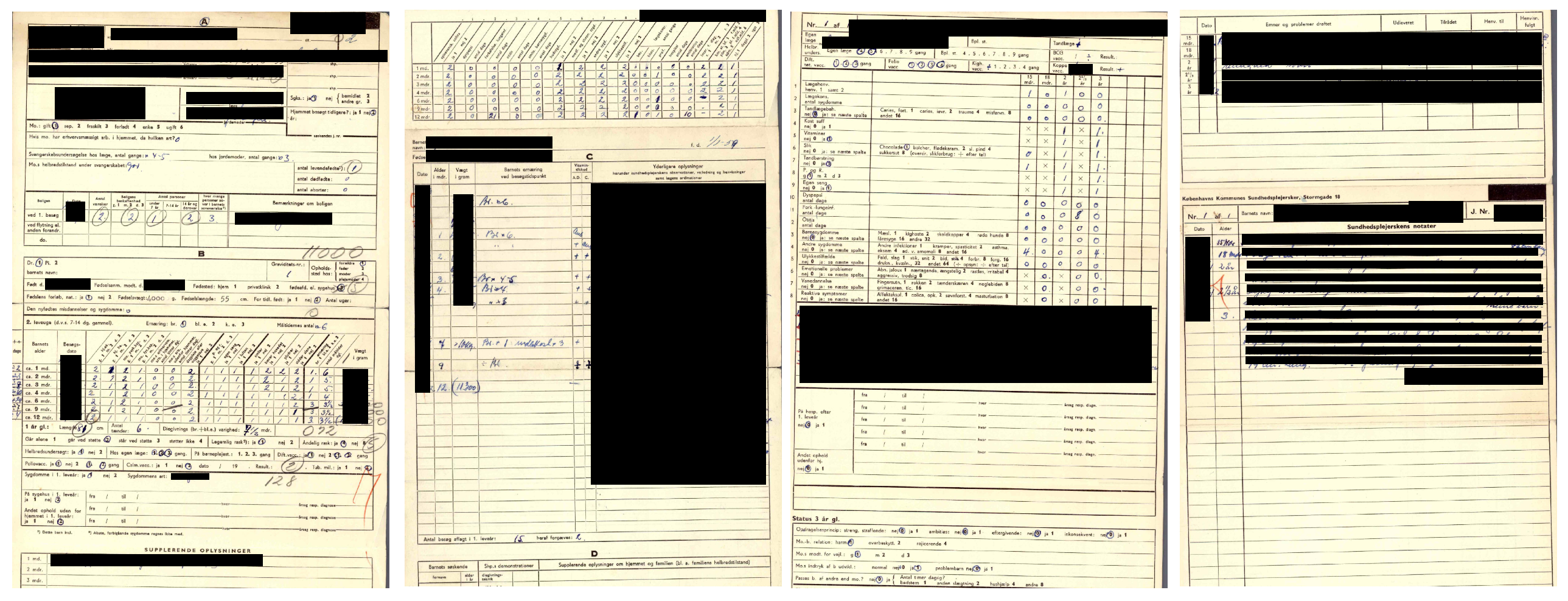}  
    \caption{Sample Nurse Record for a Copenhagen Child.}    
    \label{fig: nurse-journal} 
    \begin{minipage}{1\linewidth}
        \vspace{1ex}
 	  \footnotesize{
            \textit{Notes:}
            The pages depict a scanned nurse record of a child. For confidentiality reasons, parts of the pages are blackened. 
            The first page contains the table for first-year nurse registrations. The second page (flip side) is a page primarily for nurse comments in free text. 
            The third page contains the table for second- and third-year registrations (the ``treatment table''). 
            The final page (flip side) allows for further nurse comments during the second and third year.
            \textit{Source}: Figure due to \citet{baker2023universal}.
 	}
    \end{minipage}
\end{figure}

In joint, concurrent work, \citet{baker2023universal, bjerregaard2023cohort} use ML to transcribe the contents of these records and link them to Danish administrative register data.\footnote{See \citet{dahl2023bdad} for details on an unsupervised ML approach we use to identify types of journal pages and, as a consequence, identify which children were part of the treatment arm of the ``Copenhagen trial''.}
While the Danish unique personal identifier was introduced in 1968 -- i.e., after the birth of all the children of the records -- we nevertheless are able to obtain their personal identification number by using the name and date of birth of the child and her parents, linking them to the Danish Central Person Registry (CPR).

We transcribe the collection of nurse records by using \texttt{timmsn}, a Python library for image-to-text translation \citep{tsdj2023timmsn}.
The library is based on the PyTorch Image Models library by \citet{rw2019timm}, an image recognition Python library.
The full transcription code is available upon request and will be made available open-source at \url{https://github.com/TorbenSDJohansen/cihvr-transcription}.
For the interested reader we refer to Appendix~\ref{sec: transcription-details} for additional details.
We transcribe most of the contents of the nurse records with between 95\%-99\% accuracy, with slightly lower transcription accuracy of around 93\% for nurse names (see Appendix Table~\ref{tab: transc-accs} for details).

\paragraph{Data Linkage and Coverage}
We link the collection of nurse records to administrative data using the unique CPR number (similar to the US Social Security Number) of the children of the records.
Figure~\ref{fig: flowchart} shows the process from the raw, scanned nurse records to the link with outcomes from Danish registers.
While the full collection of scanned nurse records consist of 95,323 documents, some of these we identify as duplicates and others as non-records (e.g., notice of movement), leading to a total of 92,902 records (of which 92,279 contain date of birth, which we need to obtain the children's CPR numbers).
Some records we are unable to link to the Danish CPR, which may be due to poor scan quality or death prior to the establishment of the CPR in 1968, leaving us with 88,808 records we are able to link to the CPR.
Note, however, that 808 of these children do not result in a match with our administrative data (which starts in 1977), primarily due to emigration or death, leaving us with a sample of 88,000 records (children) we can link to administrative outcome data.
For the majority of our analyses, we shall make use of information on the nurse visiting the child and the district of the nurse; constraining the sample to only those records (children) with this information brings the sample size down to 75,318.\footnote{Here, we also restrain the sample to records on which the name of the nurse occur at least \minNumberChildrenByNurse times across the entire collection of nurse records. This is done for two reasons: First, for confidentially reasons we are not allowed to report estimates of small group sizes, and once we combine nurse name information with year and district information, subgroups otherwise become too small. Second, nurse names that occur rarely are more likely to be artefacts from transcription, such as an otherwise valid name having been transcribed slightly incorrectly.}

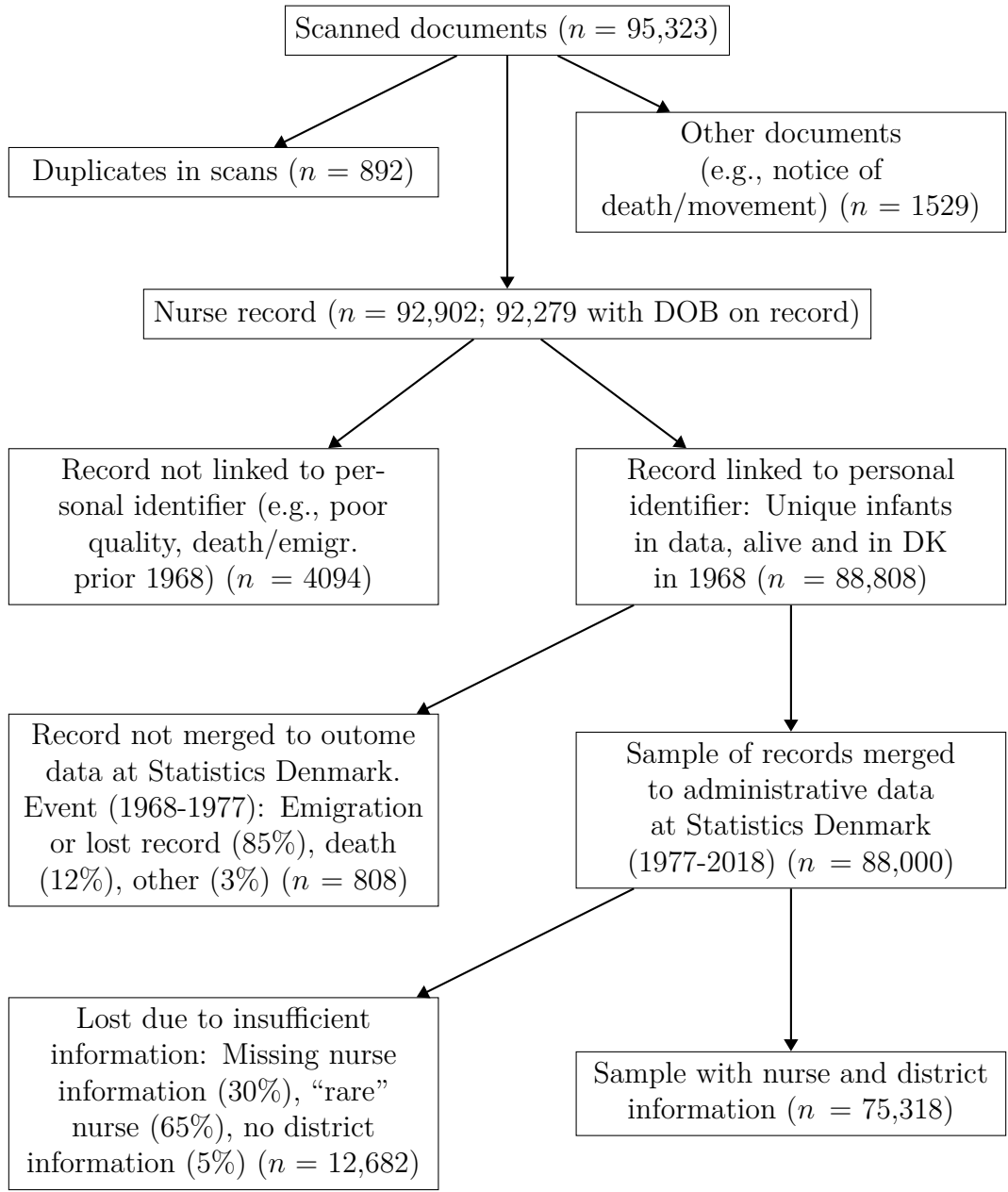
\begin{figure}
    \centering
    \singlespacing
    \begin{tikzpicture}
        \node[rectangle,draw,align=center] at (6, 0) (root) {Scanned documents ($n=$ 95,323)};
    
        \node[rectangle,draw,text width=14em,align=center] at (2, -2) (dupes) {Duplicates in scans ($n= 892$)};
    
        \node[rectangle,draw,text width=14em,align=center] at (10, -2) (deathmove) {Other documents (e.g., notice of death/movement) ($n=$ 1529)};
    
        \node[rectangle,draw,align=center] at (6, -4) (nurserecords) {Nurse record ($n=$ 92,902; 
        92,279 with DOB on record)};
    
        \node[rectangle,draw,text width=14em,align=center] at (2, -7) (notassigned) {Record not linked to personal identifier (e.g., poor quality, death/emigr. prior 1968) ($n=$ 4094)};
    
        \node[rectangle,draw,text width=14em,align=center] at (10, -7) (assigned) {Record linked to personal identifier: Unique infants in data, alive and in DK in 1968 ($n=$ 88,808)};
    
        \node[rectangle,draw,text width=14em,align=center] at (2, -11) (infantnotfound) {Record not merged to outome data at Statistics Denmark. Event (1968-1977): Emigration or lost record (85\%), death (12\%), other (3\%) ($n=$ 808)};
    
        \node[rectangle,draw,text width=14em,align=center] at (10, -11) (sample) {Sample of records merged to administrative data at Statistics Denmark (1977-2018) ($n = $ 88,000)};

        \node[rectangle,draw,text width=14em,align=center] at (2, -15) (nonurse) {Lost due to insufficient information: Missing nurse information (30\%), ``rare'' nurse (65\%), no district information (5\%) ($n = $ 12,682)};
    
        \node[rectangle,draw,text width=14em,align=center] at (10, -15) (withnurse) {Sample with nurse and district information ($n = $ 75,318)};
    
        \draw [thick,-Triangle] (root) to node {} (nurserecords);
        \draw [thick,-Triangle] (root) to node {} (dupes);
        \draw [thick,-Triangle] (root) to node {} (deathmove);
        \draw [thick,-Triangle] (nurserecords) to node {} (notassigned);
        \draw [thick,-Triangle] (nurserecords) to node {} (assigned);
        \draw [thick,-Triangle] (assigned) to node {} (infantnotfound);
        \draw [thick,-Triangle] (assigned) to node {} (sample);
        \draw [thick,-Triangle] (sample) to node {} (nonurse);
        \draw [thick,-Triangle] (sample) to node {} (withnurse);
    \end{tikzpicture}
    \caption{Flowchart: Linkage and Merge of Scanned Records to Administrative Data.}
    \label{fig: flowchart}
    \begin{minipage}{1\linewidth}
        \vspace{1ex}
        \footnotesize{
        \textit{Notes:}
        The figure shows the sample size in terms of the scanned nurse records to the individuals we are able to merge with administrative data.
        Numbers in parentheses indicate sample sizes.
        ``Rare'' nurse refers to nurses whose names occur on fewer than \minNumberChildrenByNurse    records.
        DOB is shorthand for date of birth.
        The sample sizes used for individual analyses might deviate slightly due to less stringent sample requirements due to (1) nurse name information not being important or (2) nurse district information not being important.
 	}
    \end{minipage}
\end{figure}


\paragraph{Child-Nurse Matches}
Our study relies on identifying the specific nurse responsible for visiting each child, and further on being able to identify which nurse district each nurse belonged to, in order for us to be able to compare children within the same nurse district with each other.
To do this, we use our transcriptions of nurse name for each journal, and merge this with archive material on which district each nurse belonged to.
We have been able to find information listing for each nurse the district they served in for the years 1957, 1963, 1965 (two separate accounts), and 1968, and it is this information we use to associate to each child the nurse district to which they belonged.
This is a potential limitation, as nurses may change district over time, and while the data on nurse districts cover the entire time span we consider, we lack information for certain years (if, e.g., a certain nurse in 1964 served in a different district than in 1963 and 1965) and may have incomplete information on all nurses (if, e.g., a nurse was employed only in 1964, thus not showing up in any of the years we observe).

We match nurse records to nurse districts by identifying ``most likely'' matches between the nurse name transcriptions of the collection of nurse records with the archive material listing nurses and their districts.
As the number of different nurses is limited (\numberUniqueNurses), often encountered issues in linking based on names pertaining to non-unique names is limited. 
However, we note the following potential issues:
First, while our transcription accuracy is high (see Appendix Table~\ref{tab: transc-accs}), it is not perfect, and this will result in a number of children with an incorrect nurse name transcription, leading to missed matches.\footnote{However, this is very unlikely to lead to a wrong match, as it would require a name to be transcribed incorrectly \textit{as another name}. Further, we observe that incorrect transcriptions are primarily related to poor scan quality or source material degradation, both of which are unlikely to lead to any systematic bias. As such, we mainly view this weakness as one decreasing our precision, not as one introducing potential bias.}
Second, nurses would not always write their name fully out.
Specifically, we often observe that nurses would write their initial (of their first name) as well as their last name, rather than fully writing out their name.
For this reason, we iteratively match nurses, looking first at exact matches where both first and last name of nurses are fully written, and then gradually relax this requirement over six steps.\footnote{In the second step, we -- for those we did not match in the previous step -- relax the requirements to allow use of initial in place of full first name. We then (3) match on first name initial and full last name, (4) match on full first name, (5) match on full last name, and (6) match on just first name initial. In cases where any of these steps leads to more than one match, we select the match closest in time, i.e., if a child is born in 1963 and we match the nurse on her record to our list of nurse districts for both 1963 and 1965, we select the match from the 1963 nurse district data.} 
Third, as we only observe certain ``snapshots'' with respect to the nurses' districts, we potentially assign a wrong district to some children, particularly for the years furthest away from when we have information on nurse districts.
For this reason, we experiment with samples where we limit the maximum time span between a child's birth and one of our snapshots.

In total, we are able to match \shareRecordToDistrictMatch of the children with a nurse district, depending on the strictness of our matching approach.\footnote{This refers to the share of nurse records with a non-rare nurse name transcription, of which there are 76,579 in total. This number is slightly higher than the final sample size shown in Figure~\ref{fig: flowchart} due to missing district information for a few of the 76,579 records.}
We focus on the larger sample that potentially includes lower quality matches, but also show robustness of our results to stricter versions of our matching approach.
In total, \numberUniqueNurses different nurses appear across the collection of nurse records, and using our most lenient method of matching information on nurse district allows us to obtain this information for \numberUniqueNursesWithDistrictInfo of the nurses.



\subsection{Danish Administrative Data}

\label{sec: data-adm}
We combine data from Danish registers to obtain information on long-run outcomes in the form of education and labor market outcomes. 
We obtain data from education and labor market registers for the years 1980-2018/2019, respectively. 
From education registers, we obtain information on the years of completed education of our focal individuals as well as their relatives, including information on highest completed educational level (e.g., mandatory, university, etc.).
From labor market registers, we obtain information on employment and earnings of our focal individuals and their relatives.
For each individual, we obtain the share of time in employment during each year for which we have data, as well as their earnings each year.\footnote{We inflation-adjust earnings to reflect 2015 values and winsorize one percent of each tail, the latter which we do separately for each age. 
}
Using our employment and income data, we obtain average share of time in employment and income during ages 25-50 for each individual.\footnote{If information is missing for one or multiple ages of an individual during these 25 years, we take the average of the ages with non-missing employment/income information.}



\section{Empirical Methods}
\label{sec: empirical-methods}

Directly comparing outcomes of children for whom different nurses were allocated is likely to lead to false conclusions given significant differences in resources across different areas of Copenhagen (such as a family's available resources).
To account for such differences, we make use of nurse districts and information on year of birth, exploiting that children born within a district-by-year group are likely to be ``as good as randomly'' assigned a nurse from that district.\footnote{In Section~\ref{sec: results}, we empirically verify this by comparing pre-treatment characteristics of children allocated different nurses but born within the same district-by-year group.}
Given that each nurse was supposed to be responsible for around 160 children, the birth of a new child in a district is likely to be assigned to the nurse from that district with the smallest current workload. 
Given that the Copenhagen nurse program followed all children during their first year of life, the allocation of a child to a nurse within a district is likely to be as good as random.

Cast in terms of the potential outcomes framework, we assume the potential outcomes of a child are independent of the nurse visiting the child, at least conditionally on the nurse district-by-year.\footnote{Conditioning on the district might be important if, e.g., the potential outcomes of a child changes if the child lived in another district where, e.g., the schools were better. Further, the parents of children in one district might vary systematically from the parents of children in another district, for instance reflecting socioeconomic differences.}
Letting $Y_{id}^j$ denote the potential outcome of child $i$ in district-by-year $d$ under the ``treatment'' (visits) of nurse $j$, we assume:

\begin{align}
    Y_{id}^j \ind D_i^j \, | \, G_{id},
\end{align}

\noindent
where $D_i^j \in \{0, 1\}$ is one if child $i$ was visited by nurse $j$ and $G_{id}$ symbols the nurse district-by-year groups. 
The realized outcome of child $i$ is then $Y_{id} = \sum_j Y_{id}^j D_i^j$, and our objective is to draw inference on ``treatment effects'' of the type $\tau_{ij_1j_2} = Y_{id}^{j_1} - Y_{id}^{j_2}$, where now both $j_1$ and $j_2$, $j_1 \neq j_2$, represent nurses.
This, then, represents the effect of assigning nurse $j_1$ rather than $j_2$ to child $i$, i.e., what is the change in the outcome of child $i$ by re-assigning nurses in such a way that child $i$ now receives visits form nurse $j_1$ rather than nurse $j_2$.

If we knew $Y_{id}^j$, we could directly calculate all $\tau_{idj_1j_2}$, but in the absence of this, where only the realized outcome is observed, we turn our attention first to the more coarse treatment effects of the type $\tau_{dj_1j_2} = \mathbb{E}\left[Y_d^{j_1}\right] - \mathbb{E}\left[Y_d^{j_2}\right]$, i.e., where we no longer subscript with $i$, instead averaging over children to obtain average treatment effects.
Given (conditional) random allocation of nurses to children, this can be estimated by plugging in sample equivalents of $\mathbb{E}\left[Y_d^{j}\right]$, namely $\frac{1}{\sum_i D_i^j} \sum_i D_i^jY_{id}$.

Due to potential differences between district-by-year groups we consider the simplest case, focusing on some specific district-by-year group (and then leaving out subscripts $d$ for notational simplicity).
Here, we may regress an outcome $Y_i$ of individual $i$ on dummy variables $D_i^j$, where $j$ enumerates the different nurses and where $D_i^j = 1$ if child $i$ was assigned nurse $j$:

\begin{align}
    Y_i = \beta_jD_i^j + \epsilon_i,
\end{align}

\noindent
where $\beta_j$ then denotes the expected outcome of children assigned nurse $j$; thus, our estimate of $\tau_{j_1j_2}$ is then:

\begin{align}
    \hat{\tau}_{j_1j_2} = \hat{\beta}_{j_1} - \hat{\beta}_{j_2} = \frac{1}{\sum_i D_i^{j_1}} \sum_i D_i^{j_1}Y_i - \frac{1}{\sum_i D_i^{j_2}} \sum_i D_i^{j_2}Y_i
\end{align}

The above approach allows us to compare the average outcomes of children by nurse within any specific district-by-year group.
A further challenge arises, however, if we try to compare treatment effect estimates between district-by-year groups.
The reason is as follows: 
Any treatment effect is a difference in average potential outcomes between two groups of children, and thus the size of any treatment effect depends not only on the visiting nurse, but also on the counterfactual alternative nurse you use as the base for your comparison.
For that reason, large treatment effects might arise from either or both of (1) the nurse $j_1$ being in the right tail of the skill distribution and/or (2) the nurse $j_2$ being in the left tail of the skill distribution.
For those reasons, directly comparing treatment effects between district-by-year groups is not possible

When we then turn to heterogeneity of nurse effects depending on child pre-treatment characteristics, we let $X_i \in \mathbb{R}^l$ denote the vector of characteristics for child $i$.\footnote{For example, $X_i$ might include low birth weight status or absence of father.}
We once again consider the simplest case of one district-by-year group, and now include $X_i$ to account for nurse-by-child-characteristic differences:

\begin{align}
    Y_i = f\left(D_i^j, X_i\right) + \epsilon_i,
\end{align}

\noindent
where we may use causal ML to estimate heterogeneous treatment effects, such as generalized random forests \citep{wager2018estimation, athey2019generalized}.\footnote{We use extensions by \citet{nie2021quasi} to allow estimation in our setting of multi-arm treatments.}
The heterogeneity we are able to capture this way reflects the cross-product of differences between nurses and children and, if present, allow us to consider welfare effects of nurse reallocation.

Our identification strategy relies on quasi-random allocation of children to nurses within nurse district-by-year groups.
If this is not so, and nurses are selectively allocated to children as might be the case if children from families the least well off are more likely to be allocated a specific nurse, conditional independence between potential outcomes and treatment no longer holds.
To mitigate such concerns, we perform several tests of differences in pre-treatment variables between the children of different nurses.

\section{Results}
\label{sec: results}


\subsection{Descriptive Statistics}
\label{sec: descr-stats}

We start by documenting the number of nurses and the number of children per year, district, and nurse.
We can estimate the number of nurses in two ways, based on either statistics from archives or directly from the collection of nurse records we transcribe.
From archive material, we know the number of nurses for the years from which we have data (1957, 1963, 1965-1969).
This includes their names, which allow us to track them over time, allowing us to calculate the total number of unique nurses.
From the collection of nurse records, we use our transcriptions of nurse names for each record to obtain statistics on the number of nurses for each year as well as the total number of unique nurses during the period 1959-1967.
Due to imperfect transcriptions, however, we err on the side of caution and define a nurse only when the name of the nurse appears on a sufficient number nurse records, to avoid a small error in a transcription resulting in a name with one letter off now occurring as a unique nurse with just one record; we require \minNumberChildrenByNurse occurrences of a name to include it.\footnote{Further, due to confidentiality reasons we are required to aggregate statistics up to include a certain minimum number of children.}
Table~\ref{tab: number-of-nurses} shows the number of nurses for each year as well as the total number of nurses, as estimated from either archive material or the collection of nurse records.
From the years for which we have data from both archive material and the nurse records, we generally see slightly fewer nurses from the archive material than from the nurse records (with 1965 being an exception).
This is expected, as the first reports a snapshot and the second includes nurses present just for parts of the year.\footnote{For example, a snapshot of nurses for May may not include a nurse that stopped working in that year before May or one that started later than May.}
In total, however, we identify more nurses from the archive material than from the nurse records; this is also not surprising as the archive material stretches over a longer period of time (1957-1969 vs. 1959-1967).
In our main analyses, we continue with the sample of \numberUniqueNurses unique nurses, the name of which we know occurred on at least \minNumberChildrenByNurse nurse records.\footnote{In our analyses exploiting information on district we continue with \numberUniqueNursesWithDistrictInfo nurses, due to incomplete district information.}

\begin{table}
    \centering
    \centering
    \caption{Number of Unique Nurses by Year.}
    \label{tab: number-of-nurses}
    \begin{tabular}{l rr}
         \toprule
          & Archive material & Nurse records \\
         \midrule
         1957           &   58  &     \\
         1959           &       &  88 \\
         1960           &       &  93 \\
         1961           &       &  93 \\
         1962           &       &  96 \\
         1963           &   74  &  99 \\
         1964           &       & 101 \\
         1965           &  113  &  96 \\
         1966           &   64  &  95 \\
         1967           &   62  &  93 \\
         1968           &   70  &     \\
         1969           &   69  &     \\
         \midrule
         Total (unique) & 166   & \numberUniqueNurses \\
         \bottomrule
    \end{tabular}
    \begin{minipage}{1\linewidth}
        \vspace{1ex}
        \footnotesize{
        \textit{Notes:}
        The table shows the number of unique nurses, identified through either archive materials or transcriptions of the collection of nurse records.
        We do not have an estimate of both sources for each year, which leads to some empty fields.
        The final row shows the total number of unique nurses, i.e., the same nurse present for multiple years counts only once.
		}
	\end{minipage}
\end{table}

Our identification strategy relies on the assumption of random allocation of children to nurses within each nurse district.
From archive material, we know that the aim of the Copenhagen nurse visiting program was to have each nurse be responsible for around 160 children, and we would thus expect each nurse at any point in time to be responsible for somewhere around 160 children (\citeauthor{stadsarkiv}, various years).
To assess the number of children by year, district, and nurse, Figure~\ref{fig: obs-by-year-district-nurse} shows the number of children in our primary sample by year of birth (Panel~\ref{subfig: obs-by-year}), district (Panel~\ref{subfig: obs-by-district}), and nurse (Panel~\ref{subfig: obs-by-nurse}).
Note that only nurse names occurring at least \minNumberChildrenByNurse times are included.

\begin{figure}
    \centering
    \subfloat[Children by Year]{\label{subfig: obs-by-year}\includegraphics[width=0.33\linewidth]{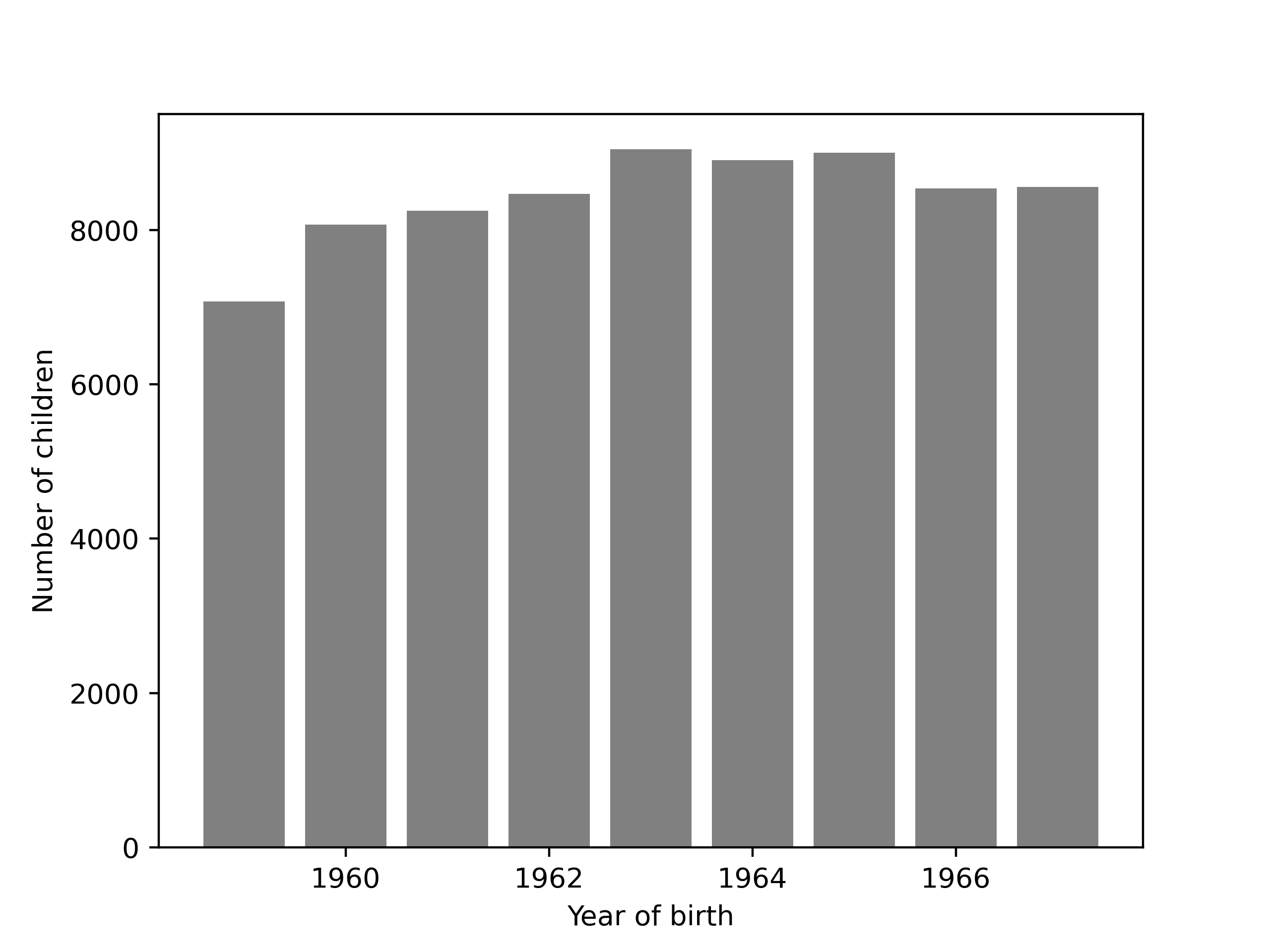}}
    \subfloat[Children by District]{\label{subfig: obs-by-district}\includegraphics[width=0.33\linewidth]{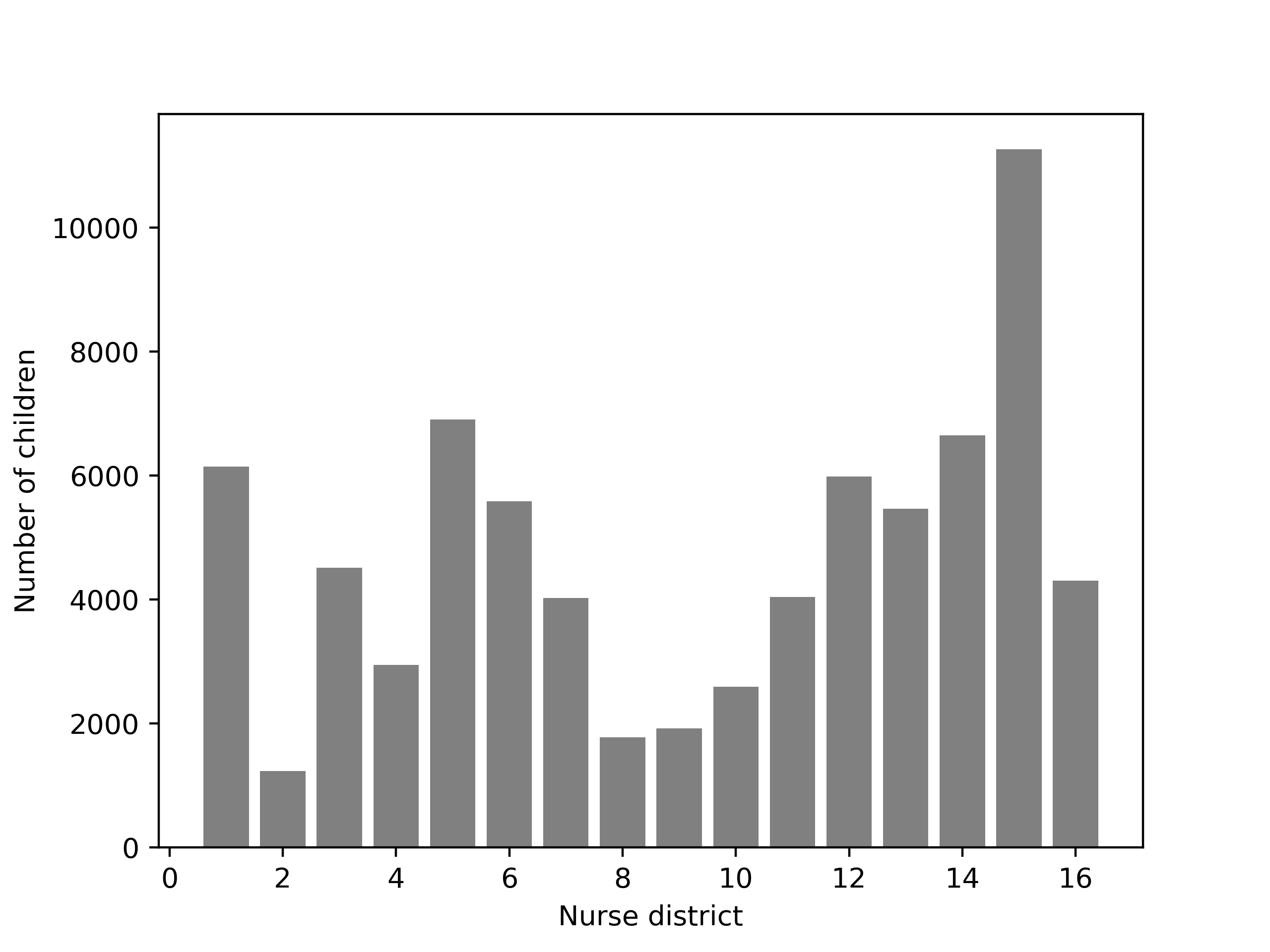}}
    \subfloat[Children by Nurse]{\label{subfig: obs-by-nurse}\includegraphics[width=0.33\linewidth]{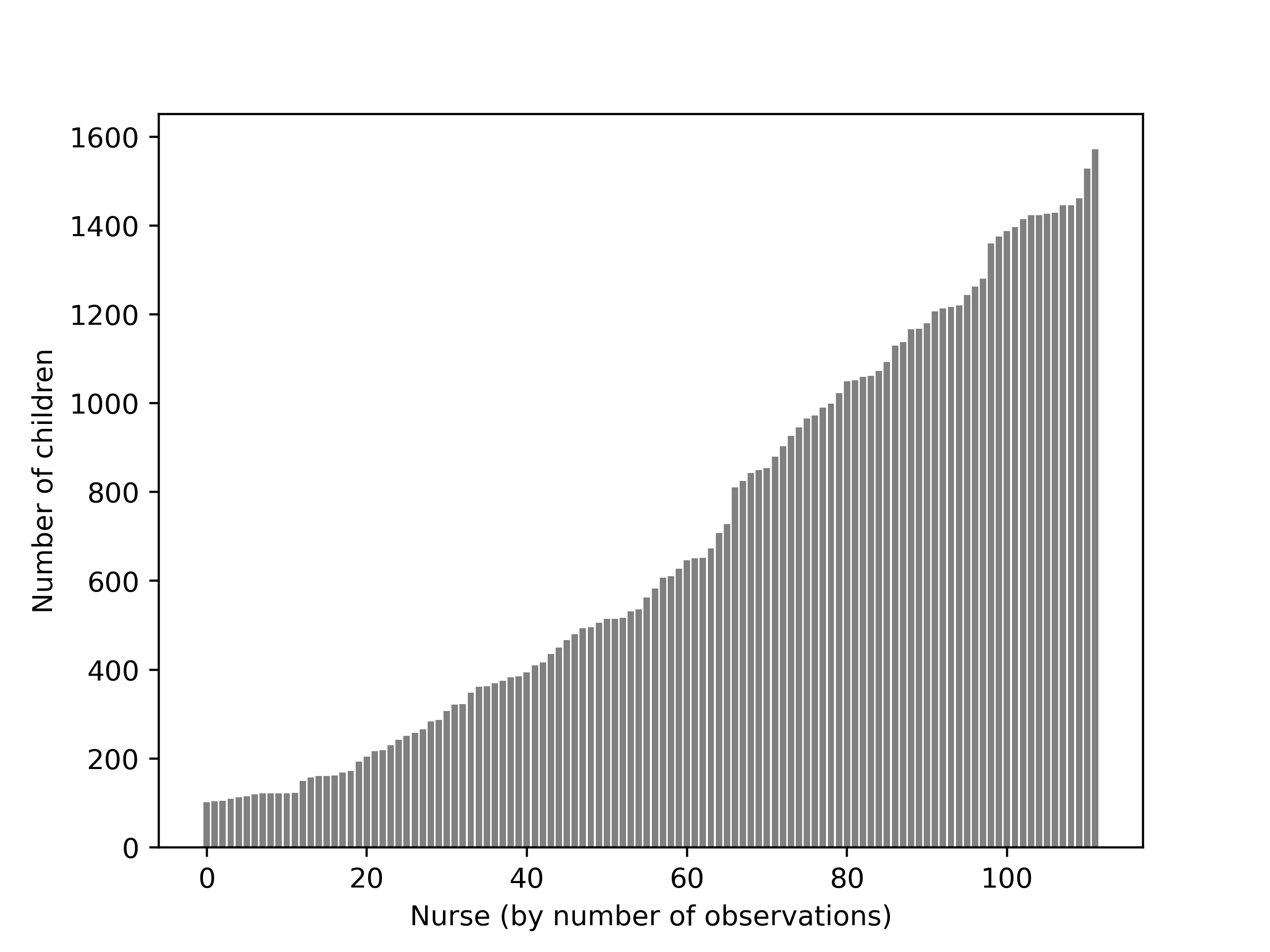}}
    \caption{Number of Children by Year, District, and Nurse.}
    \label{fig: obs-by-year-district-nurse}
    \begin{minipage}{1\linewidth}
        \vspace{1ex}
        \footnotesize{
        \textit{Notes:}
        The figure shows the number of children (i.e., nurse records) in our primary sample by year of birth (Panel~\ref{subfig: obs-by-year}), district (Panel~\ref{subfig: obs-by-district}), and nurse (Panel~\ref{subfig: obs-by-nurse}).
        Note that only nurse names occurring at least \minNumberChildrenByNurse times are included.
		}
	\end{minipage}
\end{figure}

As is evident from Figure~\ref{fig: obs-by-year-district-nurse}, the number of children by nurse varies substantially.
While this at first appears problematic for our design, this is explained by differences in how many years the individual nurses were present:
A nurse present in all of the years 1959-1967 will occur on more journals than one only present in 1959.
To gauge the real number of children present at the same time per nurse, we therefore calculate the number of children by \textit{nurse-year}.
We do this by calculating the number of children born in each calendar year for each nurse, and only include a nurse-year if at least one child born in each month of the given year was assigned to the specific nurse.\footnote{This is done to eliminate the issue of a nurse only being present for part of a year.}
We expect this to lead to a density with high mass centered at close to but below 160 (given that our sample includes close to but not all of the potential children of the nurse program and the program aimed at each nurse being allocated around 160 children).
Indeed, Figure~\ref{fig: obs-by-district-year-and-nurse-year}, which shows the density of nurse-years as well as district-years, exhibits this pattern:
Panel~\ref{subfig: obs-by-nurse-year} shows the density of the number of children by nurse-year, which exhibits a tight center of mass, with the average number of children by nurse-year being around 140. 
Turning to Panel~\ref{subfig: obs-by-district-year}, which shows the density of the number of children by district-year, there is more dispersion, indicating that some districts were larger than others in terms of number of children in the district.
Appendix Figure~\ref{fig: obs-by-year-by-district} shows the number of children within each district-by-year combination, showcasing that districts that were larger at the start of the period (1959) generally continue to be so over all the available years (1959-1967).
Further, as we would expect, the number of nurses is larger in those districts with more children. 

\begin{figure}
    \centering
    \subfloat[Children by Nurse-Year]{\label{subfig: obs-by-nurse-year}\includegraphics[width=0.5\linewidth]{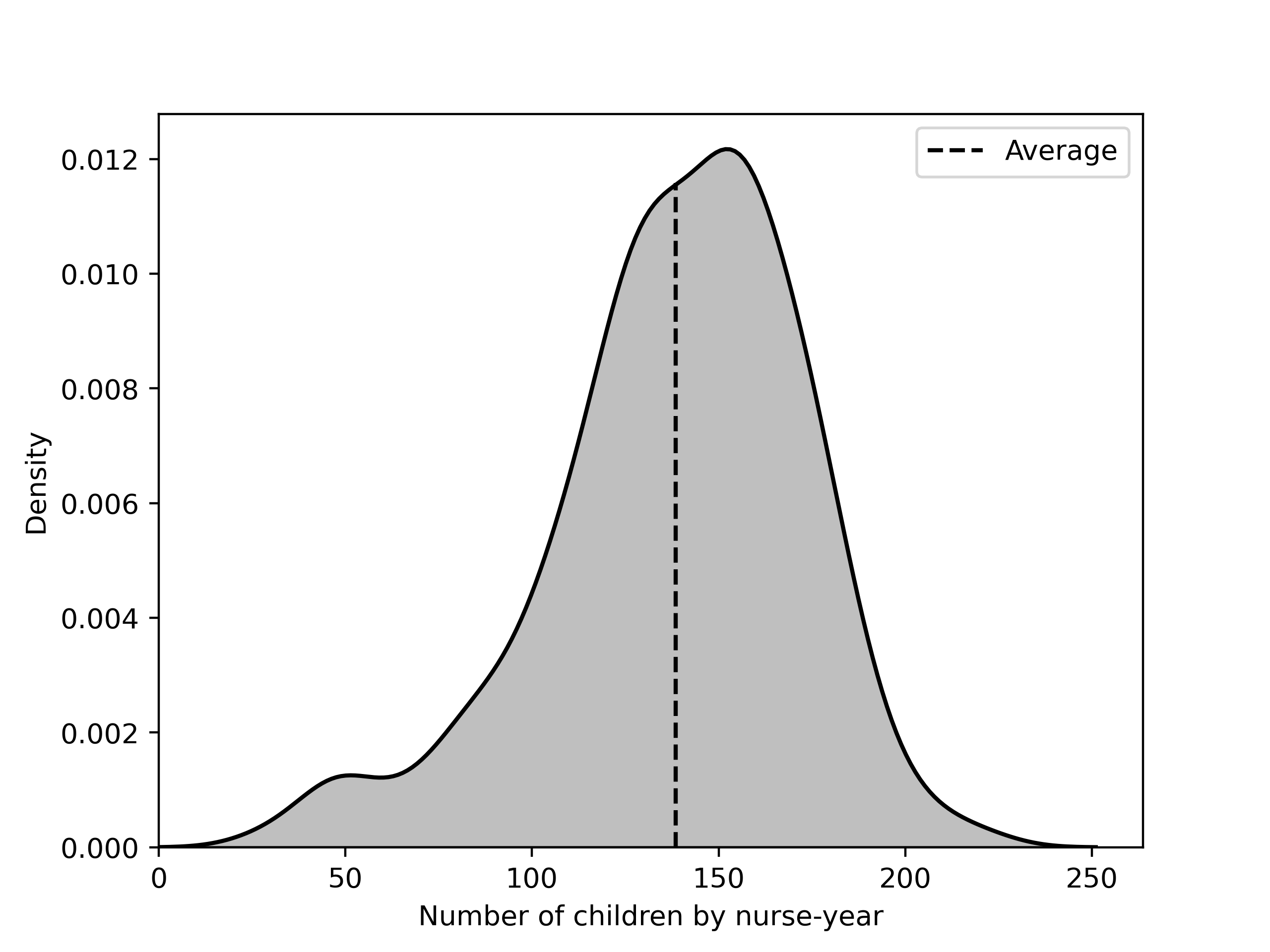}}
    \subfloat[Children by District-Year]{\label{subfig: obs-by-district-year}\includegraphics[width=0.5\linewidth]{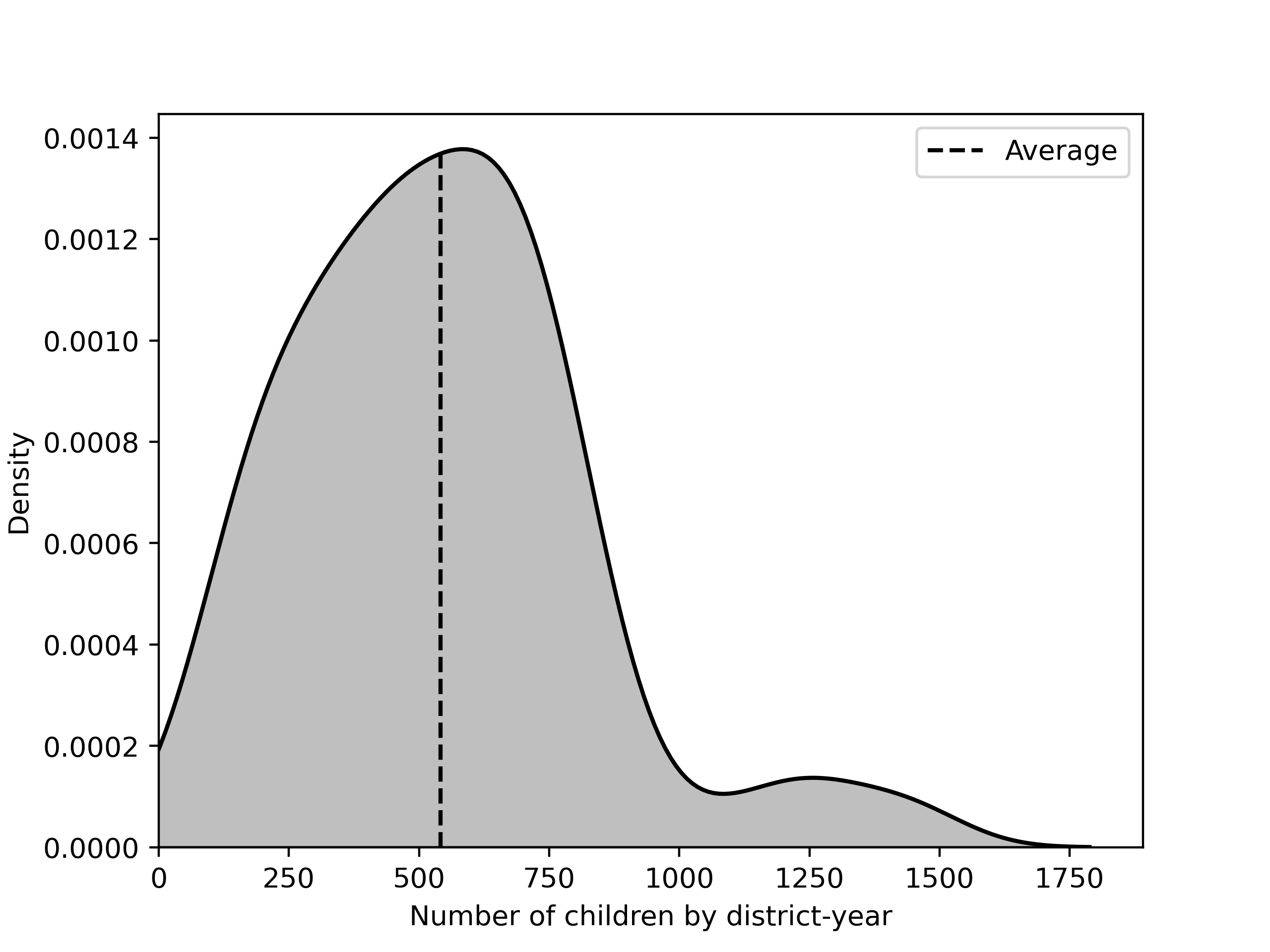}}
    \caption{Number of Children by Nurse-Year and District-Year.}
    \label{fig: obs-by-district-year-and-nurse-year}
    \begin{minipage}{1\linewidth}
        \vspace{1ex}
        \footnotesize{
        \textit{Notes:}
        The figure shows the density of the number of children by nurse-year (Panel~\ref{subfig: obs-by-nurse-year}) and district-year (Panel~\ref{subfig: obs-by-district-year}).
		}
	\end{minipage}
\end{figure}

Table~\ref{tab: descriptives} shows means, standard deviations, and number of observations for pre-treatment and outcome variables for our primary sample, with the last column showing p-values from a Kruskal-Wallis H-test for independent samples \citep{kruskal1952use}, comparing the population of nurses against each other with a null-hypothesis of no difference in the median outcome of children of different nurses.\footnote{Testing for differences in means (i.e., an ANOVA test) leads to the same picture, but we prefer the Kruskal-Wallis test due to its milder assumptions.}
If children were randomly assigned nurses, we would expect the p-values of the pre-treatment variables (\textit{Panel A}) to be large, while the p-values of the outcome variables (\textit{Panel B}) could still be small if nurses differed with respect to their treatment effect.
However, as is clear from the table there are statistically significant differences between nurses in all measures, with the largest p-value being for sex (0.069). 
This is not surprising, given that nurses served in different districts (and time periods) with different populations of children, due to, e.g., different levels of socioeconomic status of the parents of our focal individuals between nurse districts.

\begin{table}
    \centering
    \caption{Descriptive Statistics.}
    \label{tab: descriptives}
    \begin{tabular}{l rrrr}
        \toprule
        & Mean & Standard deviation & No. of obs. & P-value \\
        \midrule
        \multicolumn{5}{l}{\textit{Panel A: Background Characteristics, Nurse Records \& Administrative Data}} \\
        \midrule
        Birth weight (g) & 3,326.76 & 527.81 & 76,467 & 0.000 \\
        Low BW &     0.05 &   0.22 & 76,467 & 0.000 \\
        Birth length (cm) &    51.42 &   2.40 & 76,257 & 0.000 \\
        Born prior to due date &     0.12 &   0.33 & 75,118 & 0.000 \\
        Weeks prior to due date &     3.41 &   1.87 &  8,619 & 0.001 \\
        Year of birth & 1,963.12 &   2.53 & 75,899 & 0.000 \\
        Born 1-3 &     0.10 &   0.30 & 75,899 & 0.015 \\
        Child parity &     1.63 &   0.79 & 75,184 & 0.000 \\
        Firstborn &     0.53 &   0.50 & 75,184 & 0.000 \\
        Female &     0.49 &   0.50 & 75,360 & 0.069 \\
        Mother yrs. of educ. &    11.25 &   3.13 & 72,303 & 0.000 \\
        Mother age at birth &    25.17 &   5.38 & 75,184 & 0.000 \\
        Father yrs. of educ. &    12.37 &   3.40 & 64,494 & 0.000 \\
        Father age at birth &    28.41 &   6.67 & 72,087 & 0.000 \\
        Father missing &     0.06 &   0.23 & 76,579 & 0.000 \\
        High SES, 1 mo. &     0.12 &   0.32 & 49,942 & 0.000 \\
        \midrule
        \multicolumn{5}{l}{\textit{Panel B: Outcome Measures, Nurse Records \& Administrative Data}} \\
        \midrule
        Breastfed, 1 mo. & 0.58 & 0.49 & 70,452 &  0.0 \\
        Breastfed, 6 mo. & 0.03 & 0.17 & 63,670 &  0.0 \\
        Duration of breastfeeding (mo.) & 2.50 & 2.86 & 64,853 &  0.0 \\
        Avg. inc. 25-50 (DKK) & 269,452.55 & 162,909.39 & 74,922 &  0.0 \\
        Share empl. 25-50 &       0.79 &       0.28 & 74,168 &  0.0 \\
        Yrs. of educ. &      13.74 &       2.54 & 74,167 &  0.0 \\
        Above mand. edu. &       0.74 &       0.44 & 74,167 &  0.0 \\
        \bottomrule
    \end{tabular}
    \begin{minipage}{1\linewidth}
        \vspace{1ex}
        \footnotesize{
        \textit{Notes:}
        The table shows means, standard deviations, and number of observations for pre-treatment (\textit{Panel A}) and outcome (\textit{Panel B}) variables.
        Additionally, the final column shows the p-value from a Kruskal-Wallis H-test for independent samples, comparing the groups of children of different nurses against each other.
        Note that \textit{Weeks prior to due date} is reported only for those children born at least a week prior to their due date, which is what explains the low number of observations and the large value of the mean of this variable.
		}
	\end{minipage}
\end{table}

While Table~\ref{tab: descriptives} document statistically significant differences between the groups of children of different nurses, it does little to report on the magnitudes of the differences.
Table~\ref{tab: descriptives-quantiles} shows statistics for the variables of Table~\ref{tab: descriptives}, but where, for each variable, results are grouped by ``nurse rank'':\footnote{This is done separately for each variable, meaning that the order of nurses vary to some degree between different rows of the table. However, the rank-order correlation between variables is high, and thus those nurses for whom the average outcome of one variable of their allocated children is high also tend to score highly on other variables.}
Nurses for whom the average outcome of the given variable of the children they visit is below the first quantile forms the first group, the second group consists of those between the first and second quantile, the third group of those between the second and third quantile, and the final group of those above the third quantile.
If nurses were randomly allocated across all districts, we would expect to see relatively little difference between the different groups for the variables in \textit{Panel A}, and large differences in \textit{Panel B} only if significant heterogeneity in nurse treatment effects is present.\footnote{Some variation is still expected due to finite sample sizes.}
However, as is clear even for the pre-treatment variables, differences are also economically large between nurses, implying differences due to, e.g., variation in socioeconomic status between nurse districts (e.g., the difference between the lowest ranked quarter of nurses vs. the highest ranked quarter of nurses in mother years of education is over one year). 

\begin{table}
    \centering
    \centering
    \caption{Descriptive Statistics by Nurse Group (Rank).}
    \label{tab: descriptives-quantiles}
    \resizebox{\linewidth}{!}{
        \begin{tabular}{l rrrrr}
            \toprule
            & \multicolumn{4}{c}{Mean (std. dev) by quartile} & No. of obs. \\ \cline{2-5}
            & 0-25\% & 25-50\% & 50-75\% & 75-100\% & \\
            \midrule
            \multicolumn{6}{l}{\textit{Panel A: Background Characteristics, Nurse Records \& Administrative Data}} \\
            \midrule
                   Birth weight (g) & 3,284.49 (536.91) & 3,313.56 (528.39) & 3,334.62 (531.45) & 3,357.27 (515.09) & 76,467 \\
                 Low BW &       0.04 (0.19) &       0.05 (0.21) &       0.05 (0.23) &       0.07 (0.25) & 76,467 \\
      Birth length (cm) &      51.23 (2.47) &      51.37 (2.42) &      51.46 (2.41) &      51.59 (2.32) & 76,257 \\
 Born prior to due date &       0.09 (0.29) &       0.12 (0.32) &       0.13 (0.34) &       0.15 (0.36) & 75,118 \\
Weeks prior to due date &       3.08 (1.62) &       3.32 (1.82) &       3.48 (1.88) &       3.74 (2.06) &  8,619 \\
          Year of birth &   1,961.01 (1.75) &   1,962.88 (2.48) &   1,963.36 (2.46) &   1,964.81 (1.84) & 75,899 \\
               Born 1-3 &       0.08 (0.27) &       0.09 (0.29) &       0.10 (0.30) &       0.12 (0.32) & 75,899 \\
           Child parity &       1.54 (0.72) &       1.60 (0.76) &       1.64 (0.80) &       1.71 (0.86) & 75,184 \\
              Firstborn &       0.49 (0.50) &       0.52 (0.50) &       0.54 (0.50) &       0.58 (0.49) & 75,184 \\
                 Female &       0.46 (0.50) &       0.48 (0.50) &       0.50 (0.50) &       0.52 (0.50) & 75,360 \\
   Mother yrs. of educ. &      10.55 (2.98) &      10.99 (3.05) &      11.36 (3.12) &      11.79 (3.20) & 72,303 \\
    Mother age at birth &      23.97 (4.76) &      24.78 (5.21) &      25.32 (5.42) &      26.14 (5.67) & 75,184 \\
   Father yrs. of educ. &      11.61 (3.34) &      12.06 (3.37) &      12.50 (3.37) &      13.01 (3.37) & 64,494 \\
    Father age at birth &      27.23 (6.31) &      28.00 (6.48) &      28.60 (6.75) &      29.38 (6.86) & 72,087 \\
         Father missing &       0.04 (0.20) &       0.05 (0.23) &       0.07 (0.25) &       0.08 (0.27) & 76,579 \\
        High SES, 1 mo. &       0.02 (0.13) &       0.05 (0.22) &       0.10 (0.31) &       0.31 (0.46) & 49,942 \\
            \midrule
            \multicolumn{6}{l}{\textit{Panel B: Outcome Measures, Nurse Records \& Administrative Data}} \\
            \midrule
            Breastfed, 1 mo. & 0.49 (0.50) & 0.56 (0.50) & 0.59 (0.49) & 0.64 (0.48) & 70,452 \\
            Breastfed, 6 mo. & 0.01 (0.09) & 0.02 (0.14) & 0.03 (0.17) & 0.06 (0.23) & 63,670 \\
            Duration of breastfeeding (mo.) & 1.81 (2.42) & 2.40 (2.75) & 2.61 (2.94) & 3.01 (3.09) & 64,853 \\
            Avg. inc. 25-50 (DKK) & 248,926.10 (156,939.84) & 260,964.34 (162,325.75) & 271,131.54 (161,560.47) & 285,750.18 (166,404.97) & 74,922 \\
            Share empl. 25-50 &             0.76 (0.30) &             0.78 (0.29) &             0.80 (0.27) &             0.82 (0.26) & 74,168 \\
            Yrs. of educ. &            13.34 (2.49) &            13.57 (2.51) &            13.79 (2.55) &            14.08 (2.55) & 74,167 \\
            Above mand. edu. &             0.69 (0.46) &             0.72 (0.45) &             0.75 (0.44) &             0.79 (0.41) & 74,167 \\
            \bottomrule
        \end{tabular}
    }
    \begin{minipage}{1\linewidth}
        \vspace{1ex}
        \footnotesize{
        \textit{Notes:}
        The table shows means (standard deviations) for pre-treatment variables (\textit{Panel A}) and outcome variables (\textit{Panel B}) by different groups of nurses, as defined by their rank compared to other nurses.
        Nurses for whom the average outcome of the given variable of the children they visit is below the first quantile forms the leftmost group (0-25\%), while the next group consists of those between the first and second quantile, the third group of those between the second and third quantile, and the final group of those above the third quantile.
		}
	\end{minipage}
\end{table}

To get a more detailed look into the distribution of variables by nurse, Figure~\ref{fig: var-by-nurse} shows mean values (and associated 95\% confidence intervals) of selects pre-treatment (i.e., determined before visit by nurse) variables by nurse, ranked such that nurses are sorted in ascending order.
As expected following the results in \textit{Panel A} of Table~\ref{tab: descriptives-quantiles}, the figure shows significant differences in averages of pre-treatment variables by nurse, indicating that the population of children allocated different nurses vary substantially; as these variables are determined before the visits take place, the differences cannot be explained by differences in nurse treatment effects, but must instead occur due to non-random allocation.

\begin{figure}
    \centering
    \subfloat[Low BW ($<$ 2500 g)]{\label{subfig: var-by-nurse-lowbw}\includegraphics[width=0.5\linewidth]{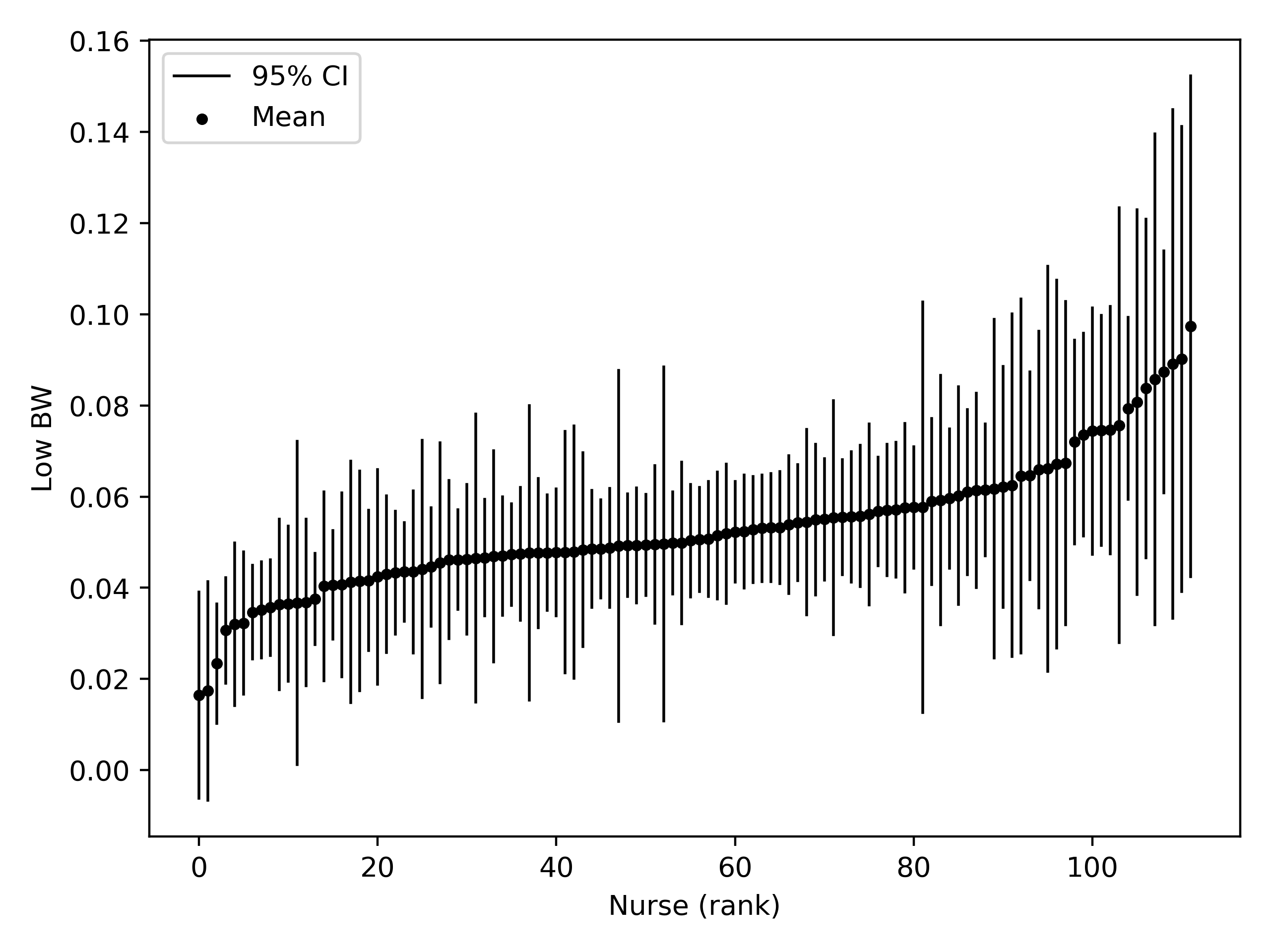}}
    \subfloat[Mother years of education]{\label{subfig: var-by-nurse-medulen}\includegraphics[width=0.5\linewidth]{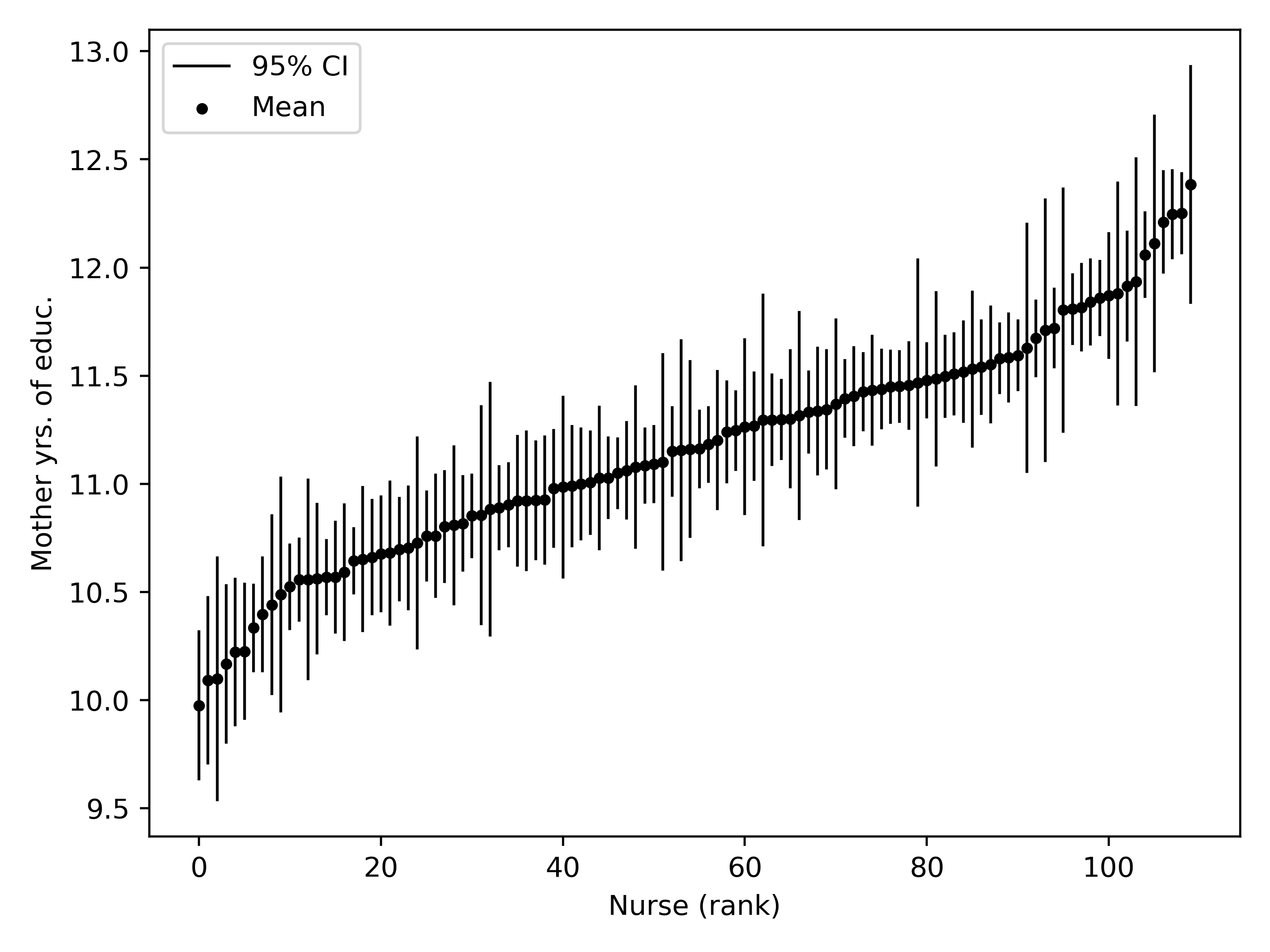}}
    \caption{Pre-Treatment Variables by Nurse.}
    \label{fig: var-by-nurse}
    \begin{minipage}{1\linewidth}
        \vspace{1ex}
        \footnotesize{
        \textit{Notes:}
        The figure shows means and 95\% confidence intervals for select pre-treatment (i.e., determined before visit) variables by nurse, where nurses are sorted in ascending order in terms of the mean of the variable for the children allocated to the specific nurse.
        Panel~\ref{subfig: var-by-nurse-lowbw} shows the results for an indicator of low birth weight (defined as a birth weight below 2500 g) and Panel~\ref{subfig: var-by-nurse-medulen} for the years of education of the mother of the child.
		}
	\end{minipage}
\end{figure}

Another way of assessing the degree of sorting that happens between nurses and children is to ask what the rank-order correlations between pre-treatment and outcome variables are.
In a setting of no sorting, this would be (asymptotically) zero:
While children that score ``poorly'' on pre-treatment characteristics (e.g., someone born low birth weight) are expected to score poorer on outcome variables (e.g., income), we would expect zero rank-order correlation between the average value of a pre-treatment and an outcome variable between nurses.
We therefore calculate these averages for each nurse for select variables and then use a Spearman rank-order correlation (a non-parametric measure of monotonicity) to compare nurses \citep{fieller1957tests}.\footnote{With 112 nurses, the asymptotic approximation of the p-value may yet be inaccurate, and the exact p-value is computationally impossible to derive. For these reasons, we use a permutation test that randomly draws 100,000 permutations, calculating the rank-order correlation coefficient of each and then compares it to the rank-order correlation coefficient of the non-permuted sample, letting the p-value be the share of permutations that results in a more extreme value of the rank-order correlation coefficient. 
}
Appendix Figure~\ref{fig: correlation-coefficient-nurses} shows the correlation coefficients and p-values for pairs of select pre-treatment and outcome variable means by nurses, verifying that substantial rank-order correlation is present, also when comparing pre-treatment and outcome variable pairs.

Taken at face value, the above results could be interpreted as detrimental to our identification strategy.
However, the differences we observe between nurses turn out to be explained largely by differences between nurse districts (and to a smaller extent differences between cohorts).
Figure~\ref{fig: var-by-district} shows the equivalent of Figure~\ref{fig: var-by-nurse} but now by \textit{nurse district} rather than nurse.
Notably, the variability of pre-treatment mean outcomes between districts is nearly as large as the variability of mean outcomes between nurses.\footnote{Note that the minimum and maximum mean outcomes of a district are bounded by the most extreme values of mean outcomes of any nurse.} 
The same pattern emerges when we consider the rank-order correlation in means of variables between districts (similarly to what we did above for nurses):
Appendix Figure~\ref{fig: correlation-coefficient-districts} shows the correlation coefficients and p-values for pairs of select pre-treatment and outcome variable means by nurse districts, verifying that substantial rank-order correlation is present, also when comparing pre-treatment and outcome variable pairs.\footnote{With 16 nurse districts, the asymptotic approximation of the p-value will be inaccurate, and the exact p-value is computationally impossible to derive. For these reasons, we use a permutation test that randomly draws 100,000 permutations, calculating the rank-order correlation coefficient of each and then compares it to the rank-order correlation coefficient of the non-permuted sample, letting the p-value be the share of permutations that results in a more extreme value of the rank-order correlation coefficient.}

\begin{figure}
    \centering
    \subfloat[Low BW ($<$ 2500 g)]{\label{subfig: var-by-district-lowbw}\includegraphics[width=0.5\linewidth]{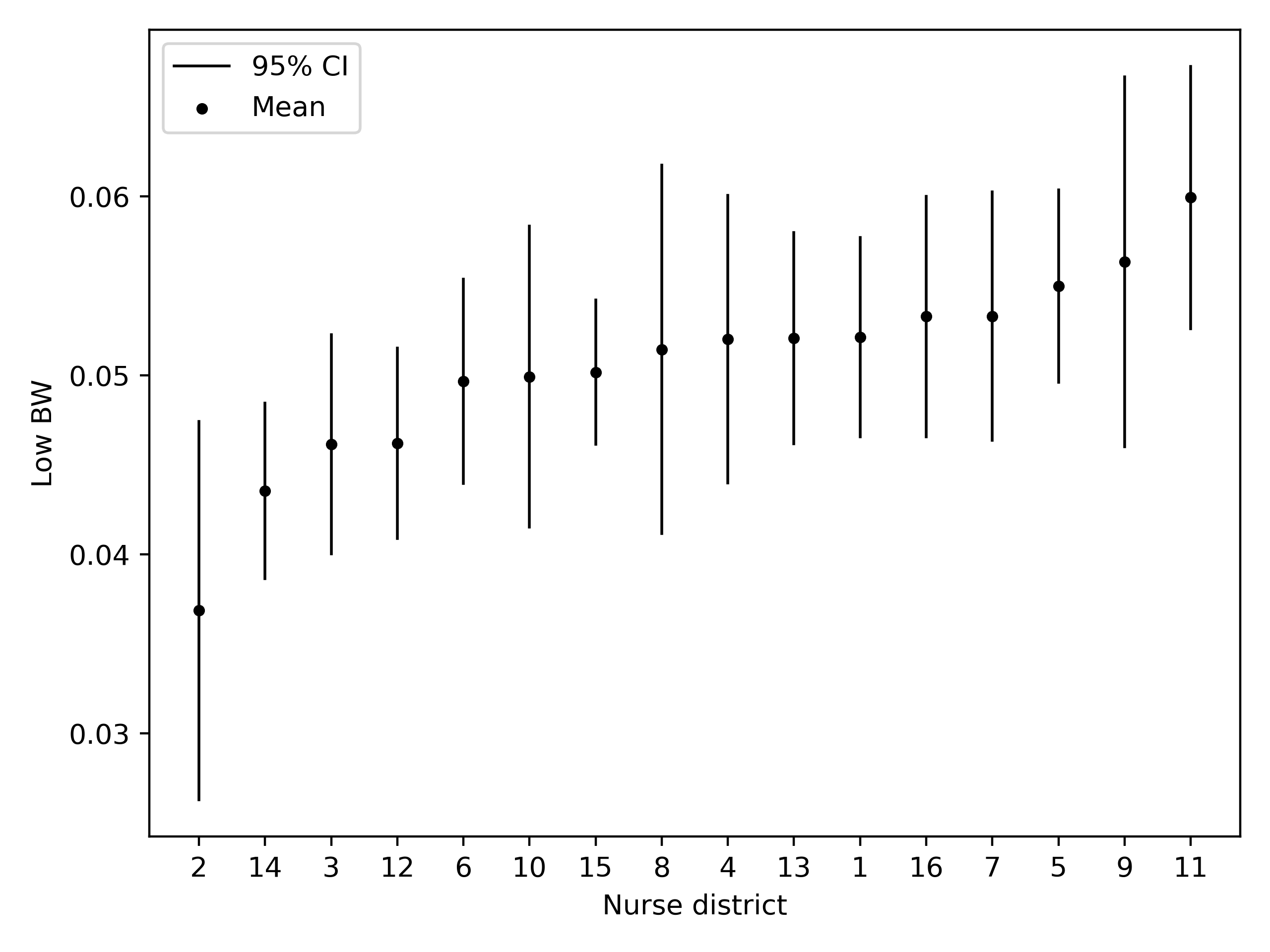}}
    \subfloat[Mother years of education]{\label{subfig: var-by-district-medulen}\includegraphics[width=0.5\linewidth]{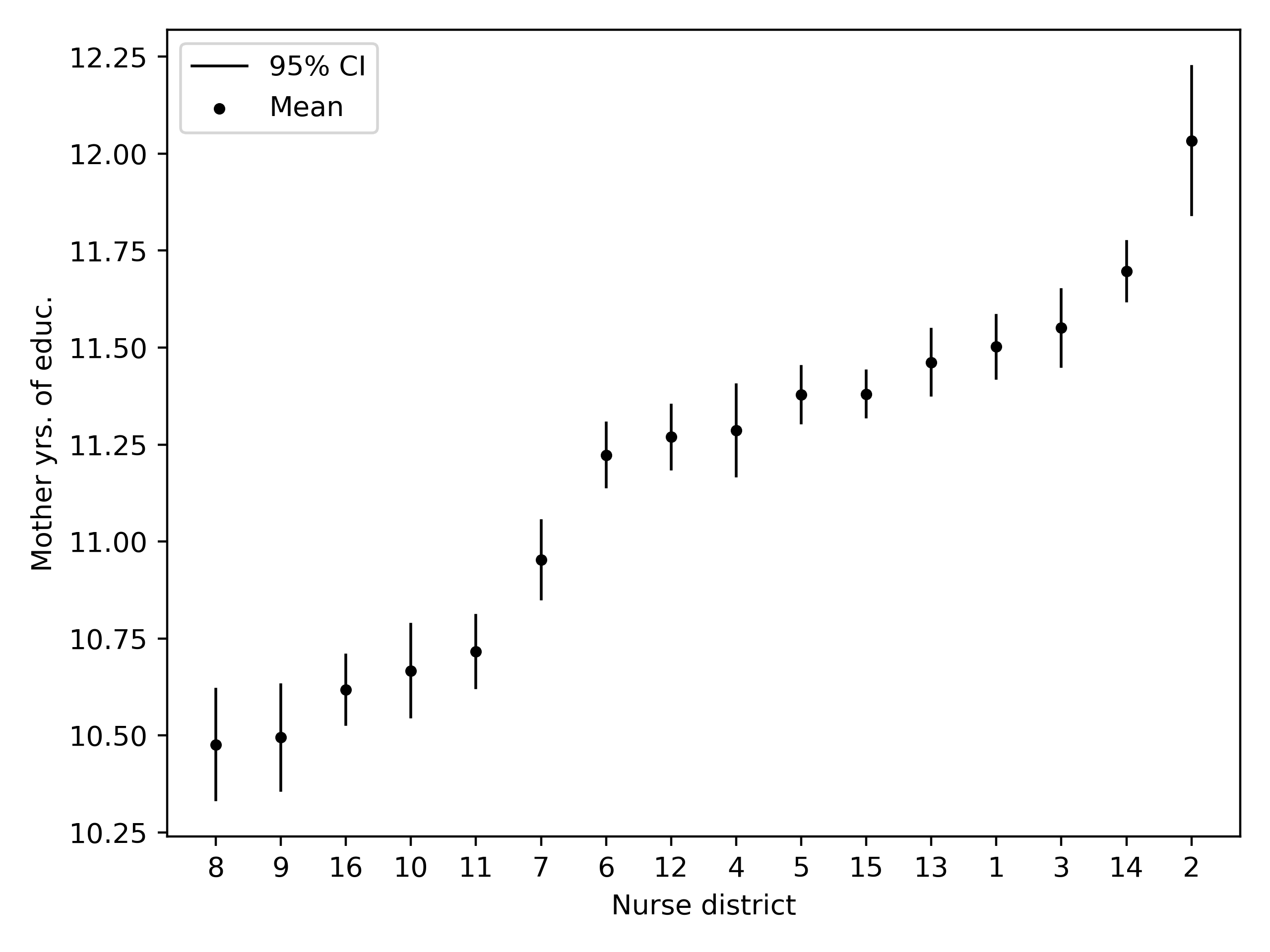}}
    \caption{Pre-Treatment Variables by Nurse District.}
    \label{fig: var-by-district}
    \begin{minipage}{1\linewidth}
        \vspace{1ex}
        \footnotesize{
        \textit{Notes:}
        The figure shows means and 95\% confidence intervals for select pre-treatment (i.e., determined before visit) variables by nurse district, where districts are sorted in ascending order in terms of the mean of the variable for the children allocated a nurse in the given district.
        Panel~\ref{subfig: var-by-district-lowbw} shows the results for an indicator of low birth weight (defined as a birth weight below 2500 g) and Panel~\ref{subfig: var-by-district-medulen} for the years of education of the mother of the child.
		}
	\end{minipage}
\end{figure}

Differences in outcomes between districts persist throughout life:
Figure~\ref{fig: age-plot-by-district} shows income and share of time in employment for our focal individuals during ages 25-50 by nurse district.
While there is some variability, those districts ``doing well'' in the early years of the focal individuals' lives tend to do so throughout their life-cycle.\footnote{We end at age 50 as that is about the highest age for which we observe the outcomes for all our focal individuals.}

\begin{figure}
    \centering
    \subfloat[Income (DKK)]{\label{subfig: age-plot-by-district-income}\includegraphics[width=0.5\linewidth]{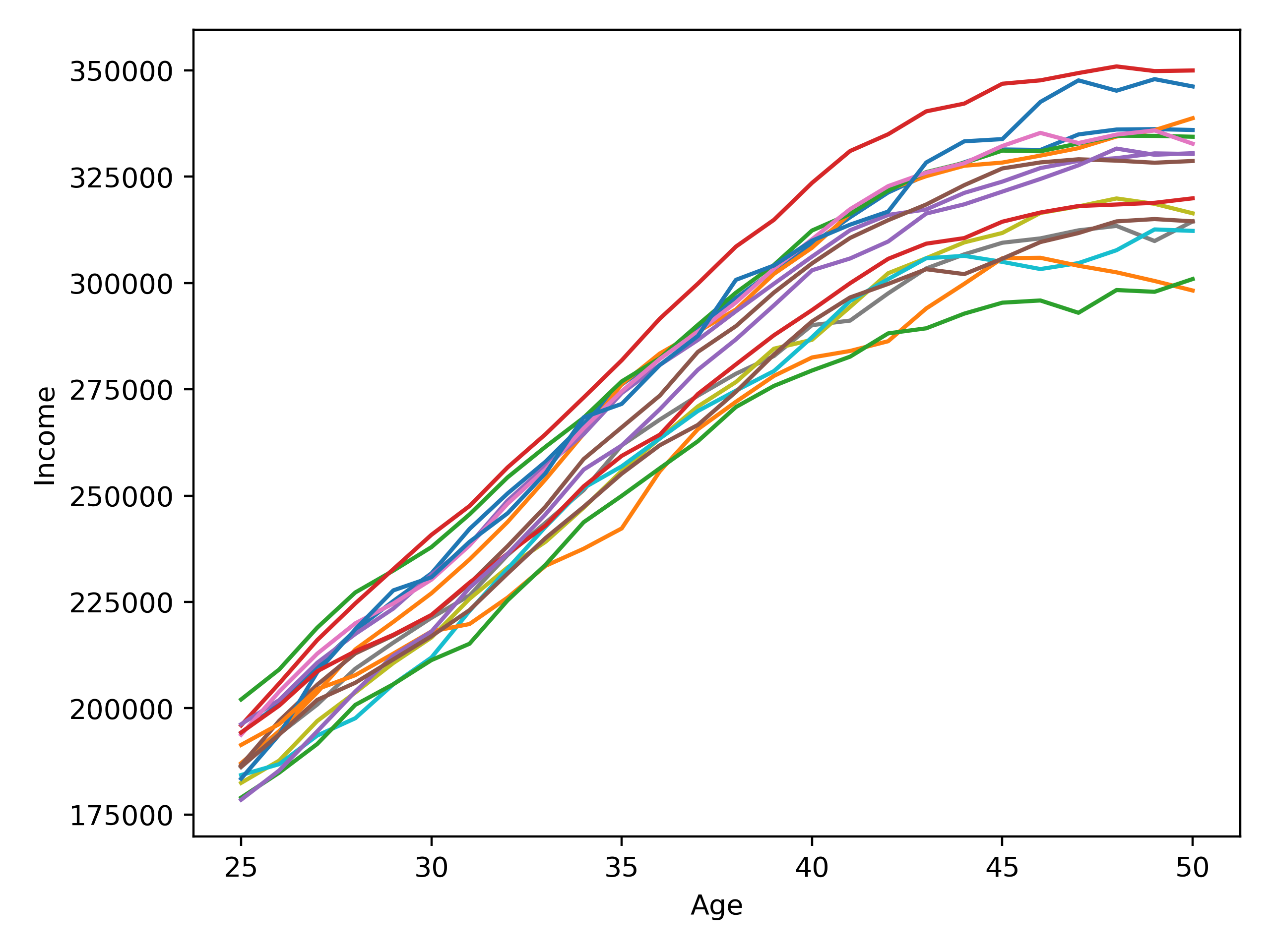}}
    \subfloat[Employed]{\label{subfig: age-plot-by-district-employed}\includegraphics[width=0.5\linewidth]{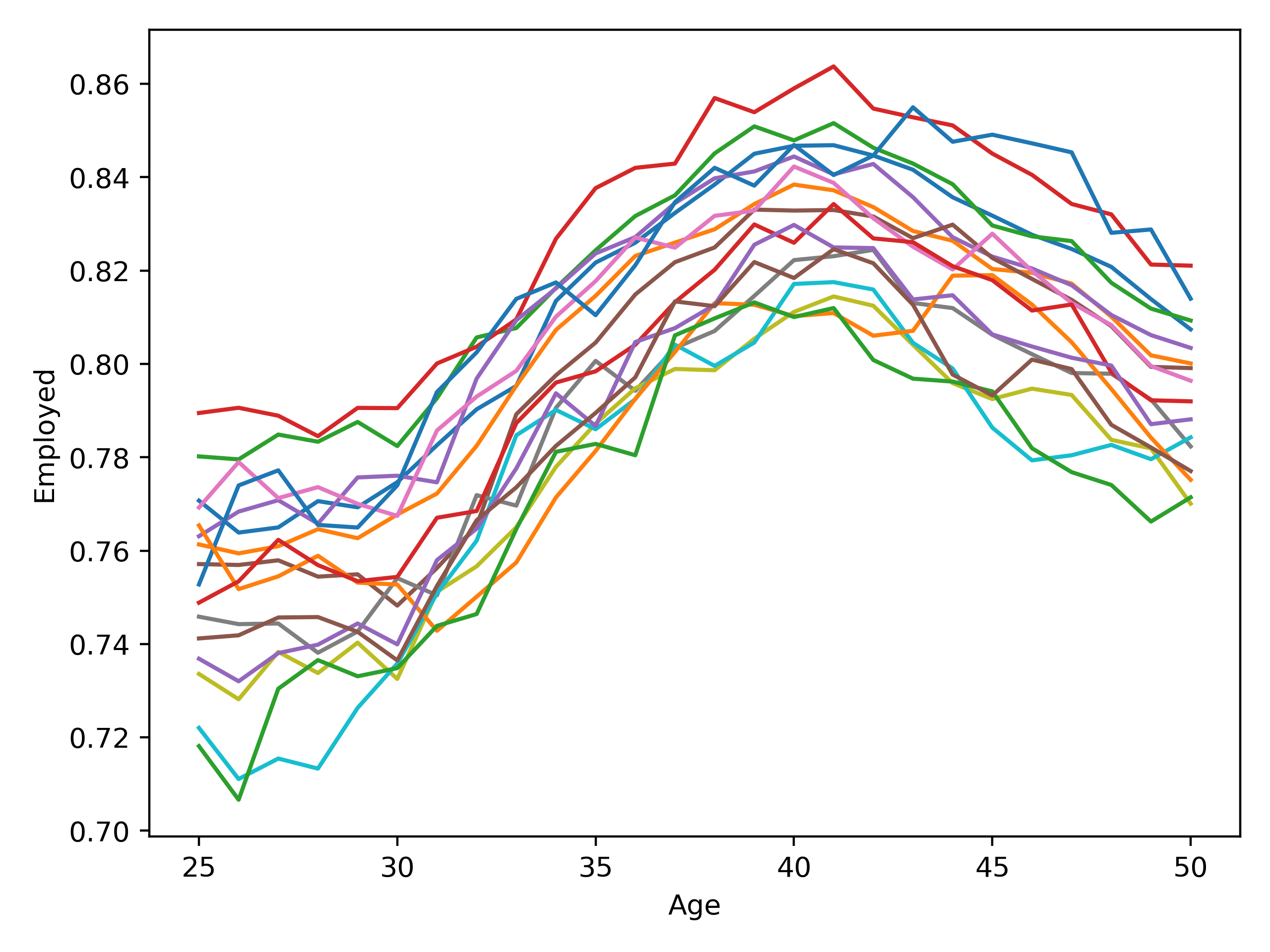}}
    \caption{Life-Cycle Outcomes by Nurse District: Ages 25-50.}
    \label{fig: age-plot-by-district}
    \begin{minipage}{1\linewidth}
        \vspace{1ex}
        \footnotesize{
        \textit{Notes:}
        The figure shows the average income (Panel~\ref{subfig: age-plot-by-district-income}) and share of time in employment (Panel~\ref{subfig: age-plot-by-district-employed}) for our focal individuals during ages 25-50.
        Each line represents a separate nurse district, and its value is the average value of the respective variable at the given age.
		}
	\end{minipage}
\end{figure}

We claim these differences between districts (and to a smaller extent cohorts) largely drive the apparent non-random allocation of children to nurses, and that properly handling this leads to close-to random allocation within district-by-year groups (see Appendix Figure~\ref{fig: obs-by-district-year-and-nurse-year} for district-by-year group sample sizes).
To show this, we once again turn to the rank-order correlation coefficient tests we performed above for nurses, but now do so within each district-by-year group of our sample.
This leads to 137 (instead of 144, due to a few district-by-year groups containing exactly one nurse) separate district-by-year groups, and for each we obtain the p-value for the Spearman rank-order correlation coefficient test.\footnote{The number of nurses within each district-by-year group vary, and we thus apply a flexible inference approach that calculates the exact p-value whenever this is possible in fewer than 10,000 permutations and otherwise sample 10,000 permutations at random.}
Figure~\ref{fig: correlation-coefficient-cdf-nurses} shows the empirical cumulative density function (CDF) of the p-values obtained this way (colored lines).
The black, dashed line indicates the asymptotic CDF given no rank-order correlation; given random allocation of children to nurses within these district-by-year groups, we expect the pairs with (at least) one pre-treatment variable to lie close to this line.
Indeed, we also see this pattern, while also observing that the rank-order correlation between average income during ages 25-50 and years of education is statistically significant, which implies that while nurses did appear to have been allocated children at random, the children they visited systematically differ in how well they fare during their life.

\begin{figure}
    \centering
    \includegraphics[width=0.75\linewidth]{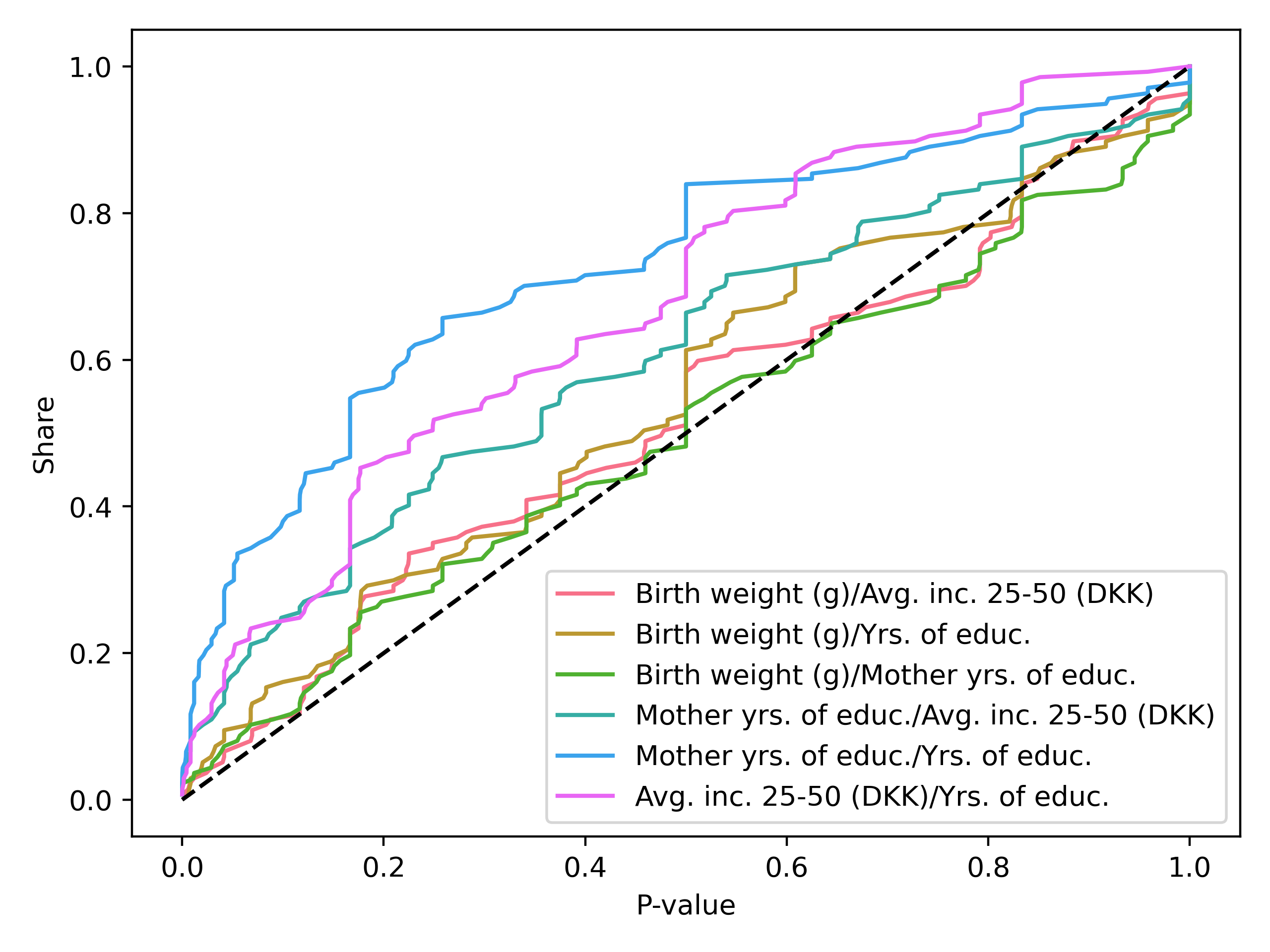}
    \caption{Rank-Order Correlation  CDF of Mean Outcomes by Nurses Within Same District-by-Year.}
    \label{fig: correlation-coefficient-cdf-nurses}
    \begin{minipage}{1\linewidth}
        \vspace{1ex}
        \footnotesize{
        \textit{Notes:}
        The figure shows p-values from Spearman rank-order correlation tests.
        For each district-by-year group, we obtain the mean value of each variable for each nurse, and we then calculate the rank-order correlation between nurses (i.e., comparing the ranks of nurses within the district-by-year group for one variable with the ranks of nurses for another variable, again within the specific district-by-year group).
        To obtain p-values, we apply a flexible inference approach that calculates the exact p-value whenever this is possible in fewer than 10,000 permutations and otherwise sample 10,000 permutations at random.
        For each pair of variables, we then plot the empirical CDF of the obtained p-values.
        The jump at $0.5$ arise as a consequence of the presence of district-by-year groups with exactly two nurses.\footnote{Exact p-values are 0.5 by design in such cases, adding mass to the midpoints.}
        }
	\end{minipage}
\end{figure}

To further investigate whether some selection between nurses could still take place within the district-by-year groups, we return to our earlier strategy to non-parametrically compare groups of children allocated different nurses by means of a Kruskal-Wallis H-test for independent samples \citep{kruskal1952use}.
Now, however, we do so separately for each district-by-year group and then plot the empirical CDF (as well as the theoretical CDF given no selection).
Figure~\ref{fig: pval-cdf-controls} shows the results for select pre-treatment variables, with each colored line representing a pre-treatment variable and the black, dashed line indicating the theoretical CDF given no selection.
Reassuringly, all empirical CDFs lie relatively close to the dashed line, indicating limited differences between groups of children allocated different nurses (within district-by-year groups); however, differences do not vanish entirely for all variables.

\begin{figure}
    \centering
    \includegraphics[width=0.75\linewidth]{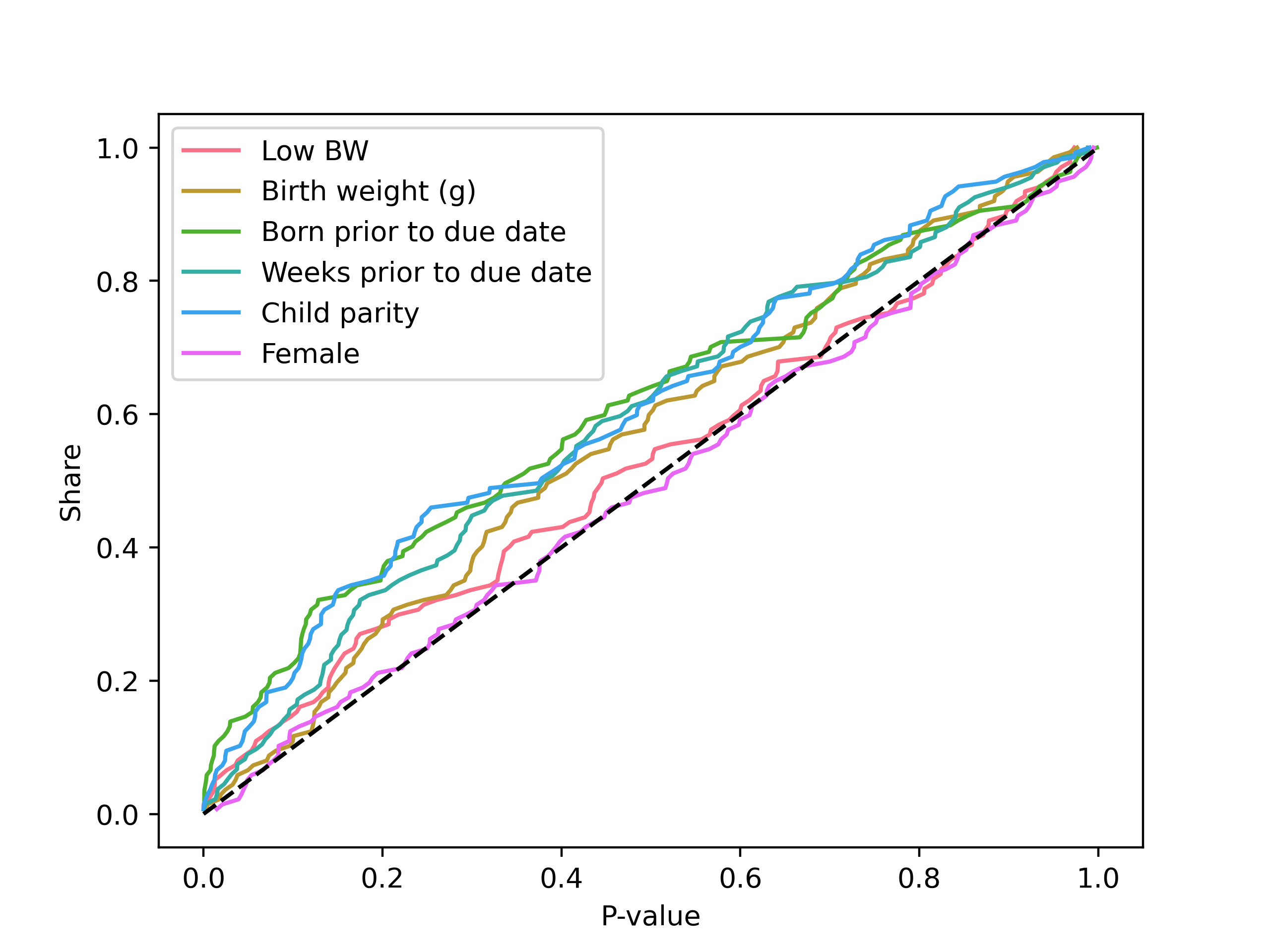}
    \caption{Non-Parametric P-Values for Differences Between Samples of Children by Nurse within District-by-Year: Pre-Treatment Variables.}
    \label{fig: pval-cdf-controls}
    \begin{minipage}{1\linewidth}
        \vspace{1ex}
        \footnotesize{
        \textit{Notes:}
        The figure shows empirical CDFs for p-values obtained from a Kruskal-Wallis H-test for independent samples for select pre-treatment variables.
        Within each district-by-year group, we test for differences in medians for groups of children allocated different nurses.
        }
	\end{minipage}
\end{figure}



\subsection{Estimation Results}
\label{sec: estimation-results}

Having established what appears to be close-to random allocation of children to nurses within nurse district-by-year groups, we next turn to estimating the ``treatment effects'' of nurses.
Since children appear to have been more or less randomly allocated a nurse from the population of nurses in the district-by-year group they belong to, we can estimate the average potential outcome of children assigned a specific nurse by simply calculating the sample average for that nurse.
First, however, we verify that such differences are indeed present.

Figure~\ref{fig: pval-cdf-outcomes} shows empirical CDFs for p-values from a Kruskal-Wallis H-test for independent samples between nurses in the same district-by-year group for select outcome variables (i.e., done similarly to Figure~\ref{fig: pval-cdf-controls} but now for outcomes).
Panel~\ref{subfig: pval-cdf-outcomes-childhood} shows results for childhood outcomes (obtained from the nurse records within the first year of a child's life) and Panel~\ref{subfig: pval-cdf-outcomes-adulthood} for adulthood outcomes (obtained from registers).
The colored lines indicate outcomes and the dashed, black lines the theoretical CDFs in a setting of no differences between samples.
As is evident from both panels, and particularly so for the childhood outcomes of Panel~\ref{subfig: pval-cdf-outcomes-childhood}, the empirical CDFs lie far above the black, dashed lines, indicating statistically significant differences between groups of children allocated different nurses, even when only comparing against nurses within the same district-by-year.
Compared to the empirical CDFs for pre-treatment variables shown in Figure~\ref{fig: pval-cdf-controls}, where the CDFs were close to the dashed line indicating no differences between groups, there now emerge differences far more statistically significant.

\begin{figure}
    \centering
    \subfloat[Childhood outcomes]{\label{subfig: pval-cdf-outcomes-childhood}\includegraphics[width=0.5\linewidth]{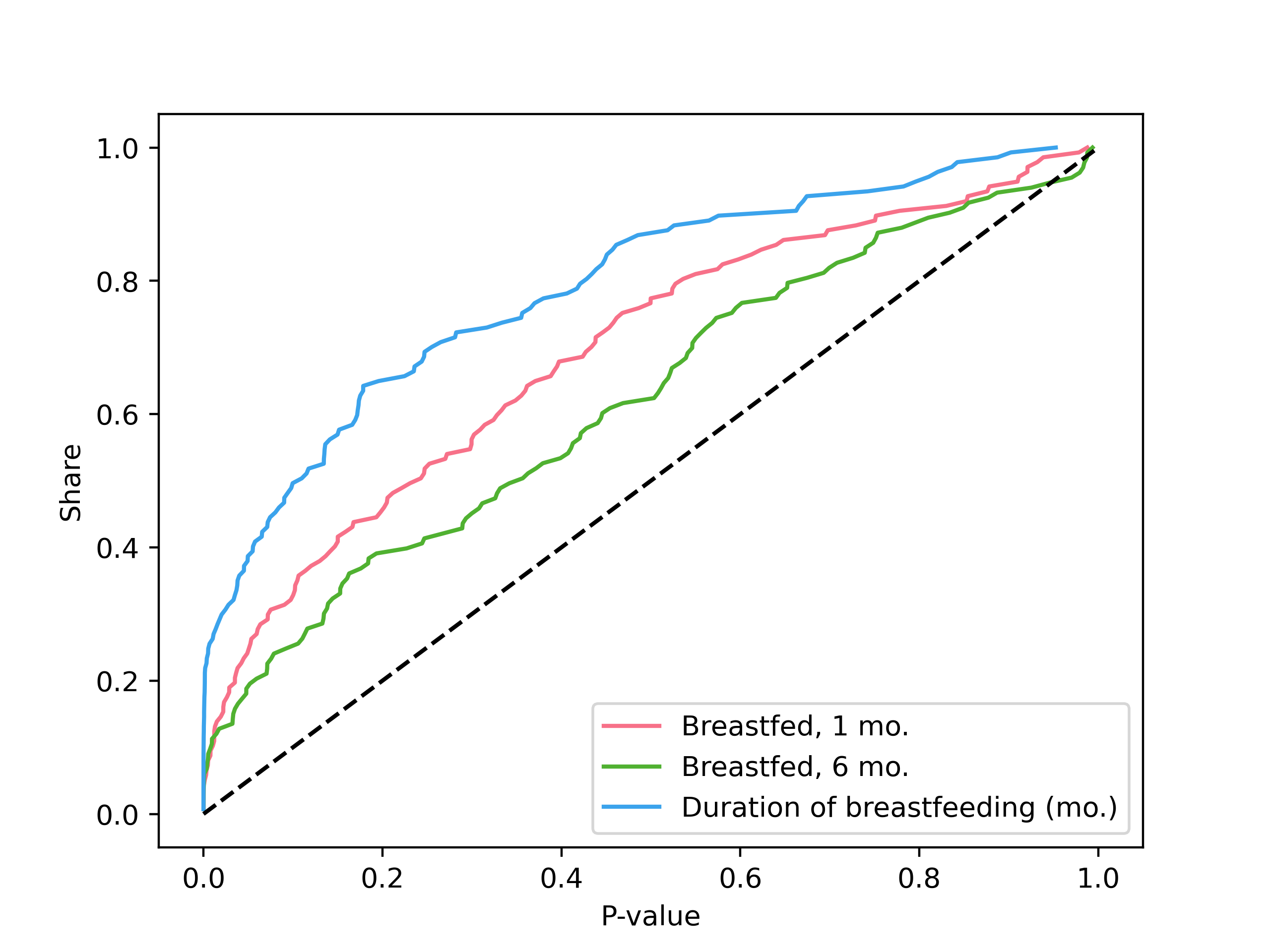}}
    \subfloat[Adulthood outcomes]{\label{subfig: pval-cdf-outcomes-adulthood}\includegraphics[width=0.5\linewidth]{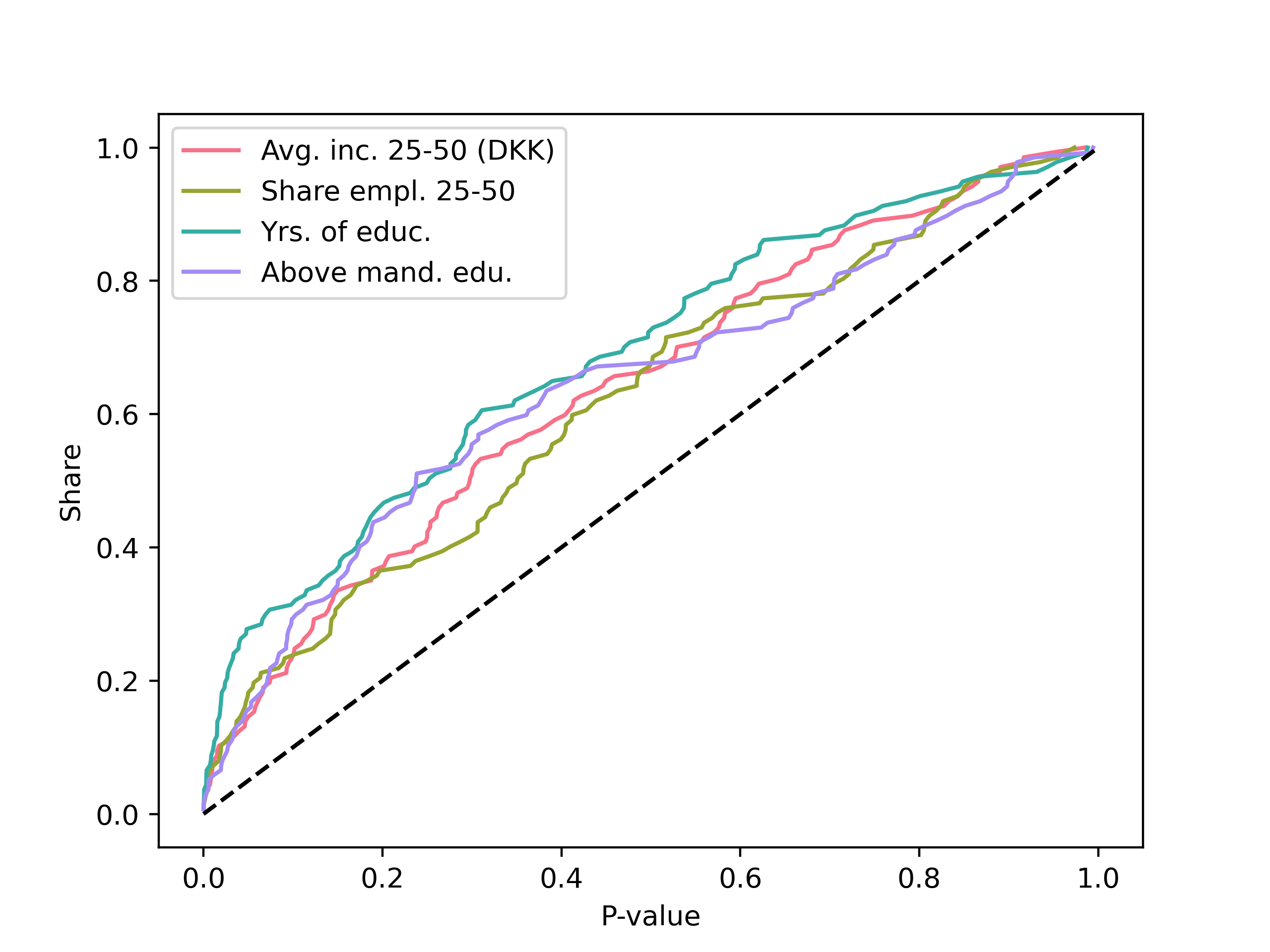}}
    \caption{Non-Parametric P-Values for Differences Between Samples of Children by Nurse within District-by-Year: Outcome Variables.}
    \label{fig: pval-cdf-outcomes}
    \begin{minipage}{1\linewidth}
        \vspace{1ex}
        \footnotesize{
        \textit{Notes:}
        The figure shows empirical CDFs for p-values obtained from a Kruskal-Wallis H-test for independent samples for select outcome variables; Panel~\ref{subfig: pval-cdf-outcomes-childhood} for childhood outcomes and Panel~\ref{subfig: pval-cdf-outcomes-adulthood} for adulthood outcomes.
        Within each district-by-year group, we test for differences in medians for groups of children allocated different nurses.
		}
	\end{minipage}
\end{figure}

While the evidence above indicates strong, statistically significant differences between nurses in terms of their treatment effects, it does little to gauge the magnitudes of these differences.
To illustrate the magnitudes of the differences, Figure~\ref{fig: box-plot-by-district-outcomes} shows box-plots of average outcomes by nurse-year for each nurse district.
Evidently, there is significant heterogeneity both within and between districts in terms of averages of children allocated different nurses.
Further, as supported by Appendix Figure~\ref{fig: correlation-coefficient-districts}, districts that do well on one measure tends to also do well on others.
However, while there are large differences between districts, differences within districts (i.e., between nurse-years within the same district) are in some cases larger still.

\begin{figure}
    \centering
    \subfloat[Years of education]{\label{subfig: box-plot-by-district-edulen}\includegraphics[width=0.5\linewidth]{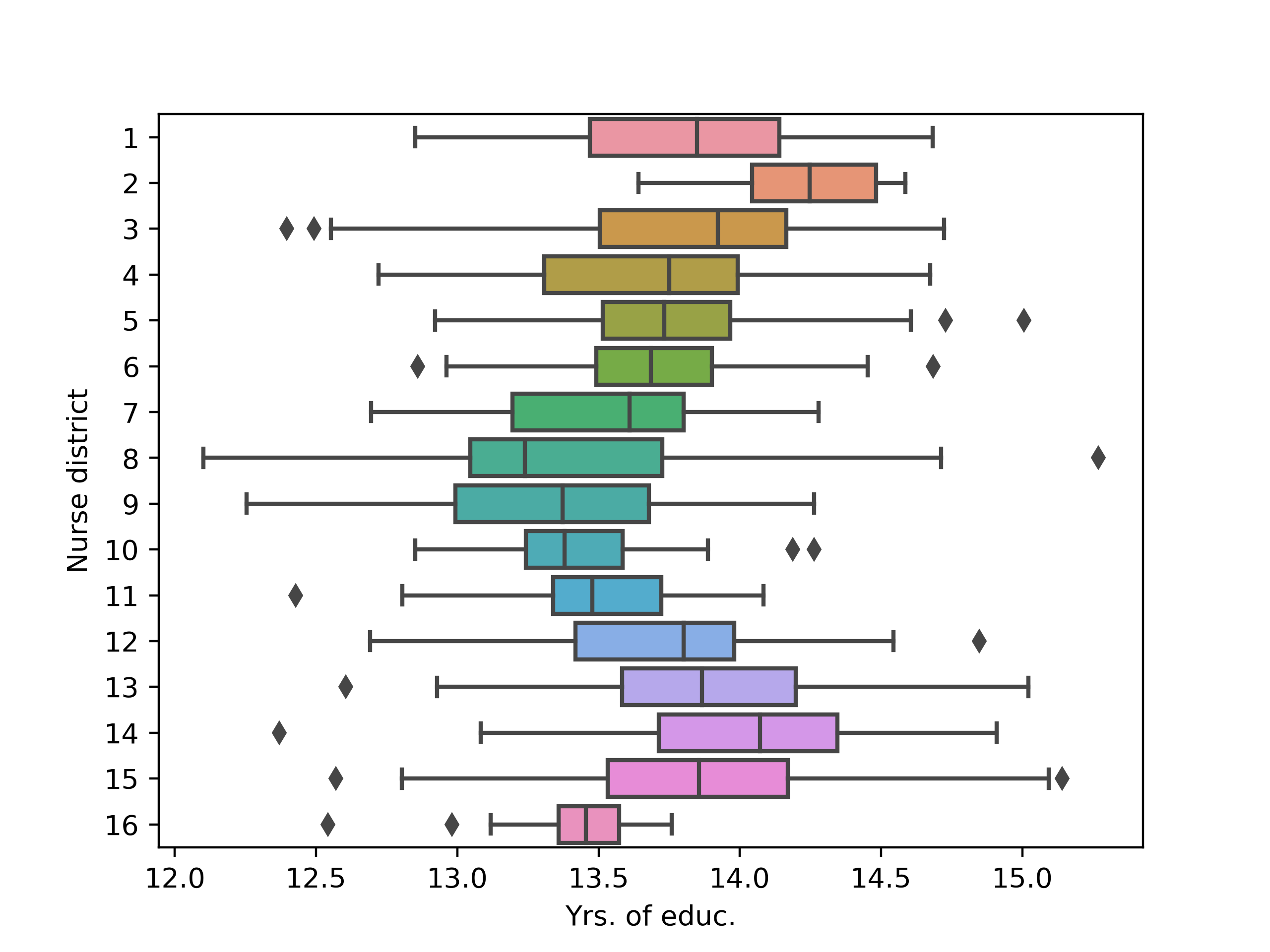}}
    \subfloat[Above mandatory education]{\label{subfig: box-plot-by-district-above_mand_edu}\includegraphics[width=0.5\linewidth]{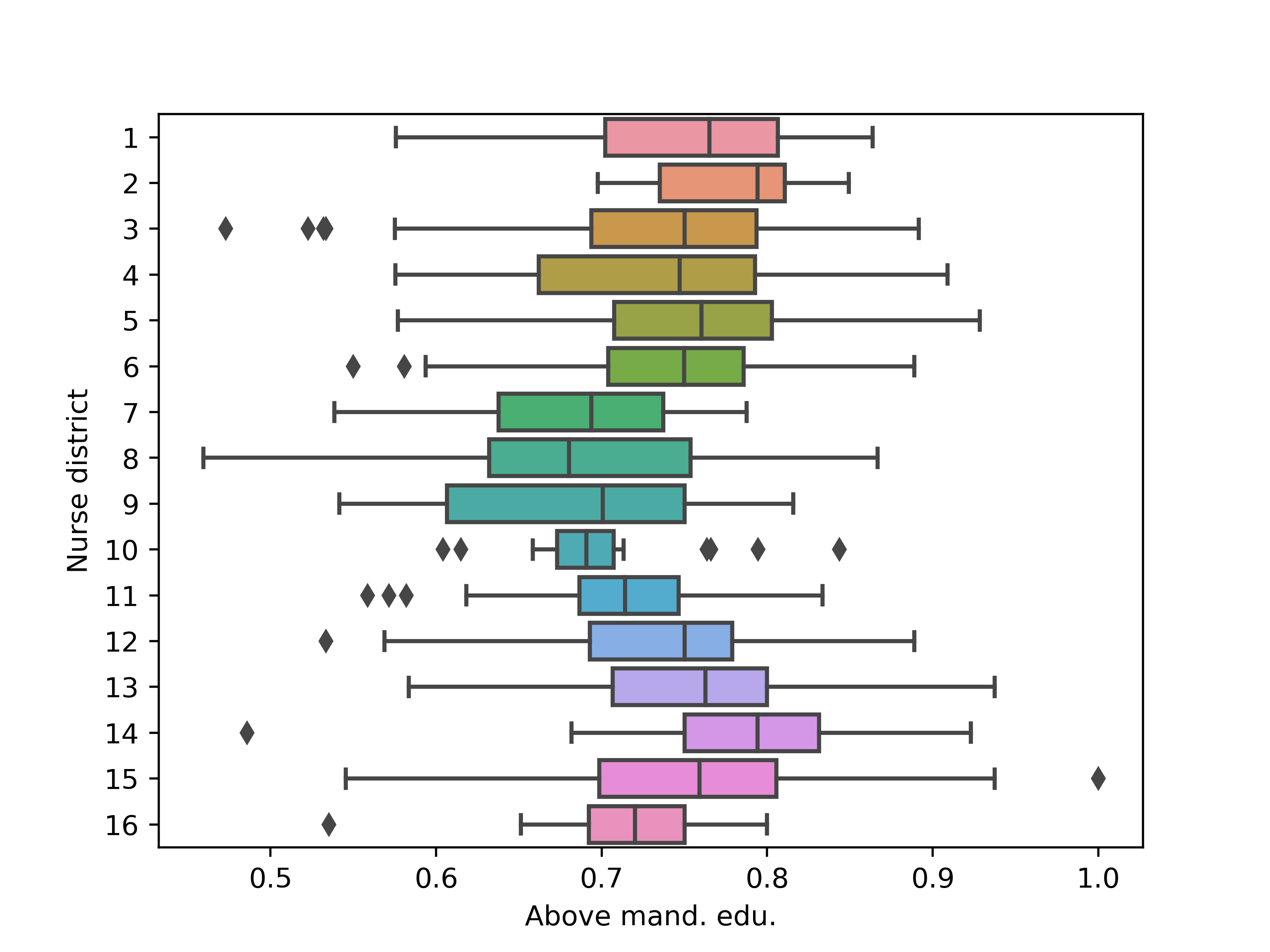}}
    
    \subfloat[Share of time employed 25-50]{\label{subfig: box-plot-by-district-avg_emp_25_50}\includegraphics[width=0.5\linewidth]{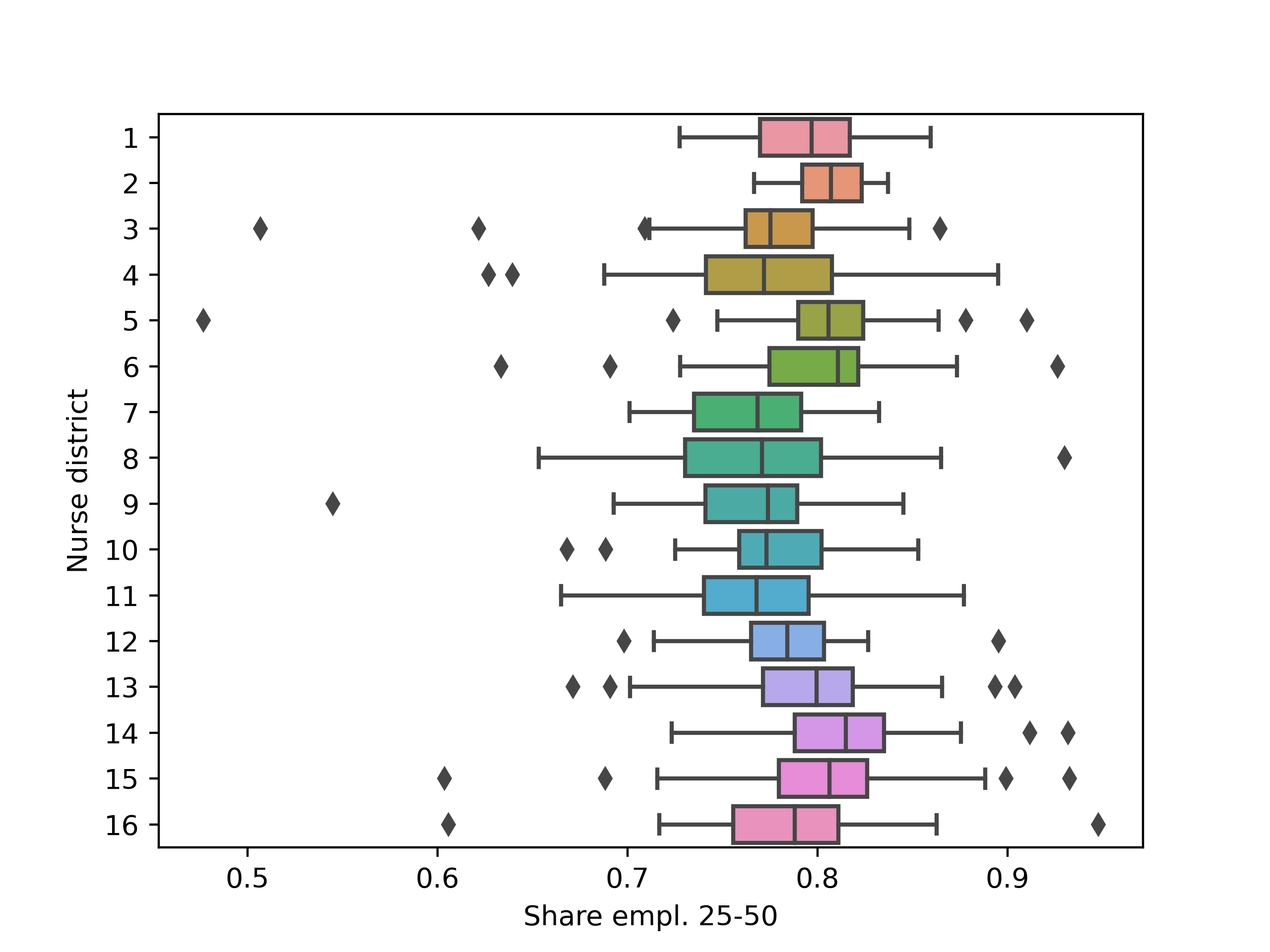}}
    \subfloat[Average income 25-50]{\label{subfig: box-plot-by-district-avg_inc_25_50}\includegraphics[width=0.5\linewidth]{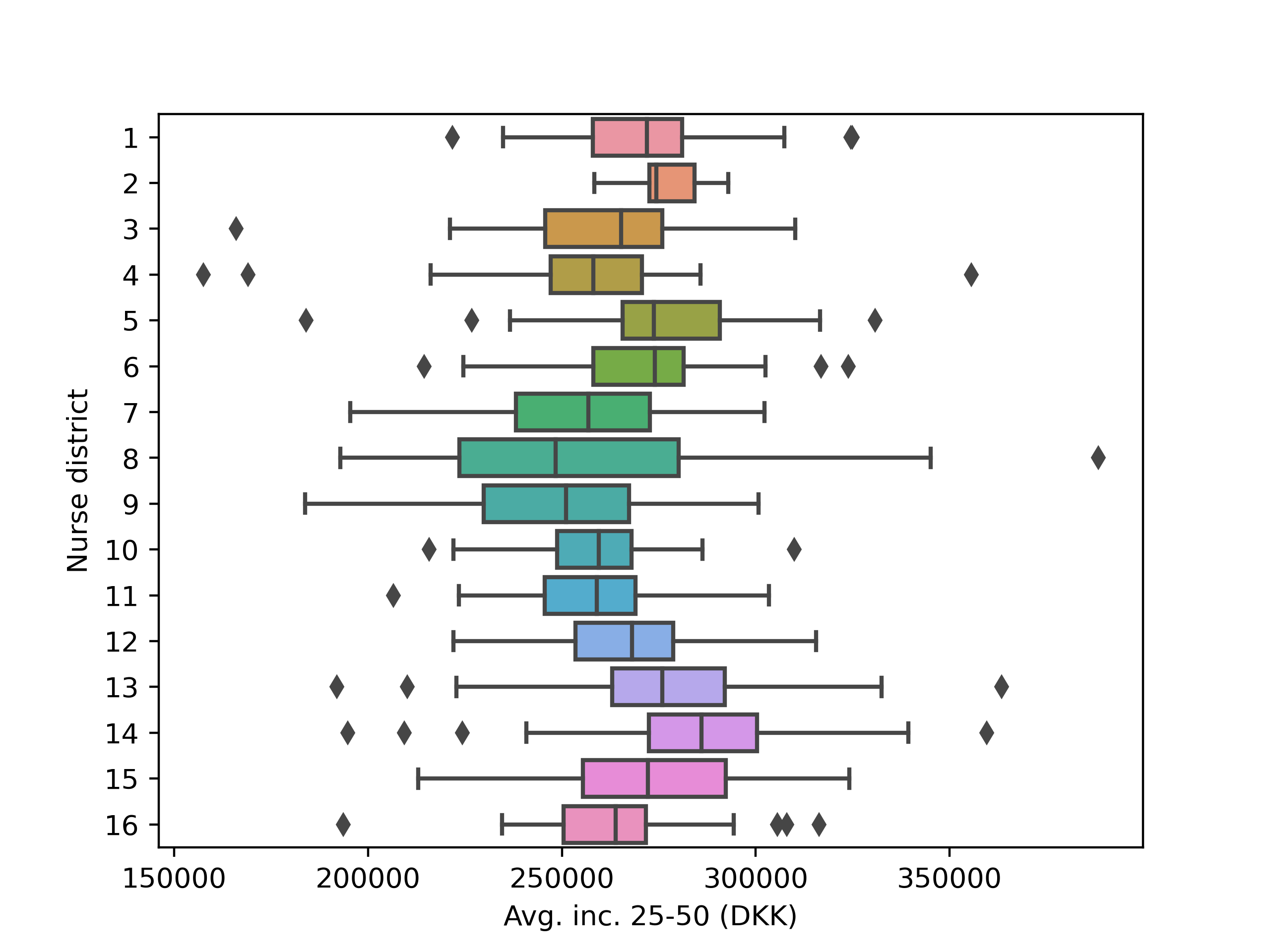}}
    \caption{Box Plots of Average Outcomes by Nurse-Year for Each Nurse District.}
    \label{fig: box-plot-by-district-outcomes}
    \begin{minipage}{1\linewidth}
        \vspace{1ex}
        \footnotesize{
        \textit{Notes:}
        The figure shows box plots of average outcomes by nurse-year for each district, meaning that each point represents an average of the children allocated a specific nurse born within a specific year.
        Panel~\ref{subfig: box-plot-by-district-edulen} shows box plots for average years of education by nurse-year for each of 16 nurse districts, Panel~\ref{subfig: box-plot-by-district-above_mand_edu} for the share of individuals attaining higher than mandatory education, Panel~\ref{subfig: box-plot-by-district-avg_emp_25_50} for the share of time in employment during ages 25-50, and Panel~\ref{subfig: box-plot-by-district-avg_inc_25_50} for average income during ages 25-50.
		}
	\end{minipage}
\end{figure}

While Figure~\ref{fig: box-plot-by-district-outcomes} provides an overview of differences within and between nurse districts, it ``aggregates'' across cohorts, meaning that each box plot contains averages for different years, meaning that some of the variation comes from differences between cohorts rather than nurses.
For this reason, Appendix Figures~\ref{fig: box-plot-by-district-and-year-edulen} and \ref{fig: box-plot-by-district-and-year-avg_inc_25_50} show equivalents of Panels \ref{subfig: box-plot-by-district-edulen} and \ref{subfig: box-plot-by-district-avg_inc_25_50} of Figure~\ref{fig: box-plot-by-district-outcomes}, respectively, but where each panel now refers to a different year of birth of the children.
While the figures are more noisy, significant heterogeneity is still present both between and within nurse districts, documenting that the differences of Figure~\ref{fig: box-plot-by-district-outcomes} are not solely driven by cohort effects.

A potential concern regarding the differences between nurses documented above is that they may arise through noise:
Since raw averages constitute the points of the box plots, the variance of these estimates are not accounted for.
For this reason, we plot both the point estimate and its associated 95\% confidence interval for each nurse, split by district and year, in Appendix Figure~\ref{fig: ci-plot-by-district-and-year-edulen} (for years of education) and Appendix Figure~\ref{fig: ci-plot-by-district-and-year-edulen} (for average earnings during ages 25-50).
While some estimates are relatively imprecise (with wide confidence intervals), there is nonetheless clear differences between the outcomes of different nurses, even within the same district-by-year group, and these differences are economically significant  (e.g., cases of more than one year of education on average between the children of two different nurses within the same district and year).

While the evidence above supports the hypothesis of differences between nurse skills leading to differences in outcomes between children allocated different nurses, its magnitude and interpretation is hampered by the complex setting of many nurses as well as district-by-year groups.
We therefore turn to a simpler method of assessing what differences can be expected between two groups of otherwise similar children which happen to be allocated different nurses.
Specifically, we calculate, for each district-by-year group, the average values of our outcomes by each nurse, and then take all distinct pairs of nurses and calculate the absolute value of the difference in the the average values of their children.
The average of these absolute differences is then an estimate of the expected difference between two otherwise similar groups of children, but which happen to have been allocated different nurses.

Figure~\ref{fig: average-treatment-effects} shows the empirical distributions of these absolute differences in the averages of children with different nurses but within the same district-by-year group, along with the average value of these absolute differences (black, solid line).
As shown, these differences are economically large:
For example, two otherwise similar groups of children that happened to be allocated different nurses differ, on average, by nearly half a year of education and around DKK 27,000 in annual earnings (note, however, that large parts of these differences arise due to considering absolute differences; nevertheless, as we show later, differences beyond those that might arise due to noise are present). 
Further, these differences are not driven by few, extreme differences, but rather by relatively large differences by a large share of nurse-pairs. 

\begin{figure}
    \centering
    \subfloat[Breastfed at 1 mo.]{\label{subfig: average-treatment-effects-breastfeeding_1_mo_pred}\includegraphics[width=0.33\linewidth]{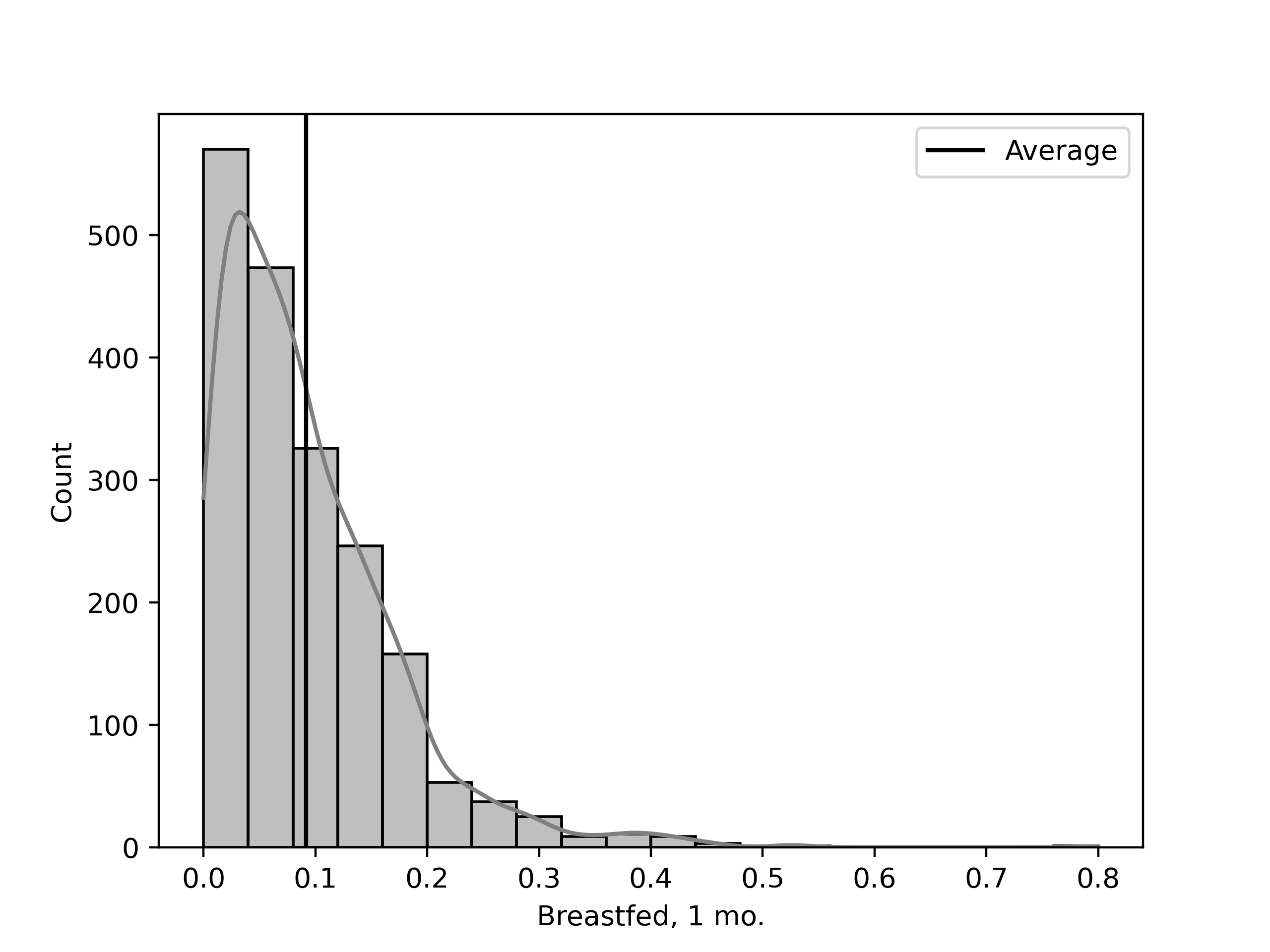}}
    \subfloat[Breastfed at 6 mo.]{\label{subfig: average-treatment-effects-breastfeeding_6_mo_pred}\includegraphics[width=0.33\linewidth]{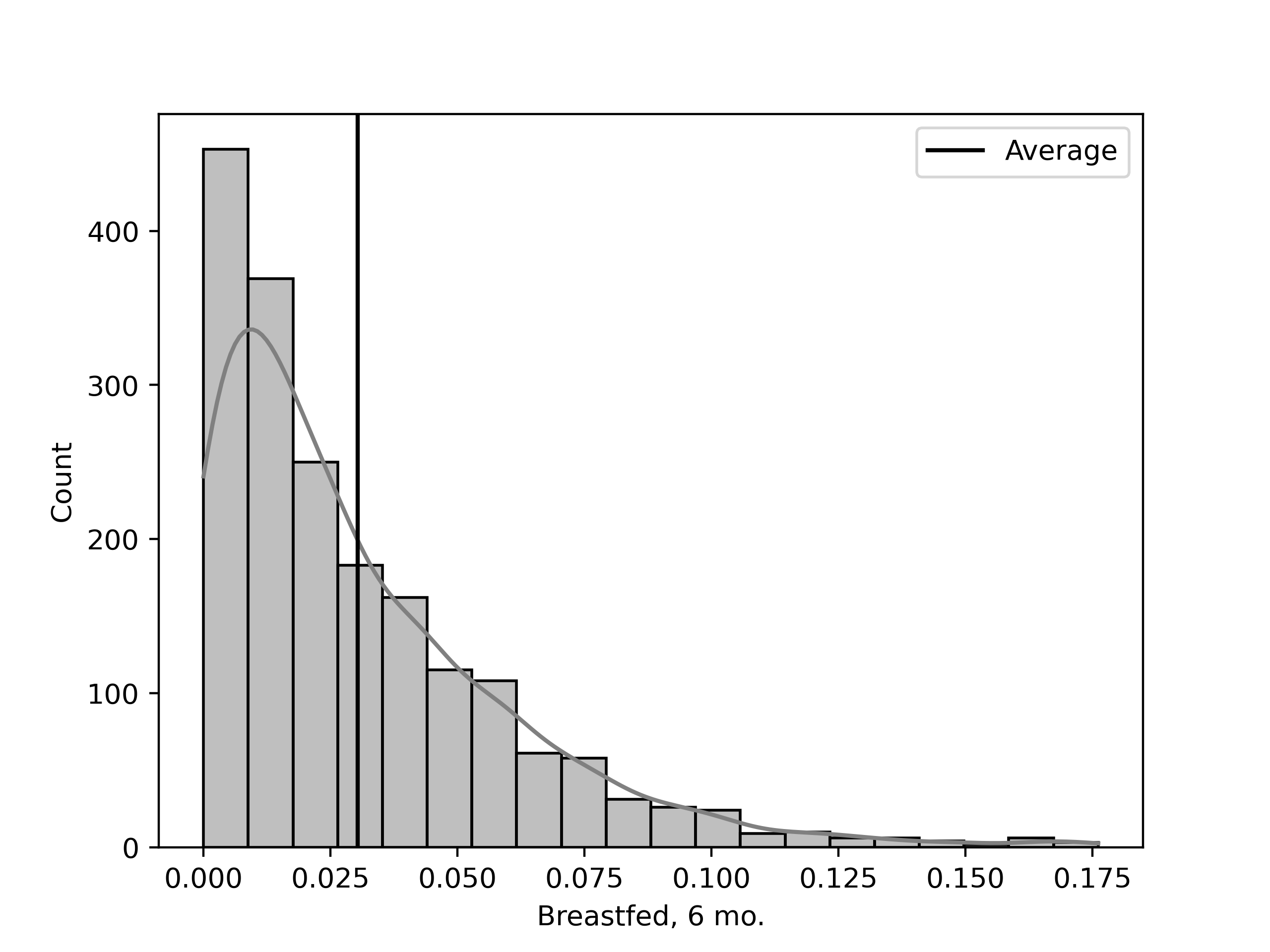}}
    \subfloat[Duration of breastfeeding (mo.)]{\label{subfig: average-treatment-effects-bfdurany_pred}\includegraphics[width=0.33\linewidth]{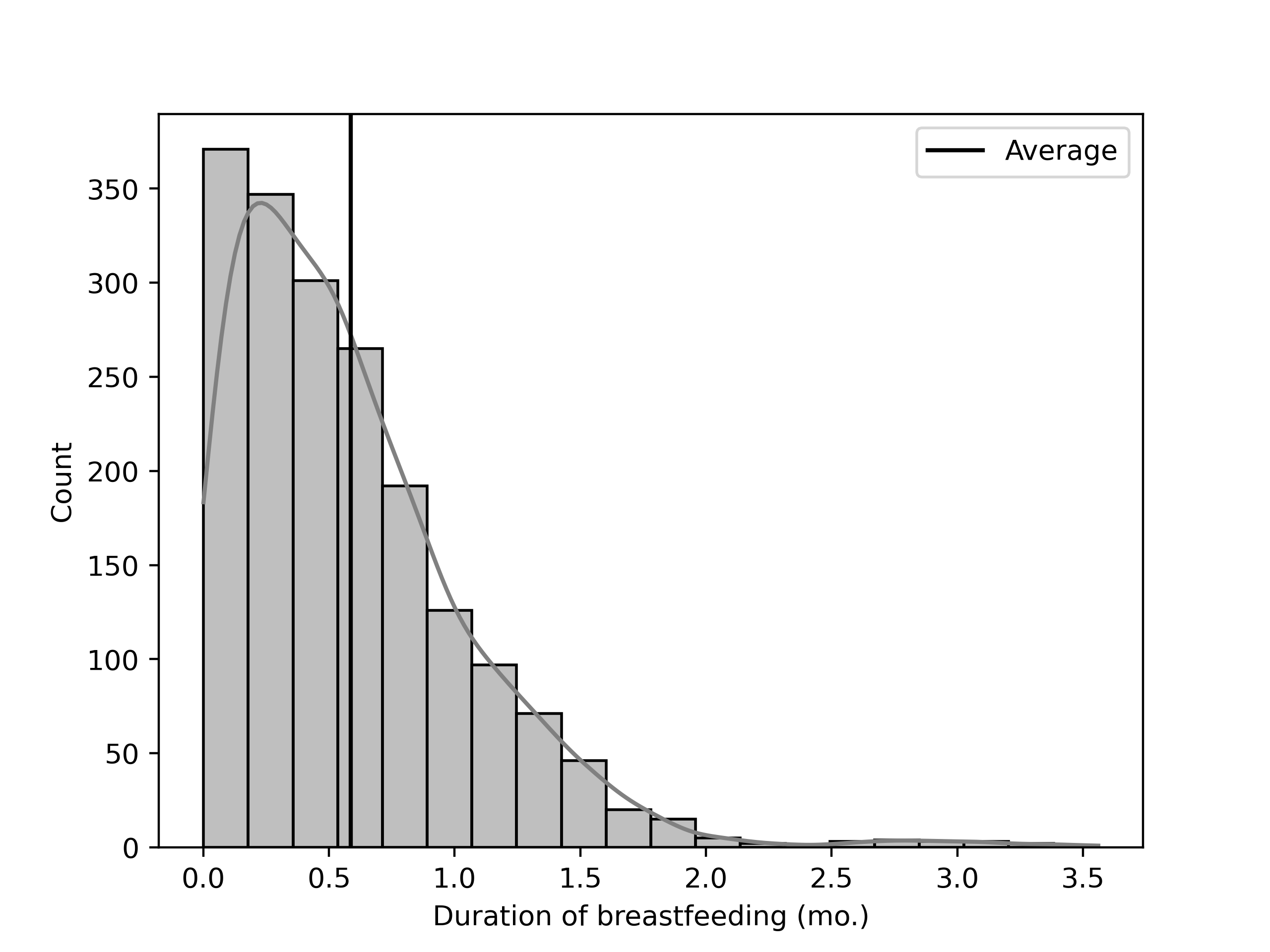}}

    \subfloat[Years of education]{\label{subfig: average-treatment-effects-edulen}\includegraphics[width=0.4\linewidth]{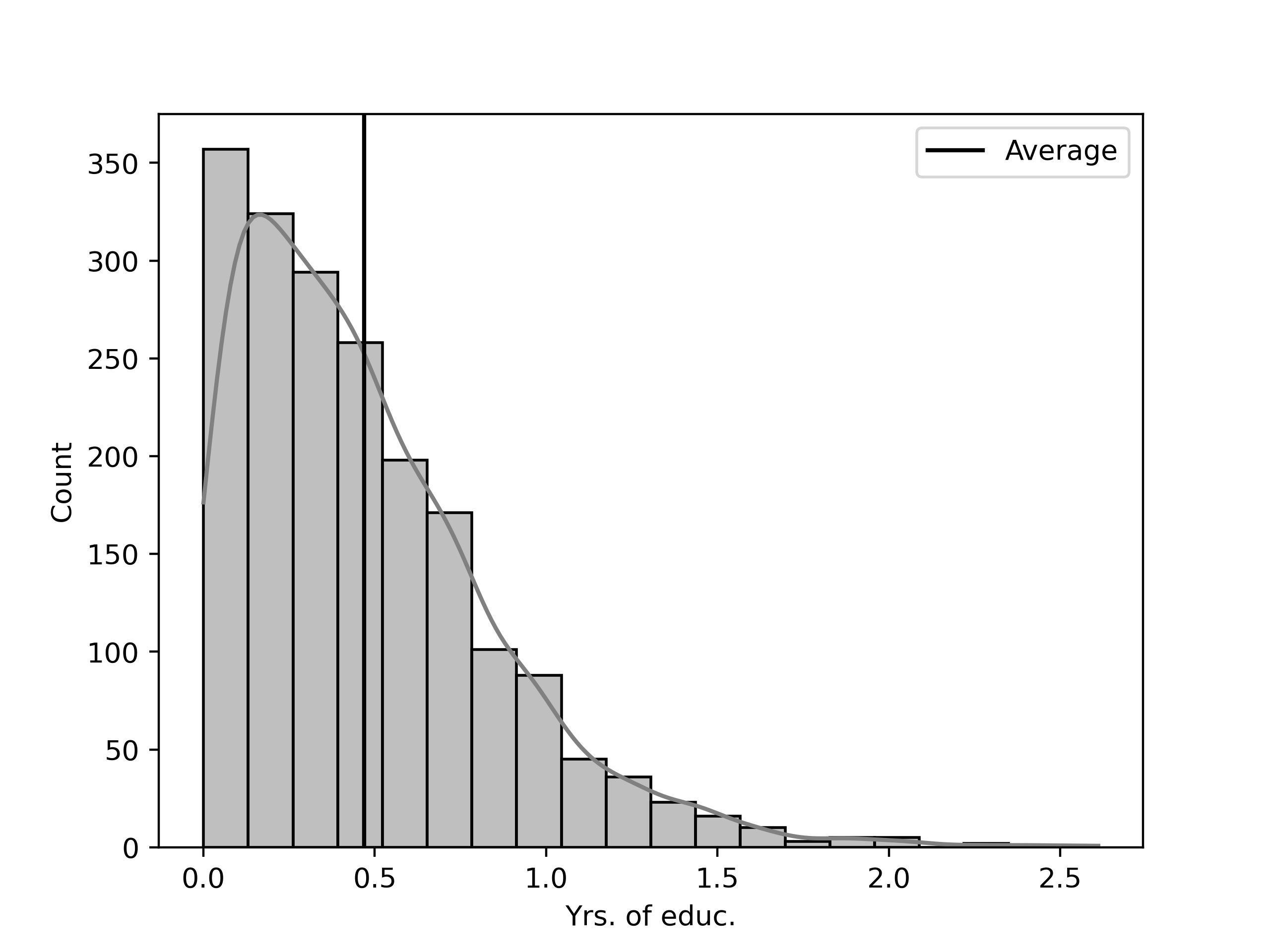}}
    \subfloat[Above mandatory education]{\label{subfig: average-treatment-effects-above_mand_edu}\includegraphics[width=0.4\linewidth]{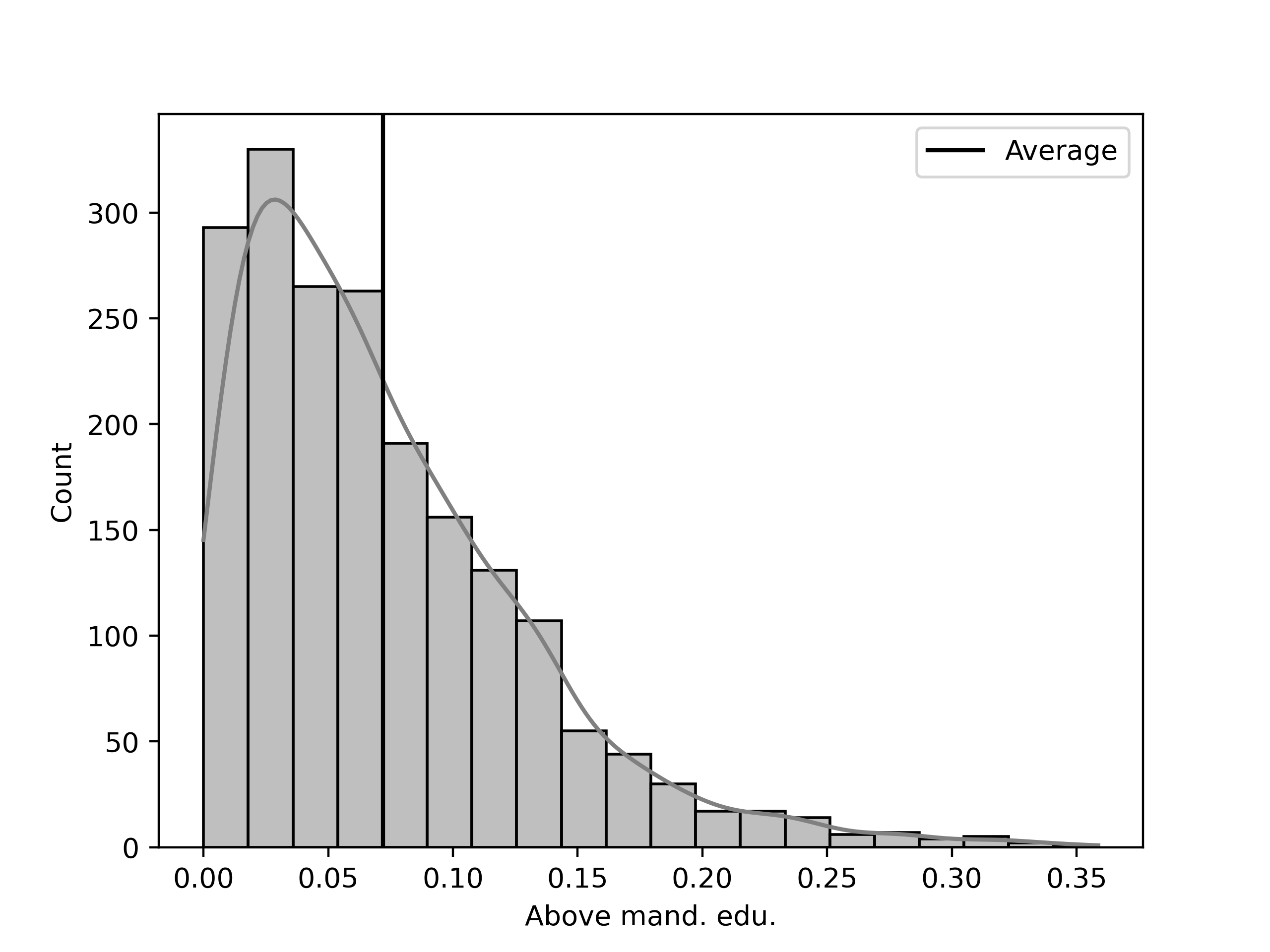}}

    \subfloat[Share of time employed 25-50]{\label{subfig: average-treatment-effects-avg_emp_25_50}\includegraphics[width=0.4\linewidth]{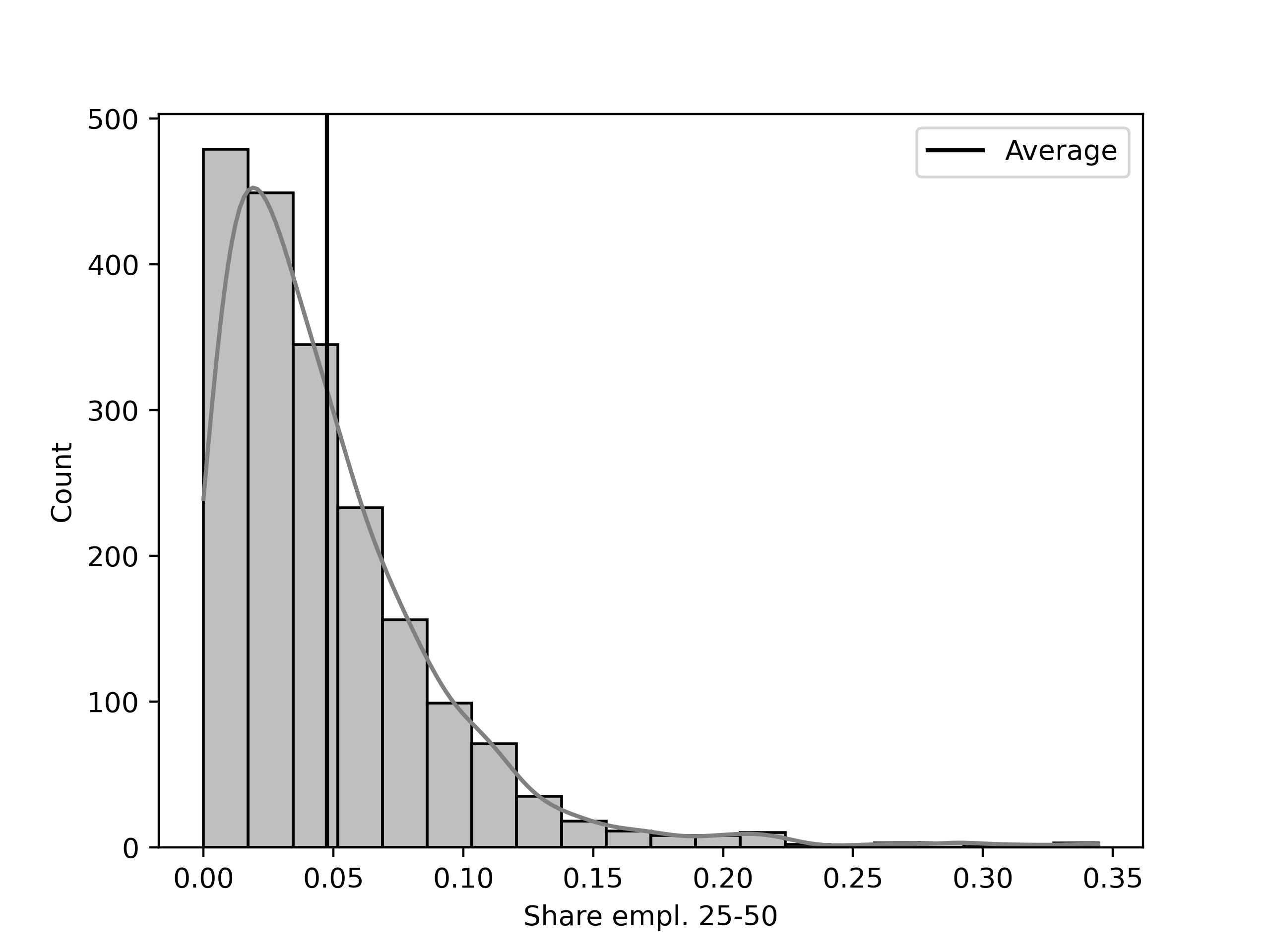}}
    \subfloat[Average income 25-50]{\label{subfig: average-treatment-effects-avg_inc_25_50}\includegraphics[width=0.4\linewidth]{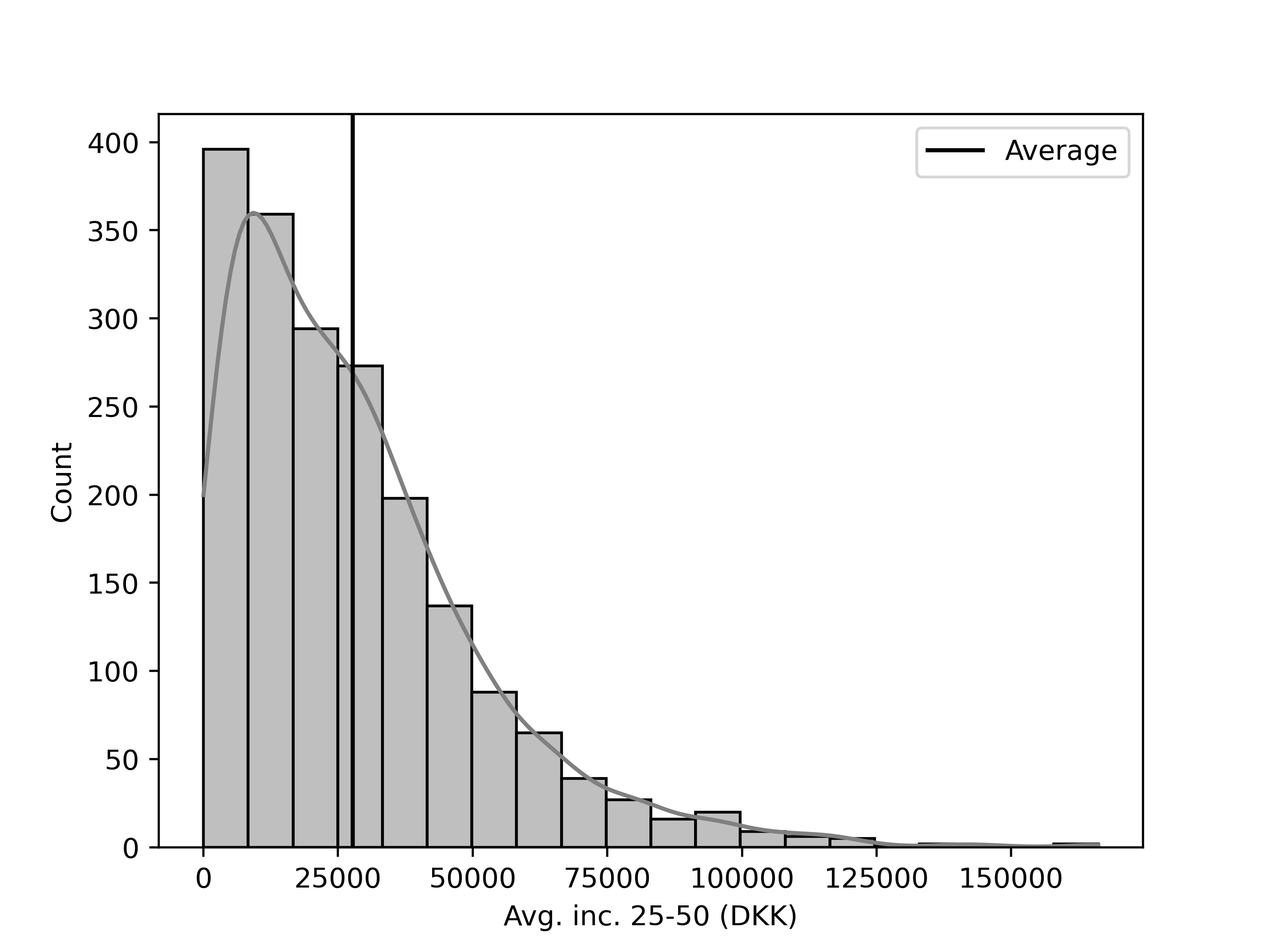}}
    
    \caption{Average Absolute Differences Between Children of Different Nurses Within Same District-by-Year.}
    \label{fig: average-treatment-effects}
    \begin{minipage}{1\linewidth}
        \vspace{1ex}
        \footnotesize{
        \textit{Notes:}
        The figure shows histograms (with associated density plots) of absolute values of differences in average values of children with different nurses but within same district-by-year group.
        The black, solid line in each panel shows the average of the underlying absolute differences.
        The different panels refer to different outcome variables.
		}
	\end{minipage}
\end{figure}

While the findings in Figure~\ref{fig: average-treatment-effects} document large differences in outcomes of children allocated different nurses, the measure is by design bounded below at zero, and so the average differences will be larger than zero, even asymptotically and with no actual differences between nurses.
Therefore, we want to ensure that these differences are not artefacts due to noise but represent real differences between nurses.
However, the complex nature of the setup does not lend itself well to normal inference, and we therefore turn to a permutation-based strategy for inference.

Suppose nurses do not differentially affect children, but that potential outcomes differ by districts and cohorts.
Then we could, separately for each district-by-year group, permute the children, leading us to -- in the counterfactual setting -- assign at random $k$ children within the same district-by-year group to some nurse with $k$ children in the real setting.\footnote{We need to keep the number of children for each nurse fixed to not bias our results to show as always significant: Since we measure un-weighted differences between pairs, removing noise in number of children by nurse would lead us to over-estimate the statistical significance of our findings, as this equalization in sample size would reduce statistical noise.} 
Doing so for each nurse within each district-by-year group, we obtain a permutation which, if nurses did not differentially affect children, would follow the same distribution as our actual sample.
Repeating this would then lead to a counterfactual distribution of average absolute differences, and we can then get an estimate of the exact p-value by calculating the share of permutations that led to a larger average absolute difference than the one in the actual data.\footnote{Exact p-values are computationally infeasible, which is why we only obtain an estimate. The test is designed such that a more extreme value always corresponds to a greater value, which leads to the right-tail focus.}
Specifically, we perform 1000 permutations and obtain for each a counterfactual estimate of the average absolute difference in the permuted sample.

Table~\ref{tab: average-treatment-effects} shows the average absolute differences in outcomes between children of different nurses and, to take potential differences arising due to noise into account, the ``excess'' difference, i.e., the absolute average difference beyond those obtained on average in the permutation tests.
Further, mean values of outcomes are reported (MDV), and the table also shows p-values obtained from the permutation tests, i.e., each p-value is the share of times a counterfactual estimate based on a permutation attained a more extreme value than the one from the non-permuted sample.
As is evident, all the differences are highly statistically significant, as are they economically significant (e.g., differences of more than one month in length of education and income differences of around 1.4\%).
Appendix Figure~\ref{fig: counterfactual-permutation-treatment-effects} shows, for each variable in Table~\ref{tab: average-treatment-effects}, histograms of the averages obtained from the permutation tests, clearly showcasing that the averages obtained in the actual sample are larger than those which could appear due to random noise.

\begin{table}
    \centering
    \centering
    \caption{Average Differences in Outcomes by Nurses Within District-by-Year Groups.}
    \label{tab: average-treatment-effects}
    \resizebox{1\linewidth}{!}{
        \begin{tabular}{l rrrr}
            \toprule
            Outcome & Average Difference & Excess Difference & MDV & P-value \\
            \midrule
            Breastfed, 1 mo. &      0.09 &     0.02 &       0.58 & 0.000 \\
            Breastfed, 6 mo. &      0.03 &     0.00 &       0.03 & 0.004 \\
            Duration of breastfeeding (mo.) &      0.59 &     0.14 &       2.50 & 0.000 \\
            Avg. inc. 25-50 (DKK) & 27,722.42 & 3,767.50 & 269,452.55 & 0.000 \\
            Share empl. 25-50 &      0.05 &     0.01 &       0.79 & 0.000 \\
            Yrs. of educ. &      0.47 &     0.09 &      13.74 & 0.000 \\
            Above mand. edu. &      0.07 &     0.01 &       0.74 & 0.006 \\
            \bottomrule
        \end{tabular}
    }
    \begin{minipage}{1\linewidth}
        \vspace{1ex}
        \footnotesize{
        \textit{Notes:}
        The table shows estimates of average (absolute) differences in outcomes between two otherwise similar groups of children allocated different nurses (but within the same district-by-year group).
        Since the test statistic we use is not centered around zero, we also report the ``excess'' difference, which is the difference between nurses above that which is a result of noise.
        MDV is the mean of the dependent variable.
        The p-values are from permutation-based tests which randomly match children and nurses within a district-by-year group and is calculated as the share which results in a more extreme value than the one observed in the non-permuted data, and we use a Holm–Šidák approach to account for multiple hypothesis testing.
		}
	\end{minipage}
\end{table}

Taken together, our results showcase large, economically as well as statistically significant, differences in how well children fare as a consequence of which nurse they got allocated for their visits during their first year of life.
Further, recall that these estimates are not estimates of the impact of the nurse visiting program; rather, they are specifically the differences \textit{within} the program as a consequence of the visiting nurse.
As such, our findings suggest that an integral part of understanding the impact of such policies lies in understanding the role of the treatment provider, in addition to the program itself.

When comparing nurses and nurse ``treatment'' effects in this setting it is important, however, to apply some caution:
Any treatment effect is a difference in averages between two nurses within the same nurse district, and so a large treatment effects may be any one or both of (1) a skilled nurse and (2) a poor comparison nurse.
Further, it is only possible to obtain these estimates within district-by-year group without imposing stronger assumptions on the underlying potential outcomes, and thus comparing differences across district-by-year groups must be done with caution.


\subsection{Child-Specific Treatment Effects \& Allocation Mechanisms}
\label{sec: heterogeneity}

While our results documenting important differences in how well children allocated different nurses fare are interesting by themselves, and may have policy implications in terms of added focus on improving the skill of the worst nurses, some differences are likely impossible to erase, due either to skills attained only through years of experience or some inherent differences between nurses which are non-trivial to mitigate, another avenue for potential gains is in optimally ``matching'' children and nurses.
Our hypothesis posits that certain children are particularly likely to benefit from a highly skilled nurse compared to others; for example, perhaps children with poor initial health or born to parents in the lower end of the socioeconomic distribution are more likely to benefit from a high rather than low skill nurse compared to a child born in perfect health to parents from the higher end of the socioeconomic distribution.
If this is the case, there is potentially room for improved impacts of the nurse visiting program by exploiting such information to allocate children to nurses in a better way.

For it to be possible to identify potential gains from improving the allocation mechanism of the nurse visiting program, we require estimates of treatment effects that take into account such heterogeneity.
Further, for any such allocation mechanism to be feasible, it needs to rely on readily available information of the child and/or her family.

In terms of relevant information readily available for potential (re)allocation, the nurse records provide information available immediately from birth in terms of birth weight and length, sex, and number of weeks born prior to due date (if relevant), and from administrative data we obtain other information visible directly at birth in the form of parent characteristics such as age and education.\footnote{While our use of register data here obviously was not available at the time of the sample we study, we only use information from the registers that would have been readily available from the parents and which, today, is immediately available. Nonetheless, we choose to drop parents' years of education, as we observe this only later in the registers and parents may potentially have obtained additional education between the birth of their child and the point at which we observe their education.}
Specifically, we use birth weight, birth length, born prior to due date (and number of weeks when applicable), an indicator for born one of the first three days of a month,\footnote{This is potentially an important factor to consider given the ``Copenhagen trial'', which selected children born during the first three days of a month to an extended, three year NHV program \citep{baker2023universal}.}, parity, an indicator for sex, mother and father age at birth, and an indicator for nurse assessment of socioeconomic status of the family (low, average, high).\footnote{Nurse assessment of the socioeconomic status of the home is not strictly speaking observed before the first visit, as the nurse records it during her visit at child age one month. However, it is unlikely to be affected by the nurse during such a short time span. Our results are robust to excluding this variable from the list of variables we use to estimate heterogeneous treatment effects.} 

With this information, we estimate individual-level treatment effects by means of a series of multi-arm causal forest \citep{wager2018estimation, athey2019generalized, nie2021quasi}:\footnote{These effects are ``individual'' in the sense of being specific with respect to the at-birth information we have available on the child and her family. Two children which are identical with respect to all our measures will not differ in terms of their estimated treatment effects.}
For each district-by-year group, we treat each nurse as a separate treatment arm and estimate heterogeneous treatment effects (for each pair of arms, i.e., comparing pairs of nurses) with respect to the information on the child and her family available to us at her birth.\footnote{Here, we add a new sample restriction in the form of requiring at least 100 children per nurse (within our district-by-year groups) for inclusion, as we otherwise risk imprecise estimates related to very low estimated propensity scores for some nurses.} 

Having estimated a multi-arm causal forest for each district-by-year group, we can use the out-of-bag predictions of the forests to estimate counterfactual outcome distributions under different treatment allocation mechanisms.
Limiting the counterfactual treatment allocation rules to those feasible within the nurse program, we can then estimate potential gains in the nurse visiting program from alternative allocation mechanisms.
Further, we can estimate the total room for gains from an alternative allocation rule by estimating the gains from the optimal allocation mechanism, something which we proved, in Section~\ref{sec: allocation}, is always possible in strongly polynomial time.

Table~\ref{tab: average-outcome-by-allocation} shows estimated average values of our outcome variables under the current as well as under two alternative allocation mechanisms:
One uses a greedy heuristic to identify gains from re-allocation (but which is not guaranteed to achieve the optimal allocation) and one finds the optimal allocation.
The results indicate economically significant room for improvements:
Our estimates for average earnings during ages 25-50 suggest that improved allocation of nurses to children could have resulted in additional yearly income of around (2023) USD 1,815.\footnote{We arrive at this number by adjusting for Danish inflation between 2015 and 2023 (17\%) and using an exchange rate of 0.14 DKK/USD.} 
Turning to education, we estimate the total gains in terms of length of education to be around two months on average. 

\begin{table}
    \centering
    \centering
    \caption{Outcomes Under Alternative Allocation Mechanisms.}
    \label{tab: average-outcome-by-allocation}
    \resizebox{1\linewidth}{!}{
        \begin{tabular}{l rrrr}
            \toprule
            & \multicolumn{3}{c}{Allocation mechanism} & \\ \cline{2-4}
            Outcome & Actual & Greedy & Optimal & No. of obs. \\
            \midrule
            Breastfed, 1 mo. &       0.58 &       0.60 &       0.62 & 43,259 \\
            Breastfed, 6 mo. &       0.03 &       0.04 &       0.04 & 31,449 \\
            Duration of breastfeeding (mo.) &       2.55 &       2.63 &       2.74 & 33,139 \\
            Avg. inc. 25-50 (DKK) & 275,018.01 & 280,699.60 & 286,100.90 & 50,108 \\
            Share empl. 25-50 &       0.80 &       0.81 &       0.82 & 49,186 \\
            Yrs. of educ. &      13.83 &      13.92 &      14.00 & 48,916 \\
            Above mand. edu. &       0.76 &       0.77 &       0.79 & 48,916 \\
            \bottomrule
        \end{tabular}
    }
    \begin{minipage}{1\linewidth}
        \vspace{1ex}
        \footnotesize{
        \textit{Notes:}
        The table shows estimates of average values of outcomes under alternative allocation mechanisms, estimated separately for each outcome.
        The leftmost rule (``Actual'') reports the actual averages as they are under no new allocation rule.
        The next rule (``Greedy'') shows the results of employing a heuristic, greedy re-allocation mechanism.
        The final rule (``Optimal'') shows the results obtained when optimally allocating nurses to children.
		}
	\end{minipage}
\end{table}

How much of the gains from improved allocation is attainable through a simpler re-allocation algorithm?
To answer this question, Table~\ref{tab: average-outcome-by-allocation} reports results from a re-allocation mechanism that attempts to find re-allocations using a greedy heuristic (the column ``Greedy'').
Specifically, it keeps track of the number of children allocated each nurse and then goes through each child in a district-by-year group and allocates the nurse to that child that (1) is not yet at capacity and (2) results in the highest potential outcome for that child (thus obtaining an $\mathcal{O}(n_2)$ re-allocation mechanism).
However, as is clear from our estimates, using this heuristic algorithm leaves significant room for improvements.
This highlights the importance of an exact algorithm for complex tasks like optimal treatment allocation, emphasizing the value of deriving a strongly polynomial algorithm for this task.

While Table~\ref{tab: average-outcome-by-allocation} documents significant room for average improvements through an optimized nurse allocation mechanism, it potentially hides important ways in which these average improvements are obtained:
If those children currently the worst off are negatively impacted (e.g., through complementarity of initial conditions and nurse skill leading to matching of the best nurses to the best off children), welfare may not be improved under the re-allocation (in terms of a social planner with preferences for equality).
Conversely, if those children the worst off are (particularly) positively impacted (e.g., through substitutability of initial conditions and nurse skill leading to matching of the best nurses to the worst off children), welfare may be more positively impacted under re-allocations compared to what the average improvement suggests (in terms of a social planner with preferences for equality).

To study the impact of nurse re-allocation on the distribution (rather than just its expectation) of outcomes, Figure~\ref{fig: density-realized-vs-optimal-outcomes} shows the distribution of our non-binary outcomes under the current as well as under the optimal allocation mechanisms.\footnote{Binary outcomes are left out as Table~\ref{tab: average-outcome-by-allocation} (i.e., their expectations) fully captures the effects on these.}
Across the outcomes, we observe a shift to the right across the entire distribution when moving from the actual to the optimal allocation, suggesting neither strong complementarity nor substitutability.

\begin{figure}
    \centering
    \subfloat[Duration of breastfeeding (mo.)]{\label{subfig: density-realized-vs-optimal-bfdurany_pred}\includegraphics[width=0.5\linewidth]{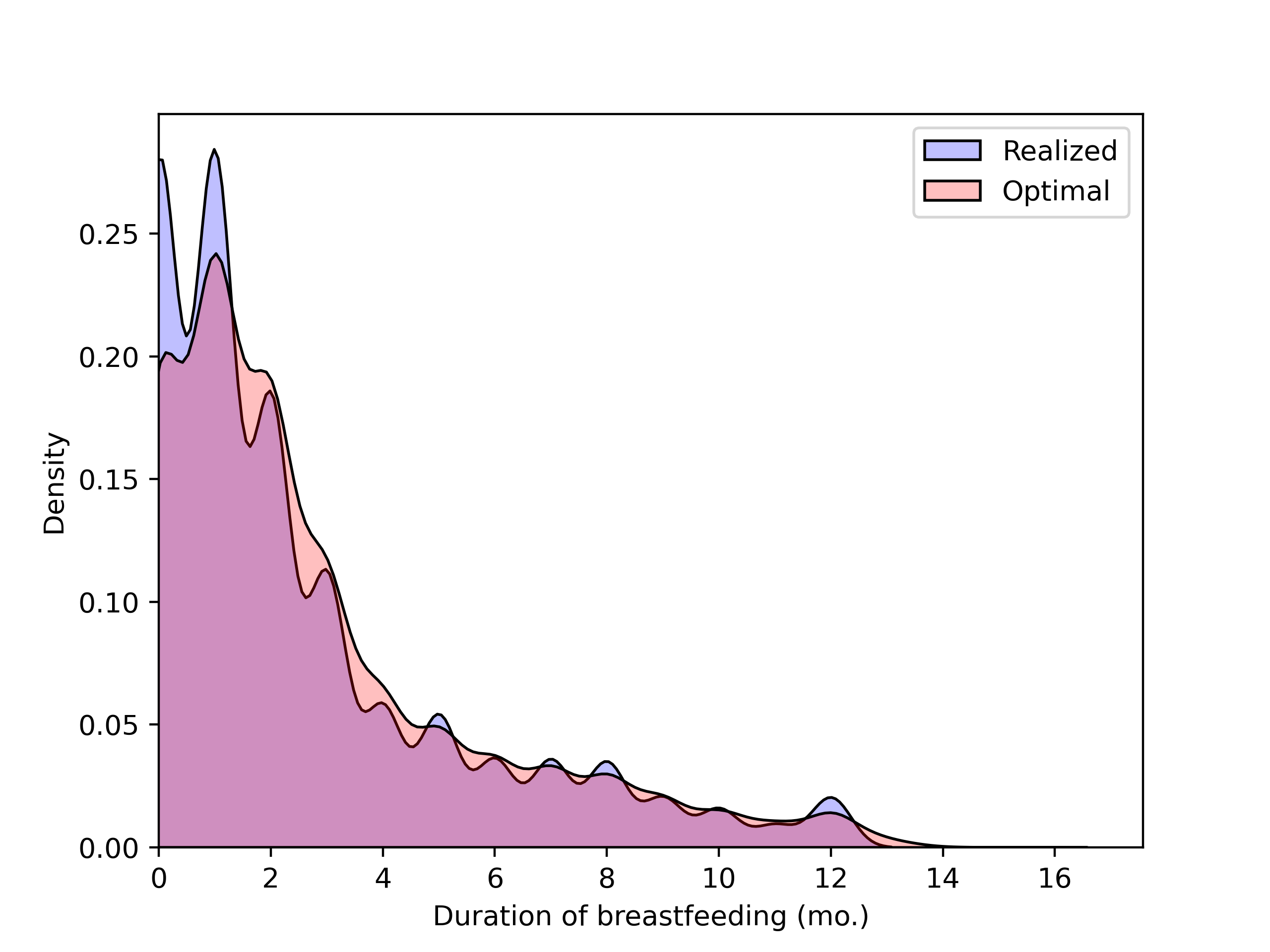}}
    \subfloat[Years of education]{\label{subfig: density-realized-vs-optimal-edulen}\includegraphics[width=0.5\linewidth]{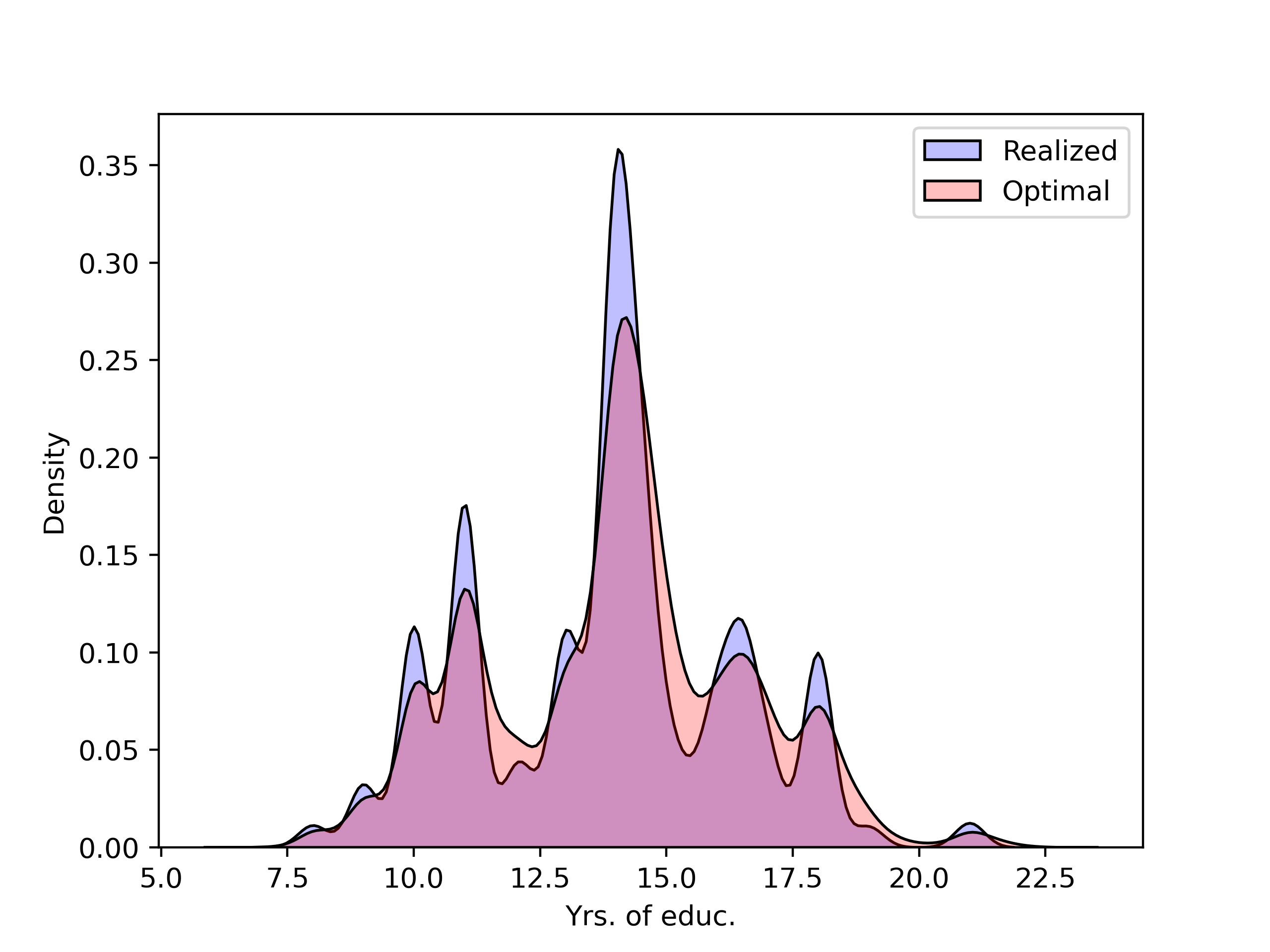}}

    \subfloat[Share of time employed 25-50]{\label{subfig: density-realized-vs-optimal-avg_emp_25_50}\includegraphics[width=0.5\linewidth]{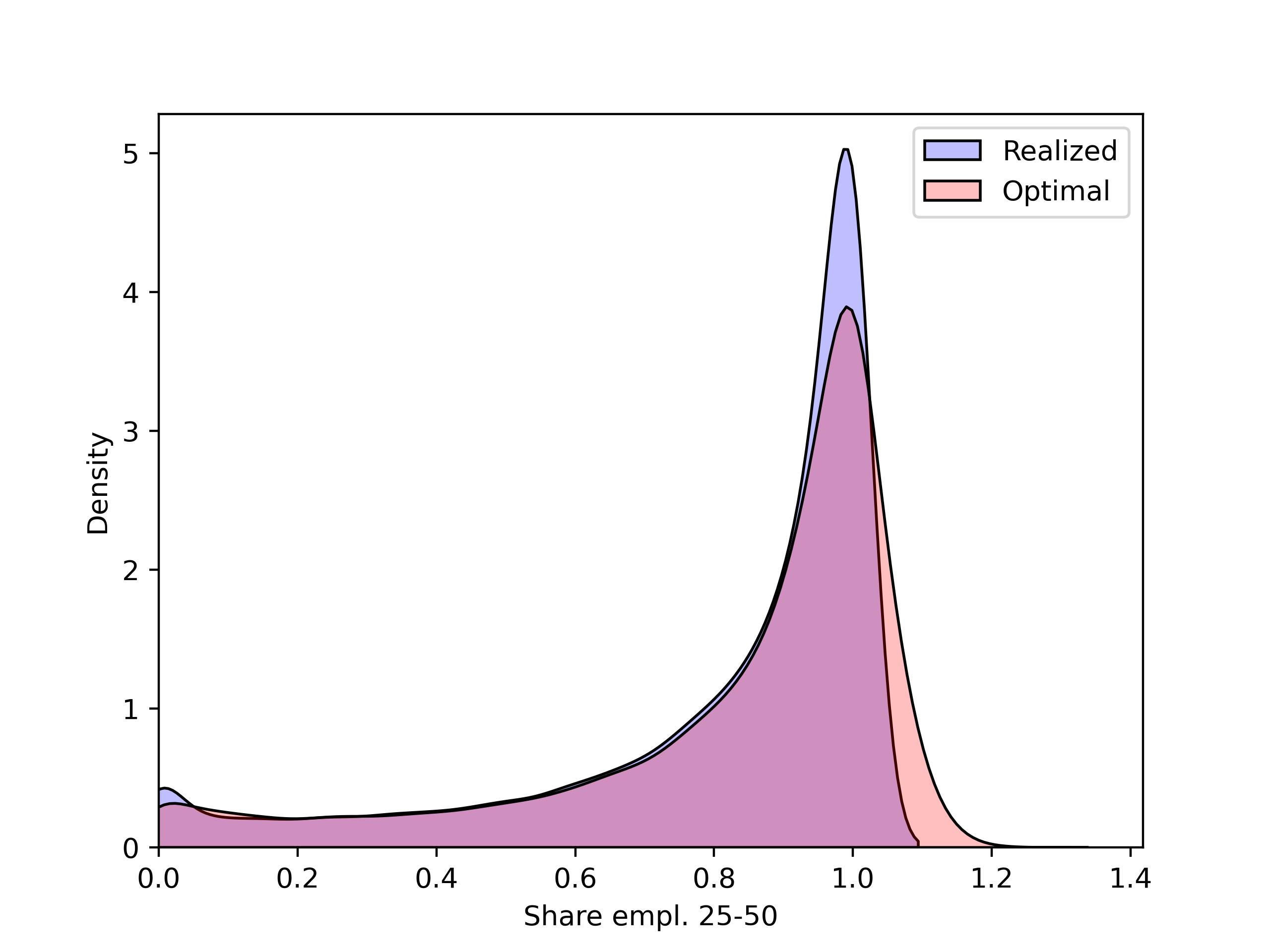}}
    \subfloat[Average income 25-50]{\label{subfig: density-realized-vs-optimal-avg_inc_25_50}\includegraphics[width=0.5\linewidth]{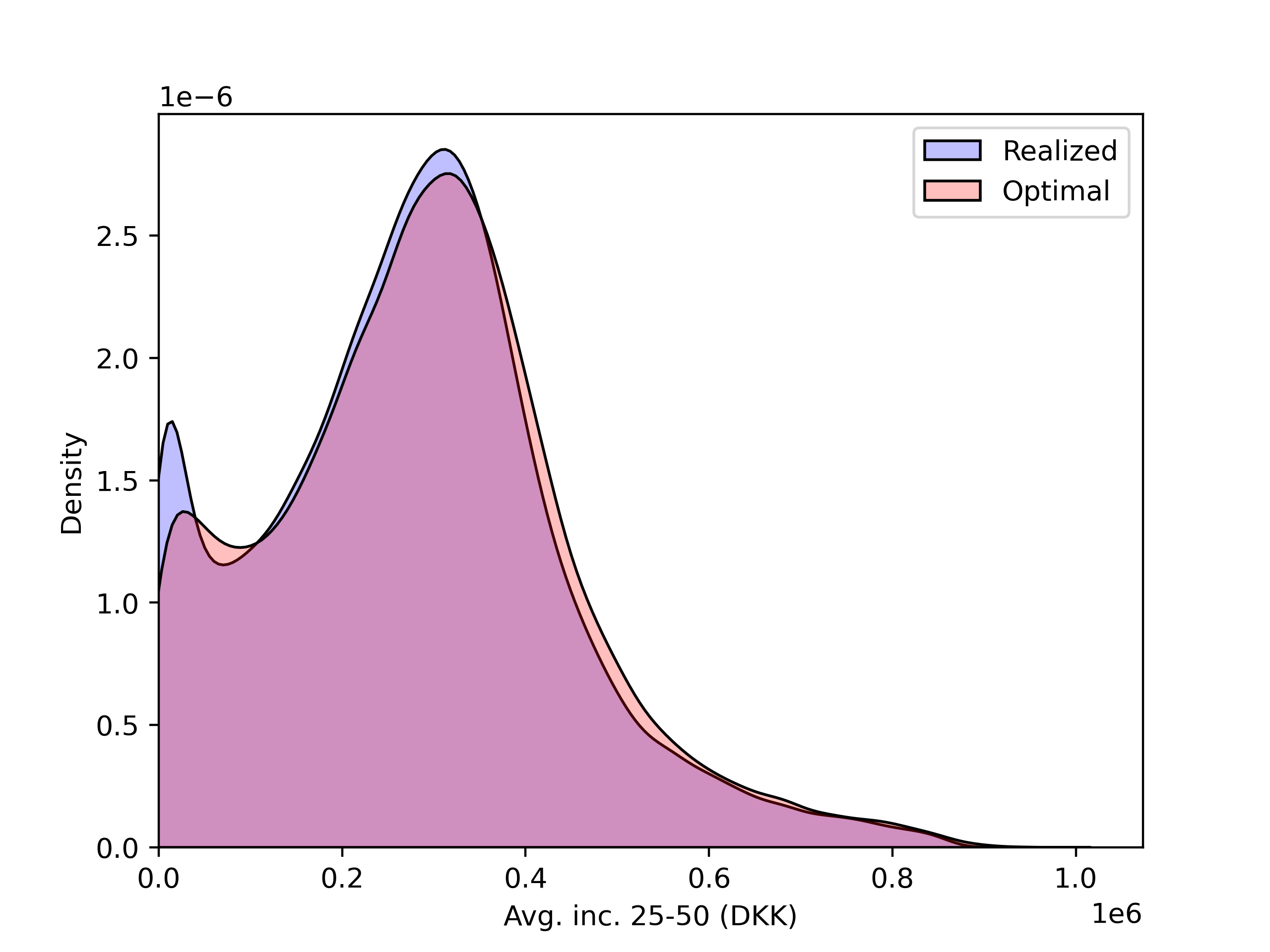}}
    
    \caption{Realized and Counterfactual Distributions of Outcomes by Nurse Allocation Mechanism.}
    \label{fig: density-realized-vs-optimal-outcomes}
    \begin{minipage}{1\linewidth}
        \vspace{1ex}
        \footnotesize{
        \textit{Notes:}
        The figure shows the actual and estimated counterfactual distributions of outcome variables by nurse allocation mechanism.
        Each panel shows the current distribution of an outcome variable (in blue) and the estimated counterfactual distribution under  optimal allocation of nurses (in red).
		}
	\end{minipage}
\end{figure}


\subsection{Robustness Checks}
\label{sec: robustness}

The biggest concern to our empirical design is selection in the allocation mechanism:
If children with relatively good potential outcomes are selectively allocated to one nurse, and children with relatively poor potential outcomes to another, then our estimates of the differences in treatment effects between nurses will be biased.

To mitigate such potential concerns, Table~\ref{tab: average-treatment-effects-pre-treatment-variables} shows average differences in children pre-treatment variables by nurse (with associated p-values from permutation tests), i.e., similarly to Table~\ref{tab: average-treatment-effects} but now for pre-treatment rather than outcome variables. 
A necessary condition for our assumption of no selective allocation within district-by-year groups is that differences in pre-treatment variables should not differ between groups of children allocated different nurses.
Indeed, with the exception of ``born prior to due date'', the estimated differences are at most marginally significant.\footnote{The ``excess'' difference in the share of children born prior to due date between nurses is around one percent. However, the number of weeks born prior to due date does not systematically differ between children. We interpret this as differences between nurses in terms of when they would note a child as being born prior to due date, with some nurses being stricter than others.} 
While these results do not rule out potential selection, they are nevertheless reassuring for our design.
Further, Appendix Figure~\ref{fig: counterfactual-permutation-treatment-effects-pre-treatment-variables} shows the full distribution of counterfactual differences from our permutation tests (similarly to Appendix Figure~\ref{fig: counterfactual-permutation-treatment-effects}, but for pre-treatment rather than outcome variables).

\begin{table}
    \centering
    \centering
    \caption{Average Differences in Pre-Treatment Variables by Nurses Within District-by-Year Groups.}
    \label{tab: average-treatment-effects-pre-treatment-variables}
    \begin{tabular}{l rrrr}
        \toprule
        Outcome & Average Difference & Excess Difference & MDV & P-value \\
        \midrule
        Low BW & 0.03 &  0.00 & 0.05 & 0.119 \\
        Born prior to due date & 0.06 &  0.01 & 0.12 & 0.000 \\
        Weeks prior to due date & 0.56 &  0.03 & 3.41 & 0.290 \\
        Born 1-3 & 0.05 &  0.00 & 0.10 & 0.252 \\
        Female & 0.07 & -0.00 & 0.49 & 0.924 \\
        Father missing & 0.03 & -0.00 & 0.06 & 0.924 \\
        \bottomrule
    \end{tabular}
    \begin{minipage}{1\linewidth}
        \vspace{1ex}
        \footnotesize{
        \textit{Notes:}
        The table shows estimates of average (absolute) differences in pre-treatment variables between two otherwise similar groups of children allocated different nurses (but within the same district-by-year group).
        Since the test statistic we use is not centered around zero, we also report the ``excess'' difference, which is the difference between nurses above that which is a result of noise.
        MDV is the mean of the dependent variable.
        The p-values are from permutation-based tests which randomly match children and nurses within a district-by-year group and is calculated as the share which results in a more extreme value than the one observed in the non-permuted data, and we use a Holm–Šidák approach to account for multiple hypothesis testing.
		}
	\end{minipage}
\end{table}

Could our relatively lenient strategy for matching nurse records and nurse district information introduce some potential bias?
To answer this question, we re-estimate the average differences in outcomes of children between nurses, similarly to Table~\ref{tab: average-treatment-effects}, but now for a subsample where records are only included if we can match it to nurse district information using nurse first name initial and full last name.
Table~\ref{tab: average-treatment-effects-mr3} shows our findings, indicating that our results are robust to this additional sample restriction.
Magnitudes remain relatively stable and our estimated differences are still highly statistically significant.

\begin{table}
    \centering
    \centering
    \caption{Average Differences in Outcomes by Nurses Within District-by-Year Groups: Stricter Matching Criteria.}
    \label{tab: average-treatment-effects-mr3}
    \resizebox{1\linewidth}{!}{
        \begin{tabular}{l rrrr}
            \toprule
            Outcome & Average Difference & Excess Difference & MDV & P-value \\
            \midrule
            Breastfed, 1 mo. &      0.09 &     0.02 &       0.58 & 0.000 \\
            Breastfed, 6 mo. &      0.03 &     0.00 &       0.03 & 0.003 \\
            Duration of breastfeeding (mo.) &      0.59 &     0.15 &       2.50 & 0.000 \\
            Avg. inc. 25-50 (DKK) & 27,965.44 & 4,221.04 & 269,780.38 & 0.000 \\
            Share empl. 25-50 &      0.05 &     0.01 &       0.79 & 0.004 \\
            Yrs. of educ. &      0.46 &     0.09 &      13.74 & 0.000 \\
            Above mand. edu. &      0.07 &     0.01 &       0.74 & 0.009 \\
            \bottomrule
        \end{tabular}
    }
    \begin{minipage}{1\linewidth}
        \vspace{1ex}
        \footnotesize{
        \textit{Notes:}
        The table shows estimates of average (absolute) differences in outcomes between two otherwise similar groups of children allocated different nurses (but within the same district-by-year group), and where we restrict the sample to those with a nurse match based on nurse first name initial and full last name.
        Since the test statistic we use is not centered around zero, we also report the ``excess'' difference, which is the difference between nurses above that which is a result of noise.
        MDV is the mean of the dependent variable.
        The p-values are from permutation-based tests which randomly match children and nurses within a district-by-year group and is calculated as the share which results in a more extreme value than the one observed in the non-permuted data, and we use a Holm–Šidák approach to account for multiple hypothesis testing.
		}
	\end{minipage}
\end{table}

Another potential concern is misclassification of children to districts as a result of nurses changing districts over time and us only observing nurse district for a subset of our sample years (1963 and 1965). 
To mitigate such concerns, we re-estimate the average differences in outcomes of children between nurses, but now for a subsample where records are only included if the child is born in one of the years for which we have data on nurse district.
Table~\ref{tab: average-treatment-effects-ty} shows our findings, indicating that our results are robust to this additional sample restriction.
The magnitudes of our estimates for this restricted sample are close to the results of Table~\ref{tab: average-treatment-effects}, though less precisely estimated.

\begin{table}
    \centering
    \centering
    \caption{Average Differences in Outcomes by Nurses Within District-by-Year Groups: Stricter Year of Birth Criteria.}
    \label{tab: average-treatment-effects-ty}
    \resizebox{1\linewidth}{!}{
        \begin{tabular}{l rrrr}
            \toprule
            Outcome & Average Difference & Excess Difference & MDV & P-value \\
            \midrule
            Breastfed, 1 mo. &      0.08 &     0.01 &       0.56 & 0.141 \\
            Breastfed, 6 mo. &      0.03 &     0.00 &       0.03 & 0.087 \\
            Duration of breastfeeding (mo.) &      0.52 &     0.09 &       2.39 & 0.012 \\
            Avg. inc. 25-50 (DKK) & 28,080.40 & 4,449.58 & 274,341.07 & 0.036 \\
            Share empl. 25-50 &      0.05 &     0.01 &       0.80 & 0.015 \\
            Yrs. of educ. &      0.45 &     0.09 &      13.80 & 0.007 \\
            Above mand. edu. &      0.07 &     0.01 &       0.76 & 0.141 \\
            \bottomrule
        \end{tabular}
    }
    \begin{minipage}{1\linewidth}
        \vspace{1ex}
        \footnotesize{
        \textit{Notes:}
        The table shows estimates of average (absolute) differences in outcomes between two otherwise similar groups of children allocated different nurses (but within the same district-by-year group), and where we restrict the sample to those born in the years 1963 or 1965 (the years exactly coinciding with our data on nurse districts).
        Since the test statistic we use is not centered around zero, we also report the ``excess'' difference, which is the difference between nurses above that which is a result of noise.
        MDV is the mean of the dependent variable.
        The p-values are from permutation-based tests which randomly match children and nurses within a district-by-year group and is calculated as the share which results in a more extreme value than the one observed in the non-permuted data, and we use a Holm–Šidák approach to account for multiple hypothesis testing.
		}
	\end{minipage}
\end{table}

What role might spillover effects between siblings introduce, and could nurses be selectively allocated children from families of which they already were assigned previous children?
Table~\ref{tab: average-treatment-effects-firstborn} shows our estimates of average differences in outcomes of children between nurses, but now for a sample of only firstborn children.
Here, some of our results are slightly attenuated (when compared to Table~\ref{tab: average-treatment-effects}), but the same overarching pattern remains, again statistically significantly estimated.

\begin{table}
    \centering
    \centering
    \caption{Average Differences in Outcomes by Nurses Within District-by-Year Groups: Firstborn Children.}
    \label{tab: average-treatment-effects-firstborn}
    \resizebox{1\linewidth}{!}{
        \begin{tabular}{l rrrr}
            \toprule
            Outcome & Average Difference & Excess Difference & MDV & P-value \\
            \midrule
            Breastfed, 1 mo. &      0.10 &     0.01 &       0.61 & 0.020 \\
            Breastfed, 6 mo. &      0.03 &     0.00 &       0.03 & 0.059 \\
            Duration of breastfeeding (mo.) &      0.65 &     0.13 &       2.59 & 0.000 \\
            Avg. inc. 25-50 (DKK) & 31,249.23 & 1,927.78 & 278,822.14 & 0.090 \\
            Share empl. 25-50 &      0.05 &     0.00 &       0.80 & 0.114 \\
            Yrs. of educ. &      0.51 &     0.06 &      13.98 & 0.012 \\
            Above mand. edu. &      0.08 &     0.01 &       0.77 & 0.039 \\
            \bottomrule
        \end{tabular}
    }
    \begin{minipage}{1\linewidth}
        \vspace{1ex}
        \footnotesize{
        \textit{Notes:}
        The table shows estimates of average (absolute) differences in outcomes between two otherwise similar groups of children allocated different nurses (but within the same district-by-year group), and where we restrict the sample to firstborn children.
        Since the test statistic we use is not centered around zero, we also report the ``excess'' difference, which is the difference between nurses above that which is a result of noise.
        MDV is the mean of the dependent variable.
        The p-values are from permutation-based tests which randomly match children and nurses within a district-by-year group and is calculated as the share which results in a more extreme value than the one observed in the non-permuted data, and we use a Holm–Šidák approach to account for multiple hypothesis testing.
		}
	\end{minipage}
\end{table}

Our above placebo tests and results for alternative samples, combined with our earlier tests for rank-order correlation between pre-treatment and outcome variables by nurse within district-by-year groups (see Figure~\ref{fig: correlation-coefficient-cdf-nurses}) and our non-parametric tests for differences in pre-treatment variables by nurse within district-by-year groups (see Figure~\ref{fig: pval-cdf-controls}), support our identification strategy.
As far as pre-treatment variables are concerned, there does not appear to have been significant selection, as long as we only compare nurses within the same district-by-year group.



\subsection{Discussion}
\label{sec: discussion}

How do our estimated treatment effects and effects of re-allocation compare to other early-life interventions?
In particular, how much can be gained by optimizing an existing program -- as is the case of the nurse re-allocation -- vs. adding or changing a program with associated increases in required resources?

The most similar early-life interventions to our setting are the introduction of NHV programs and center care in Scandinavia \citep{wust2012early, hjort2017universal, bhalotra2017infant, buetikofer2019infant}, as well as the study by \citet{baker2023universal} which examines the impacts of extending the Danish NHV program from one to three years.
What sets our setting apart from the studies above is that we measure the impact of an, in principal, ``free'' intervention, which instead of adding resources considers the impact of improved allocation of existing resources.

Compared to the study on the introduction of NHV in Denmark (1937) by \citet{hjort2017universal}, which finds health effects but none for education and labor market outcomes, we document positive impacts on education and labor market outcomes.
In Sweden, \citet{bhalotra2017infant} find that the introduction of a health intervention (pioneered in 1931-1933) that provided information to mothers and infant monitoring through home visits and clinics led to substantial health benefits and increased likelihood for secondary schooling enrolment for females by around 3-4\%-points (15\%).\footnote{They find no effects on likelihood of secondary schooling enrolment for males. They interpret this as being due to relatively fixed supply of schooling during that time.}
Our estimates suggest that optimal re-allocation could have improved the probability of completing at least secondary education by around 3\%-points. 
In Norway, \citet{buetikofer2019infant} study increased access to mother and child health care centers during a child's first year of life (introduced during the 1930s), finding that access to these centers improved length of schooling by around 1.8 months and earnings by around 2\%, as well as improved health.
Our estimates suggest that optimal re-allocation could have improved the length of education by around two months and income by around 4\%. 

Compared to the above studies on the \textit{introduction} of early-life policies in Scandinavia, our results show that, in terms of education and labor market outcomes, optimally allocating providers (nurses) to recipients (children) may have effects of similar magnitude.
In the US, \citet{hoehn2021long} studies the impact of county-level health departments (instituted from 1908-1933) on long-run outcomes, showing that affected men's later-life earnings were improved by 2-5\%, again similar in magnitude to our results.
However, our study is set at a later point in time and focuses not on the introduction but rather the provision of early-life investments.
The closest study to ours in terms of timing and target group is the study by \citet{baker2023universal} on the impact of extended NHV (three vs. one year) for the same group of children comprising our study.
Their findings suggest positive and persistent health effects of enrolment into the extended NHV program, with some effects on labor market participation of women (around 1.4\%-points for share of time in employment during ages 30-50) but none for education or income.
In comparison, our results suggest that optimal re-allocation could have improved share of time in employment during ages 25-50 by around 2\%-points.

How come our estimates are of similar magnitudes to impacts of introduction of similar services in Scandinavia and the US?
A likely answer lies in findings across these studies of significant effect heterogeneity, with disadvantaged children disproportionately positively impacted.
Unlike the studies above, which focus on universal programs, our counterfactual policy focuses precisely on those children most likely to be positively impacted.
We view this as yet another motivation for studying the role played by treatment providers when evaluating and designing policies. 



\section{Conclusions}
\label{sec: conclusion}

We tackle the issue of solving optimal treatment allocation problems of the type where each recipient may be differentially affected by each treatment, and where there are constraints on  treatment capacities.
We prove that the problem is solvable in strongly polynomial time and present an algorithm based on flows in networks for problems of this type.
We also extend our algorithm to cases where only Pareto-improving re-allocations of treatments to recipients are allowed by modifying the underlying network to ensure that only re-allocations that make no individual worse off are selected.

To showcase our method, we study NHV in the 1960s Copenhagen.
Earlier work in Denmark, as well as other countries, has documented positive impacts of the introduction of NHV and center care \citep{wust2012early, hjort2017universal, bhalotra2017infant, buetikofer2019infant, hoehn2021long} and extended NHV \citep{baker2023universal}, but we go beyond these studies by estimating differences in treatment effects between individual nurses.
We do so by transcribing and linking data from historical nurse records to Danish administrative data, identifying the nurse allocated each child and following the child throughout her life using Danish register data.\footnote{This transcription work has been done concurrently and in collaboration with the work of \citet{bjerregaard2023cohort, baker2023universal}.}

We show that outcomes of children vary significantly by the visiting nurse, even when comparing children in the same nurse district and born the same year.
We show that these differences are not artefacts due to selection (once operating within district-by-year groups) by comparing pre-treatment variables between children allocated different nurses.
Further, using causal machine learning, we show that children are heterogeneously affected by nurses.
This, in turn, allows us to obtain estimates of potential outcomes under different allocation mechanisms, and we document that a re-allocation of the nurses in the Copenhagen NHV program could have resulted in significant efficiency improvements.
Our estimates suggest that optimal allocation of nurses to children could have improved average yearly earnings by USD 1,815 (4\%) and length of education by around two months.

We contribute with new knowledge within optimal policy and the literature on the role of early-life conditions and investments for long-run outcomes.
Within optimal policy, we introduce a strongly polynomial algorithm for optimal treatment allocation problems under constraints.
Within the literature on early-life investments, we add novel evidence on the role of treatment providers for the effects of such policies.
Further, we show that optimal allocation of such investments to recipients may play a crucial role in fully exploiting potential benefits of early-life investments.

While our approach for solving optimal treatment allocation problems is applicable to settings with various constraints, we assume that potential outcomes can be sufficiently precisely estimated.
Optimal policy learning is a rapidly evolving field with a number of challenges related to identification.
Directly incorporating policy learning with our approach for solving allocation problems could prove a fruitful avenue for new research.

For our empirical application, two challenges remain for proper integration of our optimal allocation design.
First, our results for long-run outcomes are not observable before a significant time-gap, and thus these results are not directly applicable in practice for estimating potential outcomes.
However, given the strong rank-order correlation between short- and long-run outcomes, using estimates for short-run outcomes are likely sufficient to obtain re-allocation rules to improve also long-run outcomes. 
Second, if \textit{all} nurses are re-allocated based on our algorithm, it would no longer be possible to obtain causal estimates of potential outcomes if the underlying distribution changes (e.g., as a result of the population of children changing or the population of nurses changing) since the algorithm by design exploits selection.
Implementing an approach in practice would thus require only partial selection, leaving some children to be randomly allocated to update estimates of potential outcomes as its underlying distribution changes.
This leaves a complex ``exploration vs. exploiting'' problem \citep{sutton2018reinforcement}, where the optimal share of children to randomly allocate to nurses (in order to update estimates of potential outcomes, i.e., explore) must be decided in a way to optimize average outcomes over time (i.e., exploit our knowledge to improve allocation and, in turn, outcomes).


\newpage 
\singlespacing
\printbibliography


\newpage
\onehalfspacing
\FloatBarrier
\appendix
\section*{Appendices}

\setcounter{table}{0}
\setcounter{figure}{0}

\renewcommand{\thetable}{\Alph{subsection}.\arabic{table}}  
\renewcommand{\thefigure}{\Alph{subsection}.\arabic{figure}}
\renewcommand{\thesection}{\Alph{subsection}}
\renewcommand{\thesubsection}{\Alph{subsection}}


\subsection{Extended Proof}
\label{sec: extended-proof}

This section provides additional details on the proofs of Theorems~\ref{thm: correspondance} and \ref{thm: complexity}.

\begin{proof}[Extended Proof of Theorem~\ref{thm: correspondance}]
    For each $v_1 \in V_1$ (treatment) and $v_2 \in V_2$ (recipient), let $D_{v_2}^{v_1} = x_{v_1v_2}$ with $x$ a feasible flow in $\mathcal{N}$.
    To prove that $x$ corresponds to a valid solution of the optimal allocation problem, it is sufficient to prove that any feasible flow $x$ respects the constraints of the optimal allocation problem's binary program representation.

    Recall that a flow $x$ is feasible if, for all $uw \in A$, $l_{uw} \leq x_{uw} \leq u_{uw}$ and, for all $v \in V$, $b(v) = b_x(v)$.    
    Exactly one arc enters each $v_1 \in V_1$, namely $sv_1$, and its capacity is $u_{sv_1} = m$.
    Since $b(v_1) = b_x(v_1)$, this implies that at most $m$ units of flow leaves $v_1$, and since $x$ is an integer flow this implies that for each $v_1 \in V_1$, at most $m$ of the arcs $v_1v_2$, $v_2 \in V_2$, have non-zero flow.
    This proves that a feasible flow allocates any treatment to at most $m$ recipients.

    Exactly one arc leaves each $v_2 \in V_2$, namely $v_2t$, and its lower bound and capacity are $l_{v_1t} = 1$ and $u_{v_2t} = 1$, respectively.
    Since $b(v_2) = b_x(v_2)$ and $x$ is an integer flow, this implies that for each $v_2 \in V_2$, exactly one of the arcs $v_1v_2$, $v_1 \in V_1$, have non-zero flow.
    This proves that a feasible flow allocates exactly one treatment to any recipient.

    To prove the relationship between the cost of a feasible flow and the value of a solution to the optimal allocation problem's binary program representation, note that the cost of any feasible flow $x$ is:

    \begin{align*}
        \sum_{uw \in A}c_{uw}x_{uw} & = \sum_{v_1 \in V_1}\sum_{v_2 \in V_2}c_{v_1v_2}x_{v_1v_2} & \text{Since $c_{uw}$ is 0 for all other arcs}
        \\
        & = \sum_{v_2 \in V_2}\sum_{v_1 \in V_1}c_{v_1v_2}x_{v_1v_2} & \text{Switching order of summations}
        \\
        & = \sum_{v_2 \in V_2}\sum_{v_1 \in V_1}c_{v_1v_2}D_{v_2}^{v_1} & \text{Since $x_{v_1v_2} = D_{v_2}^{v_1}$ for all $v_1 \in V_1$, $v_2 \in V_2$}
        \\
        & = -\sum_{v_2 \in V_2}\sum_{v_1 \in V_1}Y_{v_2}^{v_1}D_{v_2}^{v_1} & \text{Using $c_{v_1v_2} = - Y_{v_2}^{v_1}$}
    \end{align*}
\end{proof}

To prove Theorem~\ref{thm: complexity}, we shall make use of the following algorithm \citep{klein1967primal}:

\begin{algorithm}
    \caption{Cycle Cancelling Algorithm}
    \label{alg: cycle-cancelling}
    \begin{algorithmic}
        \Require A network $\mathcal{N} = (V, A, l, u, b, c)$.
        \Ensure A minimum cost feasible flow $x$ in $\mathcal{N}$.
        
        \Procedure{CycleCanceling}{$\mathcal{N}$}
        \State Find a feasible flow $x$ in $\mathcal{N}$.
        \Repeat
            \State Search for a negative cycle $C$ in $\mathcal{N}(x)$.
            \If{such a cycle $C$ is found}
                \State Augment $x$ by $\delta(X)$ units along $C$.
            \EndIf
        \Until{no negative cycle is found}
        \State \Return $x$.
    \EndProcedure
    \end{algorithmic}
\end{algorithm}

\FloatBarrier

\begin{proof}[Extended Proof of Theorem~\ref{thm: complexity}]
    Let $n = |V|$ denote the number of vertices and $m = |A|$ the number of arcs in the underlying digraph $D$ of $\mathcal{N}$.
    Using that the Bellman-Ford-Moore algorithm allows us to check for the existence of a negative cycle in an arbitrary digraph in time $\mathcal{O}(nm)$ \citep{bellman1958routing, ford1956network, moore1959shortest}, we may use the \textit{cycle cancelling algorithm} \citep{klein1967primal} to find a minimum cost feasible flow in a network (Algorithm~\ref{alg: cycle-cancelling}).\footnote{This is due to the result that $x$ is a minimum cost feasible flow in $\mathcal{N}$ if and only if $\mathcal{N}(x)$ contains no directed cycle of negative cost \citep{bang2008digraphs}.}
    However, in the absence of guidance as to how to find the cycles to augment along, this algorithm is guaranteed to work only when lower bounds, capacities, costs, and balance vectors are integers, and may be exponential in the size of the input (i.e., not \textit{strongly} polynomial).

    The problem with the ``naïve'' cycle cancelling algorithm is in its searching step for a negative cycle.
    Without searching for an appropriate negative cycle, the algorithm is not useful for the type of network we consider.
    To confront this issue, we use the approach pioneered in \citet{goldberg1985new} and generalized in \citet{goldberg1988new} of using \textit{preflows} \citep{karzanov1974determining} for solving maximum flow problems.
    Using generalizations due to \citet{goldberg1987efficient, goldberg1989finding}, this allows us to modify the searching step of the algorithm in a way to obtain a strongly polynomial algorithm by always augmenting along a cycle of minimum mean cost.\footnote{The mean cost of a cycle $C$ is defined as $c(C) / |A(C)|$.}
    Specifically, we can find a minimum mean cost cycle in time $\mathcal{O}(\log n)$, resulting in total time $\mathcal{O}(n m^2 (\log n)^2)$.

    Recall that we require any solution to be an integer flow.\footnote{Otherwise some recipient could be allocated different ``shares'' of different treatments, which is incompatible with the binary program structure of the optimal allocation problem.}
    Here, we may use a convenient \textit{integrality} property of minimum cost flows, namely that given all integer lower bounds, capacities, and balance vectors, there exists an integer minimum cost flow.
    To see why, we may start by assuming, without loss of generality, that our initial feasible flow is an integer flow.\footnote{Either by using the initial allocation as the flow or by exploiting the layered structure of $\mathcal{N}$ to easily find a feasible integer flow in time $\mathcal{O}(m)$.}
    At each iteration, we augment our flow by $\delta(C)$, by which we mean the minimum residual capacity of any arc on $C$ in $\mathcal{N}$, and thus our new flow after any iteration is $x^{t+1} = x^t \oplus \delta(C)$.
    Since the residual capacity on any arc is always an integer, $x^{t+1}$ is an integer flow so long as $x^t$ is, and thus by induction we have an integer minimum cost flow.

    Using the structure of the network, we can obtain the complexity in terms of the inputs to the optimal allocation problem.
    Noting that $n = 2 + n_1 + n_2$ and $m = n_1 + n_2 + n_1n_2$, we have:
    \begin{align*}
        \mathcal{O}\left[nm^2(\log n)^2\right] & = \mathcal{O}\left[(2 + n_1 + n_2) (n_1 + n_2 + n_1n_2)^2 \log^2 (2 + n_1 + n_2)\right] &
        \\
        & = \mathcal{O}\left[(n_1 + n_2) (n_1n_2)^2\log^2 (n_1 + n_2)\right] &
        \\
        & = \mathcal{O}\left[(n_1^3n_2^2 + n_1^2n_2^3) \log^2 (n_1 + n_2)\right] &
    \end{align*}
    
\end{proof}


\FloatBarrier
\newpage
\subsection{Transcription Details}
\label{sec: transcription-details}


We split the handwritten text recognition task into three steps, those being (1) page classification, (2) image segmentation, and (3) text recognition (this is similar to the abstract approach described in \citet{dahl2023bdad}).
For page detection, we use an unsupervised approach based on clustering (specifically, we use the DBSCAN algorithm of \citet{ester1996dbscan} on features extracted using a pre-trained neural network).
For image segmentation, we use a strategy based on point set registration, the task of aligning points between an image and a template \citep{basl1992registration}.
We use the semantic segmentation model of \citet{dahl2023tableparser} to extract vertical and horizontal lines of the nurse records.
We align the set of intersection points of each page with a pre-specified template using the efficient probabilistic point-set registration (FilterReg) by \citet{gao2019filtereg}, thus obtaining a transformation matrix for each image which we use, in combination with a pre-specified overlay, to crop each field of interest into a separate image.

Having thus obtained segmented images of each field of interest, we use neural networks to transcribe their contents.
Here, we use a combination of convolutional neural networks (CNN), those being based on the EfficientNetV2 architecture \citep{tan2021efficientnetv2} and based on the approach in \citet{dahl2022dare}, and vision transformer (ViT) architecture by \citet{dosovitskiy2020vit}.
We use different models depending on the field we transcribe, as the content of the fields vary, with some consisting of numbers, others characters, and yet others circling a number rather than writing anything.
Appendix Table~\ref{tab: field-info} provides information on all the different groups of fields we consider.
Note that some of these groups consist of many fields with the same type of information (and as such the same ``alphabet'', consisting of the set of characters/digits that can occur in the field), while others refer to just a single field.
All fields are present on each nurse record.

\begin{table}
    \centering
    \caption{Grouping of Fields from the Nurse Records.}
    \label{tab: field-info}
    \begin{tabular}{l rrr}
         \toprule
         Group                  & \#Fields& Maximum Length & Alphabet                    \\
         \midrule
         breastfeed-7-do        & 1       & 1          & $\{1, 2, 3\}^\ast$              \\
         dura-any-breastfeed    & 1       & 2          & $\{0, 1, \dots, 9\}$            \\
         date                   & 7       & 4          & $\{0, 1, \dots, 9\}$            \\
         length                 & 2       & 3          & $\{0, 1, \dots, 9\}$            \\
         birth prior to due date          & 1       & 1          & $\{1, 2\}^\ast$       \\
         no. of weeks prior to due date    & 1       & 2          & $\{0, 1, \dots, 9\}$ \\
         tab-b                  & 112     & 2          & $\{0, 1, \dots, 9\}$            \\
         weight                 & 8       & 5          & $\{0, 1, \dots, 9\}$            \\
         nurse-name (first)     & 3       & $k^{\ast\ast}$ & $\{a, b, \dots, \text{å}\}$ \\
         nurse-name (last)      & 3       & $k^{\ast\ast}$ & $\{a, b, \dots, \text{å}\}$ \\
         \bottomrule
    \end{tabular}
    \begin{minipage}{1\linewidth}
        \linespread{1.0}\selectfont
        \vspace{1ex}
        \scriptsize{
            \textit{Notes:}
            The table shows fields of the nurse records grouped together in such a way that fields which are similar with respect to sequence length and alphabet of their contents are put into one group.
            The first column refers to the name given to the specific group of fields.
            The second column shows the number of fields of the given group.
            The third column shows the maximum length of the content of any field of the given group.
            The fourth column shows the alphabet of the fields of the given group.
            $^\ast$The alphabet of these fields are specifically a \textit{circle} being put around one of the digits shown, the digits being pre-printed on the records.
            $^{\ast\ast}$While there is no clear limit to the length of a name, the longest name in our training dataset contains 14 characters.
            \textit{Source}: Table due to \citet{baker2023universal}.
        }
    \end{minipage}
\end{table}

For each group of Appendix Table~\ref{tab: field-info}, we train two neural networks, one based on a CNN structure and one based on a ViT structure.
Further, we create two ``meta''-groups, consisting of different groups for which a single alphabet easily covers the different types of content.
First, we create a ``Circle''-group consisting of breastfeeding at seven days old (breastfeed-7-do) and born prior to due date (birth prior to due date), both of which consist of circling a number.
Second, we create an ``Integer''-group consisting of duration of breastfeeding (dura-any-breastfeed), length (length), number of weeks born prior to due date (weeks prior to due date), Table B visit information (tab-b), and weight (weight), as all groups consist of sequences of one to five integers.
For the ``Cirlce''-group, we train both a CNN and a ViT, and for the ``Integer''-group we train only a ViT.
In total, this results in training 23 neural networks, of which six models were chosen.
For each group of Appendix Table~\ref{tab: field-info}, we evaluate the appropriate models and select the one that performs best on a held-out test set not used for training.
Appendix Table~\ref{tab: nn-params-differences} shows information on the selected transcription models; for full details see \citet{baker2023universal} and/or the official GitHub (\url{https://github.com/TorbenSDJohansen/cihvr-transcription}, to be made public). 

\begin{table}
    \centering
    \caption{Model Differences -- Selected Models.}
    \label{tab: nn-params-differences}
    \resizebox{1.0\linewidth}{!}{
        \begin{tabular}{l p{0.4\linewidth} rrr}
            \toprule
            \textbf{Model} & \textbf{Fields (Appendix Table \ref{tab: field-info})} & \textbf{Resolution} & \textbf{Seq. len.} & \textbf{Batch size} \\
            \midrule
            & \multicolumn{4}{l}{\textit{Panel A: ViT-based models}} \\
            \midrule
            Circle       &                                   breastfeed-7-do, birth prior to due date &    100x350 &            4 &        256 \\
            Integer seq. &  dura-any-breastfeed, length, weeks prior to due date, tab-b, weight &     90x230 &            7 &       1024 \\
            Last name    & nurse-name &     91x530 &           20 &        512 \\
            \midrule
            & \multicolumn{4}{l}{\textit{Panel B: CNN-based models}} \\
            \midrule
            Date       &        date &     67x181 &            3 &       1024 \\
            First name &  nurse-name &     91x530 &           18 &        256 \\
            Weight     &      weight &     80x258 &            5 &       1024 \\
            \bottomrule
        \end{tabular}
    }
    \begin{minipage}{1\linewidth}
        \linespread{1.0}\selectfont
        \vspace{1ex}
        \scriptsize{
            \textit{Notes:}
            The table shows differences of hyperparameters of the final six neural networks selected.
            The differences are all related to the groups of fields used for training, as they differ in resolution (including aspect ratio) and sequence length, which, together with varying availability of 1 vs. 2 GPUs, led to different batch sizes.
            Note how the sequence lengths of this table differ from those of Appendix Table \ref{tab: field-info}, often being longer.
            This is due to the addition of certain special characters such as beginning of sequence and end of sequence tokens being pre- and appended to the sequences, respectively, for some models.
            \textit{Source}: Table due to \citet{baker2023universal}.
        }
    \end{minipage}
\end{table}


\paragraph{Transcription Performance}
Appendix Table~\ref{tab: transc-accs} shows the transcription performance for the various fields of the nurse records.
These groups offer a slightly more detailed perspective than those of Appendix Table~\ref{tab: field-info}. 
Notably, the Table B group has been further subdivided into 16 distinct groups, aligning with the 16 different columns within the group, where each column corresponds to seven unique fields (visits at months one, two, three, four, six, nine, and 12).

It is worth noting that we intentionally avoid character accuracy evaluation, as our objective is to ensure the correctness of entire sequences. 
For instance, in the context of character accuracy, transcribing a weight of 6,000 grams and transcribing 5,000 grams would yield a relatively high similarity score, despite their significant practical difference. 
Our decision to emphasize ``sequence accuracy'' helps mitigate such discrepancies.

Appendix Table~\ref{tab: transc-accs} also shows performance metrics that account for scenarios where empty fields are excluded. 
This analysis demonstrates that our high transcription accuracy is not solely based on accurately predicting a substantial number of empty fields, which might be relatively straightforward to achieve. 
Notably, the last column of the table indicates the proportion of non-empty fields. 
It is important to note that this proportion does not represent the share of non-empty fields across the entire collection of records within a particular group. 
Instead, it specifically pertains to the evaluation set used for testing purposes.

\begin{table}
    \centering
    \setstretch{1.0}
    \caption{Automated Transcription Performance.}
    \label{tab: transc-accs}
    \rowcolors{2}{gray!25}{white}
    \resizebox{1.0\linewidth}{!}{
        \begin{tabular}{l ccc}
            \toprule
            \rowcolor{white}
            & \multicolumn{2}{c}{\underline{Transcription accuracy (\%)}} & Share non-empty (\%) \\
            \rowcolor{white}
            &  All &  Non-empty &  \\
            \midrule
            Babbles &            92.9 &                  97.5 &             61.5 \\
            Breastfeeding 7 days &            99.4 &                  99.7 &             91.4 \\
            Care and cleanliness &            96.7 &                  99.0 &             60.8 \\
            Date &            97.2 &                  96.8 &             77.5 \\
            Duration breastfeeding &            97.6 &                  97.9 &             97.3 \\
            Home economic status &            96.0 &                  97.6 &             46.5 \\
            Home harmony &            97.7 &                  98.9 &             26.5 \\
            In air &            91.7 &                  97.5 &             61.2 \\
            Length &            99.0 &                  99.0 &             97.2 \\
            Lifts head &            92.6 &                  94.9 &             63.4 \\
            Mother daily hours working at home &            88.8 &                  99.5 &             60.7 \\
            Mother daily hours working outside home &            93.0 &                  99.5 &             70.8 \\
            Mother mental capacity &            96.8 &                  97.4 &             40.9 \\
            Mother physical capacity &            97.1 &                  98.1 &             41.6 \\
            Number of daily meals &            74.5 &                  74.0 &             72.9 \\
            Nursery or kindergarten &            93.9 &                  93.3 &             69.9 \\
            Nutrition &            96.2 &                  97.0 &             80.2 \\
            Own bed &            96.0 &                  99.7 &             59.7 \\
            Birth prior to due date &            99.0 &                  99.5 &             88.4 \\
            Weeks prior to due date  &            97.3 &                  80.1$^{\ast}$ &             12.5 \\
            Sits &            91.9 &                  91.8 &             70.3 \\
            Smiles &            93.0 &                  98.0 &             61.5 \\
            Weight &            97.8 &                  97.7 &             97.3 \\
            Nurse first name &            95.2 &                  93.6 &             57.0 \\
            Nurse last name &            95.0 &                  93.2 &             57.0 \\
            \bottomrule
        \end{tabular}
    }
    \begin{minipage}{1\linewidth}
        \vspace{1ex}
        \footnotesize{
        \textit{Notes:} 
            The table shows the accuracy (\%) of the ML transcriptions for separate groups of fields, measured on an independent test set not part of the data used to train our neural networks.
            The second column shows the accuracy on the full test sample.
            The third column shows the accuracy when excluding empty fields.
            The fourth column shows the share of observations of the test set that is non-empty for each group.
            $^\ast$The low sequence accuracy for non-empty number of weeks prior to due date is due to inconsistencies regarding manual labelling of ranges such as ``1-2''.
            In those cases, the label might either say 1 or 2, meaning that it is not possible to do better than guessing one of the two numbers for a number of these cases.
            Allowing the number of weeks born prior to the due date to differ by one increases the sequence accuracy to 95.3\% for the non-empty cases and to 99.2\% for the full sample.
            \textit{Source}: Table due to \citet{baker2023universal}.
 	}
    \end{minipage}
\end{table}


\FloatBarrier
\newpage
\subsection{Additional Figures and Tables}
\label{sec: additional-figures-and-tables}

\begin{figure}[h]
    \centering
    \includegraphics[width=0.75\linewidth]{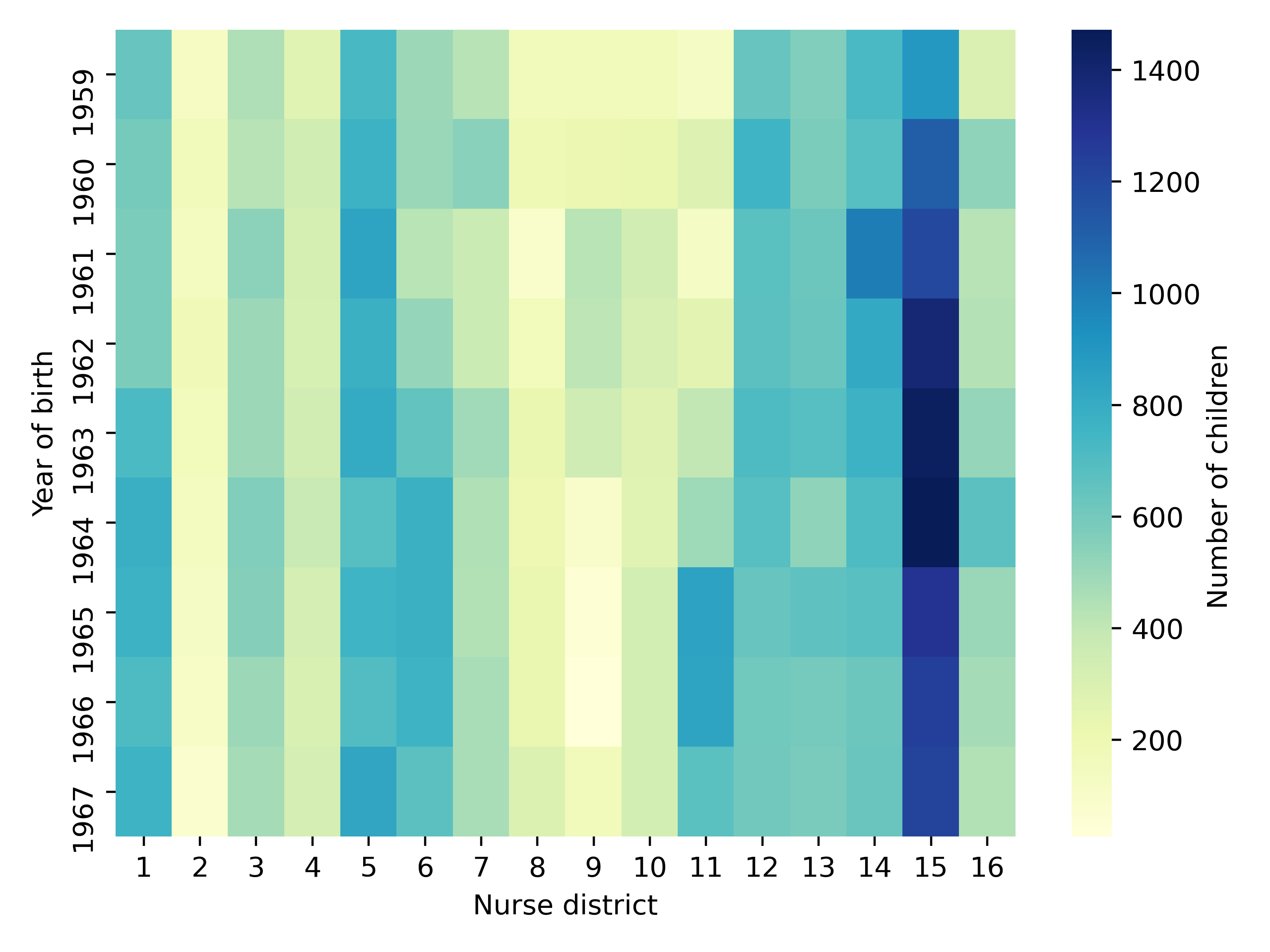}
    \caption{Number of Children by Nurse District and Year of Birth.}
    \label{fig: obs-by-year-by-district}
    \begin{minipage}{1\linewidth}
        \vspace{1ex}
        \footnotesize{
        \textit{Notes:}
        The figure shows the number of children in our primary sample by nurse district and year of birth.
		}
	\end{minipage}
\end{figure}

\begin{figure}
    \centering
    \includegraphics[width=0.75\linewidth]{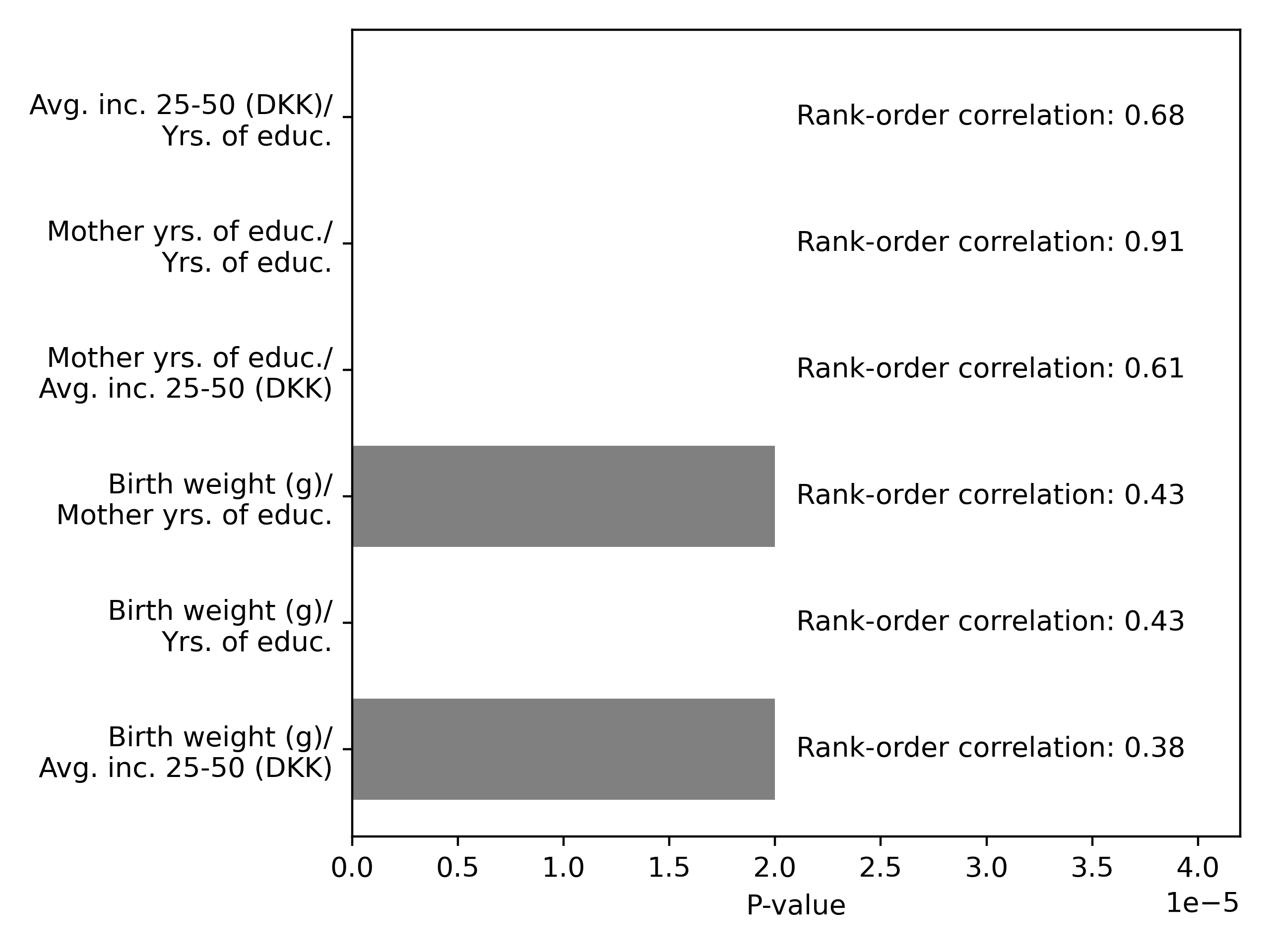}
    \caption{Rank-Order Correlation Between Mean Outcomes of Nurses.}
    \label{fig: correlation-coefficient-nurses}
    \begin{minipage}{1\linewidth}
        \vspace{1ex}
        \footnotesize{
        \textit{Notes:}
        The figure shows p-values and rank-order correlation coefficients from Spearman rank-order correlation tests.
        For each variable, the mean value for the children of each nurse is calculated, and the rank-order correlation between nurses is then calculated (i.e., comparing the ranks of nurses for one variable with the ranks of nurses for another variable).
        Given the number of nurses (\numberUniqueNurses), the p-values are calculated by means of a permutation test which randomly drew 100,000 permutations and then calculated the share of correlation coefficients below that of the non-permuted data, using this for the p-value.
        }
	\end{minipage}
\end{figure}

\begin{figure}
    \centering
    \includegraphics[width=0.75\linewidth]{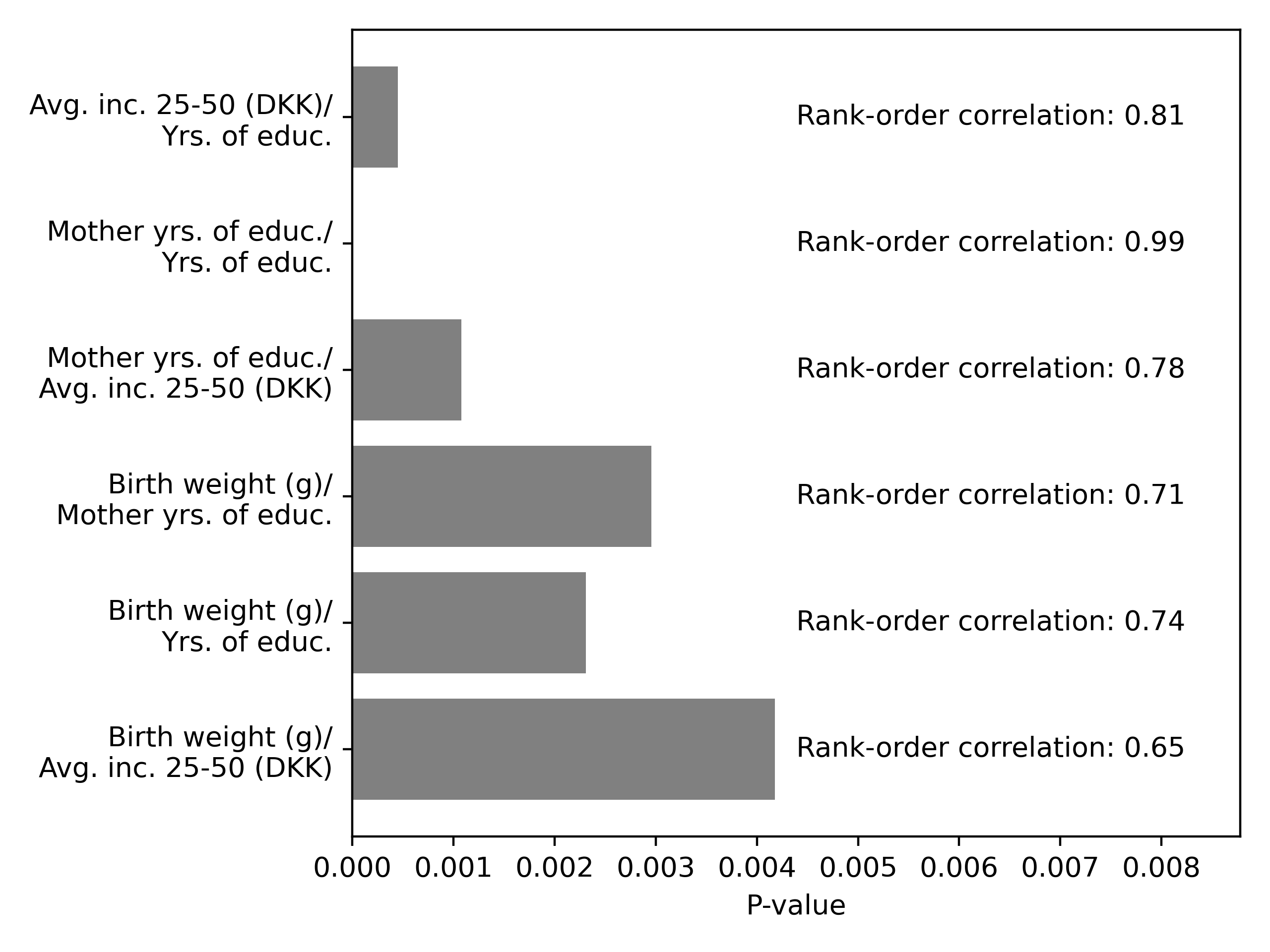}
    \caption{Rank-Order Correlation Between Mean Outcomes of Nurse Districts.}
    \label{fig: correlation-coefficient-districts}
    \begin{minipage}{1\linewidth}
        \vspace{1ex}
        \footnotesize{
        \textit{Notes:}
        The figure shows p-values and rank-order correlation coefficients from Spearman rank-order correlation tests.
        For each variable, the mean value for the children of each nurse district is calculated, and the rank-order correlation between nurses districts is then calculated (i.e., comparing the ranks of nurse districts for one variable with the ranks of nurse districts for another variable).
        Given the number of nurse districts (16), the p-values are calculated by means of a permutation test which randomly drew 100,000 permutations and then calculated the share of correlation coefficients below that of the non-permuted data, using this for the p-value.
        }
	\end{minipage}
\end{figure}


\begin{figure}
    \centering
    \subfloat[1959]{\label{subfig: box-plot-by-district-and-year-edulen-1959}\includegraphics[width=0.33\linewidth]{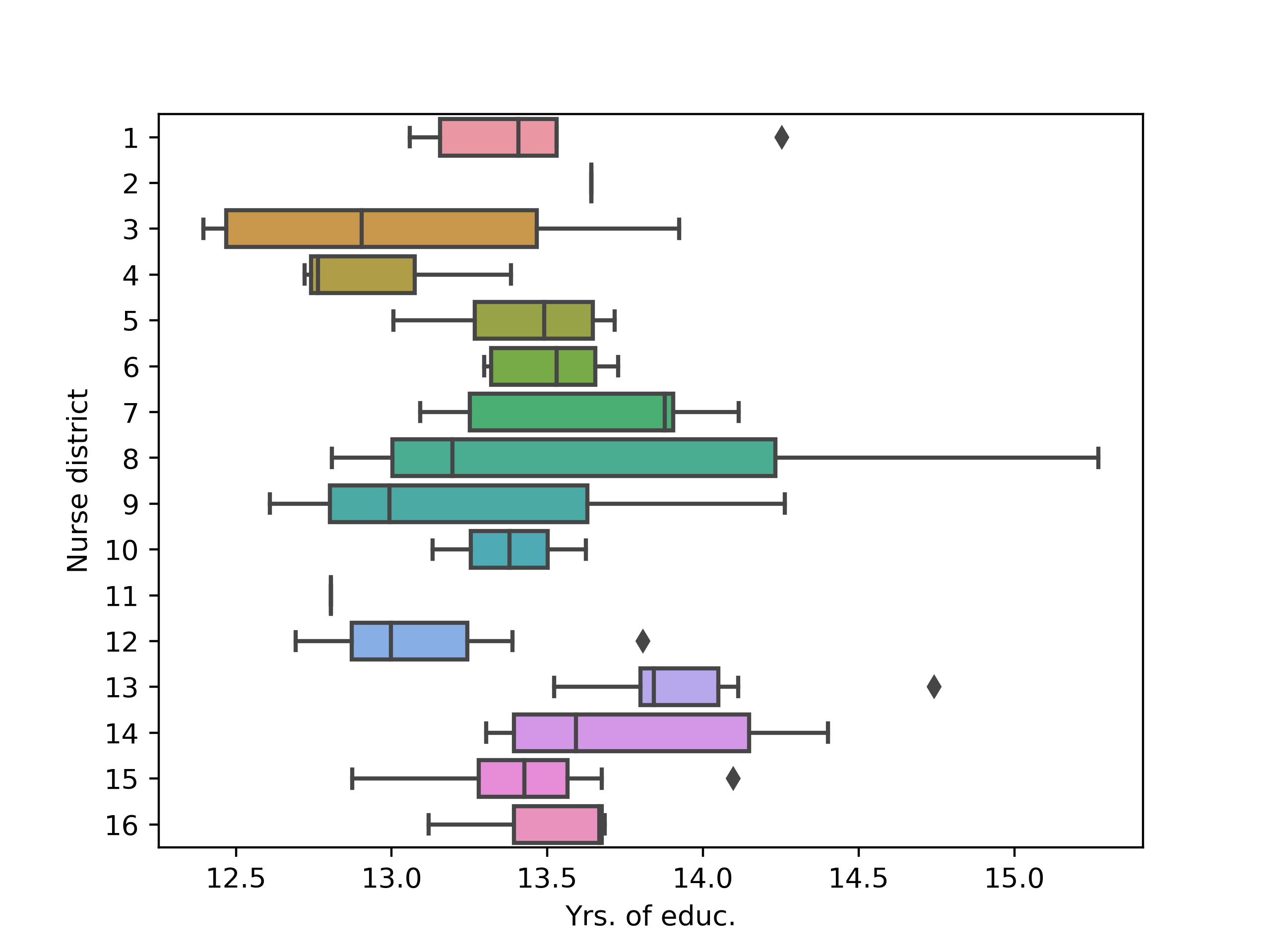}}
    \subfloat[1960]{\label{subfig: box-plot-by-district-and-year-edulen-1960}\includegraphics[width=0.33\linewidth]{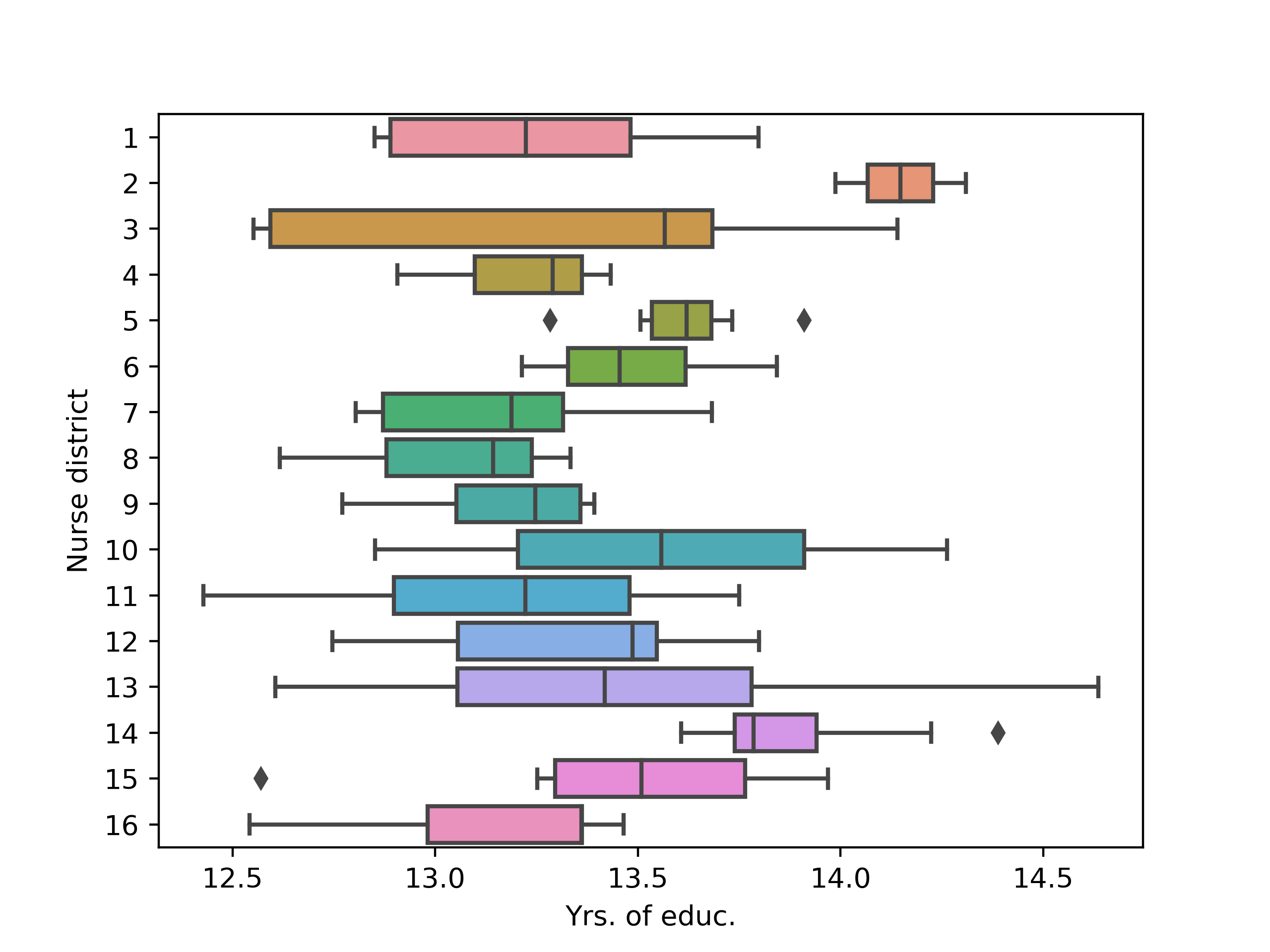}}
    \subfloat[1961]{\label{subfig: box-plot-by-district-and-year-edulen-1961}\includegraphics[width=0.33\linewidth]{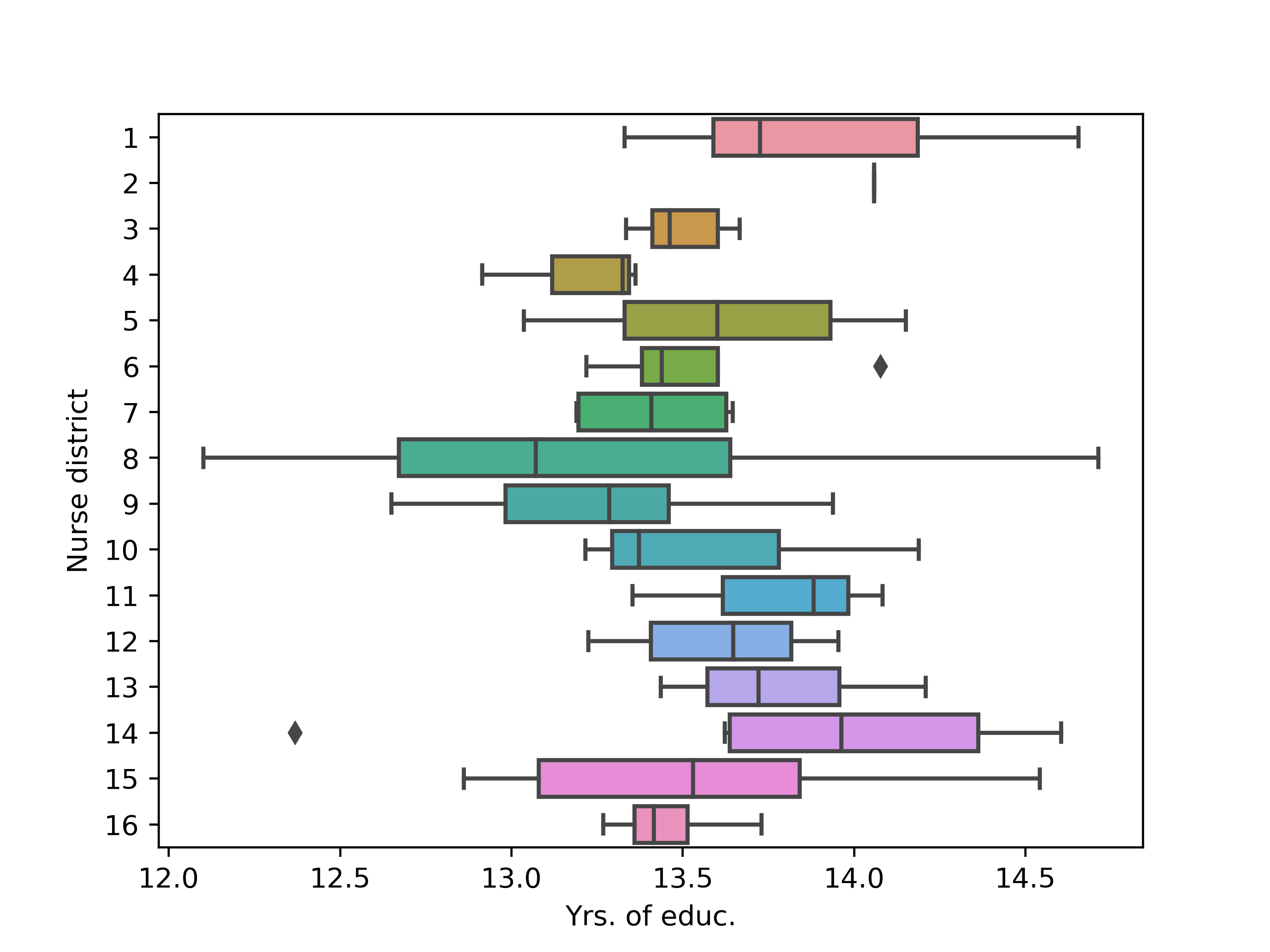}}

    \subfloat[1962]{\label{subfig: box-plot-by-district-and-year-edulen-1962}\includegraphics[width=0.33\linewidth]{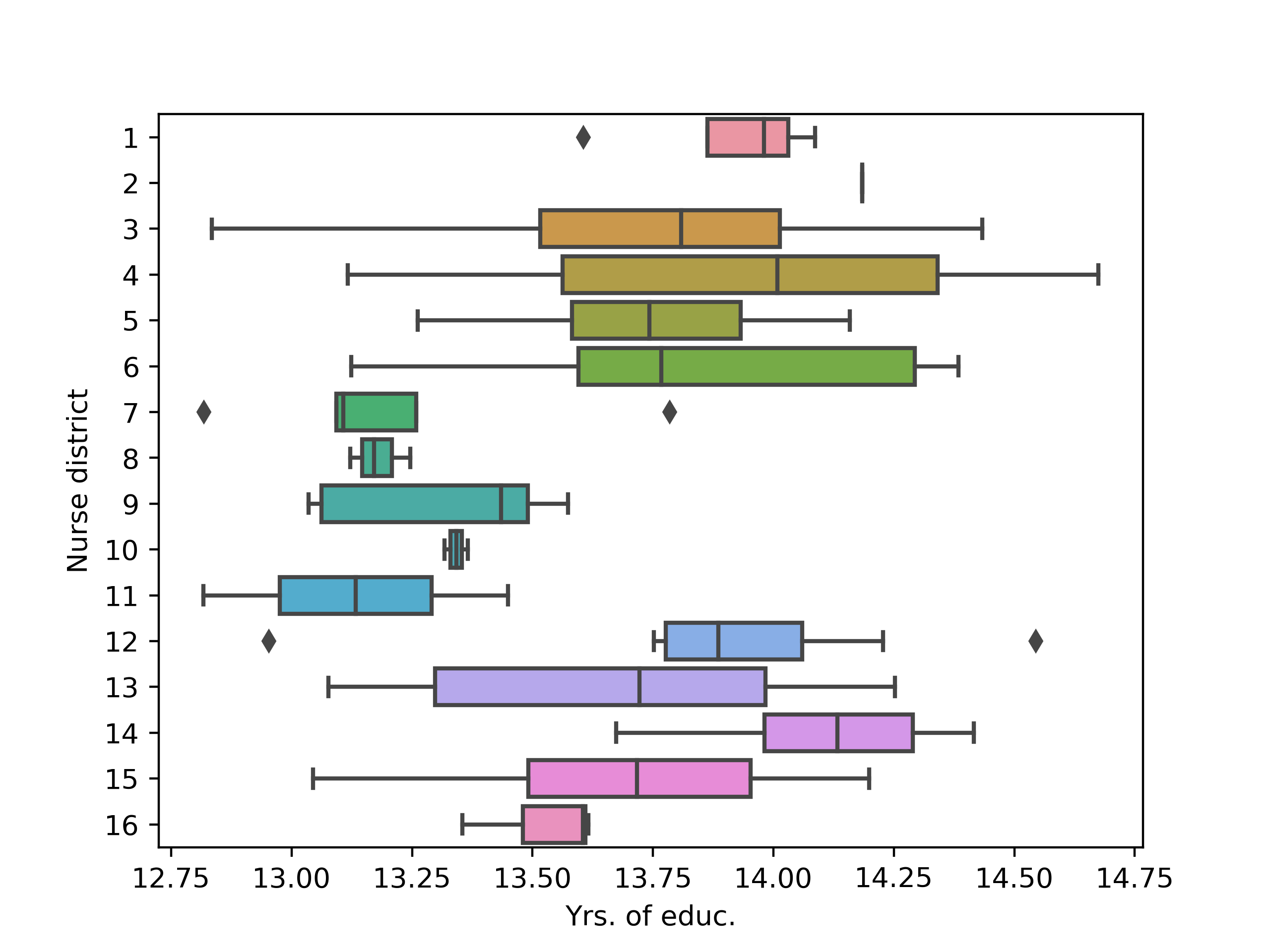}}
    \subfloat[1963]{\label{subfig: box-plot-by-district-and-year-edulen-1963}\includegraphics[width=0.33\linewidth]{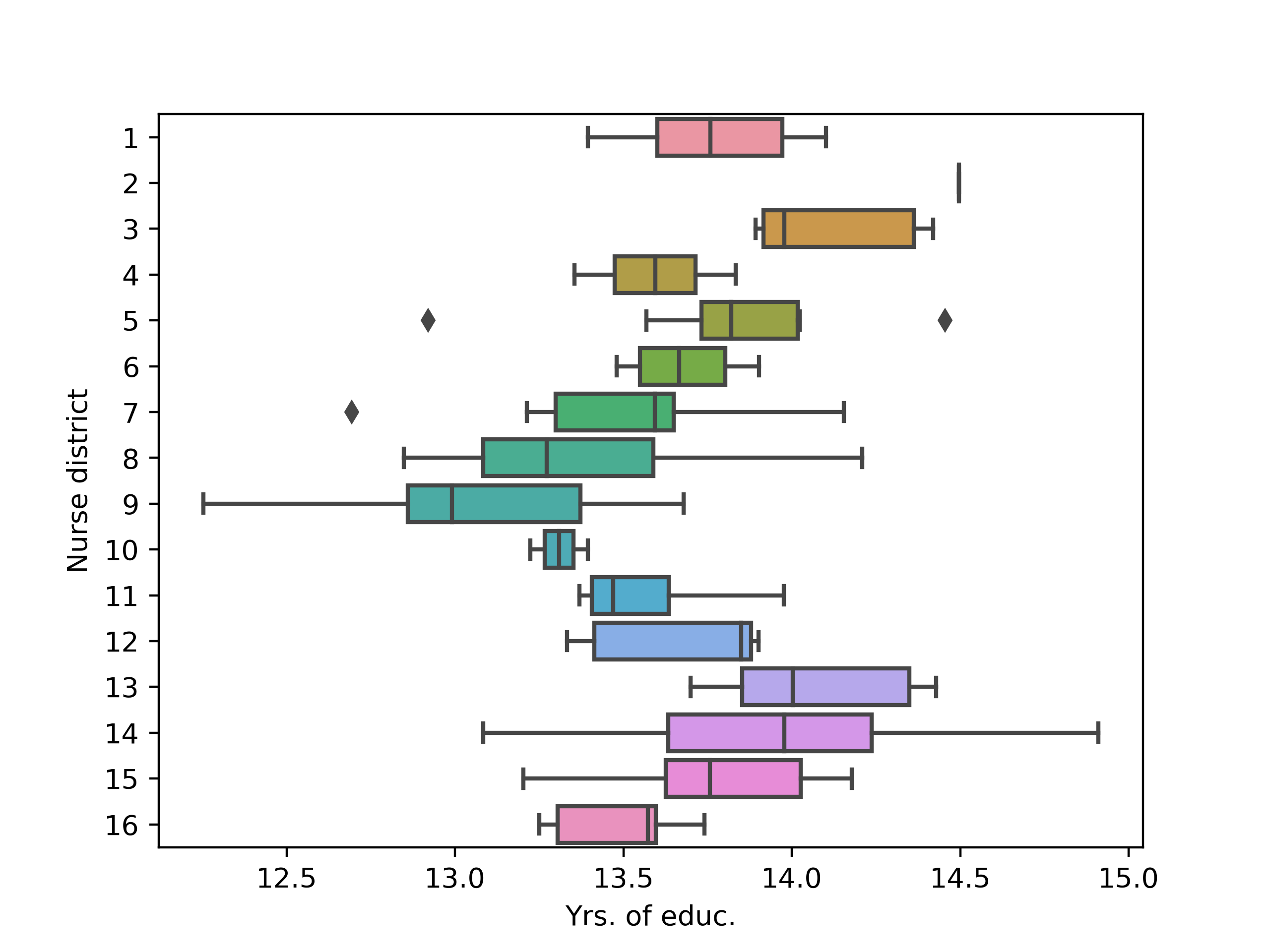}}
    \subfloat[1964]{\label{subfig: box-plot-by-district-and-year-edulen-1964}\includegraphics[width=0.33\linewidth]{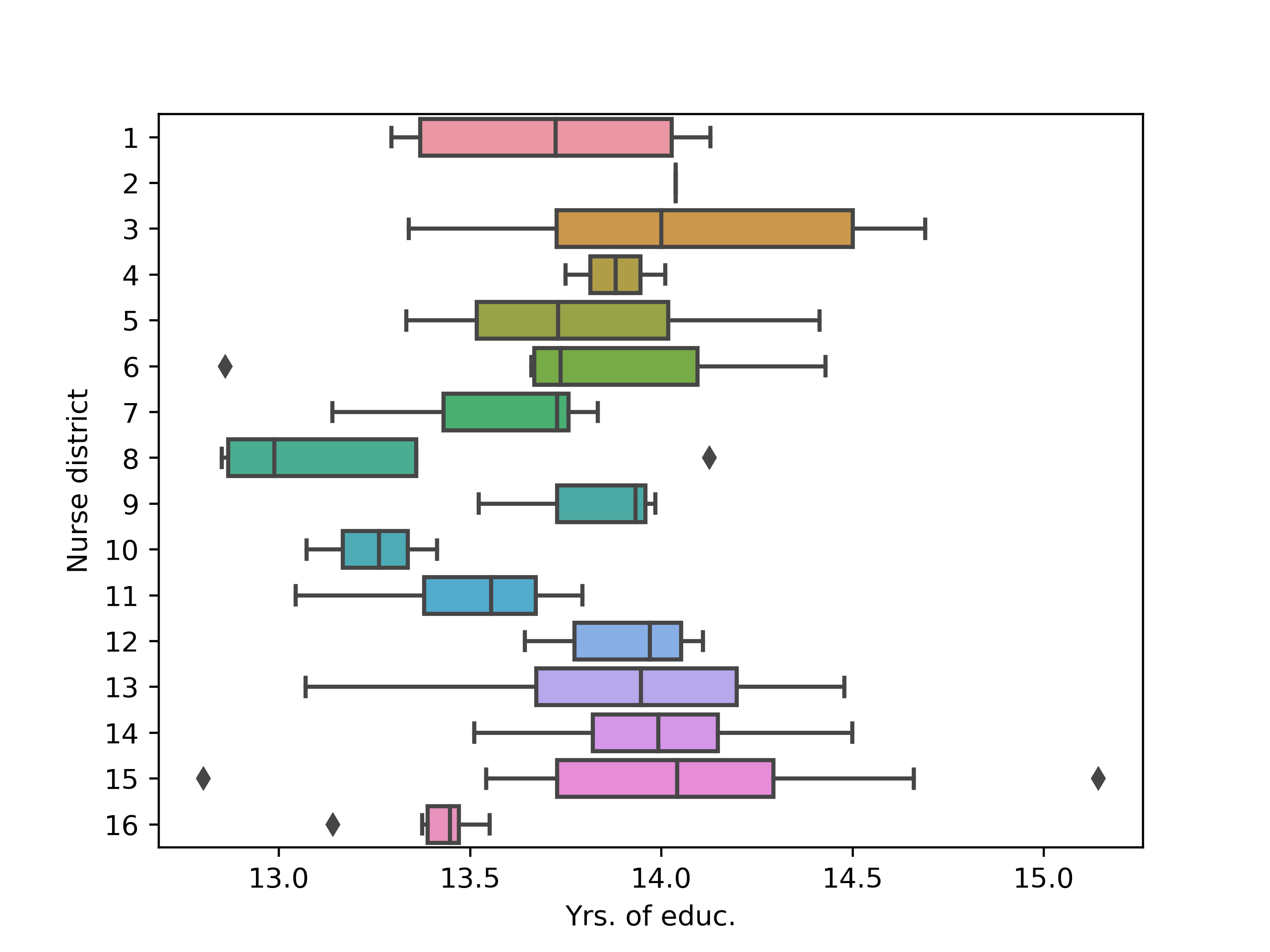}}

    \subfloat[1965]{\label{subfig: box-plot-by-district-and-year-edulen-1965}\includegraphics[width=0.33\linewidth]{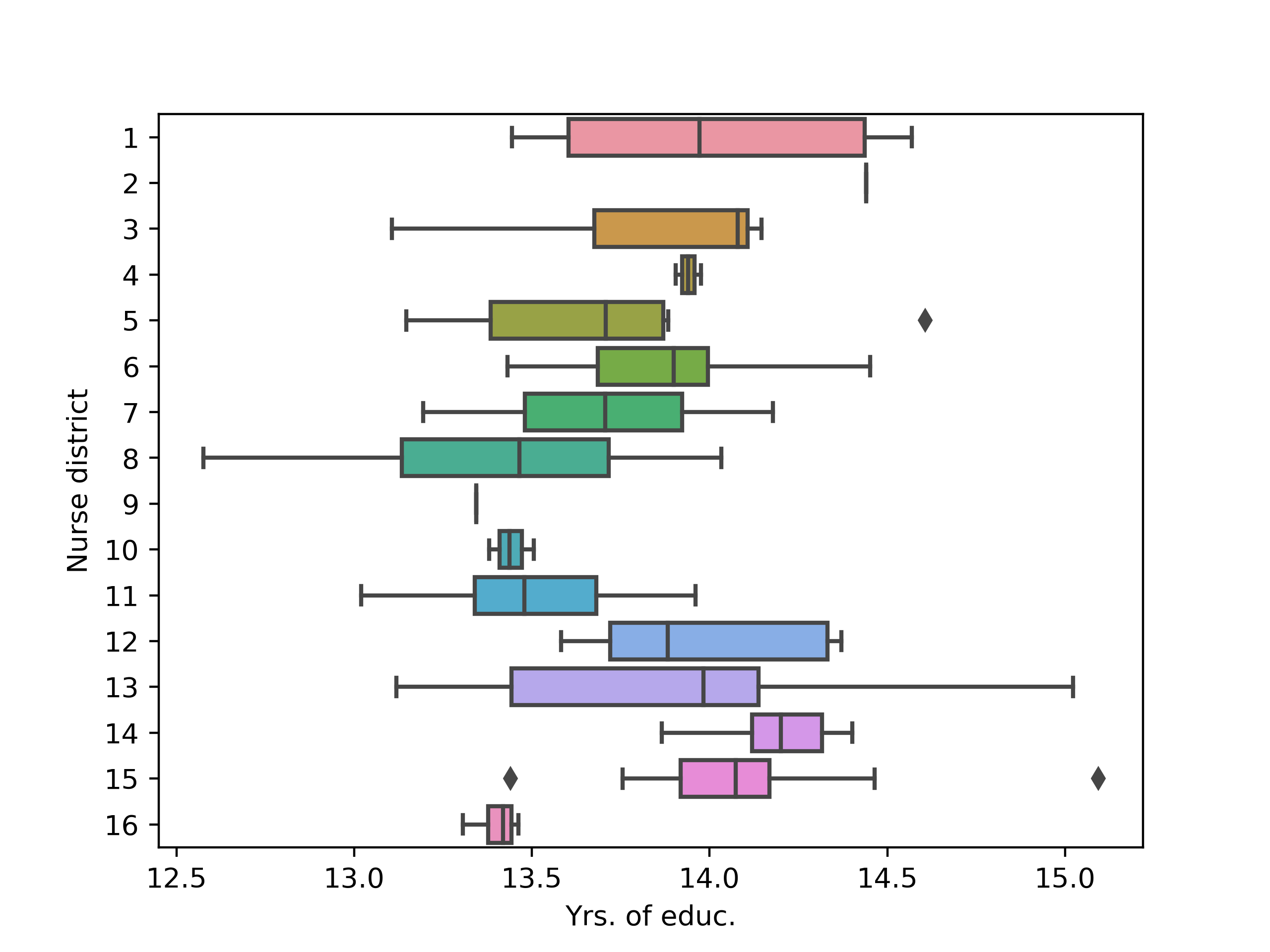}}
    \subfloat[1966]{\label{subfig: box-plot-by-district-and-year-edulen-1966}\includegraphics[width=0.33\linewidth]{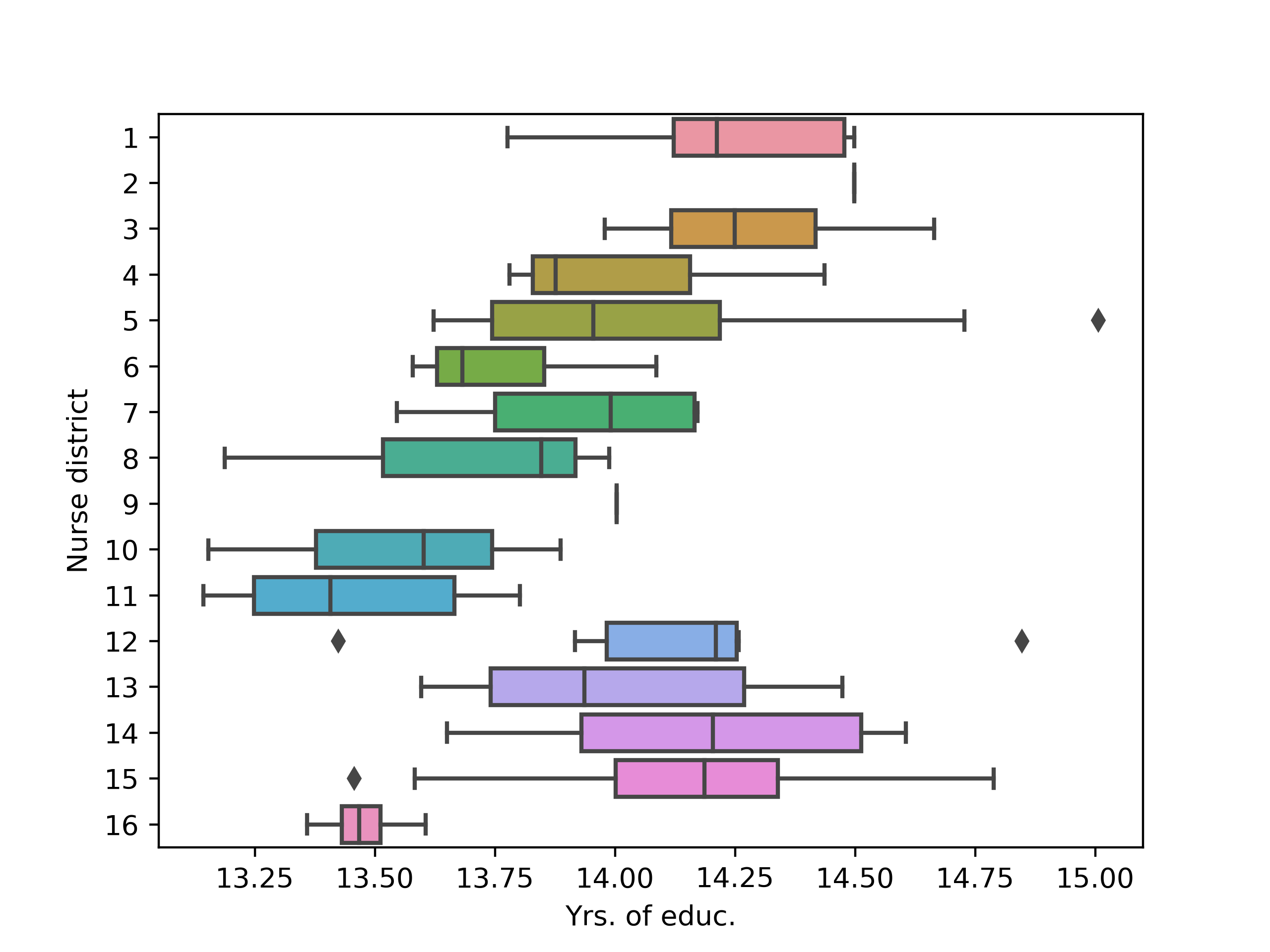}}
    \subfloat[1967]{\label{subfig: box-plot-by-district-and-year-edulen-1967}\includegraphics[width=0.33\linewidth]{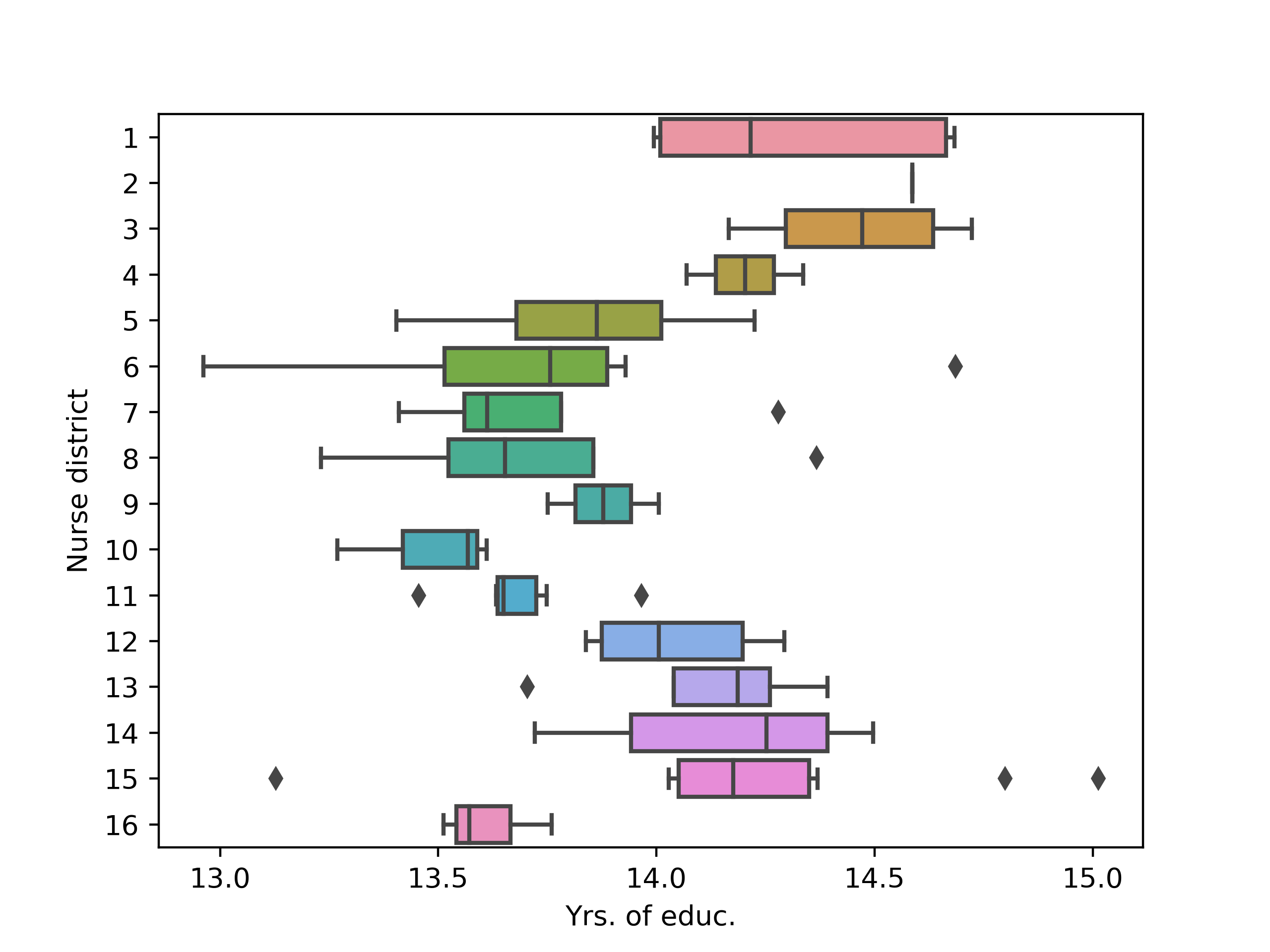}}
    
    \caption{Box Plots of Average Years of Education by Nurses for Each Nurse District and Year.}
    \label{fig: box-plot-by-district-and-year-edulen}
    \begin{minipage}{1\linewidth}
        \vspace{1ex}
        \footnotesize{
        \textit{Notes:}
        The figure shows box plots of average years of education of the children by nurse for each district and year, meaning that each point represents an average of the children allocated a specific nurse born in a specific year.
        The different panels refer to different years (1959-1967) and the different box plots within each panel refer to different nurse districts (1-16).
		}
	\end{minipage}
\end{figure}

\begin{figure}
    \centering
    \subfloat[1959]{\label{subfig: box-plot-by-district-and-year-avg_inc_25_50-1959}\includegraphics[width=0.33\linewidth]{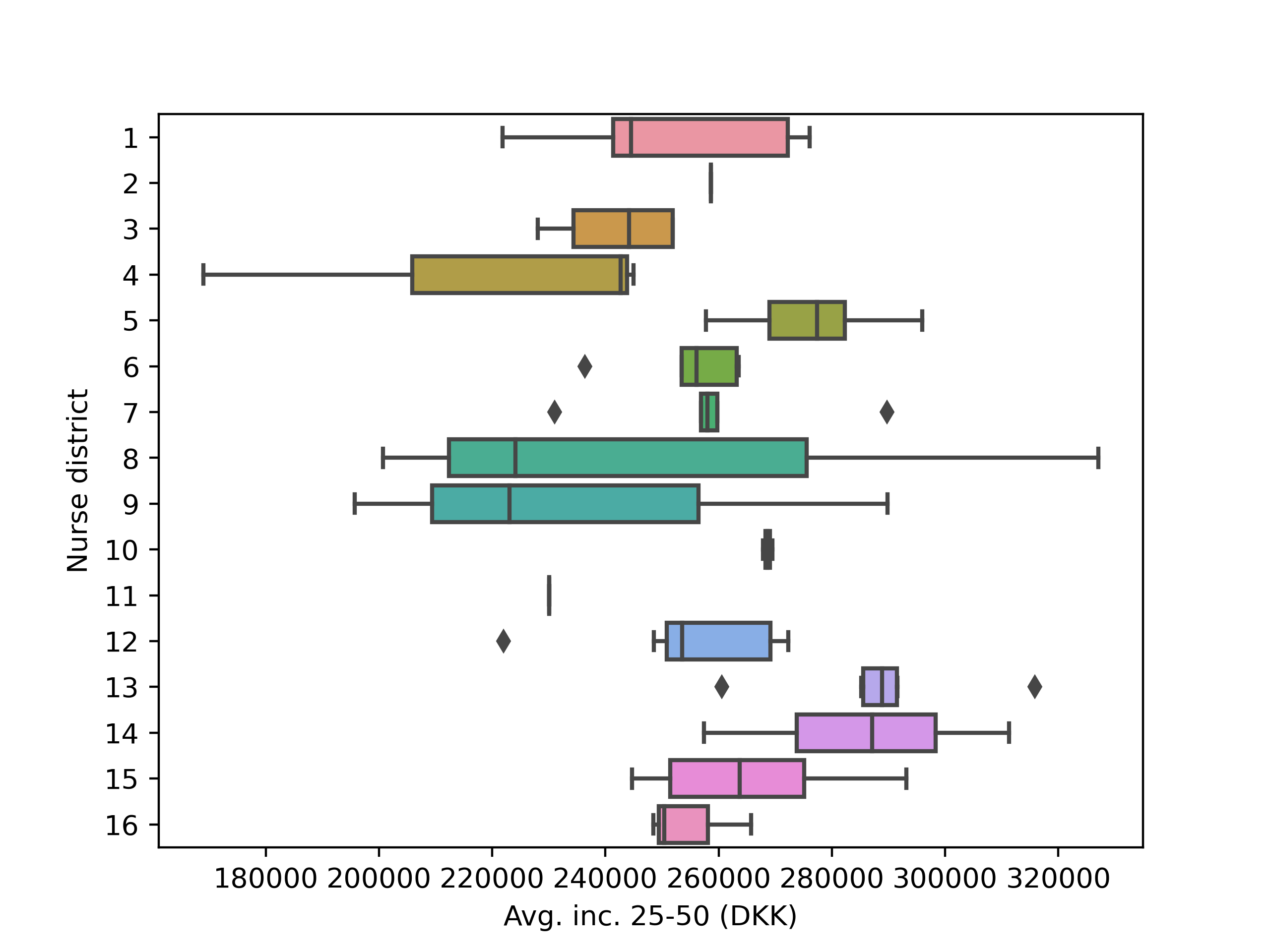}}
    \subfloat[1960]{\label{subfig: box-plot-by-district-and-year-avg_inc_25_50-1960}\includegraphics[width=0.33\linewidth]{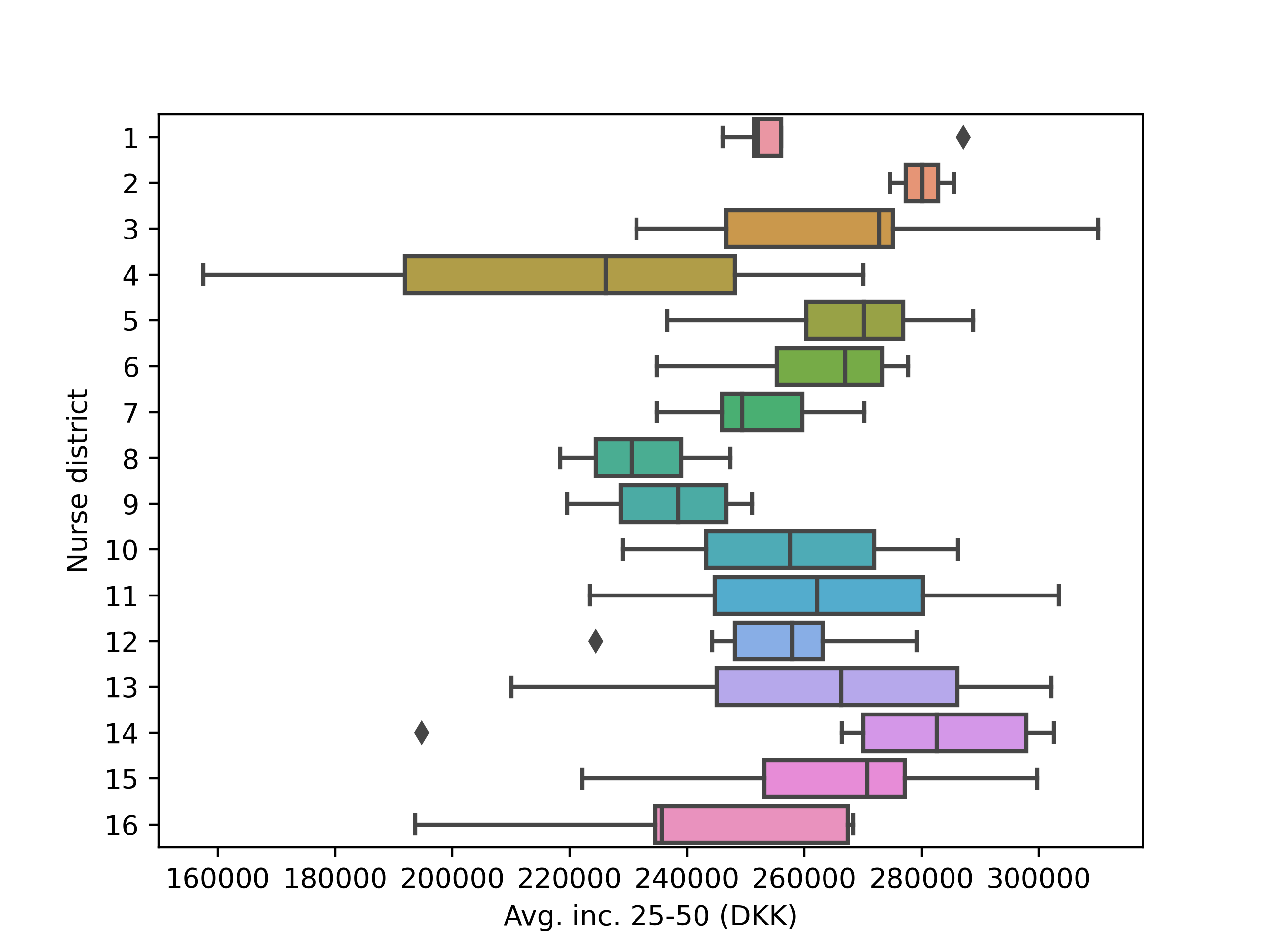}}
    \subfloat[1961]{\label{subfig: box-plot-by-district-and-year-avg_inc_25_50-1961}\includegraphics[width=0.33\linewidth]{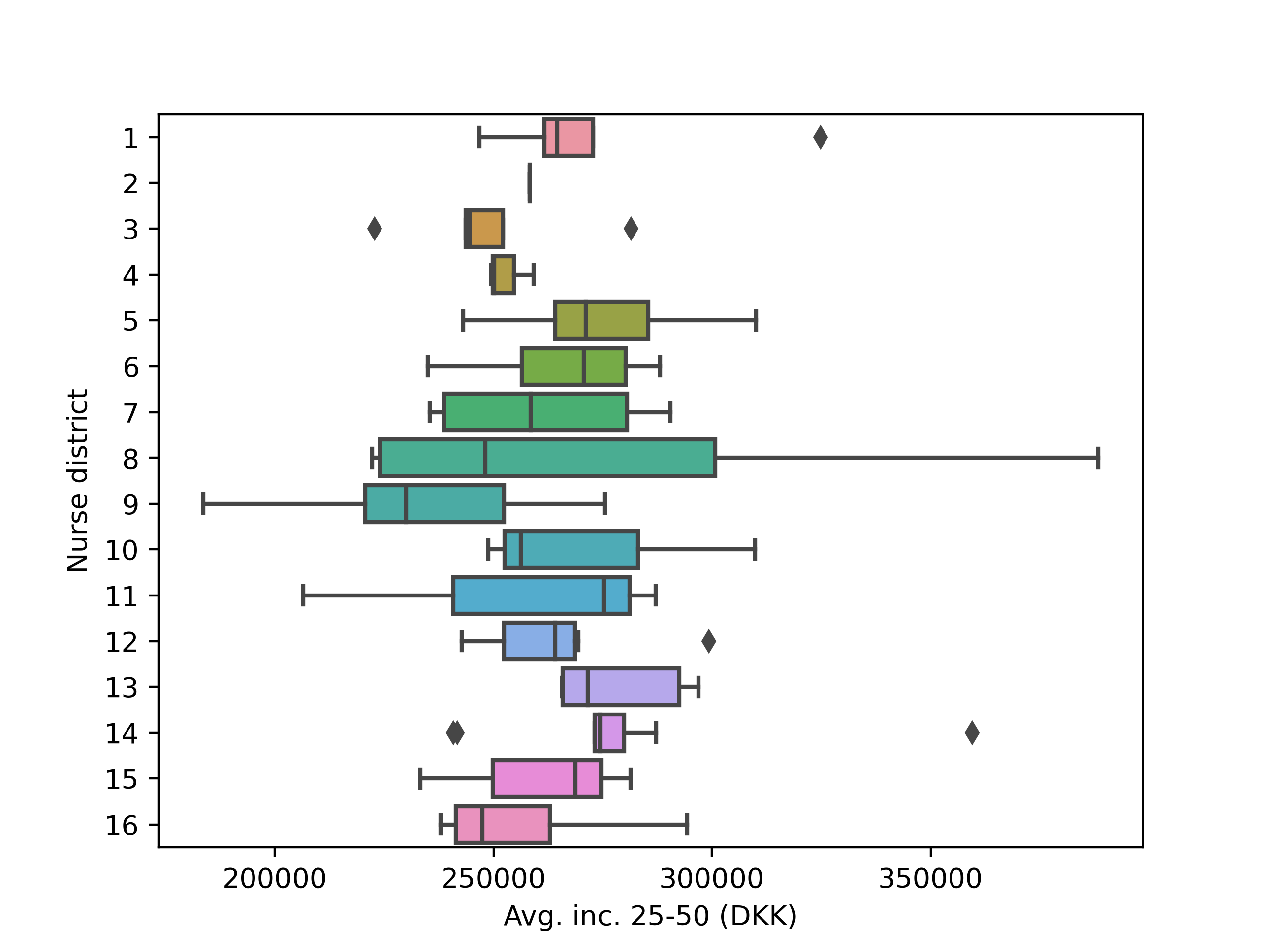}}

    \subfloat[1962]{\label{subfig: box-plot-by-district-and-year-avg_inc_25_50-1962}\includegraphics[width=0.33\linewidth]{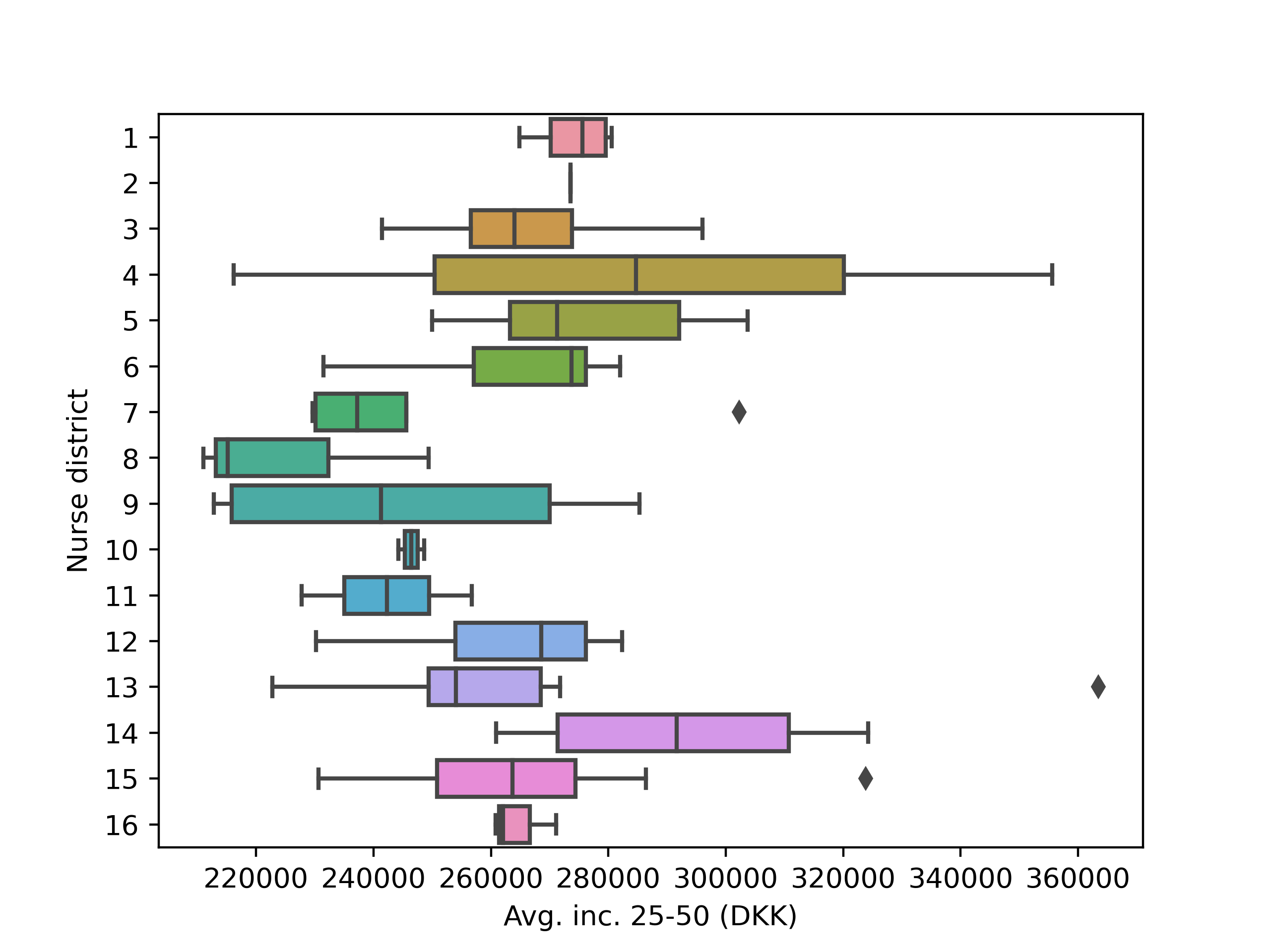}}
    \subfloat[1963]{\label{subfig: box-plot-by-district-and-year-avg_inc_25_50-1963}\includegraphics[width=0.33\linewidth]{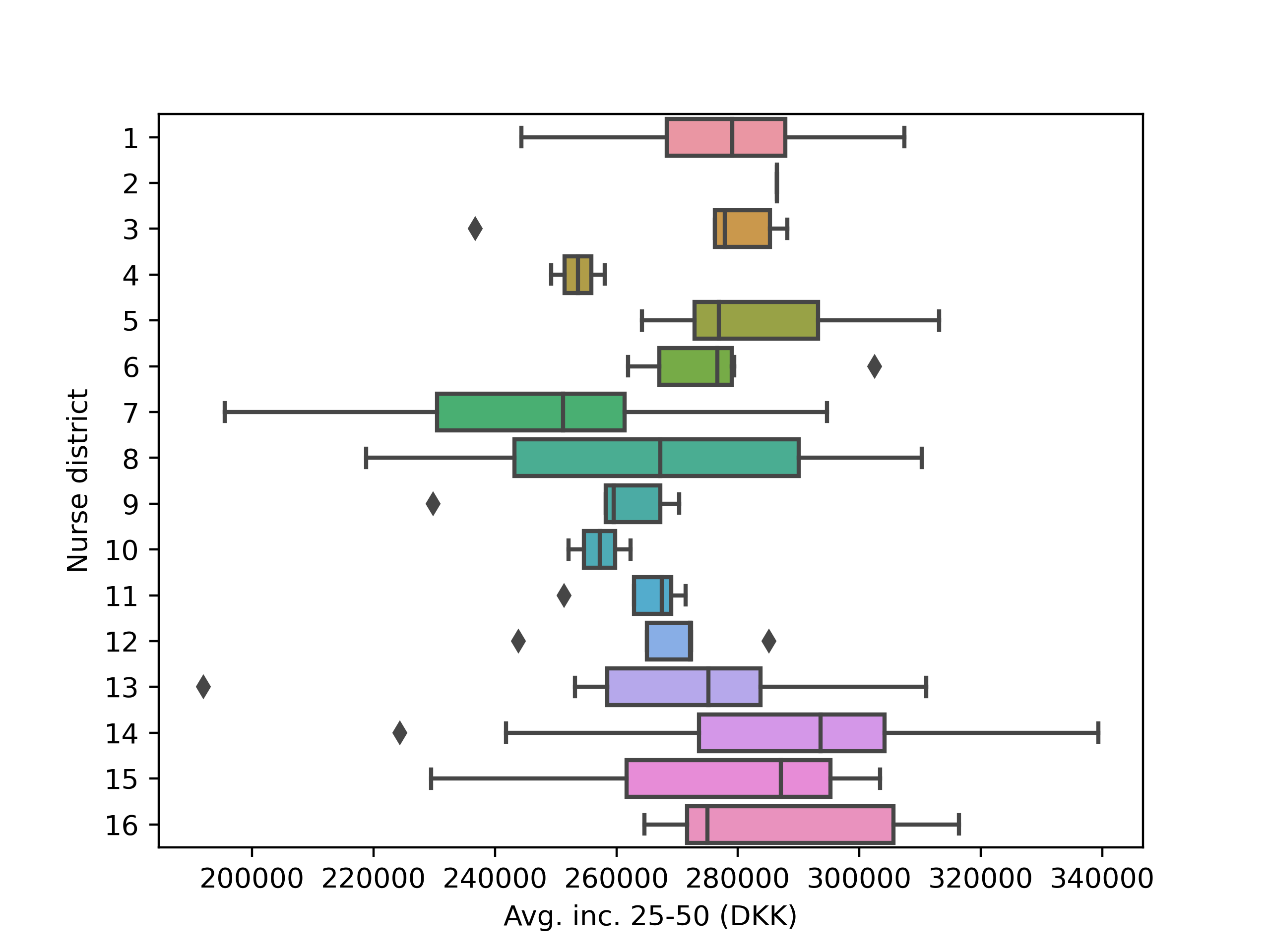}}
    \subfloat[1964]{\label{subfig: box-plot-by-district-and-year-avg_inc_25_50-1964}\includegraphics[width=0.33\linewidth]{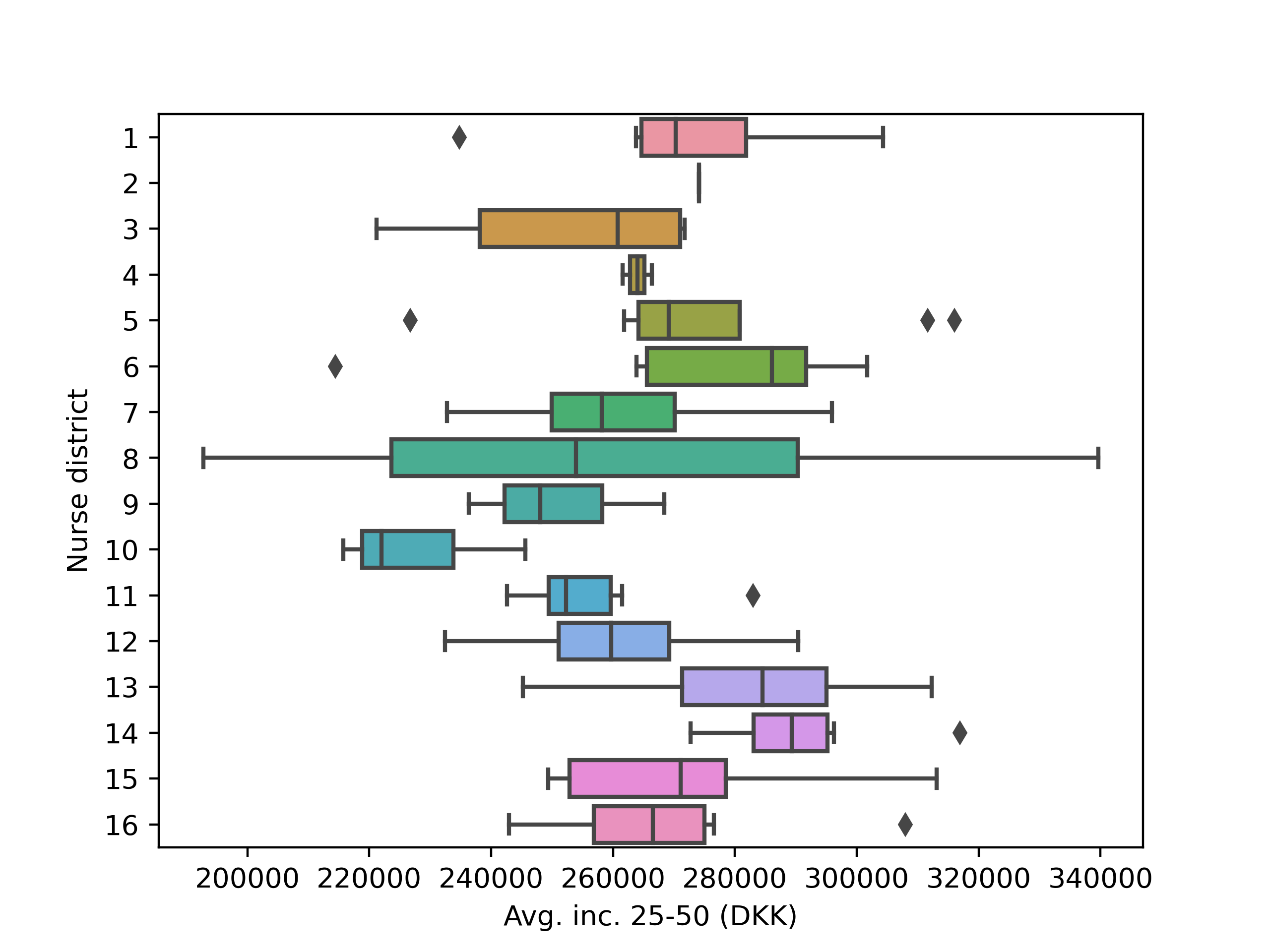}}

    \subfloat[1965]{\label{subfig: box-plot-by-district-and-year-avg_inc_25_50-1965}\includegraphics[width=0.33\linewidth]{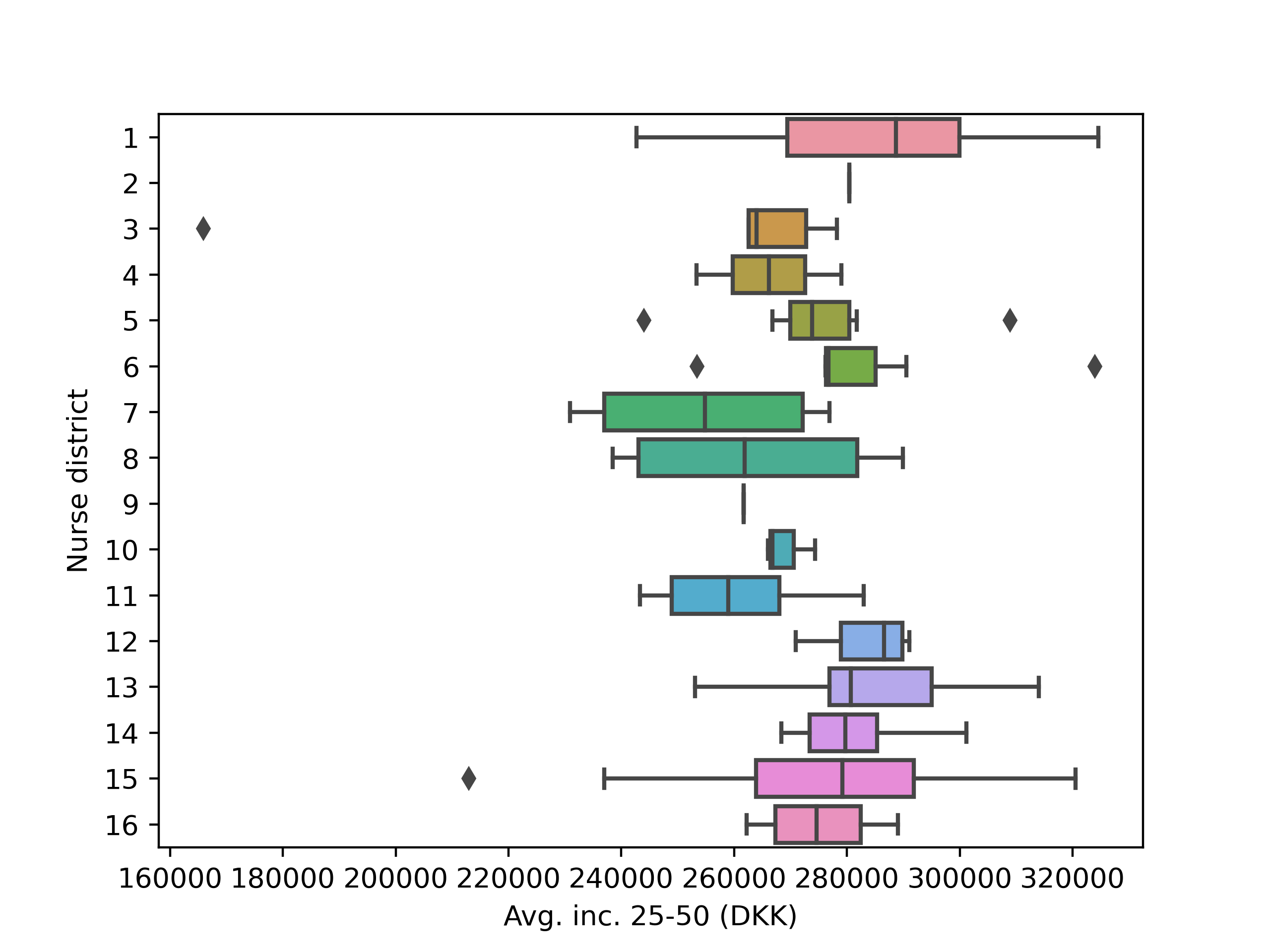}}
    \subfloat[1966]{\label{subfig: box-plot-by-district-and-year-avg_inc_25_50-1966}\includegraphics[width=0.33\linewidth]{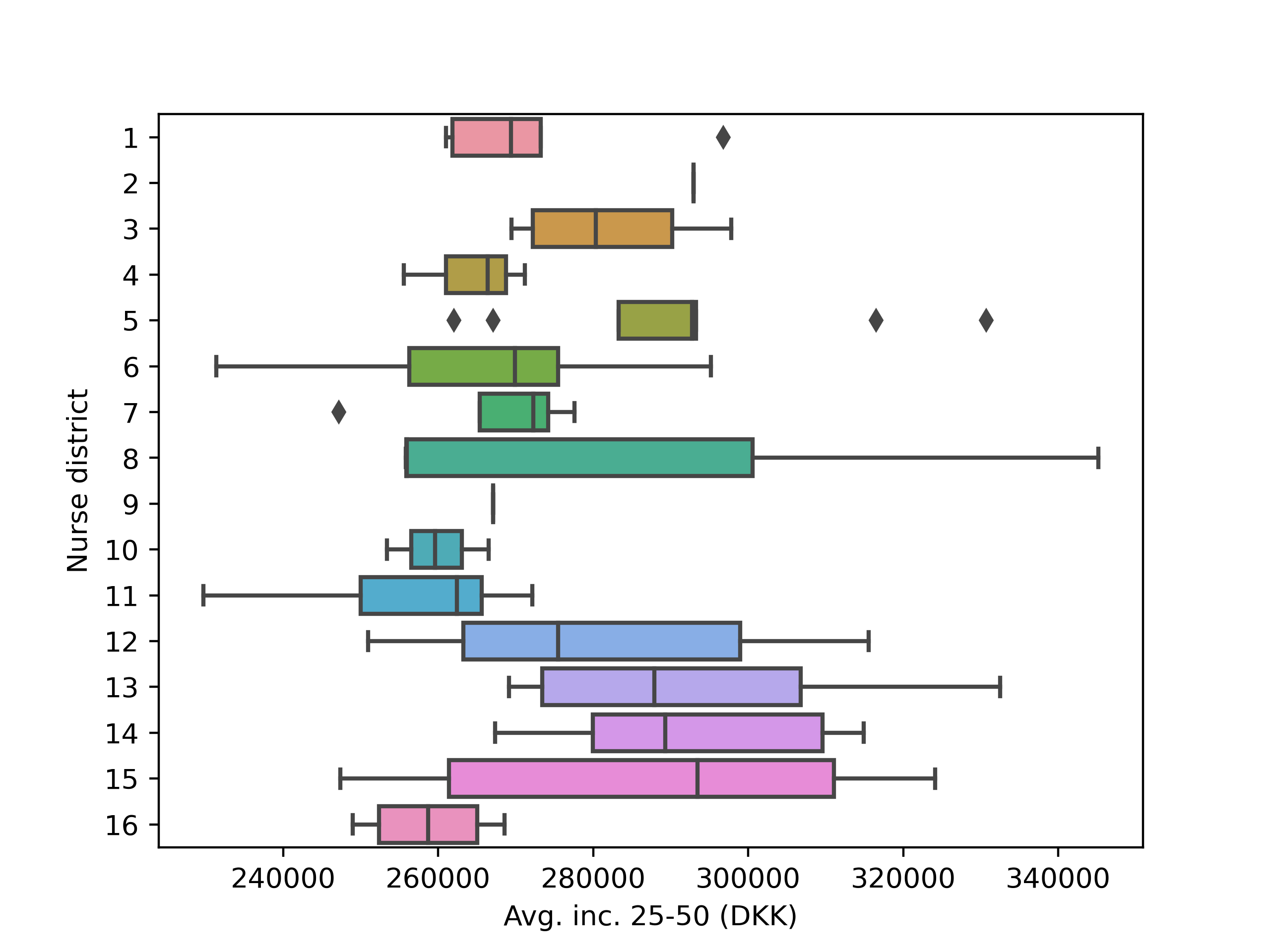}}
    \subfloat[1967]{\label{subfig: box-plot-by-district-and-year-avg_inc_25_50-1967}\includegraphics[width=0.33\linewidth]{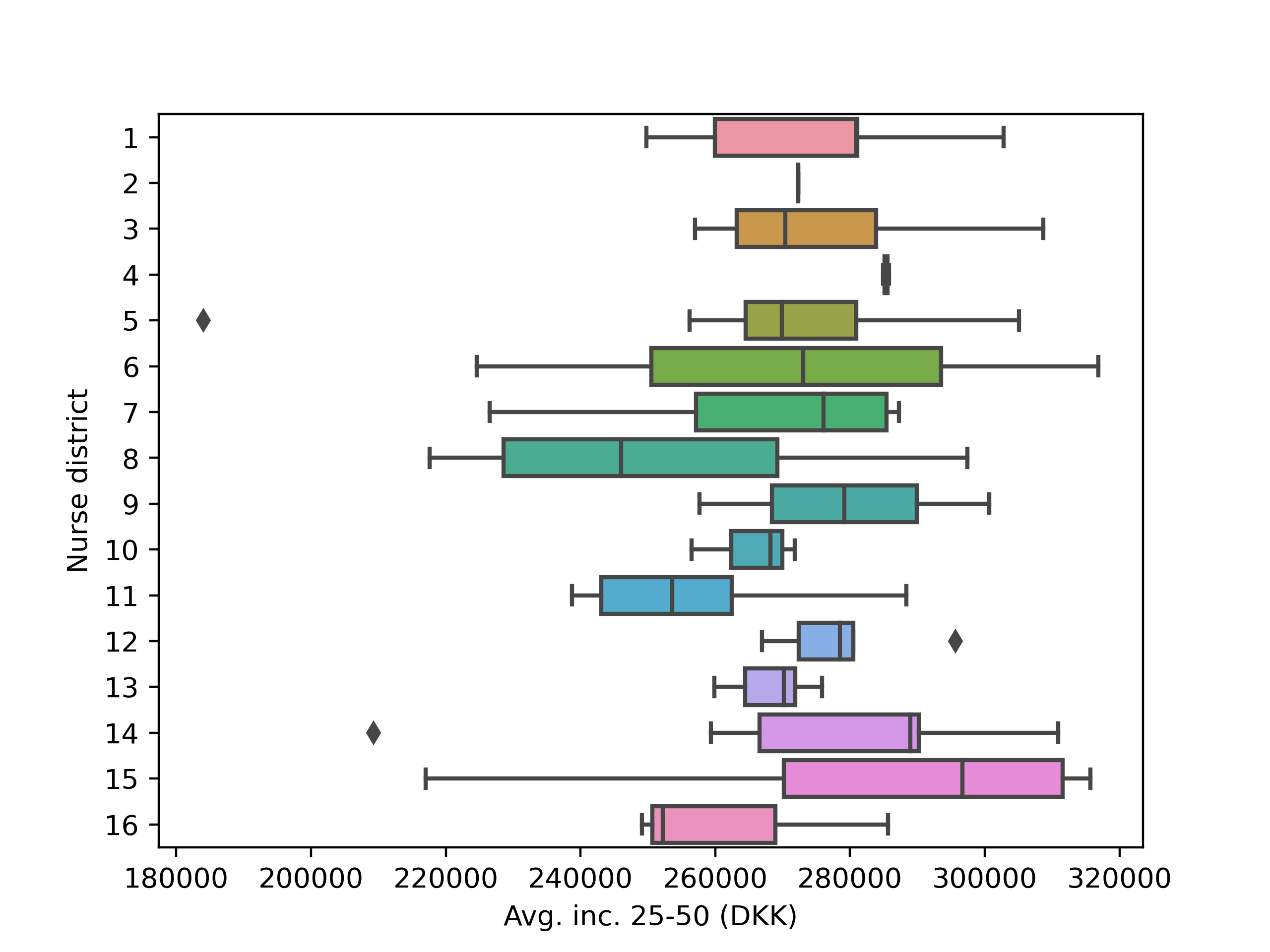}}
    
    \caption{Box Plots of Average Earnings During Ages 25-50 by Nurses for Each Nurse District and Year.}
    \label{fig: box-plot-by-district-and-year-avg_inc_25_50}
    \begin{minipage}{1\linewidth}
        \vspace{1ex}
        \footnotesize{
        \textit{Notes:}
        The figure shows box plots of average earnings during ages 25-50 of the children by nurse for each district and year, meaning that each point represents an average of the children allocated a specific nurse born in a specific year.
        The different panels refer to different years (1959-1967) and the different box plots within each panel refer to different nurse districts (1-16).
		}
	\end{minipage}
\end{figure}


\begin{figure}
    \centering
    \subfloat[1959]{\label{subfig: ci-plot-by-district-and-year-edulen-1959}\includegraphics[width=0.33\linewidth]{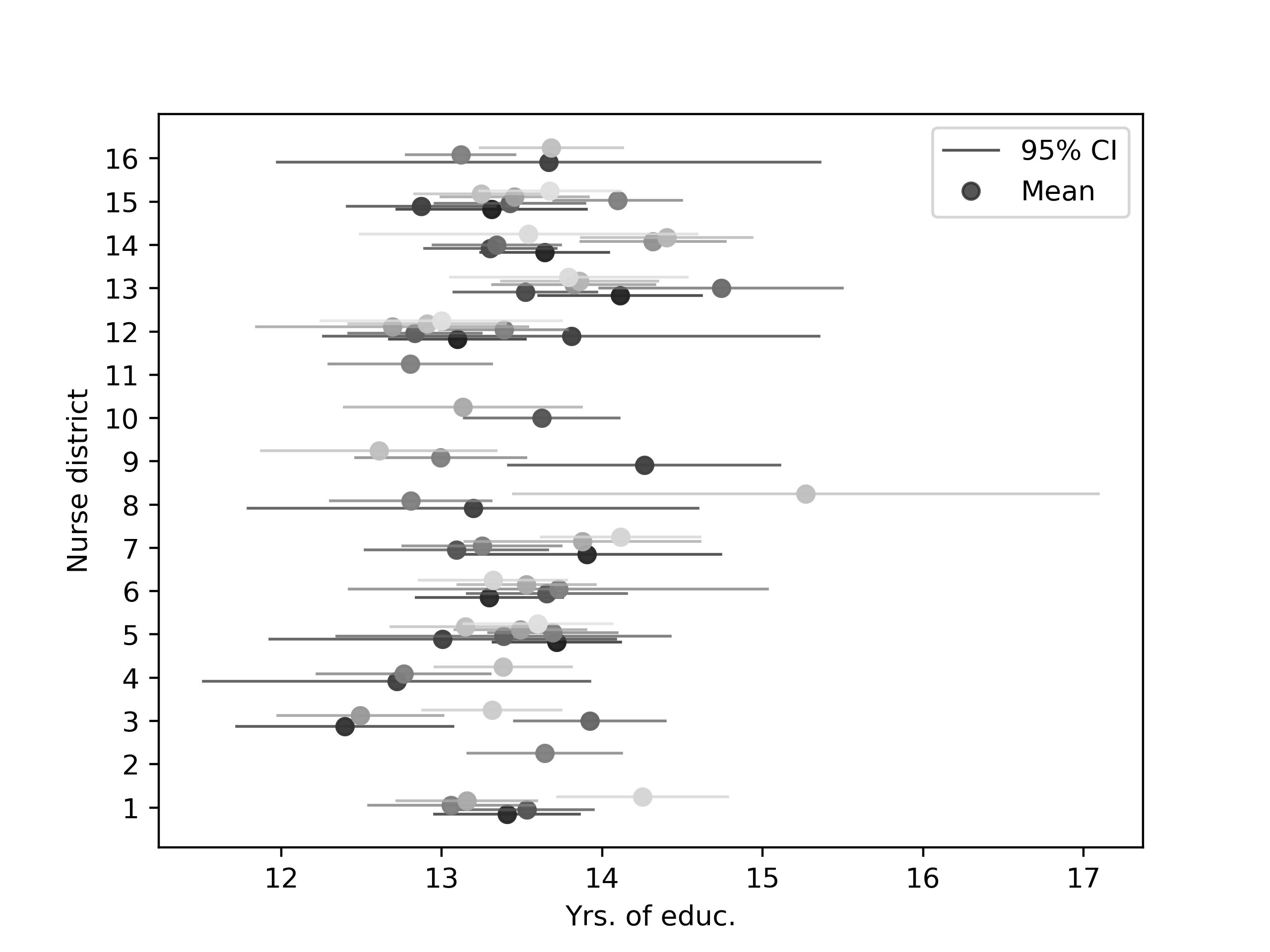}}
    \subfloat[1960]{\label{subfig: ci-plot-by-district-and-year-edulen-1960}\includegraphics[width=0.33\linewidth]{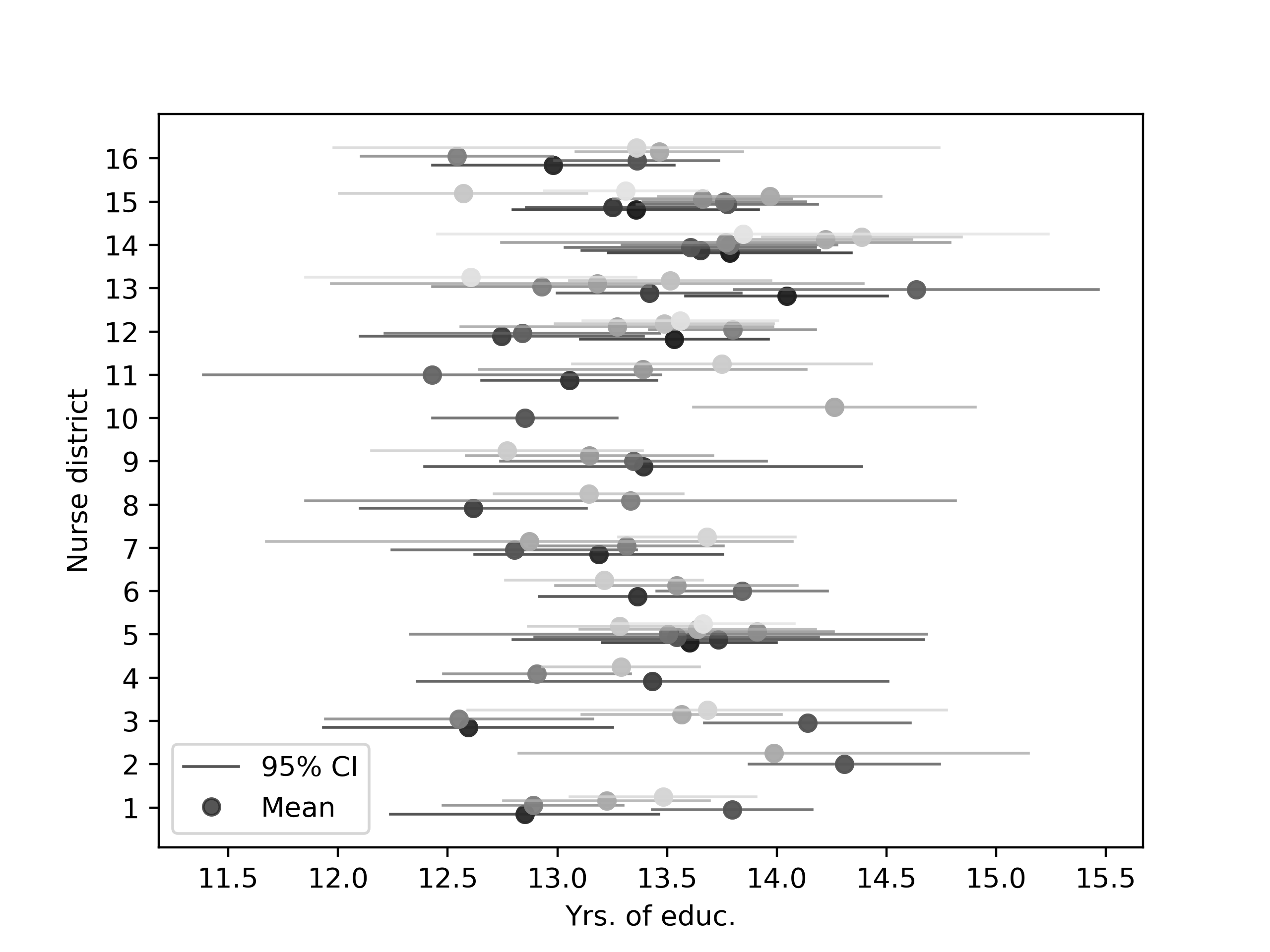}}
    \subfloat[1961]{\label{subfig: ci-plot-by-district-and-year-edulen-1961}\includegraphics[width=0.33\linewidth]{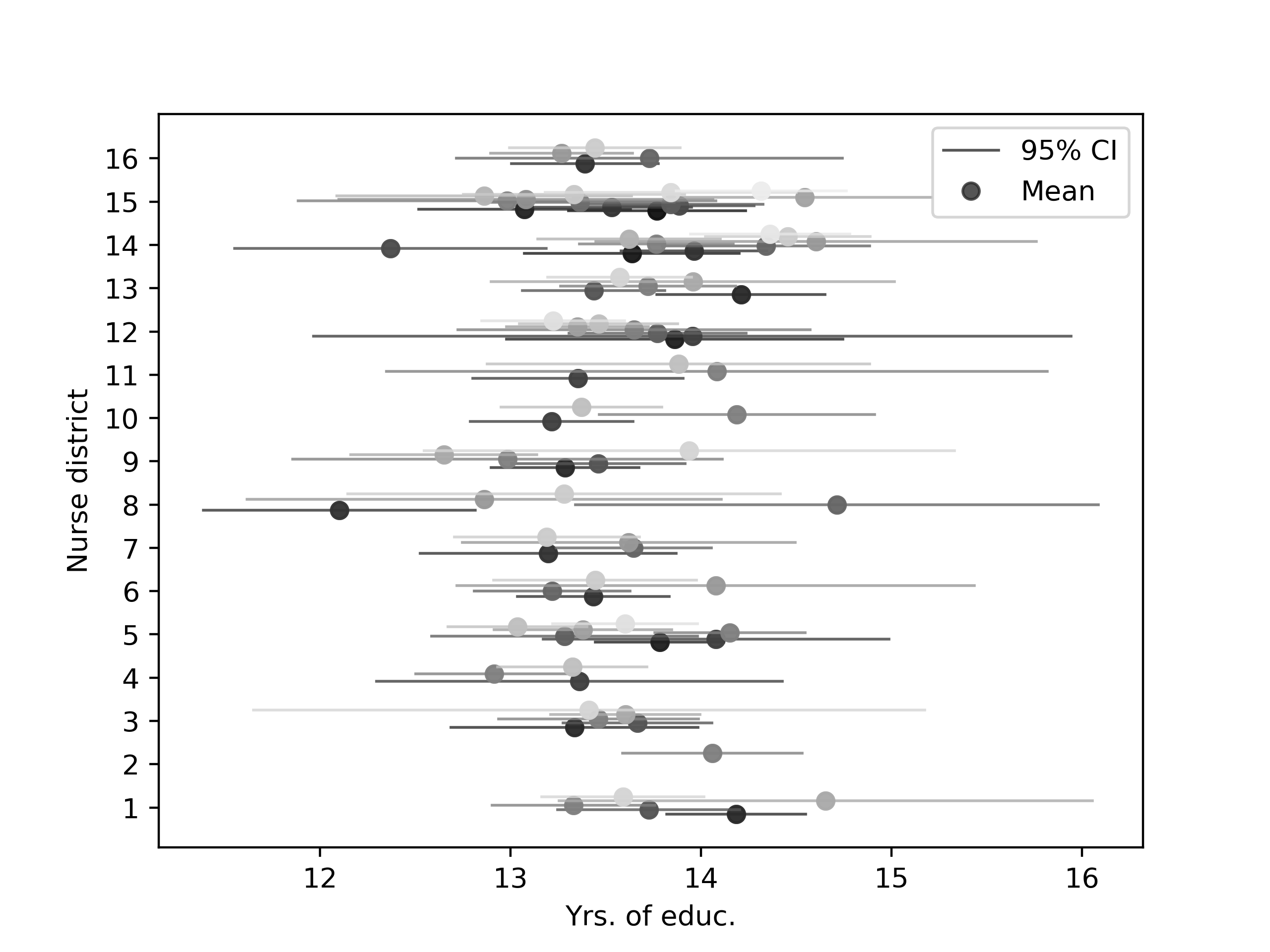}}

    \subfloat[1962]{\label{subfig: ci-plot-by-district-and-year-edulen-1962}\includegraphics[width=0.33\linewidth]{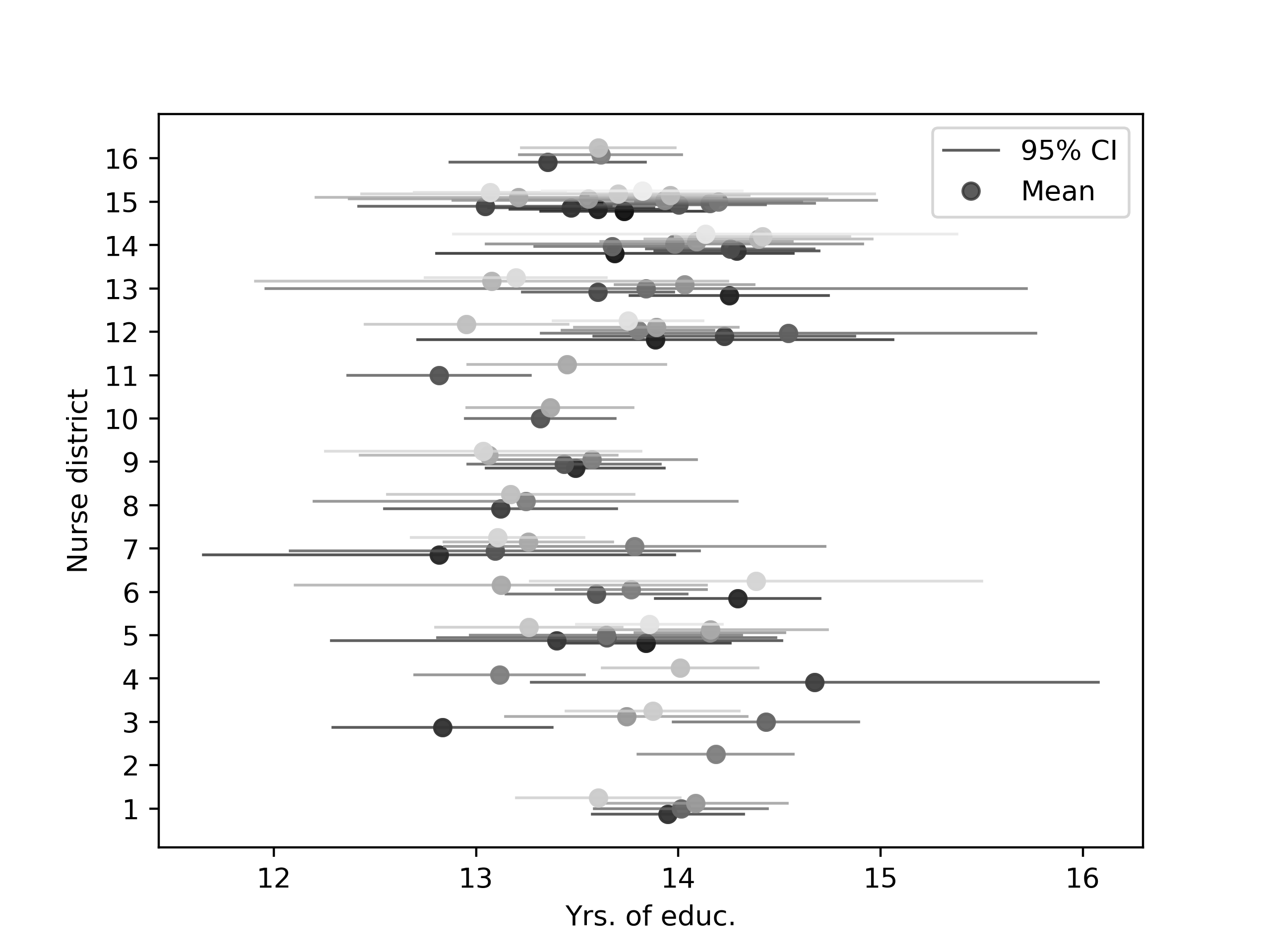}}
    \subfloat[1963]{\label{subfig: ci-plot-by-district-and-year-edulen-1963}\includegraphics[width=0.33\linewidth]{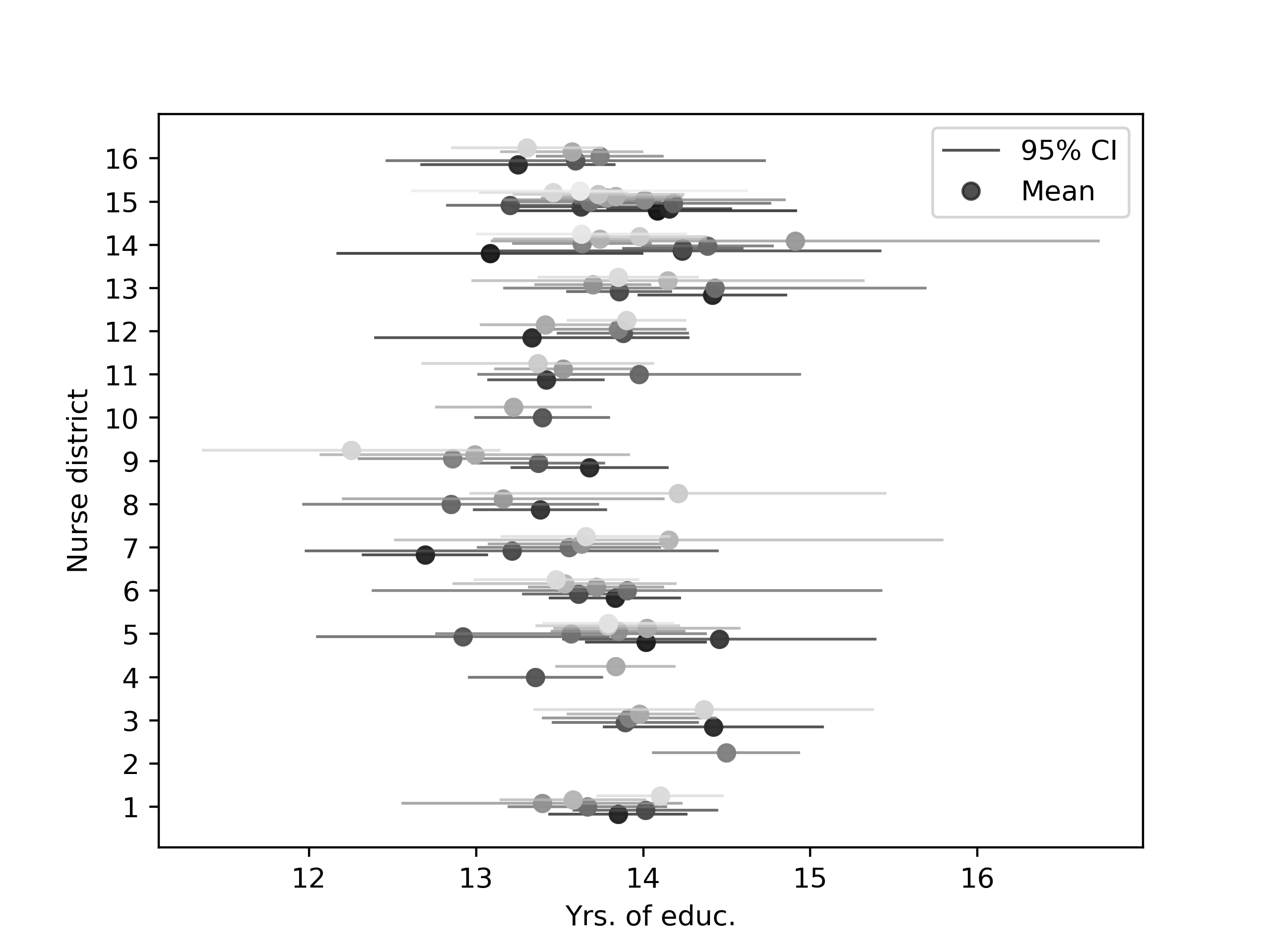}}
    \subfloat[1964]{\label{subfig: ci-plot-by-district-and-year-edulen-1964}\includegraphics[width=0.33\linewidth]{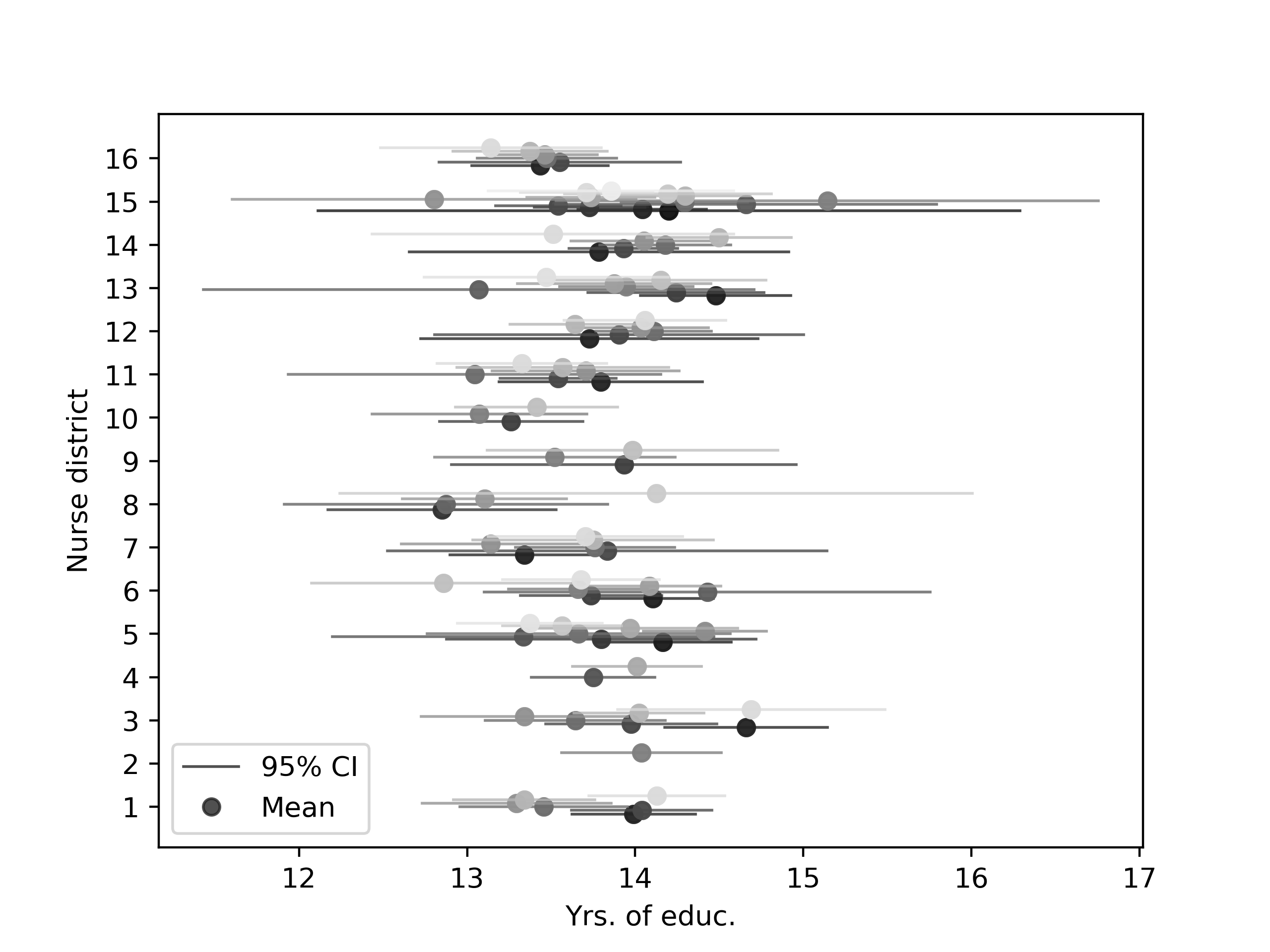}}

    \subfloat[1965]{\label{subfig: ci-plot-by-district-and-year-edulen-1965}\includegraphics[width=0.33\linewidth]{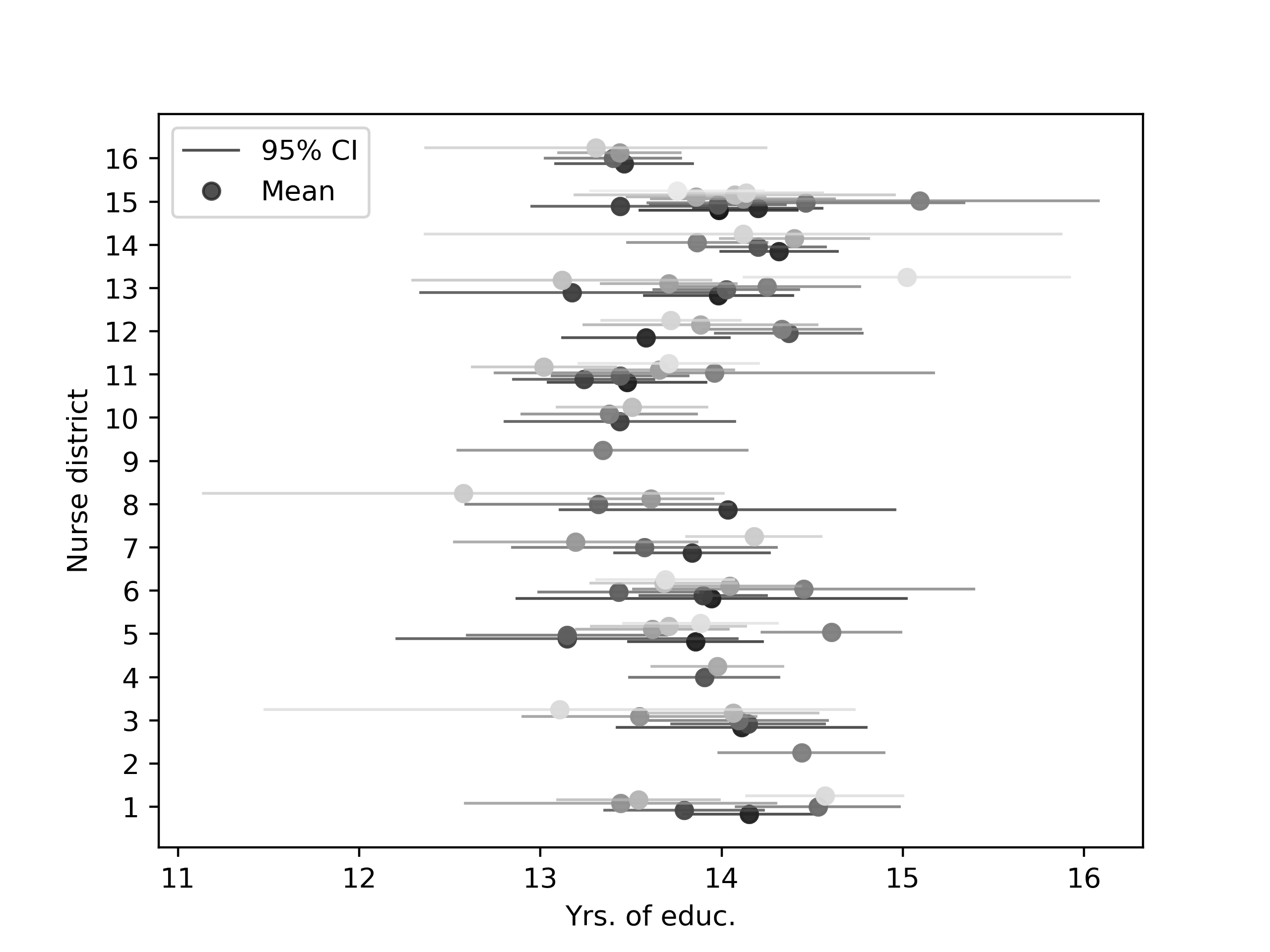}}
    \subfloat[1966]{\label{subfig: ci-plot-by-district-and-year-edulen-1966}\includegraphics[width=0.33\linewidth]{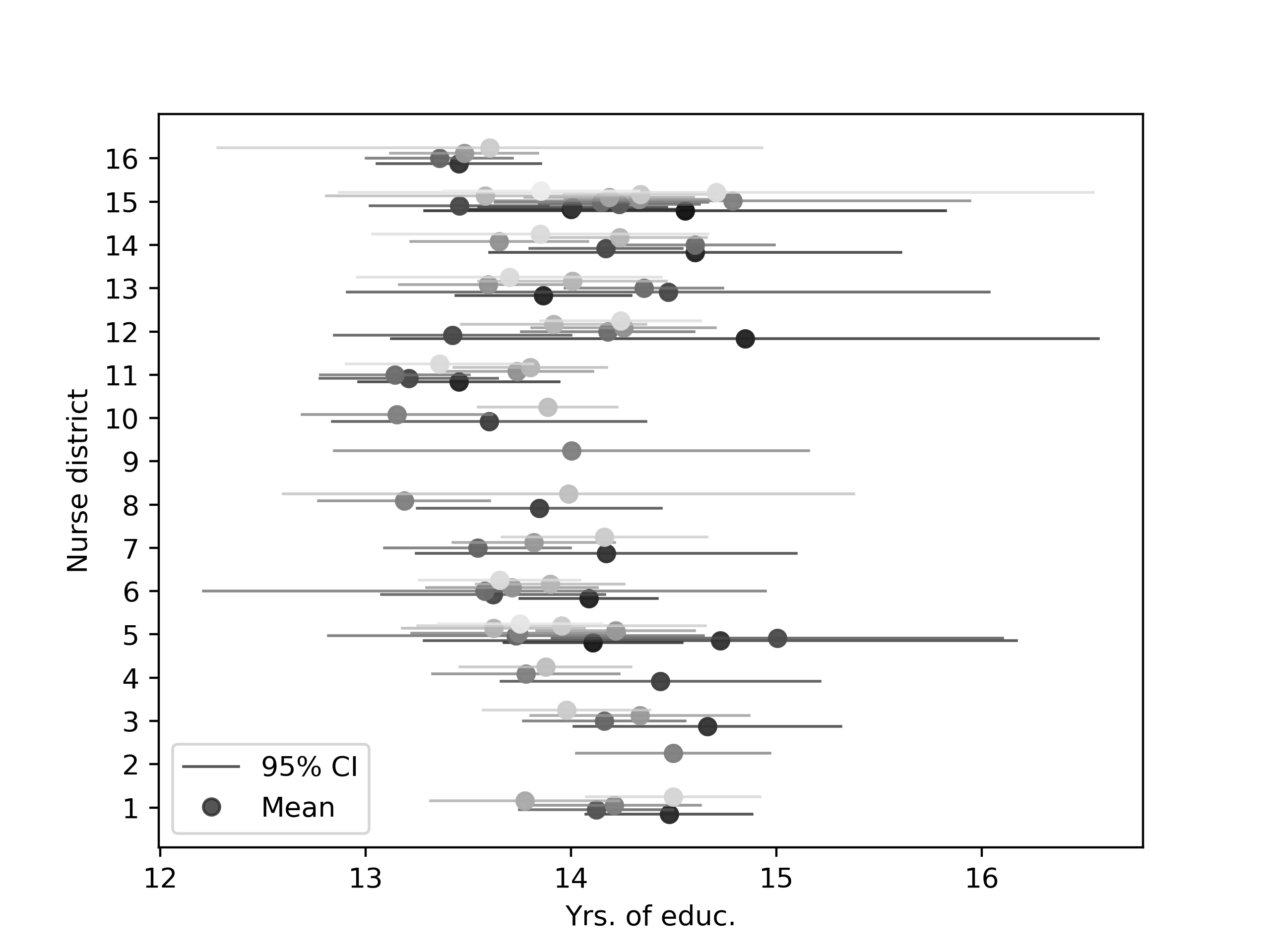}}
    \subfloat[1967]{\label{subfig: ci-plot-by-district-and-year-edulen-1967}\includegraphics[width=0.33\linewidth]{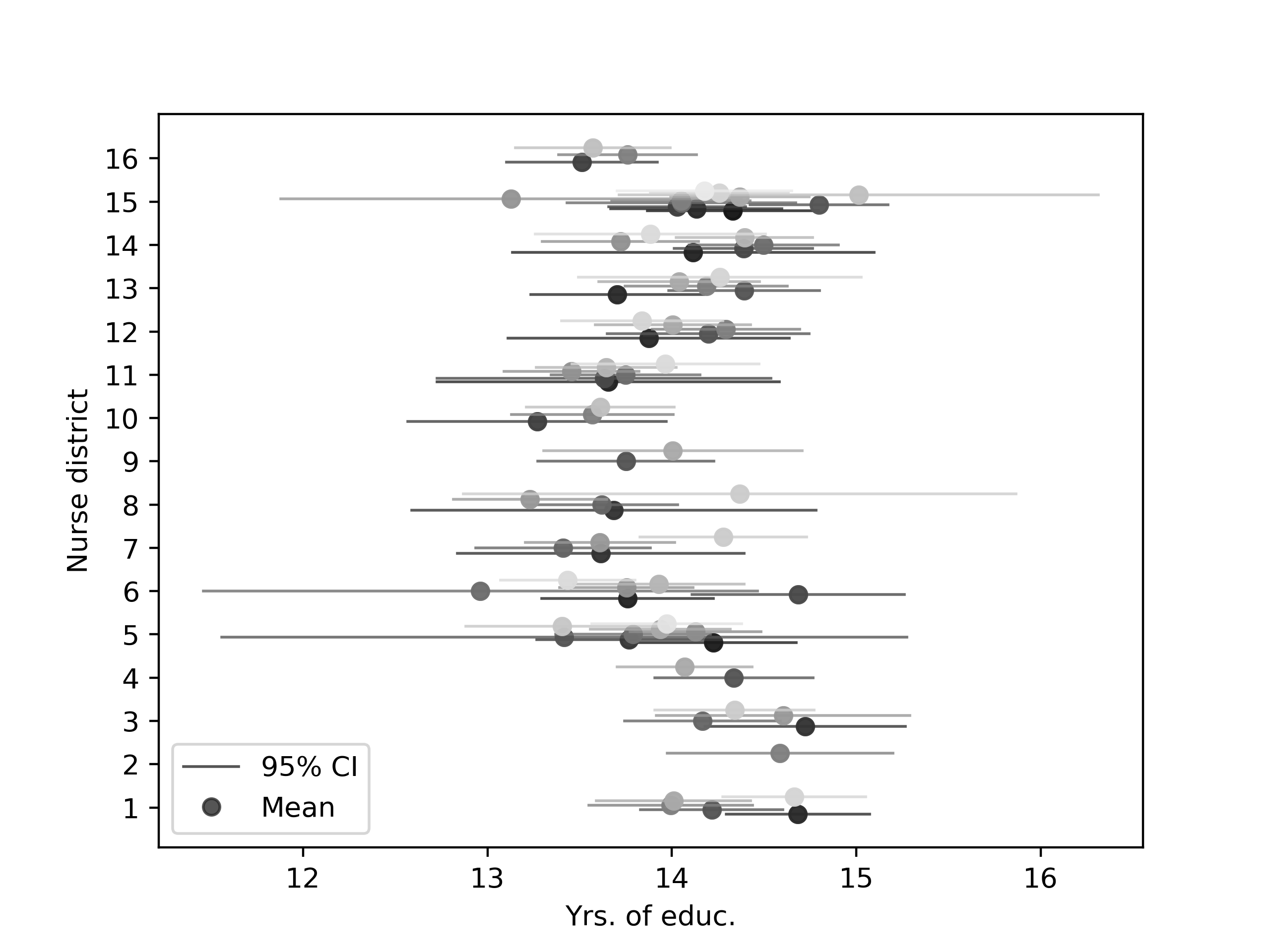}}
    
    \caption{Average Years of Education (with Confidence Intervals) by Nurses for Each Nurse District and Year.}
    \label{fig: ci-plot-by-district-and-year-edulen}
    \begin{minipage}{1\linewidth}
        \vspace{1ex}
        \footnotesize{
        \textit{Notes:}
        The figure shows plots of average years of education with associated 95\% confidence intervals (CIs) of the children by nurse for each district and year, meaning that each point represents an average of the children allocated a specific nurse born in a specific year.
        The different panels refer to different years (1959-1967) and the different rows within each panel refer to different nurse districts (1-16).
		}
	\end{minipage}
\end{figure}

\begin{figure}
    \centering
    \subfloat[1959]{\label{subfig: ci-plot-by-district-and-year-avg_inc_25_50-1959}\includegraphics[width=0.33\linewidth]{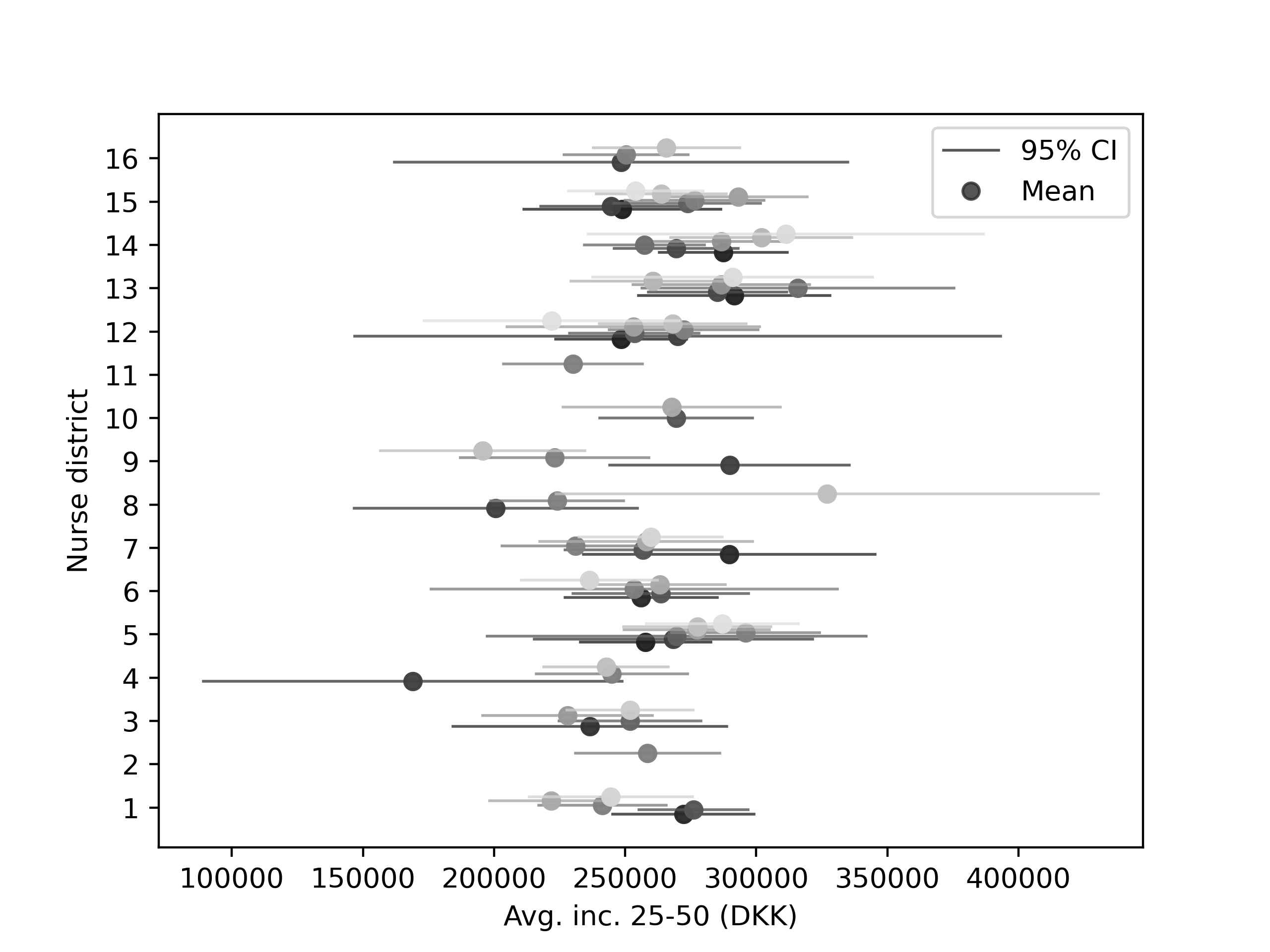}}
    \subfloat[1960]{\label{subfig: ci-plot-by-district-and-year-avg_inc_25_50-1960}\includegraphics[width=0.33\linewidth]{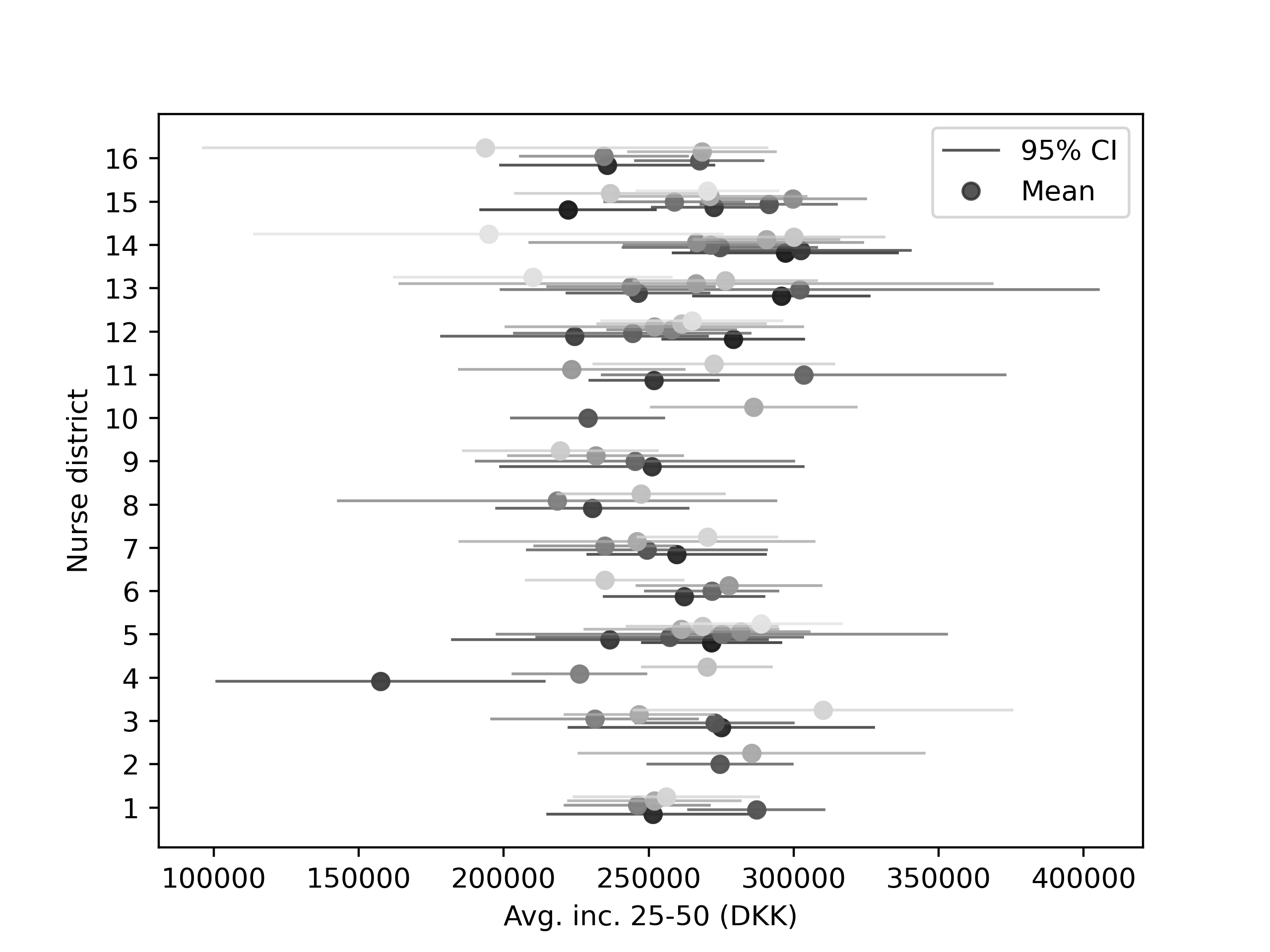}}
    \subfloat[1961]{\label{subfig: ci-plot-by-district-and-year-avg_inc_25_50-1961}\includegraphics[width=0.33\linewidth]{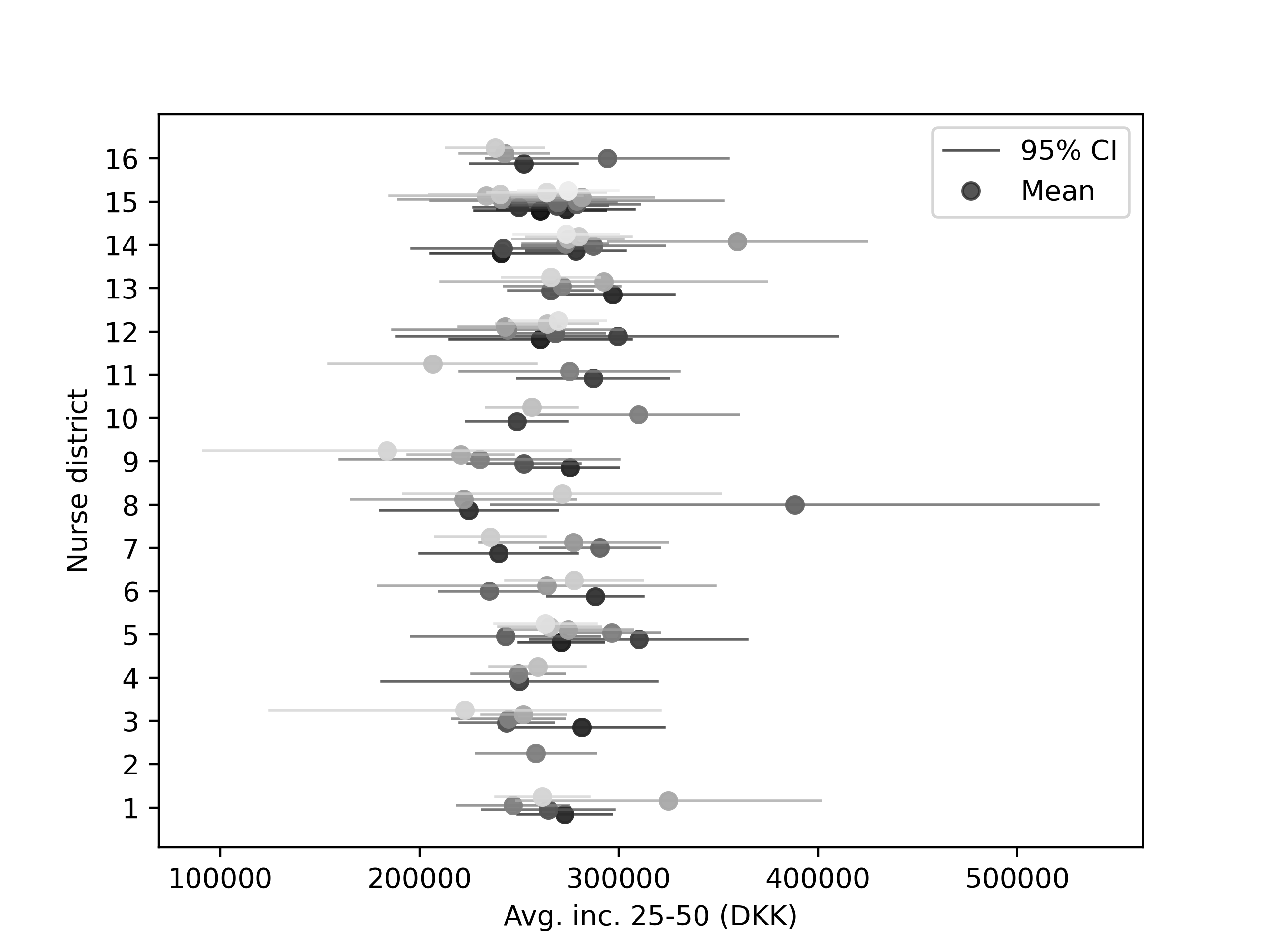}}

    \subfloat[1962]{\label{subfig: ci-plot-by-district-and-year-avg_inc_25_50-1962}\includegraphics[width=0.33\linewidth]{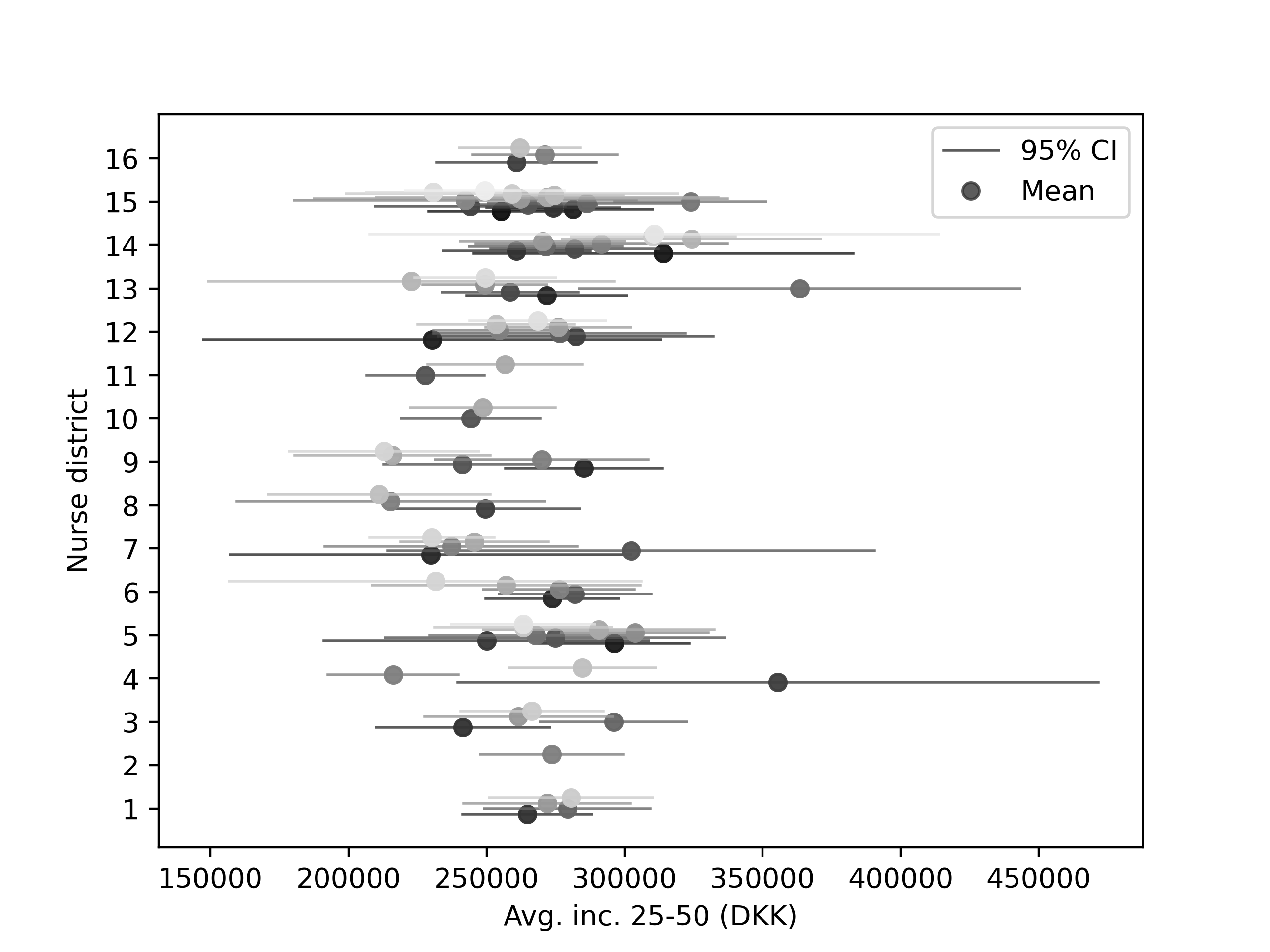}}
    \subfloat[1963]{\label{subfig: ci-plot-by-district-and-year-avg_inc_25_50-1963}\includegraphics[width=0.33\linewidth]{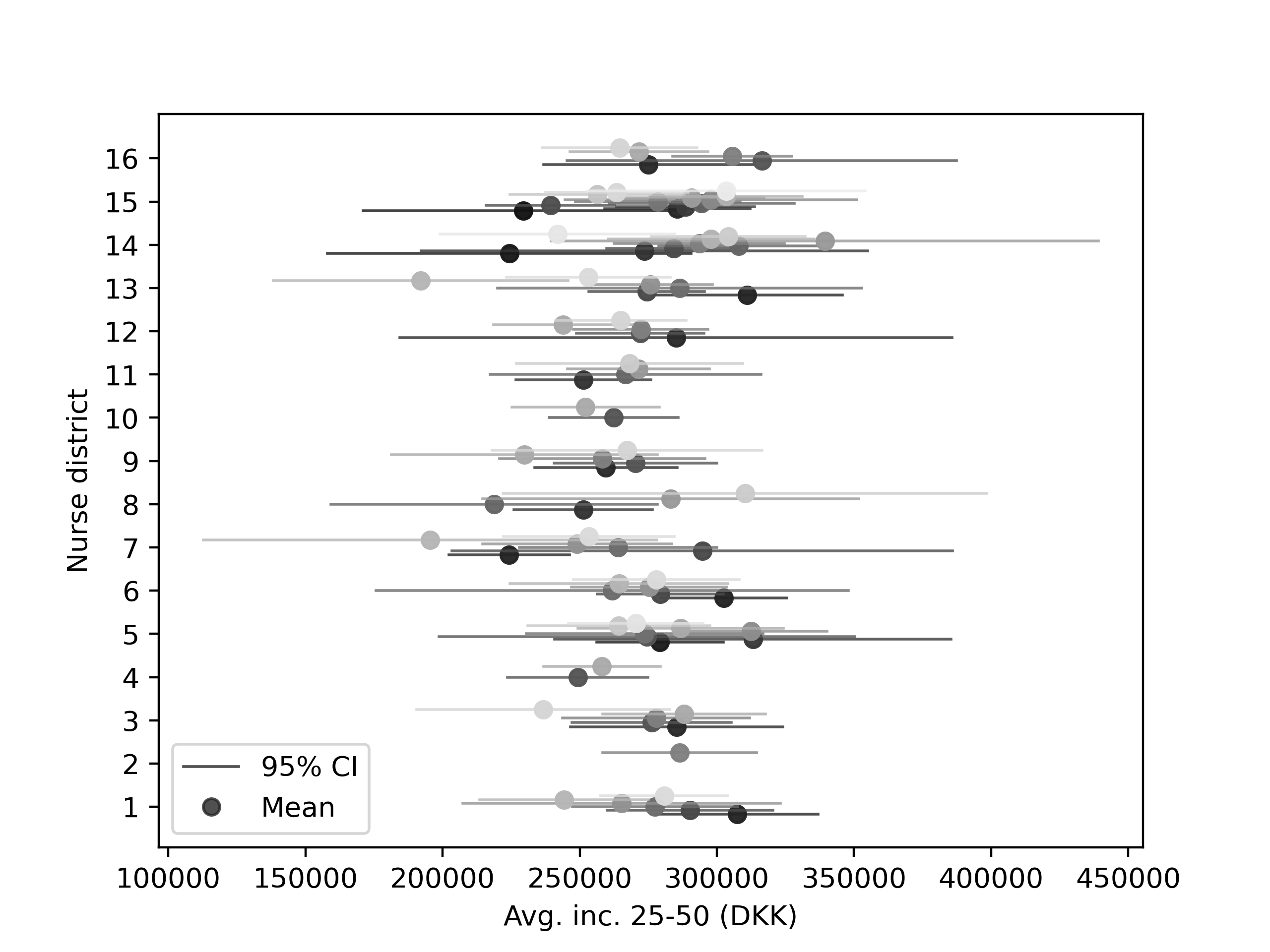}}
    \subfloat[1964]{\label{subfig: ci-plot-by-district-and-year-avg_inc_25_50-1964}\includegraphics[width=0.33\linewidth]{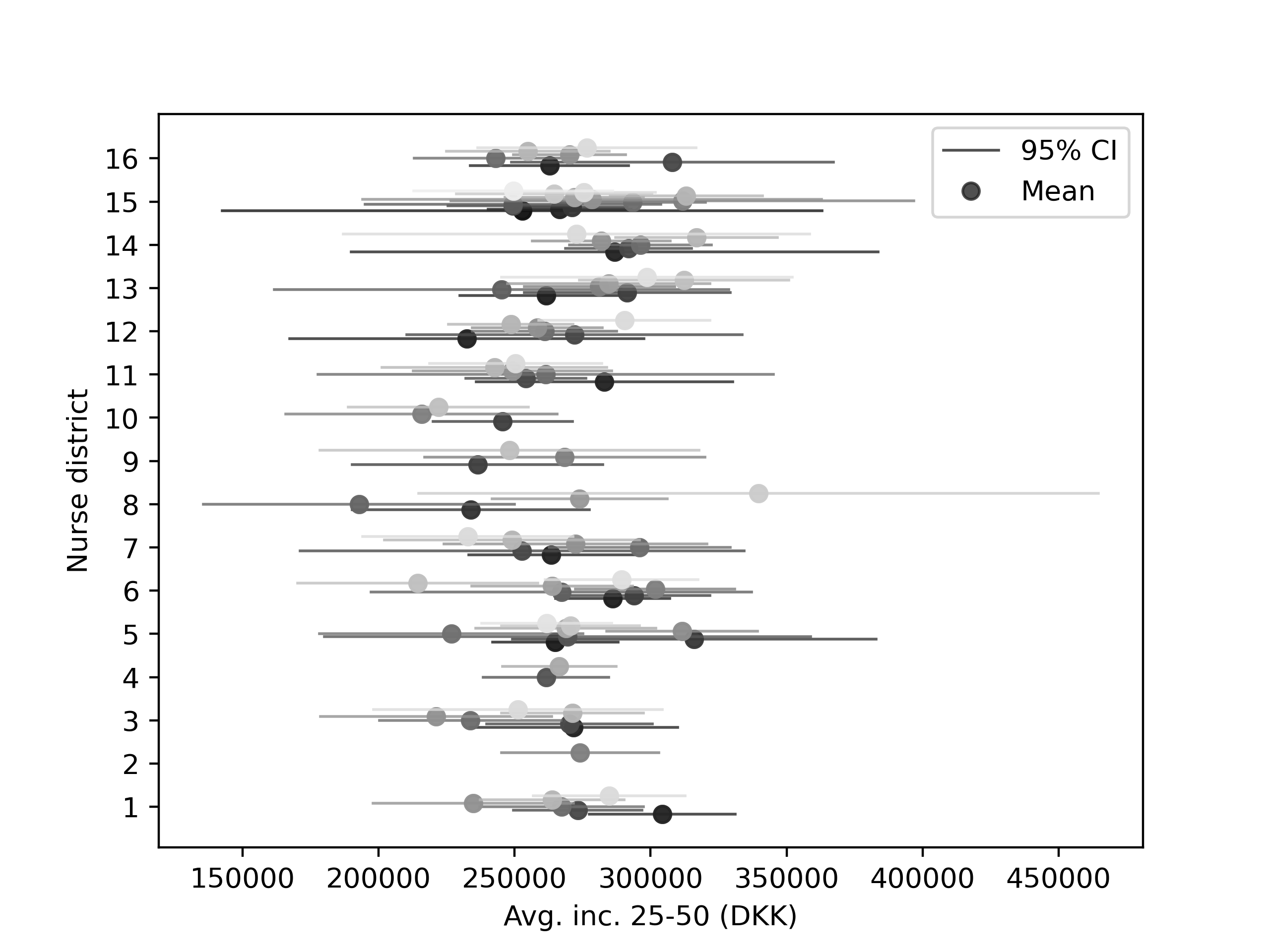}}

    \subfloat[1965]{\label{subfig: ci-plot-by-district-and-year-avg_inc_25_50-1965}\includegraphics[width=0.33\linewidth]{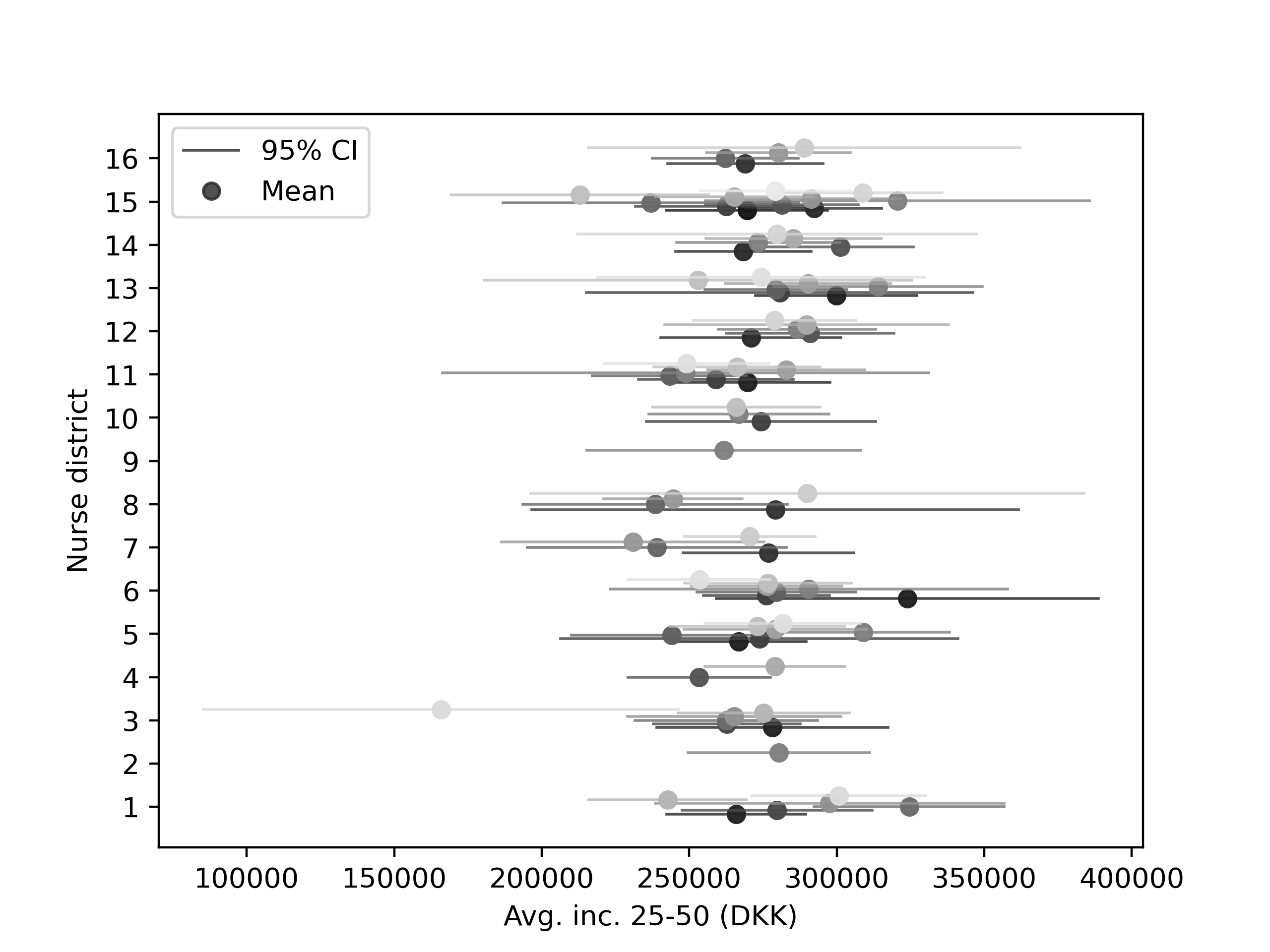}}
    \subfloat[1966]{\label{subfig: ci-plot-by-district-and-year-avg_inc_25_50-1966}\includegraphics[width=0.33\linewidth]{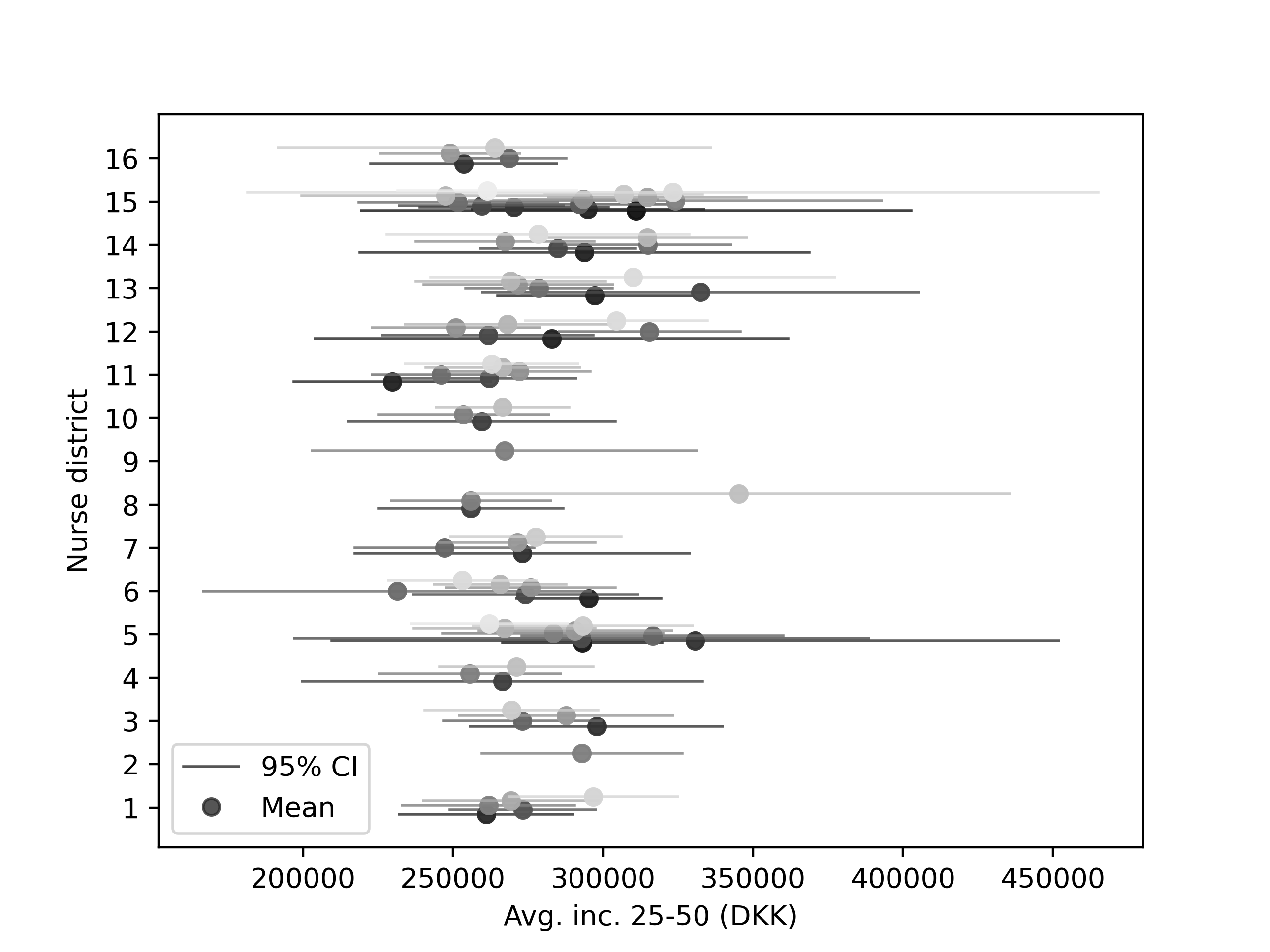}}
    \subfloat[1967]{\label{subfig: ci-plot-by-district-and-year-avg_inc_25_50-1967}\includegraphics[width=0.33\linewidth]{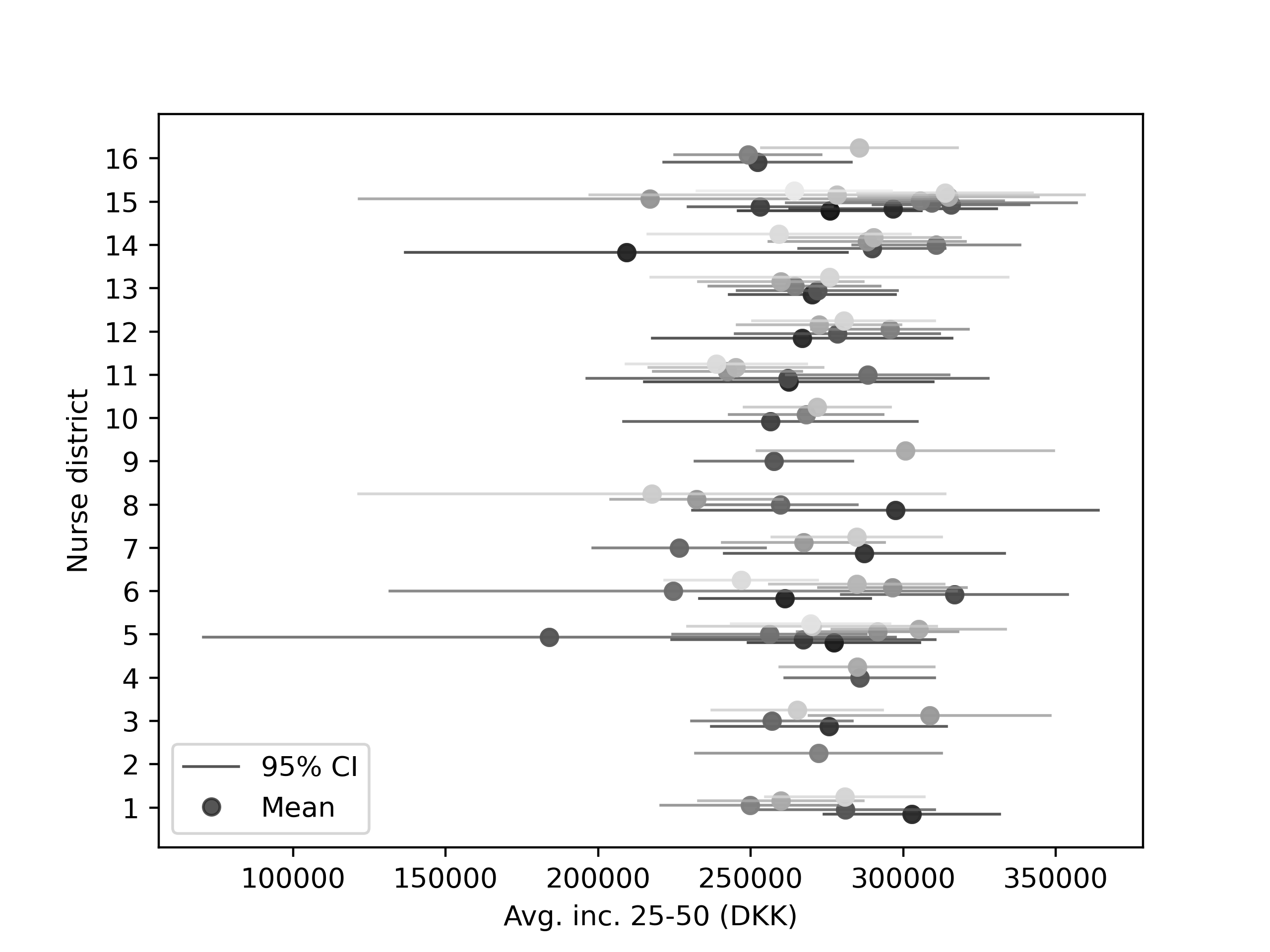}}
    
    \caption{Average Earnings During Ages 25-50 (with Confidence Intervals) by Nurses for Each Nurse District and Year.}
    \label{fig: ci-plot-by-district-and-year-avg_inc_25_50}
    \begin{minipage}{1\linewidth}
        \vspace{1ex}
        \footnotesize{
        \textit{Notes:}
        The figure shows plots of average earnings during ages 25-50 with associated 95\% confidence intervals (CIs) of the children by nurse for each district and year, meaning that each point represents an average of the children allocated a specific nurse born in a specific year.
        The different panels refer to different years (1959-1967) and the different rows within each panel refer to different nurse districts (1-16).
		}
	\end{minipage}
\end{figure}


\begin{figure}
    \centering
    \subfloat[Breastfed at 1 mo.]{\label{subfig: counterfactual-permutation-treatment-effects-breastfeeding_1_mo_pred}\includegraphics[width=0.33\linewidth]{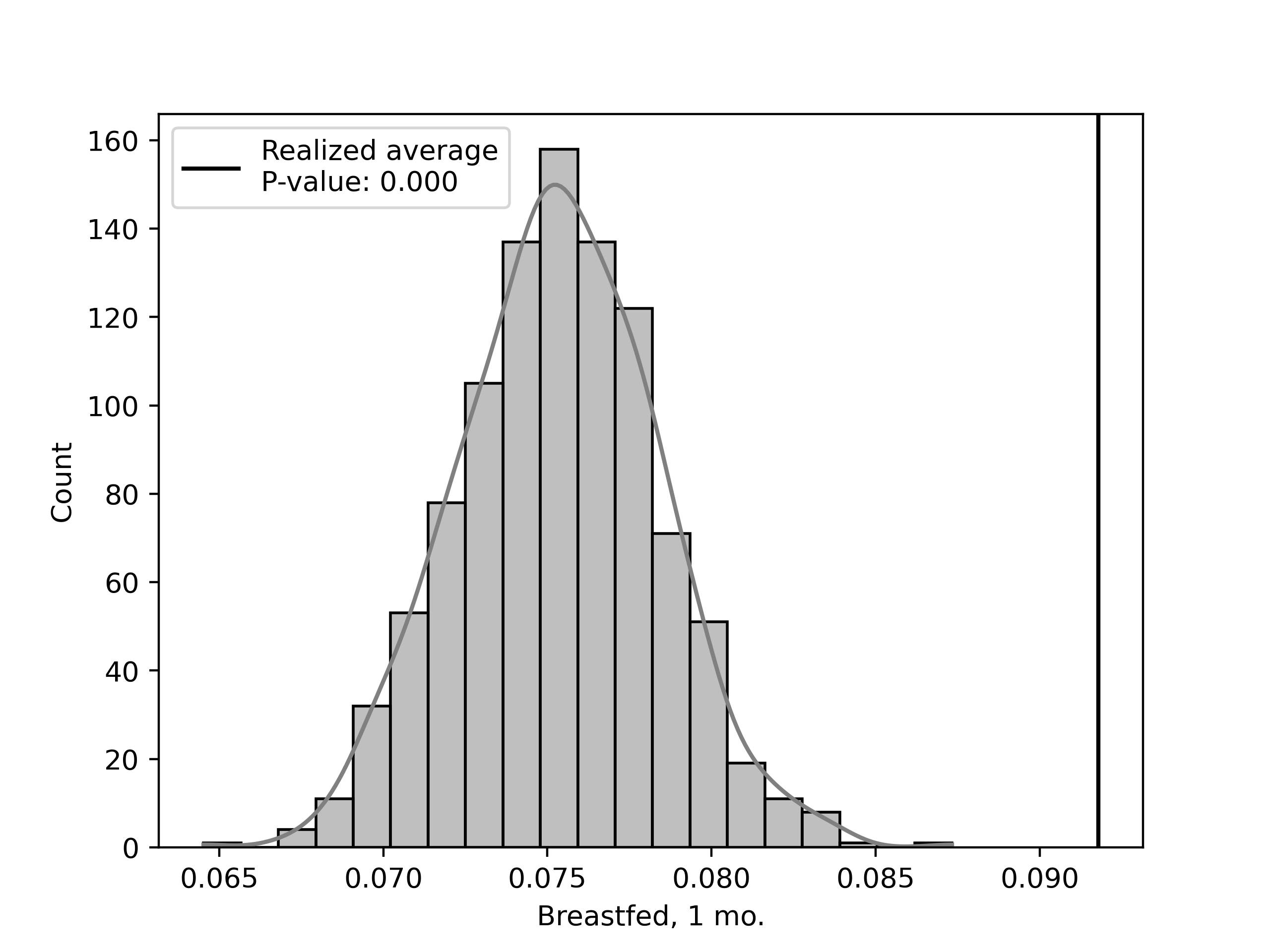}}
    \subfloat[Breastfed at 6 mo.]{\label{subfig: counterfactual-permutation-treatment-effects-breastfeeding_6_mo_pred}\includegraphics[width=0.33\linewidth]{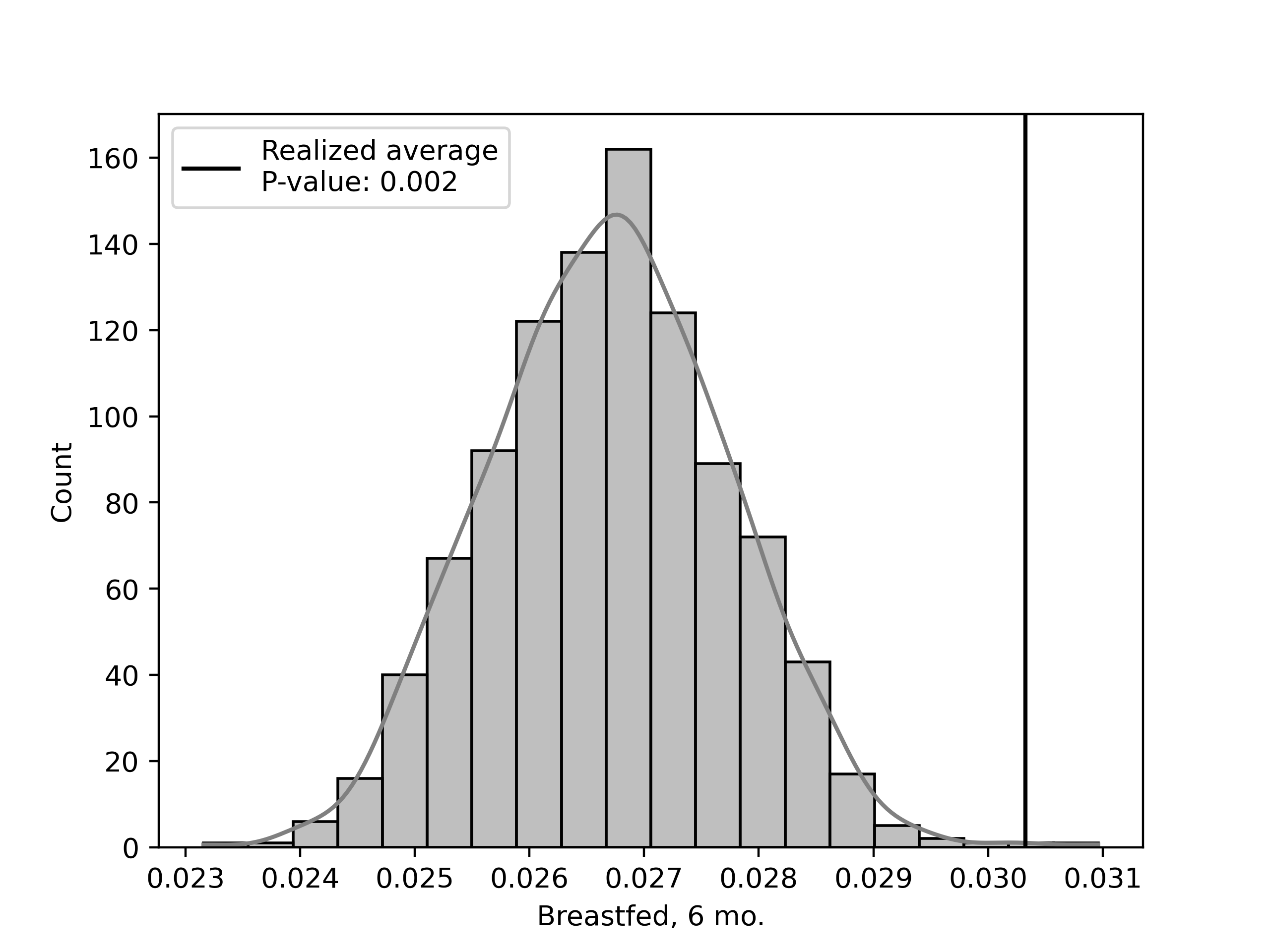}}
    \subfloat[Duration of breastfeeding (mo.)]{\label{subfig: counterfactual-permutation-treatment-effects-bfdurany_pred}\includegraphics[width=0.33\linewidth]{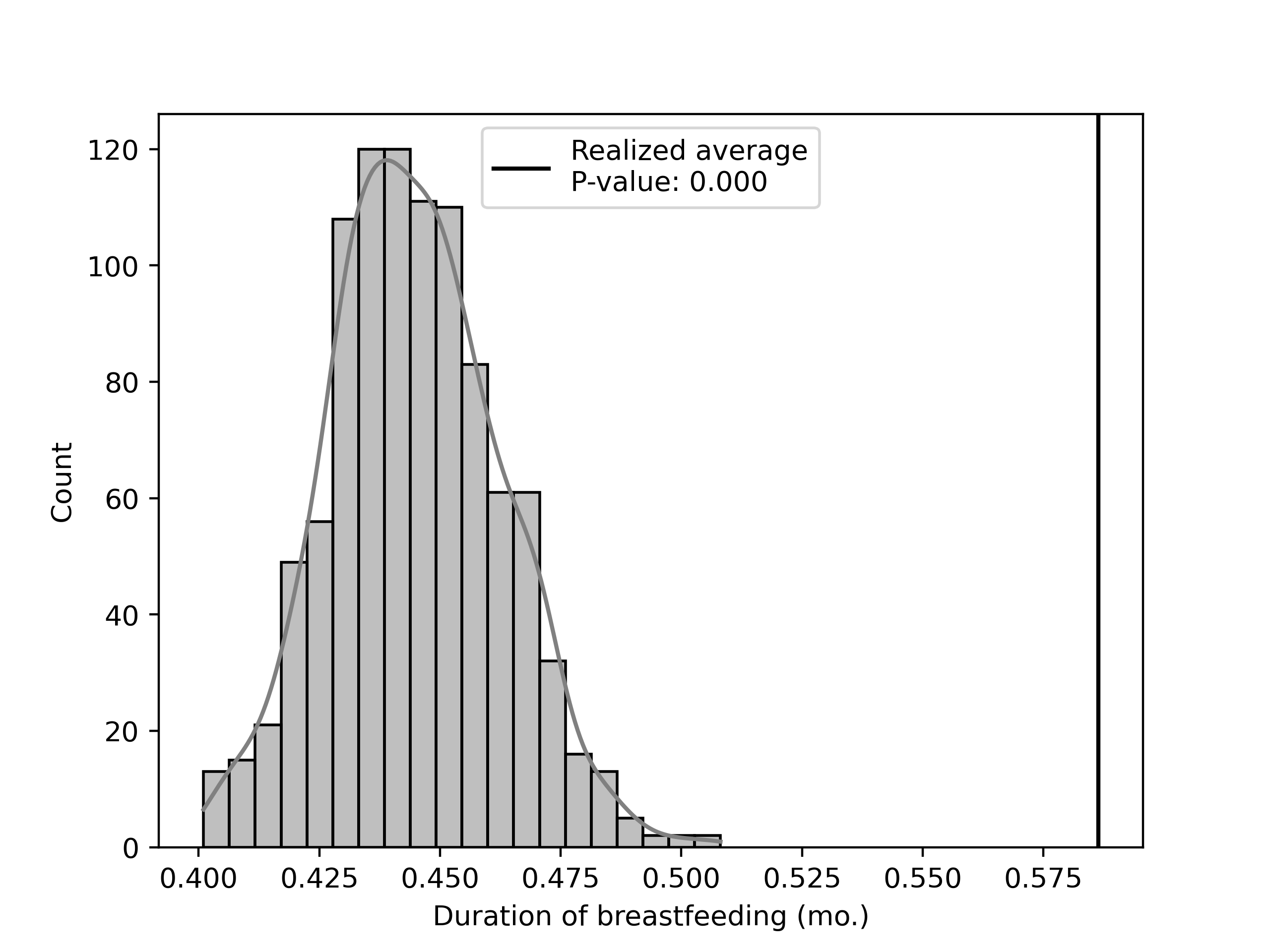}}

    \subfloat[Years of education]{\label{subfig: counterfactual-permutation-treatment-effects-edulen}\includegraphics[width=0.4\linewidth]{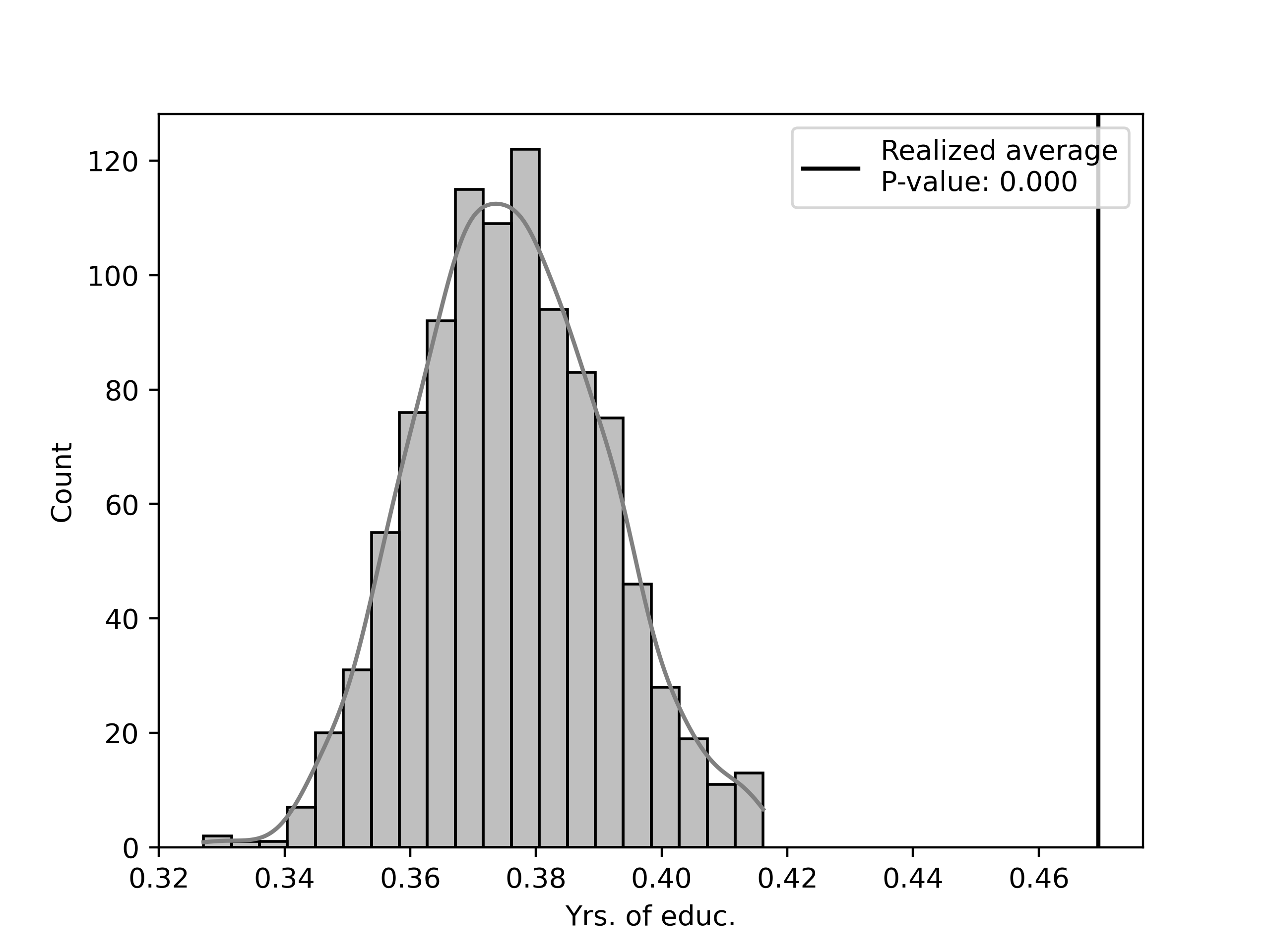}}
    \subfloat[Above mandatory education]{\label{subfig: counterfactual-permutation-treatment-effects-above_mand_edu}\includegraphics[width=0.4\linewidth]{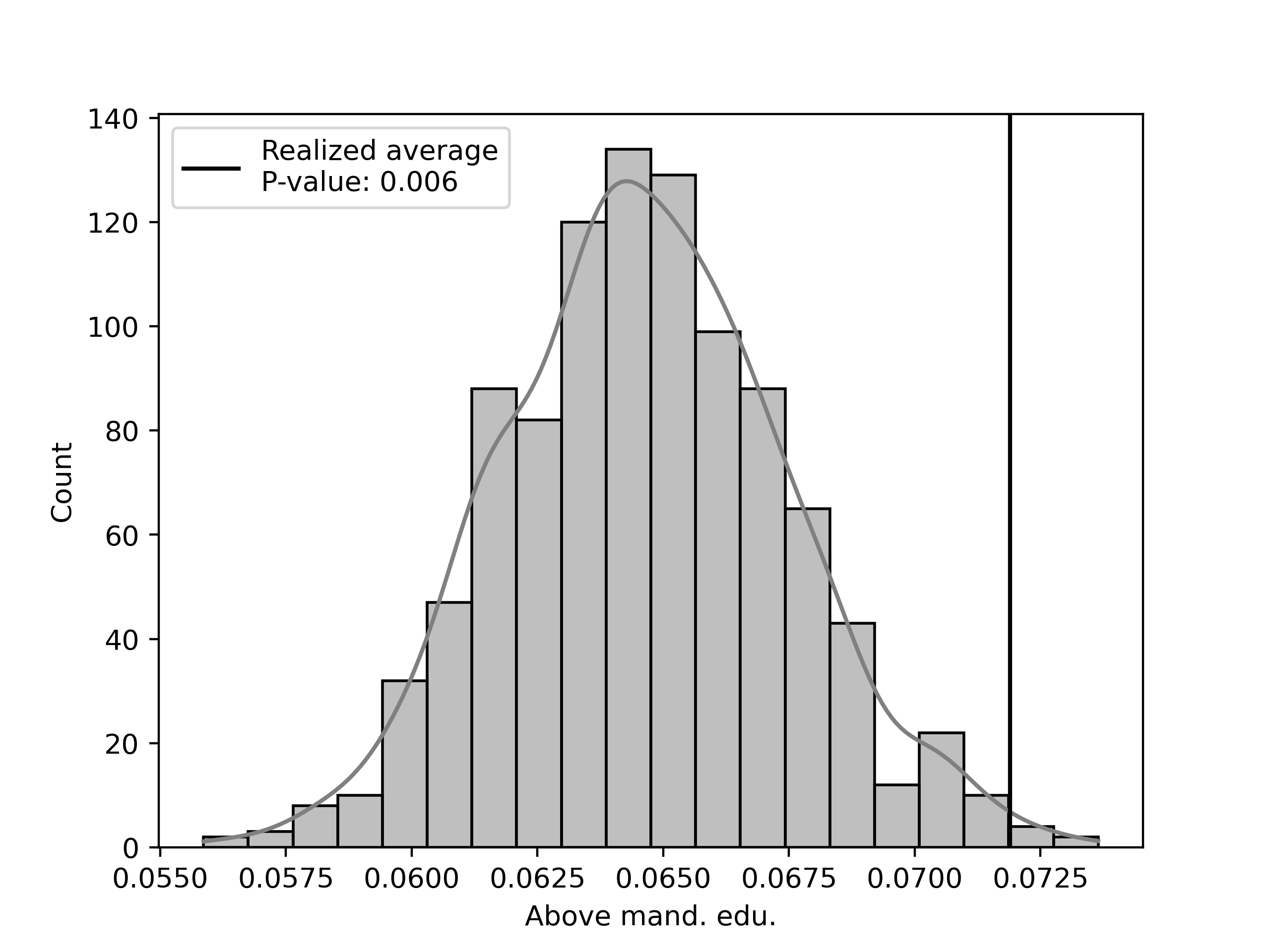}}

    \subfloat[Share of time employed 25-50]{\label{subfig: counterfactual-permutation-treatment-effects-avg_emp_25_50}\includegraphics[width=0.4\linewidth]{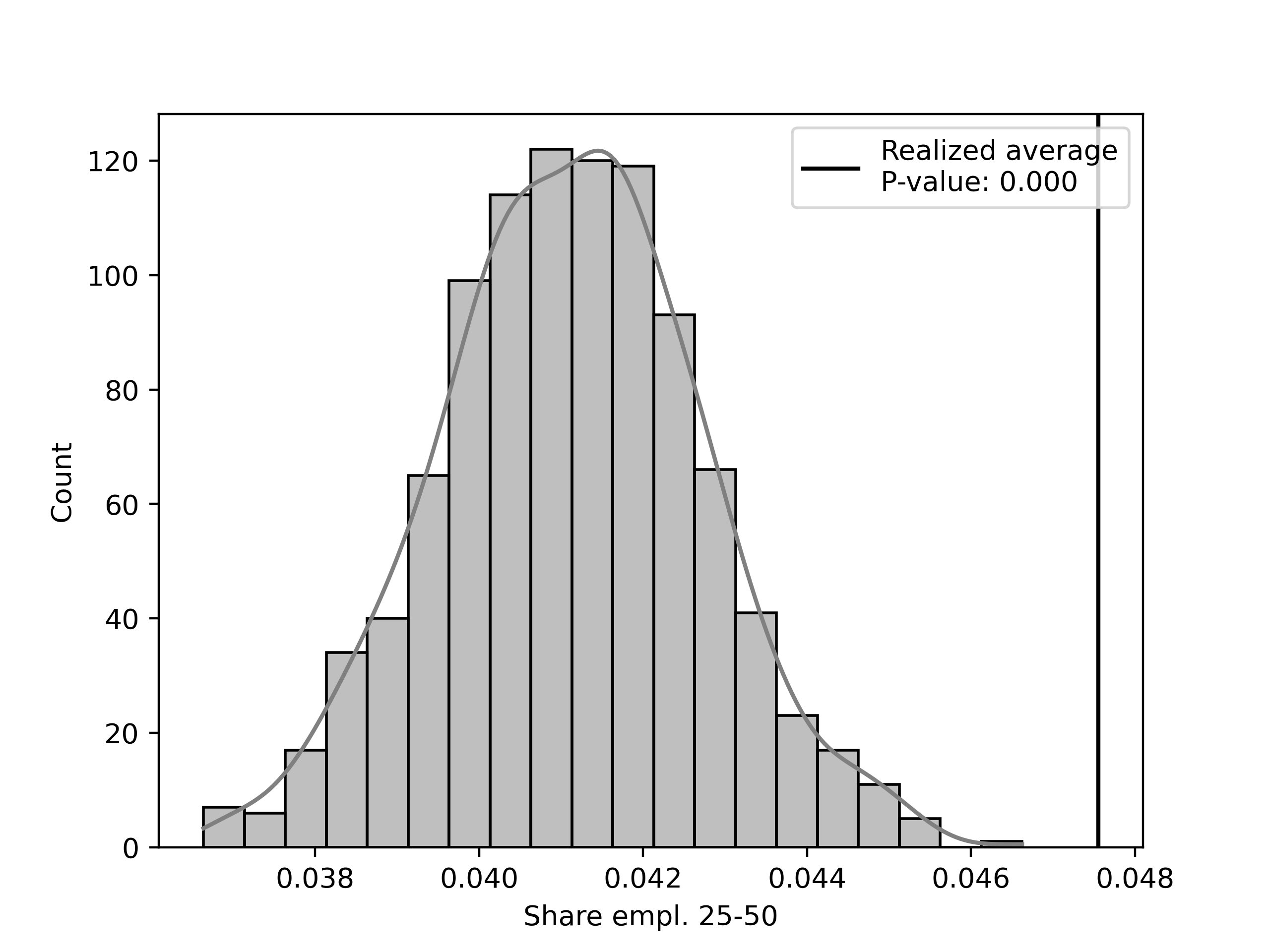}}
    \subfloat[Average income 25-50]{\label{subfig: counterfactual-permutation-treatment-effects-avg_inc_25_50}\includegraphics[width=0.4\linewidth]{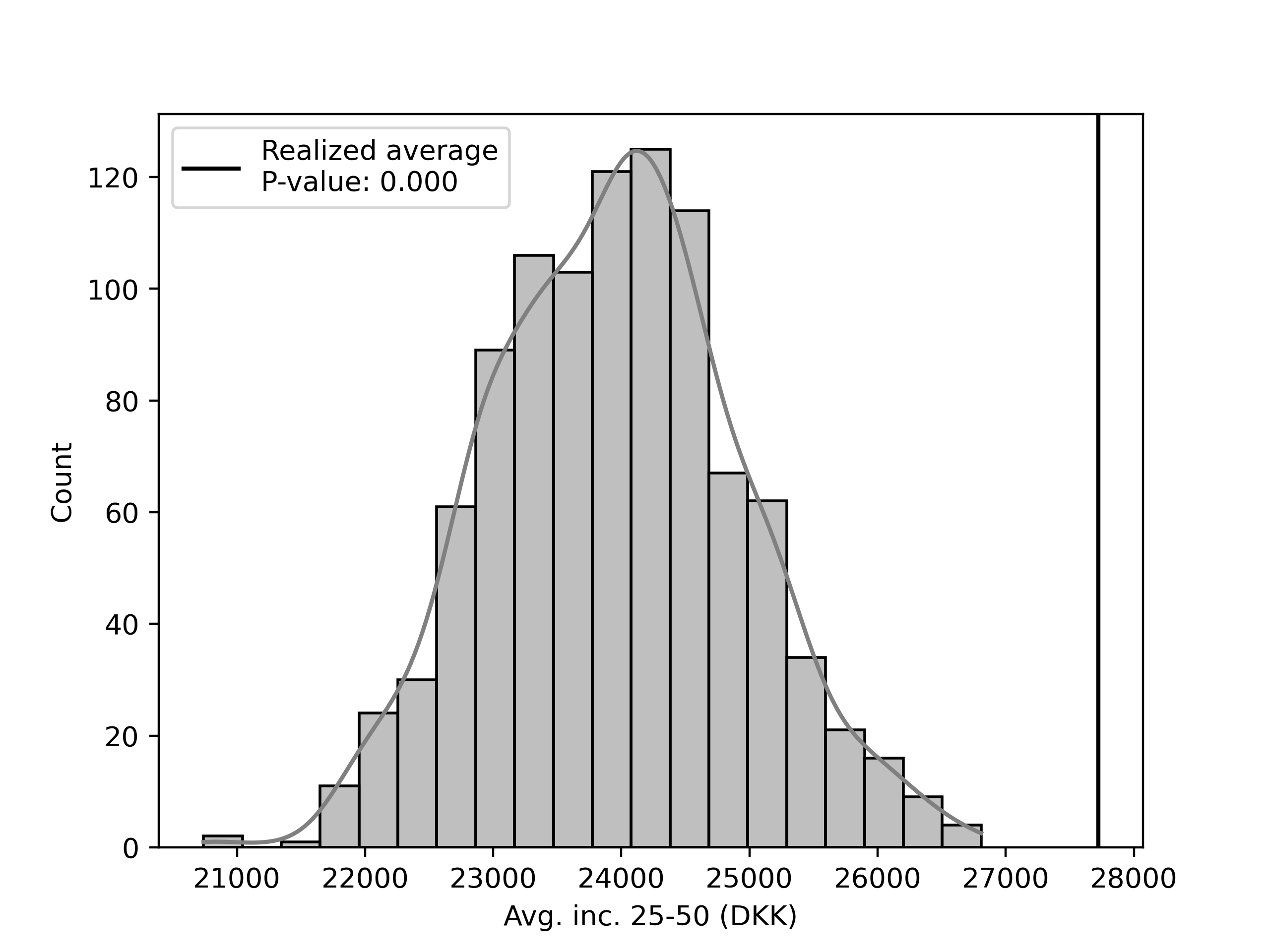}}

    \caption{Counterfactual Average Absolute Differences in Outcomes Between Permuted Children of Different Nurses Within Same District-by-Year.}
    \label{fig: counterfactual-permutation-treatment-effects}
    \begin{minipage}{1\linewidth}
        \vspace{1ex}
        \footnotesize{
        \textit{Notes:}
        The figure shows histograms (with associated density plots) of averages of absolute values of differences in average values of children with different nurses but within same district-by-year group, but where children are permuted in such a way as to be randomly allocated a nurse within the child's district-by-year group, with a total of 1000 permutations per panel.
        The black, solid line in each panel shows the average of the underlying absolute differences in the non-permuted sample, and the p-value reported in each panel is calculated as the share of permutations that led to a more extreme value than the actual average.
        The different panels refer to different outcome variables.
		}
	\end{minipage}
\end{figure}

\begin{figure}
    \centering
    \subfloat[Low BW]{\label{subfig: counterfactual-permutation-treatment-effects-lowbw}\includegraphics[width=0.4\linewidth]{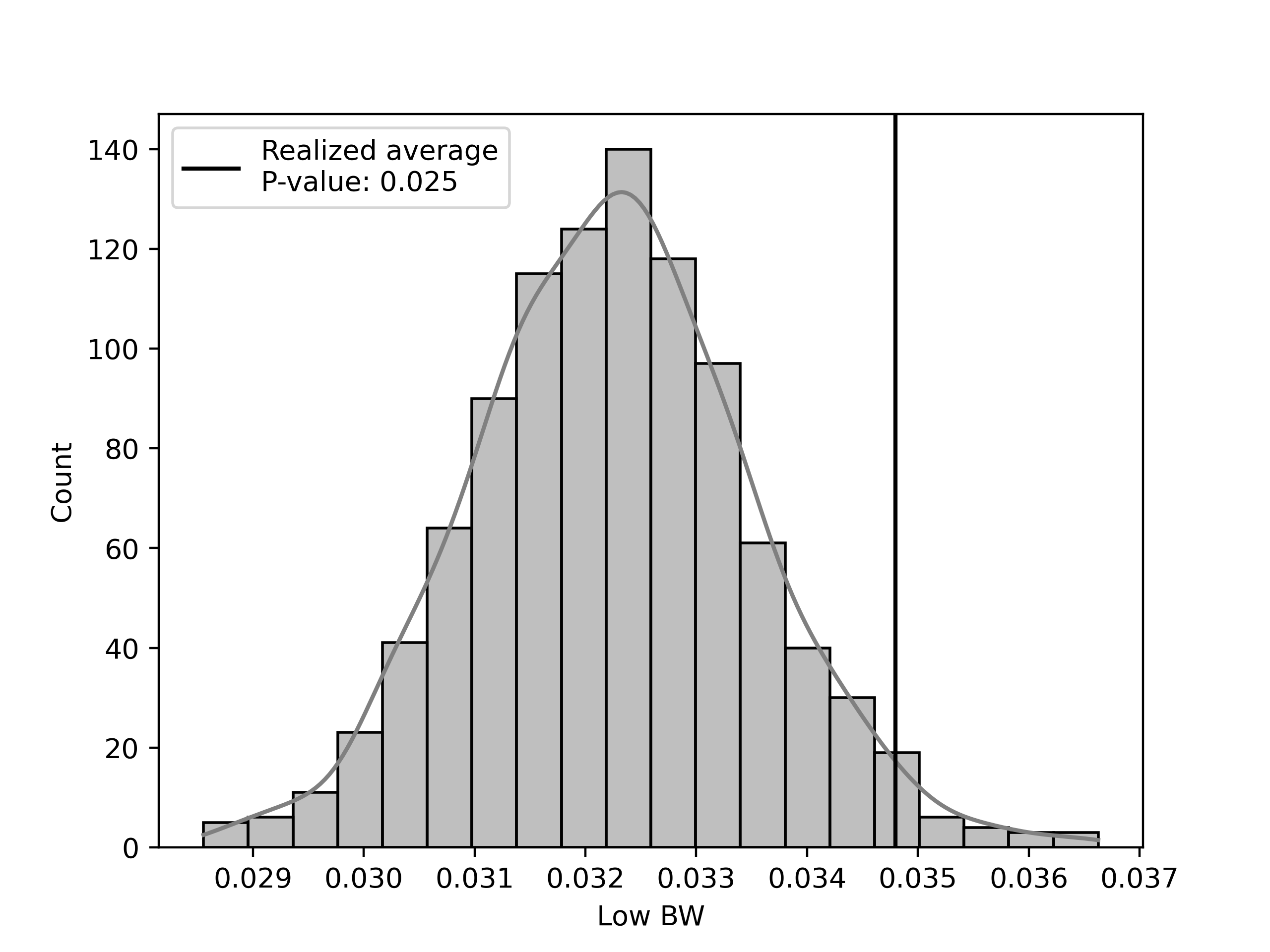}}
    \subfloat[Born prior to due date]{\label{subfig: counterfactual-permutation-treatment-effects-preterm_pred}\includegraphics[width=0.4\linewidth]{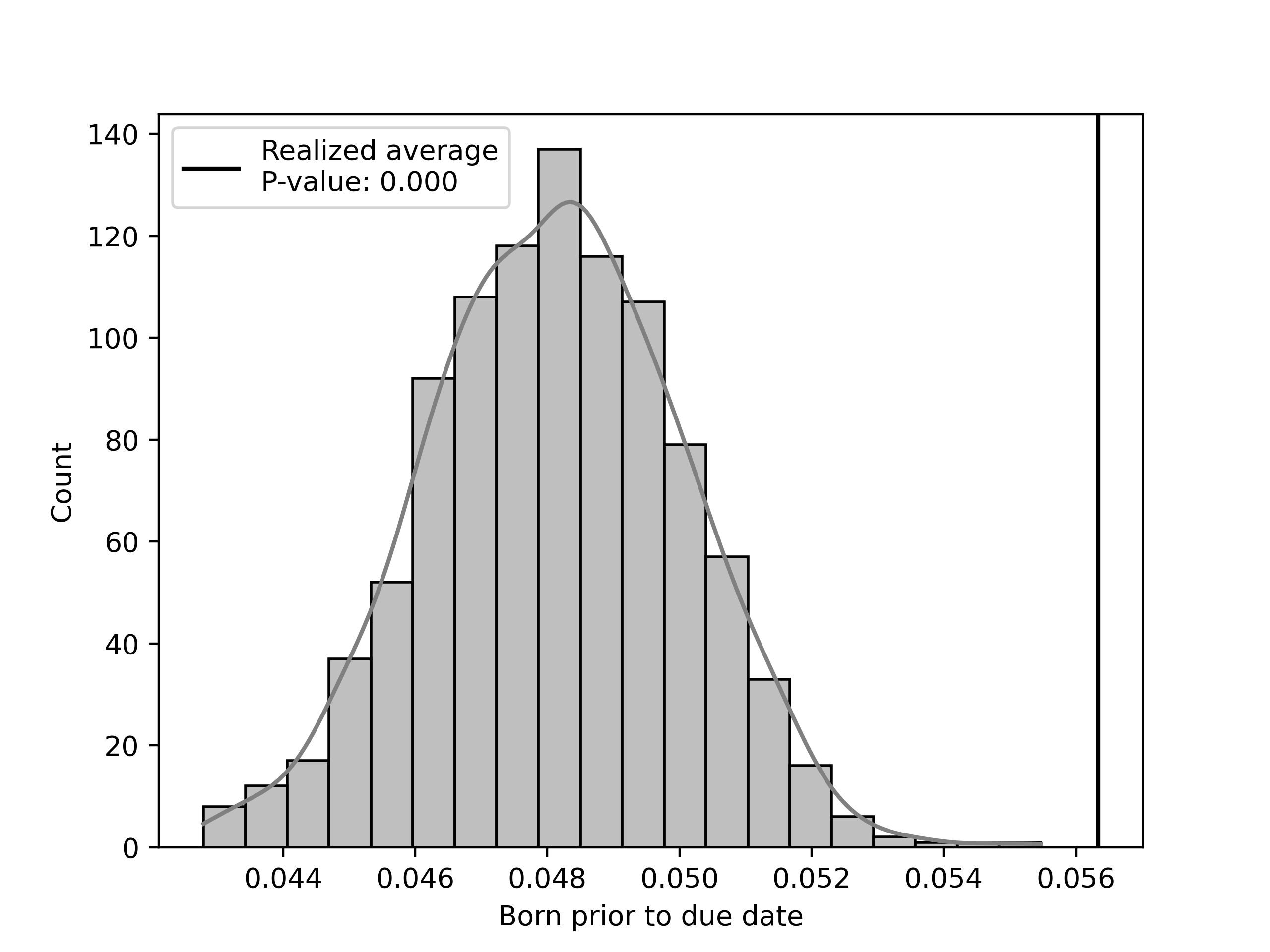}}

    \subfloat[Weeks born prior to due date]{\label{subfig: counterfactual-permutation-treatment-effects-pretermwk_pred}\includegraphics[width=0.4\linewidth]{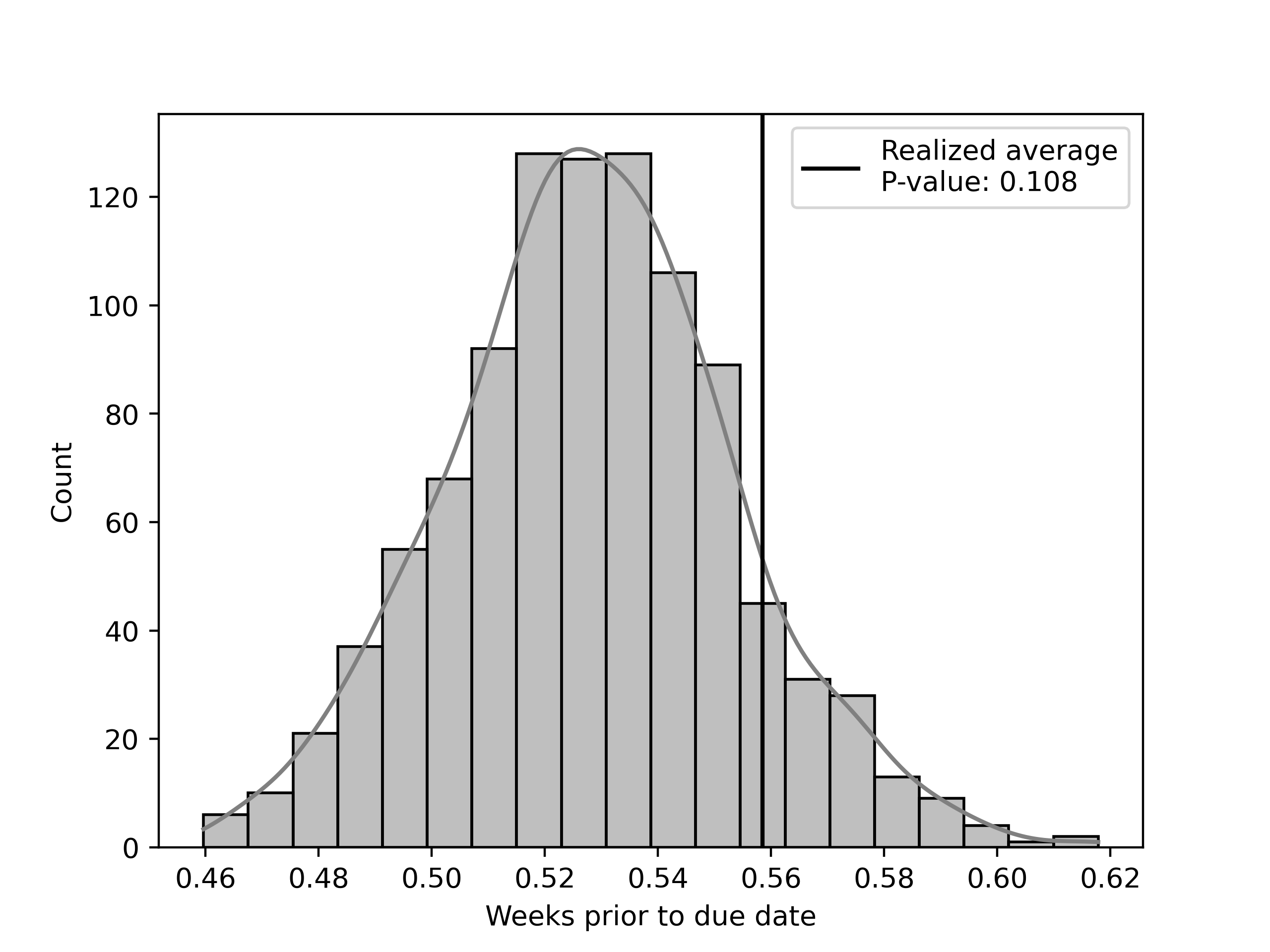}}
    \subfloat[Born 1-3]{\label{subfig: counterfactual-permutation-treatment-effects-born13}\includegraphics[width=0.4\linewidth]{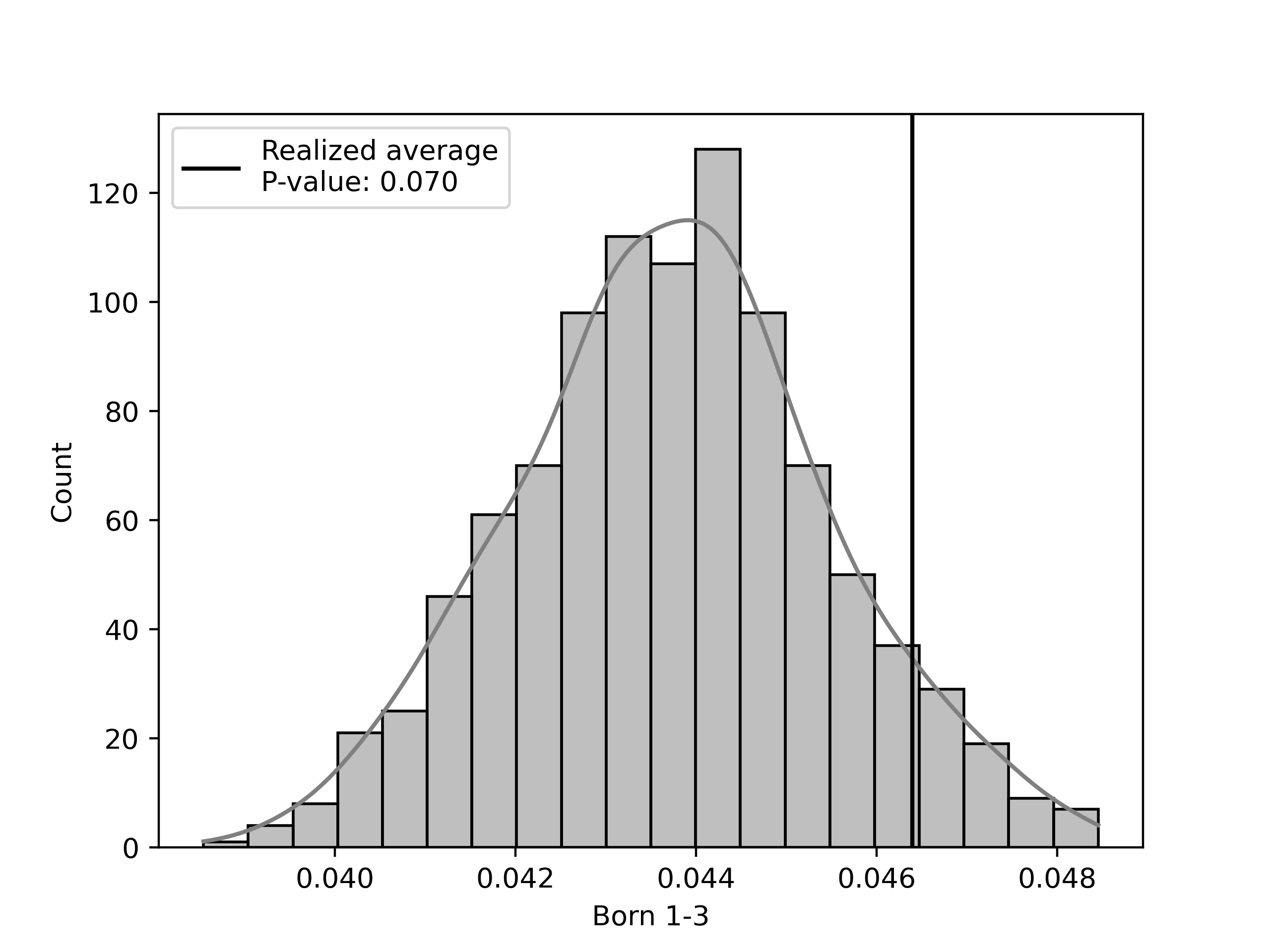}}

    \subfloat[Female]{\label{subfig: counterfactual-permutation-treatment-effects-female}\includegraphics[width=0.4\linewidth]{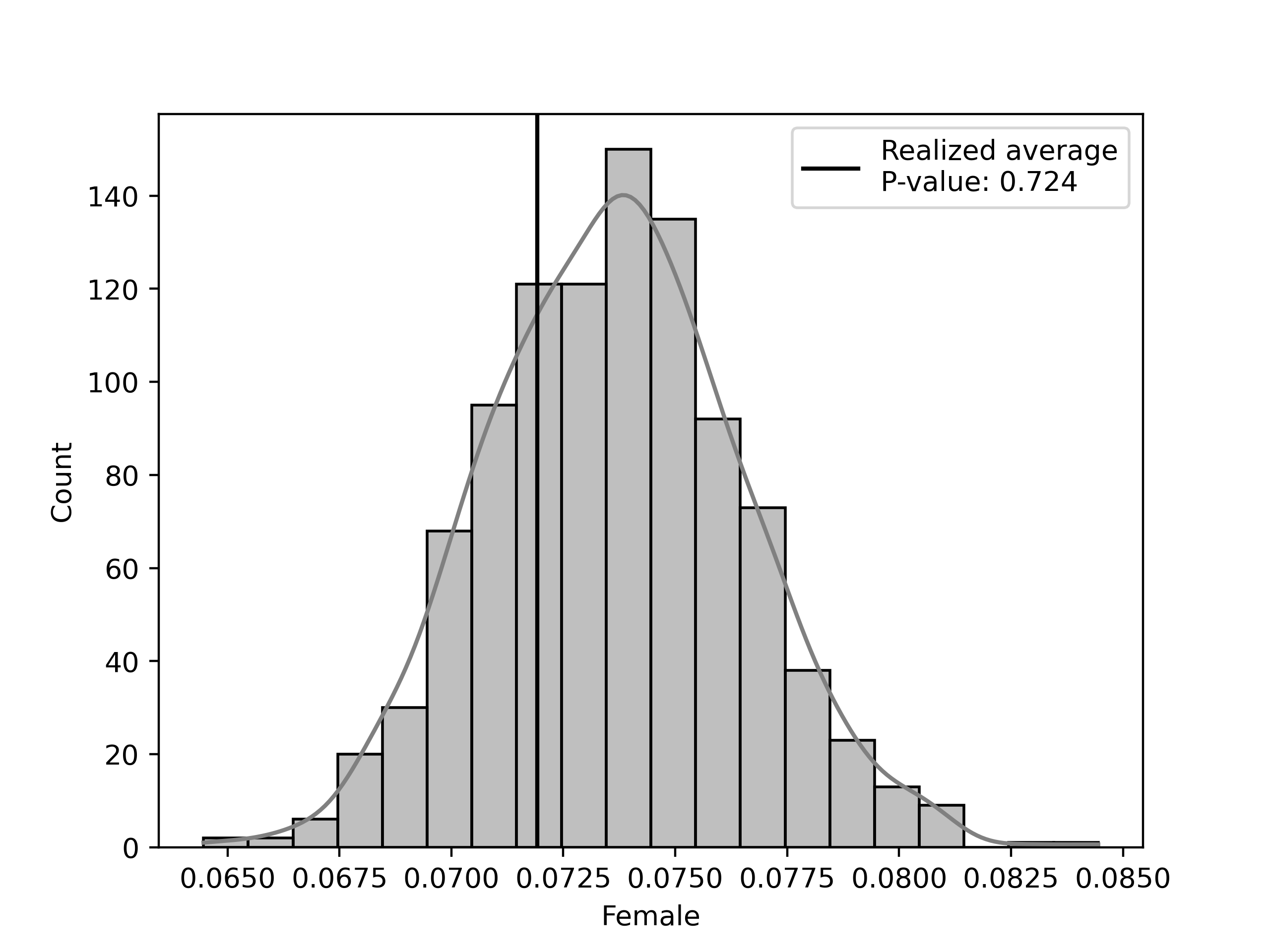}}
    \subfloat[Father Missing]{\label{subfig: counterfactual-permutation-treatment-effects-fmissing}\includegraphics[width=0.4\linewidth]{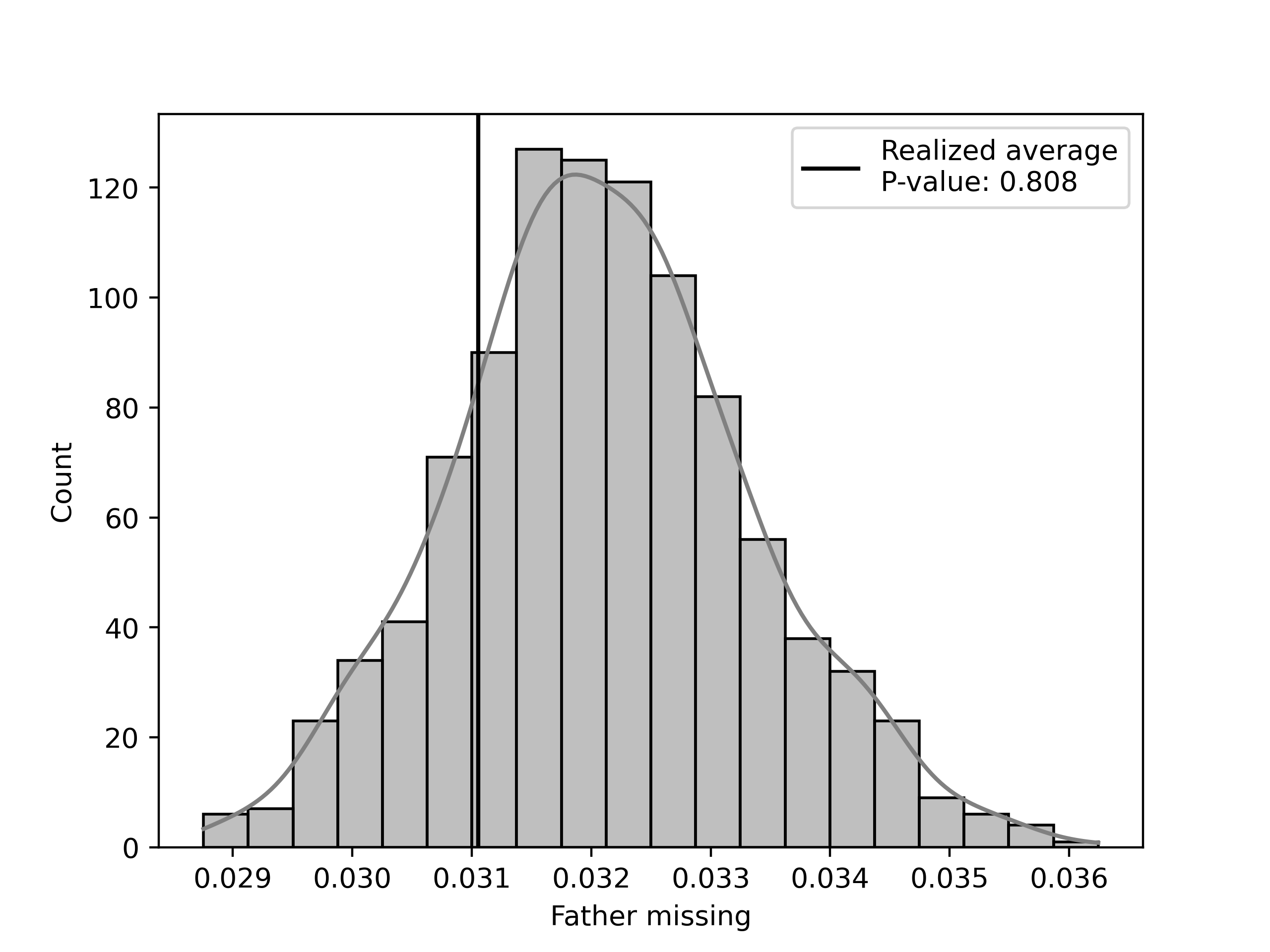}}

    \caption{Counterfactual Average Absolute Differences in Pre-Treatment Variables Between Permuted Children of Different Nurses Within Same District-by-Year.}
    \label{fig: counterfactual-permutation-treatment-effects-pre-treatment-variables}
    \begin{minipage}{1\linewidth}
        \vspace{1ex}
        \footnotesize{
        \textit{Notes:}
        The figure shows histograms (with associated density plots) of averages of absolute values of differences in average values of children with different nurses but within same district-by-year group, but where children are permuted in such a way as to be randomly allocated a nurse within the child's district-by-year group, with a total of 1000 permutations per panel.
        The black, solid line in each panel shows the average of the underlying absolute differences in the non-permuted sample, and the p-value reported in each panel is calculated as the share of permutations that led to a more extreme value than the actual average.
        The different panels refer to different pre-treatment variables.
		}
	\end{minipage}
\end{figure}

\end{document}